\newcommand{\hc}{\mbox{h.c.}}
\renewcommand{\le}{\leqslant}
\renewcommand{\leq}{\leqslant}
\renewcommand{\ge}{\geqslant}
\renewcommand{\geq}{\geqslant}
     \let\g=\gamma     \let\d=\delta     
             \let\l=\lambda
                  \let\p=\pi        \let\r=\rho
\let\s=\sigma          \let\ph=\varphi   
        \let\L=\Lambda
\def\cE{{\cal E}}\def\cM{{\cal M}}\def\cV{{\cal V}}
\def\cC{{\cal C}}\def\cF{{\cal F}}\def\cH{{\cal H}}\def\cW{{\cal W}}
\def\cT{{\cal T}}\def\cN{{\cal N}}\def\cB{{\cal B}}
\def\cL{{\cal L}}\def\cJ{{\cal J}}\def\cQ{{\cal Q}}
\def\cD{{\cal D}}\def\cG{{\cal G}}
\def\cO{{\cal O}}\def\cK{{\cal K}}
  \def\v0{{\vec 0}}
\def\bal{{\bar \l}}
\def\bC{\mathbb{C}}
\def\bR{\mathbb{R}}
\def\cV{\mathcal{V}}
\def\cF{\mathcal{F}}
\def\cG{\mathcal{G}}
\def\cL{\mathcal{L}}
\def\cN{\mathcal{N}}
\def\cE{\mathcal{E}}
\def\cK{\mathcal{K}}
\def\cH{\mathcal{H}}
\def\eps{\varepsilon}
\def\ph{\varphi}
\def\bN{\mathbb{N}}  
\def\bZ{\mathbb{Z}} 
\def\bR{\mathbb{R}}
\def\indic{\hbox{\raise-2pt \hbox{\indbf 1}}}
\def\*{{\hfill\break\null\hfill\break}}
\def\tende#1{\,\vtop{\ialign{##\crcr\rightarrowfill\crcr
			\noalign{\kern-1pt\nointerlineskip}
			\hskip3.pt${\scriptstyle #1}$\hskip3.pt\crcr}}\,}
\def\otto{\,{\kern-1.truept\leftarrow\kern-5.truept\to\kern-1.truept}\,}
\def\wt#1{\widetilde{#1}}
\def\wh#1{\widehat{#1}}
\def\sqt[#1]#2{\root #1\of {#2}}
\def\hc{{\rm h.c.}\,}
\def\wt{\widetilde}
\def\be{\begin{equation}}
	\def\ee{\end{equation}}
\def\bea{\begin{eqnarray}}\def\eea{\end{eqnarray}}
\def\bean{\begin{eqnarray*}}\def\eean{\end{eqnarray*}}
\def\bfr{\begin{flushright}}\def\efr{\end{flushright}}
\def\bc{\begin{center}}\def\ec{\end{center}}
\def\bal{\begin{align}} 
	\def\eal{\end{align}}
\def\spl#1\spl{\[ \begin{split}#1\end{split} \]}
\def\bd{\begin{description}}\def\ed{\end{description}}
\def\Halmos{\hfill\vrule height10pt width4pt depth2pt \par\hbox to \hsize{}}
\newtheorem{theorem}{Theorem}[section]
\newtheorem{prop}[theorem]{Proposition}
\newtheorem{lemma}[theorem]{Lemma} 
\theoremstyle{remark}
\numberwithin{equation}{section}
\title{Third order corrections to the ground state energy \\ of a Bose gas in the Gross-Pitaevskii regime}
\date{\today}
\begin{document}


\author{
	Cristina Caraci$^1$\footnote{Electronic mail: cristina.caraci@unige.ch}\;,
	Alessandro Olgiati$^2$\footnote{Electronic mail: alessandro.olgiati@polimi.it}\;,  Diane Saint Aubin $^3$\footnote{Electronic mail: diane.saintaubin@math.uzh.ch}\;,  Benjamin Schlein $^4$ \footnote{Electronic mail: benjamin.schlein@math.uzh.ch}\\[0.2cm]
	{\footnotesize $^{1}$Section of Mathematics, University of Geneva, rue du Conseil-G\'en\'eral 7-9, 1205 Geneva}\\{\footnotesize $^{2}$ Dipartimento di Matematica, Politecnico di Milano, Piazza Leonardo da Vinci 32, 20133 Milano }\\ {\footnotesize$^{3,4}$Institute of Mathematics, University of Zurich, Winterthurerstrasse 190, 8057 Zurich.}\\
}

	\maketitle

\begin{abstract} 
For a translation invariant system of $N$ bosons in the Gross-Pitaevskii regime, we establish a precise bound for the ground state energy $E_N$. While the leading, order $N$, contribution to $E_N$ has been known since \cite{LY,LSY} and the second order corrections (of order one) have been first determined in \cite{BBCS}, our estimate also resolves the next term in the asymptotic expansion of $E_N$, which is of the order $(\log N) / N$. 
\end{abstract}

\section{Introduction}

In the Gross-Pitaevskii regime, we consider a gas of $N$ bosons moving on the unit torus $\Lambda \simeq [0;1]^3$, interacting through a repulsive potential with scattering length of the order $1/N$. The Hamilton operator of such a system has the form 
\begin{equation}\label{eq:ham0}
H_N = \sum_{j=1}^N -\Delta_{x_j} + \sum_{i<j}^N N^2 V (N (x_i - x_j)) 
\end{equation} 
and acts, according to the bosonic statistics, on $L^2_s (\Lambda^N)$, the subspace of $L^2 (\Lambda^N)$ consisting of functions that are symmetric w.r.t. permutations. Here, we are going to assume that $V \in L^3 (\bR^3)$ is non-negative, compactly supported and spherically symmetric. We denote by $\mathfrak{a}$ its scattering length, which is defined through the solution $f$ of the zero energy scattering equation
\[ \left[ -\Delta + \frac{1}{2} V \right] f = 0 \]
with the boundary condition $f (x) \to 1$, as $|x| \to \infty$, by requiring that 
\[ f(x) = 1 - \frac{\mathfrak{a}}{|x|} \]
outside the support of $V$. By scaling, the scattering length of the potential $N^2 V (N\cdot)$ appearing in (\ref{eq:ham0}) is then given by $\frak{a}/N$. Observe that, after rescaling $x \to Nx$, the Gross-Pitaevskii regime equivalently describes a gas of particles interacting through the fixed potential $V$, at density $\rho = N^{-2}$.

As first established in \cite{LY,LSY}, the ground state energy $E_N$ of (\ref{eq:ham0}) is given, to leading order, by 
\begin{equation*}\label{eq:LY}  E_N = 4\pi \frak{a} N + o (N) \end{equation*} 
in the limit $N \to \infty$. In \cite{LS1,LS2,NRS}, it was also shown that the corresponding ground state vectors exhibit complete Bose-Einstein condensation; all particles, up to a fraction vanishing in the limit $N \to \infty$, can be described by the zero-momentum one-particle orbital $\ph_0 \in L^2 (\Lambda)$, defined by $\ph_0 (x) = 1$ for all $x \in \Lambda$. In the last years, more precise bounds on the rate of condensation have been derived. For $V \in L^3 (\bR^3)$, it was shown in \cite{BBCS4} that, for any normalized  sequence $\psi_N \in L^2_s (\Lambda^N)$ of approximate ground state vectors, satisfying 
\[ \langle \psi_N, H_N \psi_N \rangle \leq E_N + K ,\] 
the number of particles that are orthogonal to $\ph_0$ remains bounded by a constant proportional to $K$ (but independent of $N$), in the limit $N \to \infty$ (a simplified proof of condensation has been recently proposed in \cite{BBCO}, using the approach developed in \cite{B}). 

This optimal estimate on the rate of condensation was used in \cite{BBCS4} as input for a rigorous version of Bogoliubov theory \cite{Bo}, showing that the ground state energy of (\ref{eq:ham0}) satisfies 
\begin{equation}\label{eq:LHY-GP} E_N = 4\pi \frak{a} (N-1) + e_\Lambda \frak{a}^2 - \frac{1}{2} \sum_{p \in 2\pi \bZ^3 \backslash \{ 0 \}} \Big[ p^2 + 8\pi \frak{a} - \sqrt{|p|^4 + 16 \pi \frak{a} p^2} - \frac{(8\pi \frak{a})^2}{2p^2} \Big] + \cO (N^{-1/4}) \end{equation} 
and that the spectrum of $H_N - E_N$ below a threshold $\zeta > 0$ consists of eigenvalues having the form 
\begin{equation}\label{eq:excite} \sum_{p \in 2\pi \bZ^3 \backslash \{ 0 \}} n_p \sqrt{|p|^4 + 16 \pi \frak{a} p^2} + \cO ((1+\zeta^3) N^{-1/4}) \end{equation} 
where $n_p \in \bN$ for all $p \in 2\pi \bZ^3 \backslash \{ 0 \}$. In (\ref{eq:LHY-GP}), we defined  
\begin{equation}\label{eq:eLambda} e_\Lambda = 2 - \lim_{M \to \infty} \sum_{\substack{p \in \bZ^3 \backslash \{ 0 \} : \\ |p_1|, |p_2| , |p_3| \leq M}} \frac{\cos |p|}{p^2} \end{equation} 

The optimal bound on the condensation rate and the estimates (\ref{eq:LHY-GP}), (\ref{eq:excite}) on the low-energy spectrum of (\ref{eq:ham0}) have been later extended to Bose gases trapped by an external potentials in \cite{NNRT,BSS1,NT,BSS2}, to bosons moving in a box with Neumann boundary conditions in \cite{BS}, 
to systems interacting through a potential with scattering length of the order $N^{-1+\kappa}$, for sufficiently small $\kappa > 0$, in \cite{ABS,BCaS} and to Bose gases in the two-dimensional Gross-Pitaevskii regime in  \cite{CCS1,CCS2}. Recently, an upper bound matching (\ref{eq:LHY-GP}) was proven in \cite{BCOPS}, for particles interacting through a non-integrable, hard-sphere potential. New and simpler proofs of (\ref{eq:LHY-GP}), (\ref{eq:excite}) have been obtained in \cite{HST} and, very recently, in \cite{B} (also beyond the Gross-Pitaevskii regime, for $\kappa > 0$ small enough). Some rigorous bounds are also available at positive temperatures; to leading order, the free energy in the Gross-Pitaevskii regime was determined in \cite{DS}, up to temperatures comparable with the critical temperature for condensation. Upper bounds for the free energy capturing also the next order corrections have been obtained in \cite{BDM,CD}. 

Bogoliubov theory has been recently also used to determine equilibrium properties of Bose gases in the thermodynamic limit, where we consider $N$ particles moving in the box $[0;L]^3$, with periodic boundary conditions, letting $N, L \to \infty$ keeping the density $\rho = N/L^3$ fixed. In \cite{LHY}, Lee-Huang-Yang derived a formula for the asymptotic behavior of the ground state energy per particle, in the dilute regime, to leading- and next-to-leading order. Their result was then improved by Wu in \cite{W}, by Hugenholtz-Pines in \cite{HP} and by Sawada in \cite{Sa}, who predicted that 
\begin{equation}\label{eq:Wu} e(\rho) = \lim_{\substack{N,L \to \infty \\ N/L^3 = \rho}} \frac{E_{N,L}}{N} = 4\pi \frak{a} \rho \left[ 1 + \frac{128}{15 \sqrt{\pi}} (\rho \frak{a}^3)^{1/2} + 8 \Big( \frac{4}{3} \pi - \sqrt{3} \Big) \rho \frak{a}^3 \log (12 \pi \rho \frak{a}^3) + \dots \right] \end{equation} 
up to lower order corrections, in the limit $\rho \frak{a}^3 \to 0$. The validity of the first term on the r.h.s. of (\ref{eq:Wu}) has been known since \cite{Dy} (upper bound) and \cite{LY} (matching lower bound). As for the second term in (\ref{eq:Wu}) (the Lee-Huang-Yang correction), a lower bound was proven in \cite{FS1} and, for more general interaction potentials (including hard-sphere interactions), in \cite{FS2}. Recently, an optimal lower bound was also derived in \cite{HHNST} for the free energy at positive temperature (chosen so that the energy of thermal excitations is comparable to the Lee-Huang-Yang correction). As an upper bound, the first two terms in (\ref{eq:Wu}) were first established in \cite{YY}. A simpler proof, which applies to more general potentials (but not to hard-sphere interactions) was obtained in \cite{BCS}. For the hard-sphere potential, on the other hand, the derivation of an upper bound matching (\ref{eq:Wu}) to second order is still an open problem. However, an upper bound establishing the validity of the first term on the r.h.s. of (\ref{eq:Wu}), with an error of the Lee-Huang-Yang order (but with the wrong constant) was recently proven in \cite{BCGOPS}. There is still no rigorous result about the third term on the r.h.s. of (\ref{eq:Wu}), neither as a lower nor as an upper bound to the ground state energy per particle. 

In this paper, we improve (\ref{eq:LHY-GP}), establishing the next contribution to the ground state energy in the Gross-Pitaevskii regime, which turns out to be of the order $(\log N)/N$. The next theorem is our main result. 
\begin{theorem} \label{thm:main}
 Let $V\in L^3(\Lambda)$ be non-negative, spherically symmetric, and compactly supported. Let $\Lambda^*_+ = 2\pi \bZ^3 \backslash \{ 0 \}$. Then the ground state energy $E_N$ of the Hamiltonian $H_N$ from (\ref{eq:ham0}) satisfies
 \begin{equation} \label{eq:main}
 \begin{split}
E_N=\;& 4\pi \mathfrak{a}(N-1)+e_\Lambda\mathfrak{a}^2\\
&-\frac{1}{2}\sum_{p \in \Lambda_+^*} \bigg[ p^2+8\pi\mathfrak{a}-\sqrt{|p|^4+16\pi\mathfrak{a}p^2}-\frac{(8\pi\mathfrak{a})^2}{2p^2}\bigg] \\ &- 64 \pi \Big( \frac{4}{3} \pi -\sqrt{3} \Big) \frak{a}^4  \frac{(\log N)}{N}  + \mathcal{O} ((\log N)^{1/2} /N)
\end{split}
 \end{equation}
 as $N\to\infty$, with $e_\Lambda$ defined in \eqref{eq:eLambda}.
\end{theorem}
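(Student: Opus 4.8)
\medskip

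\noindent\textbf{Proof strategy.} The plan is to follow the method of \cite{BBCS4}, which gave \eqref{eq:LHY-GP}, but to expand the excitation Hamiltonian one further order and with substantially better error control. First, factoring out the condensate, we write a normalized $\psi_N\in L^2_s(\Lambda^N)$ as $\psi_N=\sum_{n=0}^{N}\varphi_0^{\otimes(N-n)}\otimes_s\xi^{(n)}$ with $\xi^{(n)}$ orthogonal to $\varphi_0$ in each variable; the associated unitary sends $H_N-4\pi\mathfrak{a}N$ to an operator $\mathcal{L}_N$ on the truncated Fock space $\mathcal{F}_+^{\leq N}$ over $L^2_{\perp\varphi_0}(\Lambda)$. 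We then conjugate $\mathcal{L}_N$ with the three unitaries of \cite{BBCS4}: a generalized Bogoliubov transformation $e^B$ built from (a truncation of) the Neumann ground state of the zero-energy scattering equation, which implements the correlation structure on momenta $|p|\lesssim N$; a cubic transformation $e^A$ removing the leading cubic term produced by $e^B$; and a final Bogoliubov transformation $e^{\widetilde B}$ diagonalizing the remaining quadratic form. Writing $\mathcal{B}$ for the composition, we aim for an identity of the form
\[ e^{-\mathcal{B}}\,\mathcal{L}_N\,e^{\mathcal{B}}=C_N+\sum_{p\in\Lambda^*_+}\sqrt{|p|^4+16\pi\mathfrak{a}\,p^2}\;a^*_p a_p+\mathcal{E}_N, \]
where $C_N$ equals the sum of the first three lines of \eqref{eq:main} together with the term $-64\pi(\tfrac{4}{3}\pi-\sqrt{3})\mathfrak{a}^4(\log N)/N$, up to an error $\mathcal{O}((\log N)^{1/2}/N)$, and $\mathcal{E}_N$ is an operator to be controlled, on states of bounded energy, by $\mathcal{O}((\log N)^{1/2}/N)$.

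\medskip

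\noindent The new term of order $(\log N)/N$ is the Gross--Pitaevskii counterpart of the third term of the Lee--Huang--Yang expansion \eqref{eq:Wu}; it originates not from the quadratic part but from the cubic term acting at \emph{second order}. After $e^B$ the renormalized Hamiltonian still carries a cubic term of the schematic form $N^{-1/2}\sum_{p,q}\widehat V(p/N)\,(b^*_{p+q}b^*_{-p}b_q+\hc)$; we plan to treat it by one further conjugation tailored to it (or, equivalently, to compute its second-order effect directly), which, after the diagonalization $e^{\widetilde B}$, produces up to controllable errors a momentum sum of the form
\[ -\frac1N\sum_{\substack{p,q\in\Lambda^*_+\\ |p|,|q|\lesssim N}}\frac{\big|\,8\pi\mathfrak{a}\,(v_p+v_q+v_{p+q})\,\big|^2}{\varepsilon_p+\varepsilon_q+\varepsilon_{p+q}}+(\text{faster-decaying terms}), \]
with $\varepsilon_p=\sqrt{|p|^4+16\pi\mathfrak{a}\,p^2}$ and $v_p\simeq-4\pi\mathfrak{a}/|p|^2$ the Bogoliubov coefficient. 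This sum requires renormalization: summing first over one momentum, its UV-divergent part (which turns out to be of lower, order-one, size in the expansion) is cancelled against the first-order contribution of the quartic interaction and against the term renormalizing $\widehat V(0)$ into $8\pi\mathfrak{a}$ --- this is where the precise choice of the correlation kernel enters. The renormalized remainder is logarithmically divergent: in the regime $1\ll|p|,|q|\ll N$, where $\varepsilon_p\simeq|p|^2$, the discrete sum becomes a Riemann sum whose leftover integral is scale invariant, hence equal to $(\text{const})\log N$, cut off at the lattice scale from below and at the scattering scale $|p|\sim N$ from above. The radial integration produces the $\log N$ and the angular integration produces the constant, so this contribution equals $-64\pi(\tfrac{4}{3}\pi-\sqrt{3})\mathfrak{a}^4(\log N)/N$, which we absorb into $C_N$.

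\medskip

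\noindent The remaining, and largest, task is to prove $\pm\langle\psi,\mathcal{E}_N\psi\rangle\leq\mathcal{O}((\log N)^{1/2}/N)$ for all $\psi$ with $\langle\psi,(\mathcal{H}_N+1)\psi\rangle=\mathcal{O}(1)$. To this end we will: (i) revisit every commutator and resolvent estimate of \cite{BBCS4}, keeping track of the additional factors of $N^{-1}$ and $\log N$, in particular for the leftover cubic and quartic terms after all conjugations; (ii) split momentum space more finely --- low momenta $|p|\lesssim1$ (treated exactly), a bulk region $1\lesssim|p|\lesssim N^{\alpha}$, and high momenta $N^{\alpha}\lesssim|p|\lesssim N$ (controlled through a Neumann-series expansion of the scattering solution), with $\alpha$ to be optimized; (iii) establish improved a priori bounds on the number and energy of excitations in the ground state, of the type $\langle\psi_N,\mathcal{N}_+^k\psi_N\rangle\lesssim1$ together with finer estimates coupling $\mathcal{N}_+$ to the energy, and propagate them through all the conjugations. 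The factor $(\log N)^{1/2}$ in the error will arise from a Cauchy--Schwarz bound on the remainder of the cubic contribution, one factor carrying a partially controlled logarithmic sum and the other a genuine $N^{-1}$. Since the quadratic term is nonnegative, the lower bound $E_N\geq C_N-\mathcal{O}((\log N)^{1/2}/N)$ will follow from the estimate on $\mathcal{E}_N$ applied to a near-minimizer (whose bounded energy is guaranteed by the matching upper bound), and the upper bound $E_N\leq C_N+\mathcal{O}((\log N)^{1/2}/N)$ by using $e^{\mathcal{B}}$ applied to the vacuum as trial state.

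\medskip

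\noindent The main obstacle will be the cubic term: we must show that its strongly UV-divergent part cancels against the quartic interaction and the scattering-length renormalization, that the sub-leading pieces of the cubic kernel (replacing $\widehat V(p/N)$ by $\widehat V(0)$, replacing the modified creation/annihilation operators by ordinary ones, neglecting the non-diagonal part of the quadratic form) contribute below order $(\log N)/N$, while the isolated logarithmically divergent sum contributes exactly the stated term --- all this with an error genuinely of order $(\log N)^{1/2}/N$ rather than $(\log N)/N$. These cancellations are invisible at the accuracy of \eqref{eq:LHY-GP} and require tracking the interplay of the cubic conjugation with the quartic interaction throughout.
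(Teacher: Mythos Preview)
Your high-level strategy is on target: the $(\log N)/N$ contribution does arise from the second-order effect of the cubic term, and matching upper and lower bounds by testing on the vacuum and on a near-minimizer is exactly what the paper does. However, the paper's implementation differs from yours in two technical respects, and one of them is a genuine gap in your plan.

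First, the paper does not follow the three-step order $e^B$, $e^A$, $e^{\widetilde B}$ of \cite{BBCS4}. Instead it merges the two quadratic transformations into a single generalized Bogoliubov transformation $e^{B_\mu}$ with $\mu_p=\eta_p+\tau_p$, and conjugates first with this and then with the cubic $e^A$. This is a matter of convenience rather than necessity. Second, and more importantly, the paper \emph{modifies the cubic kernel}: the coefficient of $b^*_{r+v}b^*_{-r}b^*_{-v}$ in $A$ is taken to be $\nu_{r,v}=\dfrac{2r^2\eta_r\sigma_v}{|r+v|^2+r^2+v^2}$ rather than the $\eta_r\sigma_v$ used in \cite{BBCS4}. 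This is not a cosmetic change. With the old choice, the commutator $[\mathcal K,A]$ produces, in the $b^*b^*b^*$ channel, a term carrying the factor $(r\cdot v)$ in addition to the $r^2$ term; the paper states explicitly that this extra piece ``would not be negligible at the level of accuracy required to show Theorem~\ref{thm:main}''. Your proposal to reuse the \cite{BBCS4} cubic transformation therefore misses an essential modification, and the error control you outline in (i)--(iii) will not close without it. Relatedly, the logarithmic constant does not emerge from your schematic second-order sum $-N^{-1}\sum|\,8\pi\mathfrak a(v_p+v_q+v_{p+q})|^2/(\varepsilon_p+\varepsilon_q+\varepsilon_{p+q})$ directly (that formula is only the heuristic of Remark~5); in the paper it comes from an explicit expression $\widetilde C_{\mathcal J_N}$ (Eq.~\eqref{eq:wtCJN}) that must first be reduced, via the scattering equation, symmetrization, restriction of the sums to $|r|,|v|\le N$, and replacement of $(\widehat V(\cdot/N)*\widehat f_{N,\ell})$ by $8\pi\mathfrak a$, to the form $\dfrac{1024\pi^4\mathfrak a^4}{N}\sum_{|r|,|v|\le N}\dfrac{r\cdot v-v^2}{r^2+v^2+r\cdot v}\dfrac{1}{r^2|v|^4}$, and then evaluated by passing to an integral.
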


{\it Remarks.} 
\begin{itemize}
\item[1)] Let $\frak{a}_N = \frak{a}/N$ be the scattering length of the potential in (\ref{eq:ham0}). With $\rho = N$, we observe that $\rho \frak{a}_N^3 = \frak{a}^3/N^2$. We conclude that third term on the r.h.s. of (\ref{eq:main}) is consistent with the prediction (\ref{eq:Wu}) for the third term in the asymptotic expansion of ground state energy per particle in the thermodynamic limit. 
\item[2)] With our analysis, we could also improve the estimate (\ref{eq:excite}) for the low-energy spectrum of $H_N - E_N$, showing that, below a threshold $\zeta > 0$, it consists of eigenvalues having the form 
\[ \sum_{p \in 2\pi \bZ^3 \backslash \{ 0 \}} n_p \sqrt{|p|^4 + 16 \pi \frak{a} p^2} + \cO (C_\zeta (\log N)^{1/2} /N ) \]
for an appropriate constant $C_\zeta > 0$, depending polynomially on $\zeta$.
\item[3)] Expansions of the ground state energy of Bose gases beyond second order have been previously obtained in the mean-field limit \cite{P3,BPS,NN}. Moreover, for systems of $N$ particles interacting through a potential of the form $N^{3\beta-1} V (N^\beta \cdot)$, for a $\beta \in (0;1)$, the ground state energy was recently resolved to order $N^{-1+\beta}$ in \cite{BLPR}.
\item[4)] At the expense of a slightly longer proof, with our techniques we could prove that the error term in \eqref{eq:main} is $\mathcal{O}(N^{-1})$.
\item[5)] Heuristically, the appearance of a term of order $(\log N) / N$ on the r.h.s. of (\ref{eq:main})  can be understood by perturbation theory. The bounds (\ref{eq:LHY-GP}), (\ref{eq:excite}) are proven in \cite{BBCS4} showing that, after appropriate unitary transformations, $H_N$ can be approximated by a Fock space Hamiltonian of the form 
\[ H_\text{quadr}  =  E_{N}^{(0)}  + \sum_{p \in 2\pi \bZ^3 \backslash \{ 0 \} } \sqrt{|p|^4 + 16 \pi \frak{a} p^2} \, a_p^* a_p \, \] 
quadratic in creation and annihilation operators. Here, $E_N^{(0)}$ denotes the approximation to the ground state energy of $H_N$ appearing on the r.h.s. of (\ref{eq:LHY-GP}). The ground state of $H_\text{quadr}$ is the Fock space vacuum $\Omega = \{ 1, 0 , 0, \dots \}$. The main correction to $H_\text{quadr}$ is given by cubic terms in creation and annihilation operators, whose expectation vanishes in the state $\Omega$. By second order perturbation theory, we obtain therefore 
\begin{equation}\label{eq:pert} E_N \simeq E_N^{(0)} + \langle \Omega , W (H_\text{quadr} - E^{(0)}_N)^{-1}  W \Omega \rangle \end{equation} 
where $W$ has the form 
\[ W = \frac{1}{\sqrt{N}} \sum_{p,r \in 2\pi \bZ^3 \backslash \{ 0 \}} \ph_{p,r} \big[ a_{p+r}^* a_{-p}^* a_{-r}^* + \text{h.c.} \big] \]
for some appropriate coefficients $\ph_{p,r}$. The operator $W$ includes all cubic terms which do not vanish when acting on $\Omega$. Using the canonical commutation relations $[a_p, a_q^* ] = \delta_{pq}$, $[a_p, a_q] = 0$, we find 
\[ E_N = E_N^{(0)} + \frac{1}{N} \sum_{p,r}  \frac{\ph_{p,r} \big[ \ph_{p,r}+ \ph_{p+r,r} + \ph_{r,p} + \ph_{r,p+r} + \ph_{p+r,p} + \ph_{p, p+r}\big]}{\eps_p + \eps_r + \eps_{p+r}} \] 
with the dispersion $\eps_p = \sqrt{|p|^4 + 16 \pi \frak{a} p^2}$. Determining the precise form of the coefficients $\ph_{p,r}$ is not trivial (it requires understanding precisely which corrections to the quadratic Hamiltonian $H_\text{quadr}$ are important); it turns out that $\ph_{p,r}$ scales as momentum to the power $-4$, for $|p|, |r| \leq C N$. Taking into account that $\eps_p \simeq p^2$, for large $p$, this produces a correction to $E_N^{(0)}$, exactly of the order $(\log N) / N$. Although this argument could be probably also made into a rigorous proof, our approach is different, since it resolves the correct energy through two unitary conjugations, one with the exponential of a quadratic and, respectively, a cubic expression in (modified) creation and annihilation operators (our unitary conjugations implement the perturbative expansion leading to (\ref{eq:pert})).  
\end{itemize}

{\it Acknowledgments.} We would like to thank the two anonymous referees for carefully reading our paper and suggesting several improvements. We gratefully acknowledge partial support from the Swiss National Science Foundation through the Grant ``Dynamical and energetic properties of Bose-Einstein condensates'', from the NCCR SwissMAP and from the European Research Council through the ERC-AdG CLaQS. C.C. and A.O. acknowledge support from the GNFM Gruppo Nazionale per la Fisica Matematica - INDAM. A.O. acknowledges support from MUR
through the grant “Dipartimento di Eccellenza 2023-2027” of Dipartimento di Matematica, Politecnico di Milano.
\medskip

{ \small {\it Data availability statement.} This manuscript has no associated data.}
\medskip

{\small{\it Conflict of interest. } All authors declare that they have no conflict of interest.}

\section{Excitation Hamiltonians and proof of Theorem \ref{thm:main}} 

In order to determine the low-energy spectrum of the Hamilton operator (\ref{eq:ham0}), it is convenient, first of all, to factor out the Bose-Einstein condensate, focussing on its orthogonal excitations. To this end, we observe that an arbitrary wave function $\psi_N \in L^2_s (\Lambda^N)$ can be uniquely decomposed as  
\[ \psi_N = \alpha_0 \ph_0^{\otimes N} + \alpha_1 \otimes_s \ph_0^{\otimes (N-1)} + \dots + \alpha_N \]
with $\alpha_j \in L^2_\perp (\Lambda)^{\otimes_s j}$, where $L^2_\perp (\Lambda)$ denotes the orthogonal complement of the condensate wave function $\ph_0$, defined by $\ph_0 (x) = 1$ for all $x \in \Lambda$ (and where $\otimes_s$ indicates the symmetric tensor product). This observation allows us to define a unitary operator $U_N : L^2_s (\L^N) \to \cF_\perp^{\leq N}$ mapping the original Hilbert space $L^2_s (\Lambda^N)$ into the truncated Fock space \[ \cF_\perp^{\leq N} = \bigoplus_{n=0}^N L^2_{\perp \ph_0} (\Lambda)^{\otimes_s n}, \] 
 setting $U_N \psi_N = \{ \alpha_0 , \dots , \alpha_N \}$. The map $U_N$ is characterized by its action on number of particle-preserving products of creation and annihilation operators, given by 
 \begin{equation} \label{eq:rules} 
\begin{split}
    U_N a^*_0 a_0 U_N^*=\;&N-\mathcal{N}_+\\
    U_N a^*_p a_0 U_N^*=\;&\sqrt{N} b^*_p\\
    U_N a^*_0 a_p U_N^* =\;&\sqrt{N} b_p\\
    U_N a^*_p a_q U_N^* =\;& a^*_p a_q
\end{split}
\end{equation}
for momenta $p,q \in \Lambda^*_+ = \Lambda^* \backslash \{ 0 \}$ (where $\Lambda^* =  2\pi  \bZ^3$ is the dual lattice to $\Lambda$). Here $a^*_p = a (\ph_p)$ and $a_p = a^* (\ph_p)$ are creation and annihilation operators creating and, respectively, annihilating a particle with momentum $p \in \Lambda^*$, described by the plane wave $\ph_p (x) = e^{-i p \cdot x}$. Furthermore, $\cN_+ = \sum_{p \in \Lambda_+^*} a_p^* a_p$ denotes the number of particles operator on $\cF_{\perp}^{\leq N}$ and, for $p\ \in \Lambda_+^*$, we introduced the modified creation and annihilation operators
\begin{equation*}
    b_p:=\sqrt{\frac{N-\mathcal{N}_+}{N}}a_p, \qquad b^*_p := a^*_p \sqrt{\frac{N-\mathcal{N}_+}{N}}.
\end{equation*}
These operators act on $\cF^{\leq N}_\perp$, they are bounded by the square root of $\cN_+$, in the sense that
\begin{equation*}
\begin{split}
\|b_p\xi\|\le\;&  \,\|\mathcal{N}_+^{1/2}\xi\|\\
\|b^*_p\xi\|\le\;& \,\|(\mathcal{N}_++1)^{1/2}\xi\|.
\end{split}
\end{equation*}
and they satisfy the commutation relations 
\begin{equation} \label{eq:CCR_b}
\begin{split}
[b_p,b^*_q]=\;&\left( 1-\frac{\mathcal{N}_+}{N}\right)\delta_{p,q}-\frac{1}{N}a^*_q a_p\\
[b_p,b_q]=\;&[b_p^*,b_q^*]=0,
\end{split}
\end{equation}
and 
\begin{equation}\label{eq:baa} [b_p,a^*_q a_r]=\delta_{p,q}b_r , \qquad [b^*_p,a^*_q a_r] = - \delta_{p,r} b^*_q. \end{equation} 

Rewriting the Hamilton operator (\ref{eq:ham0}) in momentum space, using the language of second quantization, we find 
\begin{equation}\label{eq:ham1} H_N = \sum_{p \in \L^*} p^2 a_p^* a_p +  \frac{1}{2N} \sum_{p,q,r \in \L^*} \widehat{V} (r/N) a_{p+r}^* a_q^* a_{q+r} a_p.
\end{equation} 
This expression allows us to compute the excitation Hamiltonian $\cL_N = U_N H_N U_N^*$, defined on the excitation space $\cF_\perp^{\leq N}$, using the rules (\ref{eq:rules}). We find 
\begin{equation} \label{eq:cLN} 
\mathcal{L}_N:= U_N H_N U_N^*= \mathcal{L}_N^{(0)} + \mathcal{K}+\mathcal{L}_N^{(2,V)}+\mathcal{L}_N^{(3)}+\mathcal{V}_N,
\end{equation}
where we introduced the kinetic and potential energy operators 
\begin{equation} \label{eq:K,V}
\mathcal{K}= \sum_{p \in \Lambda_+^*} p^2 a^*_p a_p, \qquad \mathcal{V}_N=\frac{1}{2N} \sum_{\substack{r \in \Lambda^*, p,q \in \Lambda^*_+\\ r\ne -p,-q}} \widehat V(r/N) a^*_{p+r} a^*_q a_p a_{q+r}
\end{equation}
and we set 
\begin{equation} \label{eq:def_L}
\begin{split}
\mathcal{L}_N^{(0)}=\;&\frac{\widehat V(0)}{2}(N-1)\\
\mathcal{L}_N^{(2,V)}=\;&\sum_{p \in\Lambda_+^*} \widehat V(p/N)\left( b^*_p b_p-\frac{1}{N}a^*_p a_p \right) \\
&+\frac{1}{2}\sum_{p\in \Lambda^*_+} \widehat V(p/N) (b^*_p b^*_{-p}+b_pb_{-p})- \frac{\widehat V(0)}{2N}\mathcal{N}_+(\mathcal{N}_+-1) \\
\mathcal{L}_N^{(3)}=\;&\frac{1}{\sqrt{N}} \sum_{\substack{p,q \in\Lambda_+^*\\p+q\ne0}}\widehat V(p/N)\left( b^*_{p+q}a^*_{-p}a_q +a^*_q a_{-p} b_{p+q}\right) .
\end{split}
\end{equation}

After conjugation with $U_N$, the vacuum vector $\Omega \in \cF^{\leq N}_\perp$ corresponds to the 
factorized wave function $\ph_0^{\otimes N}$, which is still very far, energetically, from the ground state 
of (\ref{eq:ham1}). In the next step, we are going to renormalize the excitation Hamiltonian (\ref{eq:cLN}), factoring out the microscopic correlation structure characterizing its low-energy states. To describe correlations, we fix $\ell > 0$ and consider the ground state solution of the Neumann problem 
\begin{equation*} \label{eq:Neumann_unscaled}
\begin{split} 
\Big[ -\Delta+\frac{1}{2}V\Big] f_\ell &=  \lambda_\ell f_\ell(x) \qquad |x| \le N\ell \\
\partial_r f_\ell(x) &=0 \qquad  \hspace{1cm} |x|=N\ell 
\end{split} 
\end{equation*}
on the ball $|x| \leq N \ell$, normalized so that $f_\ell (x) = 1$ for $|x| = N\ell$. We extend $f_\ell (x) = 1$ for all $|x| > N\ell$ and we set $w_\ell (x) = 1 - f_\ell (x)$. We denote $f_{N,\ell} (x) = f_\ell (Nx)$ the solution of the rescaled Neumann problem 
\begin{equation} \label{eq:Neumann_scaled}
\begin{split}
\Big[ -\Delta+\frac{N^2}{2}V (N\cdot) \Big] f_{N,\ell} &=  \lambda_\ell f_{N,\ell} \qquad  |x| \le \ell\\
\partial_r f_{N,\ell} (x) &= 0 \qquad \hspace{.8cm} |x|= \ell 
\end{split}
\end{equation}
on the ball $|x| \leq \ell$, with $f_{N,\ell} (x) = 1$ for all $|x| \geq \ell$. As above, we set $w_{N,\ell} (x) = 1- f_{N,\ell} (x)$. With a slight abuse of notation we use the same notation for $f_{N,\ell}$ and for its periodisation on the torus $\Lambda$. The next Lemma, whose proof can be found in \cite[Appendix B]{BBCS}, collects important bounds for the functions $f_\ell$ and $w_\ell$, and for the eigenvalue $\lambda_\ell$.

\begin{lemma} \label{lemma:scattering}
Let $V$ be as in the assumptions of Theorem \ref{thm:main}.
\begin{itemize}
\item[(i)] The eigenvalue $\lambda_\ell$ appearing in \eqref{eq:Neumann_scaled} satisfies
\begin{equation*}
\lambda_\ell=\frac{3\mathfrak{a}}{(N\ell)^3}\left[ 1+\frac{9}{5}\frac{\mathfrak{a}}{ N\ell }+\mathcal{O}\left(\frac{\mathfrak{a}^2}{(N\ell)^2}\right) \right].
\end{equation*}
\item[(ii)] There exists a constant $C>0$ such that
\begin{equation}\label{eq:Vf-8pia}
\left| \int_{\bR^3} V(x) f_\ell(x) dx-8\pi\mathfrak{a}\left(1+\frac{3}{2}\frac{\mathfrak{a}}{N \ell}\right)\right|\le \frac{C\mathfrak{a}^3}{(N\ell)^2}
\end{equation}
for $\ell\in(0,1/2)$.
\item[(iii)] There exists $C> 0$ with 
\begin{equation} \label{eq:decay_w}
w_\ell (x) \leq \frac{C}{|x|+1}\,, \qquad |\nabla w_\ell (x)| \leq \frac{C}{x^2 + 1}\,. \end{equation}
\end{itemize}
\end{lemma}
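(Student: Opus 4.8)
\emph{Plan of proof.} The plan is to reduce the Neumann problem \eqref{eq:Neumann_unscaled} to a one-dimensional ODE and solve it essentially explicitly. Since $V$ is spherically symmetric, the ground state $f_\ell$ is radial and may be taken positive (the ground state of a Schrödinger operator is simple, and radiality follows from that of $V$ and of the ball); with $R = N\ell$ and $f_\ell(x) = u(|x|)/|x|$, and using that $f_\ell$ is bounded near the origin by elliptic regularity, \eqref{eq:Neumann_unscaled} becomes
\begin{equation*}
-u'' + \tfrac12 V u = \lambda_\ell\, u \ \text{ on } (0,R), \qquad u(0) = 0, \quad u(R) = R, \quad u'(R) = 1,
\end{equation*}
the last two conditions encoding $f_\ell(R) = 1$ and $\partial_r f_\ell(R) = 0$. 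Fixing $R_0 > 0$ with ${\rm supp}\, V \subseteq B_{R_0}$, on the outer interval $(R_0,R)$ one has $-u'' = \lambda_\ell u$, so the two conditions at $r=R$ determine $u$ there explicitly:
\begin{equation*}
u(r) = R\cos\!\big(\sqrt{\lambda_\ell}\,(r-R)\big) + \tfrac{1}{\sqrt{\lambda_\ell}}\sin\!\big(\sqrt{\lambda_\ell}\,(r-R)\big), \qquad R_0 \le r \le R.
\end{equation*}

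For part (i), I would note that on $(0,R_0)$ the condition $u(0)=0$ fixes $u$ up to a multiplicative constant, and that $\mathfrak{a}$ enters only through the interior logarithmic derivative $u'(R_0)/u(R_0)$, which at $\lambda_\ell = 0$ equals $1/(R_0-\mathfrak{a})$ by the very definition of the scattering length, with a $\lambda_\ell$-dependent correction of size $O(\lambda_\ell) = O(R^{-3})$ that is negligible at the order we need. Matching this with the logarithmic derivative at $R_0$ of the explicit outer solution gives a transcendental equation for $\lambda_\ell$; since $\sqrt{\lambda_\ell}\,R = O(R^{-1/2}) \to 0$, one Taylor-expands the trigonometric functions and solves for $\lambda_\ell$ order by order in $R^{-1}$, which yields (i). The point to watch is that $\lambda_\ell R^3 = O(\mathfrak{a})$ is of order one rather than small, so the Taylor expansion of the outer solution (together with the corrections coming from the finite support radius $R_0$) must be carried to sufficiently high order to see the $\mathfrak{a}^2/R$ term; a direct computation then produces the coefficient $9/5$.

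For part (ii), integrating $-\Delta f_\ell + \tfrac12 V f_\ell = \lambda_\ell f_\ell$ over $B_R$ and discarding the boundary term by the Neumann condition gives $\tfrac12 \int V f_\ell = \lambda_\ell \int_{B_R} f_\ell$; writing $\int_{B_R} f_\ell = \tfrac43 \pi R^3 - \int_{B_R} w_\ell$, evaluating $\int_{B_R} w_\ell$ from the explicit form of $u$ on $(R_0,R)$ (the ball $B_{R_0}$ contributing only a lower-order $O(1)$), and inserting (i), one obtains \eqref{eq:Vf-8pia}. For part (iii), $w_\ell = 1 - f_\ell$ satisfies $-\Delta w_\ell = \tfrac12 V f_\ell - \lambda_\ell f_\ell \le \tfrac12 V f_\ell$, with $V f_\ell \in L^3$ compactly supported (as $f_\ell$ is bounded) and $f_\ell \ge 0$; since $-\Delta\big(w_\ell - N[\tfrac12 V f_\ell]\big) = -\lambda_\ell f_\ell \le 0$, where $N[\,\cdot\,]$ is the Newtonian potential, $w_\ell - N[\tfrac12 V f_\ell]$ is subharmonic on $B_R$ and nonpositive on $\partial B_R$, hence $w_\ell(x) \le N[\tfrac12 V f_\ell](x) \le C/(|x|+1)$. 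Differentiating the same representation for $|x|$ large, together with local elliptic estimates near the fixed support, gives $|\nabla w_\ell(x)| \le C/(x^2+1)$, with constants independent of $N$.

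The main obstacle is the final step of the $\lambda_\ell$-expansion in (i): one must enumerate all mechanisms generating a correction of relative order $\mathfrak{a}/R$ — the higher Taylor coefficients of the outer trigonometric solution, the finite support radius $R_0$, and the $\lambda_\ell$-dependence of the interior logarithmic derivative — and verify carefully that only the first two enter at the $\mathfrak{a}^2/R$ level, the last being suppressed by a further power of $R^{-1}$. The remainder (the integral identity and estimate of $\int w_\ell$ in (ii), the comparison argument and elliptic estimates in (iii)) is routine once the ODE analysis is set up.
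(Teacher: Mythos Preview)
The paper does not actually prove this lemma; it simply cites \cite[Appendix B]{BBCS}. Your outline --- radial reduction to an ODE, explicit solution on $(R_0,R)$, matching of logarithmic derivatives at $R_0$ and Taylor expansion for (i), the integral identity $\tfrac12\int Vf_\ell=\lambda_\ell\int_{B_R}f_\ell$ for (ii), and a comparison/elliptic argument for (iii) --- is exactly the standard route and is essentially what that reference does.

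One genuine gap in your sketch of (iii): the step ``differentiating the same representation for $|x|$ large'' does not work as written, because what you obtained from subharmonicity is only the inequality $w_\ell\le N[\tfrac12 Vf_\ell]$, not an identity, and the remaining source term $-\lambda_\ell f_\ell$ is supported on the whole of $B_{N\ell}$, not on a fixed compact set. The clean fix is to abandon the potential-theoretic argument for the gradient and instead use the explicit outer solution you already wrote down: for $R_0<r<R$ you have $w_\ell(r)=1-u(r)/r$ with $u$ given in closed form, and a direct Taylor expansion (using $\lambda_\ell\simeq 3\mathfrak a/R^3$, so that $w_\ell(r)\simeq\mathfrak a(r-R)^2(2R+r)/(2R^3 r)$) yields both $w_\ell(r)\le C/r$ and $|w_\ell'(r)|\le C/r^2$ with $N$-independent constants. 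Local elliptic estimates on the fixed ball $B_{R_0}$ (using the uniform bound on $\|f_\ell\|_\infty$, which you should justify from the matching at $R_0$) then cover the remaining region. With this correction your plan is complete.
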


The Fourier coefficients of $f_{N,\ell}$ are given by 
\[ \wh{f}_{N,\ell} (p) = \int_\Lambda f_{N,\ell} (x) e^{-ip\cdot x} dx = \delta_{p,0} - N^{-3} \wh{w}_\ell (p/N) \]
where we defined
\[ \wh{w}_\ell (q) = \int_{\bR^3} w_{N,\ell} (x) e^{-i q \cdot x} dx \]
for all $q \in \bR^3$. For $p \in \Lambda^* = 2\pi \bZ^3$, we consider the coefficients 
\begin{equation}\label{eq:defeta} \eta_p = - N^{-2} \wh{w}_\ell (p/N) \end{equation}  
By (\ref{eq:Neumann_scaled}), they satisfy the equation 
\begin{equation}
\label{scattering}
p^2 \eta_p + \frac{1}{2} \widehat{V}(p/N) +\frac{1}{2N}\sum_{q\in\Lambda^*}\widehat{V} ((p-q)/N) \eta_q = N^3 \lambda_\ell \widehat{\chi}_\ell (p) + N^2 \lambda_\ell \sum_{q \in \Lambda^*}\widehat{\chi}_\ell (p-q) \eta_q \, .
\end{equation}
Here $\chi_\ell$ is the characteristic function of the ball of radius $\ell$, centered at the origin. Through the coefficients $\eta_p$ we define the antisymmetric operator 
 \begin{equation*}\label{eq:Beta} B_\eta = \frac{1}{2} \sum_{p \in \L^*_+} \eta_p \big( b_p^* b_{-p}^* - b_p b_{-p} \big).
 \end{equation*} 
Conjugating (\ref{eq:cLN}) with the generalized Bogoliubov transformation $e^{B_\eta}$, we define the renormalized excitation Hamiltonian $\wt{\cG}_N = e^{-B_\eta} \cL_N e^{B_\eta}$. As shown in \cite{BBCS}, $\wt{\cG}_N$ has the form 
\[ \wt{\cG}_N \simeq C_{\wt{\cG}_N} + \cQ_{\wt{\cG}_N} + \cC_{\wt{\cG}_N} + \cV_N \]
up to small corrections. Here $C_{\wt{\cG}_N}$ is a constant, while $\cQ_{\wt{\cG_N}}, \cC_{\wt{\cG}_N}$ are quadratic and, respectively, cubic contributions in creation and annihilation operators. As discussed in \cite{BBCS}, this form of the excitation Hamiltonian is still not enough to determine its spectrum (not even up to errors of order one, in $N$), because the cubic term $\cC_{\wt{\cG}_N}$ is not negligible. A second renormalization, this time with a unitary transformation given by the exponential of a cubic expression in creation and annihilation operators, must be used to get rid of $\cC_{\wt{\cG}_N}$. The resulting twice renormalized excitation Hamiltonian has the form 
\begin{equation} \label{eq:wtJN} \wt{\cJ}_N \simeq C_{\wt{\cJ}_N} + \cQ_{\wt{\cJ}_N} + \cV_N \end{equation}
again up to small corrections. At this point the quadratic part of $\wt{\cJ}_N$ has the form 
\begin{equation}\label{eq:wtQN} \cQ_{\wt{\cJ}_N} = \sum_{p \in \L^*_+} \Big[ \widetilde{F}_p b_p^* b_p + \frac{1}{2} \widetilde{G}_p \big(b^*_p b^*_{-p} + b_p b_{-p} \big) \Big] \end{equation} 
with the coefficients 
\begin{equation}\label{eq:FpGp} \begin{split} 
\widetilde{F}_p &
= \Big[ p^2 +  \big( \widehat{V} (\cdot /N) * \widehat{f}_{N,\ell} \big)_p \Big] \cosh (2\eta_p) +  \big( \widehat{V} (\cdot /N) * \widehat{f}_{N,\ell} \big)_p \sinh (2\eta_p) \\
 \widetilde{G}_p &
 = \Big[ p^2 +  \big( \widehat{V} (\cdot /N) * \widehat{f}_{N,\ell} \big)_p \Big] \sinh (2\eta_p) +  \big( \widehat{V} (\cdot /N) * \widehat{f}_{N,\ell} \big)_p \cosh (2\eta_p). \end{split} \end{equation} 
To compute the spectrum of (\ref{eq:wtJN}), it is convenient to diagonalize (\ref{eq:wtQN}), conjugating it with another generalized Bogoliubov transformation.
As shown in \cite[Lemma 5.1]{BBCS}, the coefficients (\ref{eq:FpGp}) satisfy the bounds
\begin{equation}\label{eq:wtFG}
    \frac{p^2}{2}\le \widetilde{F}_p \le C(1+p^2),\qquad|\widetilde{G}_p|\le \frac{C}{p^2},\qquad|\widetilde{G}_p|\le \widetilde{F}_p
\end{equation}
for all $p\in\Lambda_+^*$. As a consequence, we can define coefficients $\tau_p$ requiring that 
\begin{equation}\label{eq:deftau}
	\begin{split} \tanh (2\tau_p) &= - \frac{\widetilde{G}_p}{\widetilde{F}_p} = - \frac{\big[ p^2 +  \big( \widehat{V} (\cdot /N) * \widehat{f}_{N,\ell} \big)_p \big] \sinh (2\eta_p) +  \big( \widehat{V} (\cdot /N) * \widehat{f}_{N,\ell} \big)_p \cosh (2\eta_p)}{\big[ p^2 +  \big( \widehat{V} (\cdot /N) * \widehat{f}_{N,\ell} \big)_p \big] \cosh (2\eta_p) +  \big( \widehat{V} (\cdot /N) * \widehat{f}_{N,\ell} \big)_p \sinh (2\eta_p)}.
\end{split}
\end{equation} 
With this choice of the coefficients $\tau_p$, it was proven in \cite{BBCS} that \[ \wt{\cM}_N  = e^{-B_\tau} \wt{\cJ}_N e^{B_\tau} \simeq C_{\wt{\cM}_N} + \sum_{p\in \L^*_+} \eps (p) a_p^* a_p + \cV_N \] for appropriate constant $C_{\cM_N}$ and dispersion $\eps (p)$. It is then easy to determine the low-energy spectrum of $\wt{\cM}_N$ (using the positivity of the potential energy operator $\cV_N$, and its smallness on states with few low-momentum excitations); see \cite{BBCS} for the details. 

In the present work, to improve the energy resolution up to errors smaller than $(\log N)/N$, we find it more convenient to combine $e^{B_\eta}$ and $e^{B_\tau}$ into a single generalized Bogoliubov transformation. To this end, we define the coefficients 
\begin{equation}\label{eq:defmu} \mu_p = \eta_p + \tau_p \end{equation} 
for all $p \in \L^* = 2\pi \bZ^3$, with $\eta_p$ as in (\ref{eq:defeta}) and $\tau_p$ as introduced in (\ref{eq:deftau}). In the next lemma, we collect important properties of the coefficients $\eta_p$, $\tau_p$, $\mu_p$, and
\begin{equation} \label{eq:def_gs}
    \gamma_p=\cosh \mu_p,\quad \sigma_p=\sinh\mu_p.
\end{equation}
We will systematically use such properties in estimates throughout the paper.

\begin{lemma} \label{lemma:coefficients}
Let $V$ be as in the assumptions of Theorem \ref{thm:main}.
\begin{itemize}
\item[(i)] There exists a constant $C>0$ such that
\begin{equation*} \label{eq:decay_eta}
|\eta_p|\le C|p|^{-2}.
\end{equation*}
Moreover, 
\begin{equation} \label{eq:norms_eta}
|\eta_0| \leq C, \qquad \|\eta\|_2 \le C, \qquad \sum_{p \in \L^*} p^2 |\eta_p|^2 \leq C N, \qquad \| \eta \|_q \leq C \max \{ 1 , N^{3-2q} \}   
\end{equation}
for all $q \geq 1$.
\item[(ii)] There exists a constant $C>0$ such that
\begin{equation}\label{eq:tau-bds}
|\tau_p| \leq C |p|^{-4}, \qquad  \sum_{p \in \L^*_+} \tau_p^2 + \sum_{p \in \L^*_+} p^2 \tau_p^2 < C.
\end{equation} 
\item[(iii)] As a consequence of $(i)$ and $(ii)$ and of the definitions of $\mu_p,\gamma_p,\sigma_p$, we have
\begin{equation} \label{eq:decay_mu,g,s}
    |\mu_p|+|\sigma_p|\le C |p|^{-2},\qquad |\gamma_p|\le C
\end{equation}
and
\begin{equation} \label{eq:norms_mu,g,s}
    \|\mu\|_2+\|\sigma\|_2\le C,\qquad \sum_{p\in\Lambda_+^*}p^2\big(\mu_p^2+ \sigma_p^2)\le C N
\end{equation}
for a suitable $C>0$.
\end{itemize}
\end{lemma}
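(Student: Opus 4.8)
The plan is to treat the three items in turn. Items (i) and (ii) are of the same nature as the bounds on the kernels of the Bogoliubov transformations used in \cite{BBCS}, and I would recover them from the properties of the Neumann scattering problem collected in Lemma \ref{lemma:scattering} together with the scattering equation \eqref{scattering} and the bounds \eqref{eq:wtFG}. Item (iii) is then a short consequence of (i), (ii) and elementary estimates for $\sinh$ and $\cosh$, and is really the only part that has to be carried out in detail here. The point that looks most delicate is the bound $\sum_{p} p^2|\eta_p|^2 \le CN$ in (i), which is not implied by the pointwise decay $|\eta_p| \le C|p|^{-2}$ alone; but this, like the rest of (i), is already available in \cite{BBCS}.

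For (i): the pointwise bound $|\eta_p| \le C|p|^{-2}$ I would read off \eqref{scattering} by solving for $p^2\eta_p$ and estimating the four remaining terms uniformly in $p$ and $N$ — one has $|\widehat V(p/N)| \le \|V\|_1$; the term $\tfrac{1}{2N}\sum_q\widehat V((p-q)/N)\eta_q$ is, up to the prefactor $1/N$, a Fourier coefficient of $V(N\,\cdot)\,w_{N,\ell}$ and is $\cO(N^{-1})$ by $\|w_\ell\|_\infty \le C$; and the two terms with $\lambda_\ell$ are controlled using $N^3\lambda_\ell \le C$ from Lemma \ref{lemma:scattering}(i), $|\widehat\chi_\ell(p)| \le |B_\ell|$, and the bound on $\eta$ just obtained, which closes the estimate. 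The bound $|\eta_0| \le C$ follows directly from $\eta_0 = -N^{-2}\int w_{N,\ell}$ and the decay \eqref{eq:decay_w}. The summability bounds $\|\eta\|_2 \le C$, $\sum_p p^2|\eta_p|^2 \le CN$ and $\|\eta\|_q \le C\max\{1, N^{3-2q}\}$ I would obtain from the representation $\eta_p = -N^{-2}\widehat w_\ell(p/N)$, comparing the sums over $\Lambda^*$ with the corresponding integrals of $w_\ell$ and $\nabla w_\ell$ via \eqref{eq:decay_w} and splitting the sums at $|p| \sim N$; these are the estimates worked out in \cite{BBCS}.

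For (ii): from \eqref{eq:wtFG} we have $\widetilde F_p \ge p^2/2 > 0$, $|\widetilde G_p| \le C|p|^{-2}$ and $|\widetilde G_p| \le \widetilde F_p$, while from \eqref{eq:FpGp} one computes $\widetilde F_p^2 - \widetilde G_p^2 = p^2\big(p^2 + 2(\widehat V(\cdot/N)*\widehat f_{N,\ell})_p\big) > 0$; since $(\widehat V(\cdot/N)*\widehat f_{N,\ell})_p$ is bounded uniformly in $p$ and $N$, this shows $|\widetilde G_p|/\widetilde F_p$ is bounded away from $1$, so $\tau_p$ in \eqref{eq:deftau} is well defined and $|\tanh(2\tau_p)| = |\widetilde G_p|/\widetilde F_p \le 2C|p|^{-4}$. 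Since $\operatorname{arctanh}$ is Lipschitz on any interval bounded away from $\pm 1$, this gives $|\tau_p| \le C|p|^{-4}$, and hence $\sum_p\tau_p^2 \le C\sum_{p\ne 0}|p|^{-8} < \infty$ and $\sum_p p^2\tau_p^2 \le C\sum_{p\ne 0}|p|^{-6} < \infty$.

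Finally, (iii) follows from (i), (ii) and the definitions \eqref{eq:defmu}, \eqref{eq:def_gs}. Since $\mu_p = \eta_p + \tau_p$, the triangle inequality gives $|\mu_p| \le |\eta_p| + |\tau_p| \le C|p|^{-2}$, so $M := \sup_p|\mu_p| \le C$; then $|\sigma_p| = |\sinh\mu_p| \le |\mu_p|\cosh M \le C|p|^{-2}$ and $|\gamma_p| = |\cosh\mu_p| \le \cosh M \le C$. In the same way $\|\mu\|_2 \le \|\eta\|_2 + \|\tau\|_2 \le C$, $\|\sigma\|_2 \le (\cosh M)\|\mu\|_2 \le C$, $\sum_p p^2\mu_p^2 \le 2\sum_p p^2\eta_p^2 + 2\sum_p p^2\tau_p^2 \le CN$, and $\sum_p p^2\sigma_p^2 \le (\cosh M)^2\sum_p p^2\mu_p^2 \le CN$. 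The bulk of the work is in (i) and is imported from \cite{BBCS}; (ii) and (iii) are elementary.
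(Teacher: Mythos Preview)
Your overall approach matches the paper's: defer most of (i) to \cite{BBCS}, derive (ii) from \eqref{eq:wtFG} and the definition of $\tau_p$, and deduce (iii) from (i)--(ii) by elementary $\sinh/\cosh$ estimates. Your arguments for (ii) and (iii) are correct and essentially what the paper indicates (the paper does not even spell them out).

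There is one genuine gap, however: the bound $\|\eta\|_q^q \le C\max\{1,N^{3-2q}\}$ for general $q\ge 1$ is \emph{not} in \cite{BBCS}; the paper singles it out as the only part of (i) requiring an explicit proof here. Your sketched route via position-space integrals of $w_\ell$, $\nabla w_\ell$ works for $\|\eta\|_2$ and $\sum_p p^2\eta_p^2$, but does not obviously handle the tail $\sum_{|p|>N}|\eta_p|^q$ when $q\le 3/2$, since the pointwise bound $|\eta_p|\le C|p|^{-2}$ alone gives a divergent series there. The paper instead uses the scattering equation \eqref{scattering} to write, for $|p|>N$, essentially $|\eta_p|\le C|p|^{-2}\big(|(\widehat V(\cdot/N)\ast\widehat f_{N,\ell})_p|+|(\widehat\chi_\ell\ast\widehat f_{N,\ell})_p|\big)$, and then applies H\"older together with $\|\widehat V(\cdot/N)\ast\widehat f_{N,\ell}\|_2=\|N^3V(N\cdot)f_{N,\ell}\|_2\le CN^{3/2}$ to get $\sum_{|p|>N}|\eta_p|^q\le CN^{3-2q}$. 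You should incorporate this step; the rest of your proposal is fine.
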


\begin{proof}
We only prove the last bound in (\ref{eq:norms_eta}) since the other estimates in $(i)$ are shown in \cite[Appendix B]{BBCS}, the estimates in $(ii)$ follow from the definition of $\tau_p$ together with \eqref{scattering} and \eqref{eq:wtFG}, and $(iii)$ is an immediate consequence of $(i)$ and $(ii)$. We have 
\[ \begin{split}  \| \eta \|_q^q &= \sum_{p \in \Lambda^*_+} | \eta_p |^q = \sum_{p : |p| \leq N} |p|^{-2q} + \sum_{p : |p| > N} |\eta_p|^q\\ & \leq C \max \{ N^{3-2q} , 1 \} +  \sum_{p : |p| > N}  \frac{\big| \big( \widehat{V} (\cdot / N) * \widehat{f}_{N,\ell}\big)_p \big|^q}{|p|^{2q}} + C \sum_{p : |p| > N}  \frac{\big| \big( \widehat{\chi}_\ell * \widehat{f}_{N,\ell} \big)_p \big|^q}{|p|^{2q}}
 \end{split} \]
With H\"older's inequality, we find 
\[ \begin{split} \sum_{p : |p| > N}  \frac{\big| \big( \widehat{V} (\cdot / N) * \widehat{f}_{N,\ell}\big)_p \big|^q}{|p|^{2q}} &\leq \big\| \widehat{V} (\cdot / N) * \widehat{f}_{N,\ell} \big\|_2^q  \,  \Big\| \frac{\chi (|.| > N)}{|.|^{2q}} \Big\|_{2/(2-q)} \\ &\leq C N^{3q/2} N^{-(7q-6)/2} \leq C N^{3-2q}. \end{split} \] 
\end{proof} 

Using the coefficients $\mu_p$ we define the antisymmetric operator
\begin{equation}\label{eq:Bmu} B_\mu = B_\eta + B_\tau = \frac{1}{2} \sum_{p \in \L^*_+} \mu_p (b_p^* b_{-p}^* - \text{h.c.} ). \end{equation} 
With the corresponding generalized Bogoliubov transformation $e^{B_\mu}$, we define the renormalized excitation Hamiltonian 
\begin{equation}\label{eq:cGN-def} \cG_N = e^{-B_\mu} \cL_N e^{B_\mu}. \end{equation} 
The advantage that we have, when working with $\cG_N$ rather than with $\cL_N$, is that, after removing the microscopic correlation structure through $e^{-B_\mu}$, low-energy states of $\cG_N$ have only few excitations, with bounded energy. More precisely, we obtain the following a-priori estimates on products of the energy operator $\cH_N = \cK+ \cV_N$ with arbitrary powers of the number of particles operator $\cN_+$.  
\begin{prop} \label{prop:apri} 
Let $\psi_N \in L^2_s (\Lambda^N)$ be a normalized sequence of approximate ground state vectors of the Hamilton operator (\ref{eq:ham0}), satisfying $\psi_N = \chi (H_N \leq E_N +K) \psi_N$, where $E_N$ is the ground state energy of (\ref{eq:ham0}) and $K > 0$ is fixed. Let $\xi_N = e^{-B_\mu} U_N \psi_N$ be the corresponding normalized sequence of approximate ground state vectors of the renormalized excitation Hamiltonian (\ref{eq:cGN-def}). Let $k \in \bN$. Then, there exists $C > 0$ (depending on $K$ and $k$) such that 
\begin{equation}\label{eq:apri-xi} \langle \xi_N, (\cH_N + 1) (\cN_+ + 1)^k \xi_N \rangle \leq C \end{equation} 
for all $N \in \bN$.
\end{prop}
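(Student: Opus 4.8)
The plan is to transfer the spectral assumption $\psi_N = \chi(H_N \le E_N + K)\psi_N$ through the unitary maps $U_N$ and $e^{B_\mu}$ into a quantitative control on $\cN_+$ and $\cH_N$ for $\xi_N$. The starting point is the a-priori bound already available at the level of $\wt\cG_N = e^{-B_\eta}\cL_N e^{B_\eta}$ from \cite{BBCS4, BBCS}: there one knows that approximate ground states of $\wt\cG_N$ satisfy $\langle \, \cdot \, , (\cH_N+1)(\cN_++1)^k \, \cdot\, \rangle \le C$ (the optimal condensation estimate). Since $e^{B_\mu} = e^{B_\eta} e^{B_\tau} e^{Z}$ for some $Z$ that is a ``small'' antisymmetric operator (the Baker–Campbell–Hausdorff remainder, controlled because $[B_\eta, B_\tau]$ and higher commutators are bounded using $\|\eta\|_2, \|\tau\|_2 \le C$ and the decay in Lemma \ref{lemma:coefficients}), it suffices to show that conjugation by the bounded-coefficient Bogoliubov transformations $e^{\pm B_\tau}$ and $e^{\pm Z}$ preserves moments of $\cN_+$ and of $\cH_N$ up to constants. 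Alternatively, and more directly, one can work with $e^{B_\mu}$ in one stroke: the point is simply that $\mu$ satisfies the \emph{same} class of bounds as $\eta$ (indeed $\mu_p = \eta_p + \tau_p$ with $\tau$ strictly better behaved, by Lemma \ref{lemma:coefficients}), so every estimate proved in \cite{BBCS} for $e^{B_\eta}$ goes through verbatim for $e^{B_\mu}$.

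Concretely I would proceed in the following steps. First, establish the standard ``growth of $\cN_+$'' lemma for generalized Bogoliubov transformations: for any $k\in\bN$ there is $C = C(k, \|\mu\|_2)$ with
\[
e^{-B_\mu}(\cN_++1)^k e^{B_\mu} \le C\,(\cN_++1)^k
\]
as an operator inequality on $\cF_\perp^{\le N}$, proved by differentiating $s \mapsto \langle \xi, e^{-sB_\mu}(\cN_++1)^k e^{sB_\mu}\xi\rangle$, commuting $(\cN_++1)^k$ through $B_\mu$ using \eqref{eq:CCR_b}–\eqref{eq:baa}, and Gronwall; this only uses $\|\mu\|_2\le C$ from \eqref{eq:norms_mu,g,s}. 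Second, prove the analogous stability of the energy: $e^{-B_\mu}\cH_N e^{B_\mu} \le C(\cH_N + 1)(\cN_+ +1)^{C}$, the key inputs being $\sum_p p^2(\mu_p^2+\sigma_p^2) \le CN$ and $|\mu_p| \le C|p|^{-2}$ from Lemma \ref{lemma:coefficients}, together with the bound on $\cV_N$ in terms of $\cN_+$ and $\cK$. Third, invoke the known a-priori estimate from \cite{BBCS4}/\cite{BBCS} for approximate ground states after the first Bogoliubov transformation — or, since we are combining the transformations, re-derive it directly for $\cG_N$ by bounding $\cG_N$ from below: writing $\cG_N = e^{-B_\mu}\cL_N e^{B_\mu}$ and using the representation $\wt\cG_N \simeq C_{\wt\cG_N} + \cQ_{\wt\cG_N} + \cC_{\wt\cG_N} + \cV_N$ recalled in the text, together with the cubic-term estimate and the coercivity $\cQ_{\wt\cG_N} + \cV_N \ge c\,\cH_N - C(\cN_++1)$ on the relevant sector, one gets $\cG_N \ge E_N^{(0)} + c\,\cH_N - C$ on states with controlled $\cN_+$. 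Fourth, close the argument: from $\langle \psi_N, H_N\psi_N\rangle \le E_N + K$ and $E_N = E_N^{(0)} + O(1)$ one deduces $\langle \xi_N, \cG_N \xi_N\rangle \le E_N^{(0)} + C$, hence first $\langle\xi_N, (\cN_++1)\xi_N\rangle \le C$ and $\langle\xi_N, \cH_N\xi_N\rangle\le C$; then bootstrap to higher powers by localizing in $\cN_+$ (spectral projections $\chi(\cN_+ > M)$) exactly as in \cite{BBCS, BBCS4}, using that $[\cG_N, \cN_+]$ is relatively bounded and that the non-number-conserving pieces of $\cG_N$ carry enough momentum decay.

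The main obstacle is Step 2 together with the bootstrap in Step 4: controlling the \emph{energy} operator $\cH_N = \cK + \cV_N$ — not just $\cN_+$ — through conjugation by $e^{B_\mu}$ and extracting that $\cH_N(\cN_++1)^k$ has bounded expectation. The difficulty is that $\cV_N$ is quartic and $e^{B_\mu}$ generates, via the potential term $\cL_N^{(2,V)}$ and the cubic term $\cL_N^{(3)}$, a cascade of commutators whose individual terms must each be bounded by $\delta\,\cH_N + C_\delta(\cN_++1)^k$; this requires careful use of the scattering equation \eqref{scattering} (so that the dangerous $\widehat V(\cdot/N)$ pieces combine with $p^2\eta_p$ into the better-behaved $(\widehat V(\cdot/N)*\widehat f_{N,\ell})_p$) and of all four norm bounds in \eqref{eq:norms_eta}. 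This is precisely the content of the lengthy commutator analysis in \cite{BBCS}; since $\mu$ obeys the same bounds, the estimates port over, but the bookkeeping is the substantive part. I would therefore organize the write-up so that the bulk of the work is a single lemma asserting $e^{-B_\mu}\cH_N e^{B_\mu}$ and $e^{-B_\mu}(\cN_++1)^k e^{B_\mu}$ are comparable to $\cH_N(\cN_++1)^{k'}$ up to constants, and then deduce Proposition \ref{prop:apri} from it in a few lines by combining with the lower bound on $\cG_N$ and the hypothesis on $\psi_N$.
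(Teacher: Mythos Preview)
Your outline is workable in principle, but it takes a substantially heavier route than the paper and contains one imprecision worth flagging. In Step 2 you claim $e^{-B_\mu}\cH_N e^{B_\mu} \le C(\cH_N+1)(\cN_++1)^C$ with $C$ independent of $N$; this is false as stated---the correct bound (Lemma~\ref{lemma:a_priori_quadratic}) carries a factor $N$ in front of $(\cN_++1)^{j+1}$, precisely because $\sum_p p^2\mu_p^2 \sim N$. You implicitly acknowledge this (you cite $\sum_p p^2\mu_p^2 \le CN$), so your real plan is to recover uniformity via Step~3, i.e.\ to re-run the full coercivity/lower-bound analysis of \cite[Prop.~4.1]{BBCS} with $\mu$ in place of $\eta$. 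That would work, since $\mu$ satisfies the same bounds as $\eta$, but it means reproducing a long argument (structure of $\cG_N$, cubic-term control, bootstrap in $\cN_+$). The paper avoids all of this: it takes the \emph{known} bound $\langle \xi'_N,(\cK+1)(\cN_++1)^k\xi'_N\rangle \le C$ for $\xi'_N = e^{-B_\eta}U_N\psi_N$ as a black box, and interpolates along the path $s\mapsto \xi_s := e^{-sB_\mu}e^{sB_\eta}\xi'_N$ (so $\xi_0=\xi'_N$, $\xi_1=\xi_N$). Differentiating $\langle\xi_s,\cK(\cN_++1)^k\xi_s\rangle$ produces a commutator with $B_\eta - B_\mu = -B_\tau$ plus a remainder $J$ from $e^{-sB_\mu}B_\eta e^{sB_\mu}-B_\eta$; both are controlled using the crucial fact that $|\tau_p|\le C|p|^{-4}$ and $\sum_p p^2\tau_p^2 \le C$ \emph{uniformly in $N$} (unlike $\eta$ or $\mu$), together with the non-uniform growth bounds of Lemma~\ref{lemma:a_priori_quadratic} applied once. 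Gronwall then closes. The payoff of the paper's route is that the hard spectral input from \cite{BBCS} is used off-the-shelf, and the new work reduces to exploiting the extra decay of $\tau$; your route re-derives that input from scratch. Also, drop the BCH idea $e^{B_\mu}=e^{B_\eta}e^{B_\tau}e^Z$: since the $b$-operators do not satisfy exact CCR, $B_\eta$ and $B_\tau$ generate a messy commutator structure and the remainder $Z$ is not obviously small in any useful norm.
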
  
A-priori bounds of the form (\ref{eq:apri-xi}) have been established in \cite[Prop. 4.1]{BBCS}, for a sequence $\xi'_N = e^{-B_\eta} U_N \psi_N$, defined in terms of the generalized Bogoliubov transformation generated by $B_\eta$, rather than in terms of that generated by $B_\mu$. From (\ref{eq:Bmu}), the difference $B_\mu - B_\eta = B_\tau$ is associated with the kernel $\tau$ exhibiting, by (\ref{eq:tau-bds}), fast decay in momentum space. For this reason, the estimate in Prop. \ref{prop:apri} follows from the bounds for $\xi'_N$ in \cite[Prop. 4.1]{BBCS}. This is shown in Appendix \ref{app:apri}.

In the next theorem, whose proof is deferred to Section \ref{sec:cGN}, we determine the operator $\cG_N$, up to very small errors (which can be controlled through the a-priori estimates in Prop. \ref{prop:apri}).  
\begin{theorem} \label{thm:quadratic} 
Let $V\in L^3(\mathbb{R}^3)$ be non-negative, compactly supported, and spherically symmetric. Let $\mathcal{G}_N$ be defined as in \eqref{eq:cGN-def} with parameter $\ell\in(0,\frac{1}{2})$ small enough. For $p \in \Lambda^*_+$, let $\gamma_p = \cosh \mu_p$ and $\sigma_p = \sinh \mu_p$. Moreover, define the constant 
\begin{equation}
    \label{eq:C_G_N}
    \begin{split}
  C_{\mathcal{G}_N}  
  = \; & \frac{(N-1)}{2}\widehat V(0)  + \sum_{p \in \L^*_+} \Big[ p^2\s_p^2  + \widehat V (p/N) (\s_p\g_p +\s_p^2)\Big] \\
 & + \frac{1}{2N} \sum_{p \in \L^*_+} \widehat V ((p-q)/N) \s_q\g_q\s_p\g_p + \frac1N \sum_{p \in \L^*_+}\Big[ p^2\eta_p^2 + \frac1{2N}\big( \widehat V (\cdot / N) *\eta\big)_p \eta_p\Big] \\
 & - \frac1N \sum_{p \in \L^*,  q\in \L^*_+} \widehat V (p/N) \eta_p\, \s_q^2 
 \end{split}
\end{equation}
and the cubic operator
\begin{equation} \label{eq:cC_G_N}
    \begin{split}
        \cC_{\mathcal{G}_N} = \frac{1}{\sqrt{N}}\sum_{\substack{p,q\in \Lambda_+^*\\p+q\ne0}}\widehat{V} (p/N) \,b^*_{p+q}b^*_{-p}(\gamma_q b_q+\sigma_q b^*_{-q})+\mathrm{h.c.}.
    \end{split}
\end{equation}
Furthermore, let $\cT_{\cG_N} = \cT^{(2)}_{\cG_N} + \cT^{(4)}_{\cG_N}$, with 
\begin{equation}\label{eq:cT_G_N} \begin{split} 
\cT^{(2)}_{\cG_N} = \; & \frac{1}{2N^2} \sum_{p\in\Lambda_+^*} \big(\widehat{V} (\cdot/ N)* \eta \big)_p  ( b_p b_{-p}+b^*_p  b^*_{-p}) \big(1+2 \| \s \|_2^2) \\
        &+\frac{1}{2N} \sum_{q \in \L^*_+} \Big(2\sigma_q^2+\frac{\gamma_q\sigma_q}{\mu_q}-1\Big) \sum_{p \in \Lambda_+^*}  p^2 \eta_p (b_pb_{-p}+b^*_pb^*_{-p})\\
               \mathcal{T}_{\mathcal{G}_N}^{(4)}=\;&\frac{1}{2N}\sum_{p, q \in \L^*_+, r \in \L^*}\widehat{V}(r/N) \sigma_{p}\sigma_{q+r}b^*_{p+r}b^*_{q}b^*_{-p}b^*_{-q-r} + \hc\\
 & + \frac{1}{N^2} \sum_{p,q\in \Lambda^*_+} \big(\widehat{V}(\cdot / N)*\eta\big)_p \gamma_q \s_q b^*_pb^*_{-p} b^*_q b^*_{-q} +\mathrm{h.c.} \\
 &- \frac{1}{2N} \sum_{p \in \L^*, q \in \L^*_+} \big(\widehat V (\cdot / N) * \widehat{f}_{N,\ell} \big)_p \Big(\gamma_q\sigma_q- \frac{\eta_q}{2} - \frac{\s_q^2}{2\mu_q}\Big) b^*_pb^*_{-p} b^*_qb^*_{-q} +\mathrm{h.c.}. \end{split} \end{equation} 
Then
\begin{equation} \label{eq:expansion_G_N}
 \begin{split}   \mathcal{G}_N= \; &C_{\mathcal{G}_N} + \sum_{p \in \L_+^*} \sqrt{|p|^4 + 2 (\widehat{V} (\cdot /N) * \widehat{f}_{N,\ell} )_p p^2} \; a_p^* a_p \\ &- \frac{1}{N} \sum_{p,q \in \L^*_+} \big( \widehat{V} (q/N) + \widehat{V} ((q+p)/N) \big) \eta_q \big[ (\g_p^2 + \s_p^2) b_p^* b_p + \g_p \s_p (b_p^* b_{-p}^* + b_p b_{-p} ) \big]  \\ &+\mathcal{C}_{\mathcal{G}_N} + \cV_N + \mathcal{T}_{\mathcal{G}_N} + \mathcal{E}_{\mathcal{G}_N}
\end{split} \end{equation}
where, for every $0< \eps < 1$, we have 
\begin{equation}\label{eq:cE-bds} 
    \pm \mathcal{E}_{\mathcal{G}_N} \le  \eps \cN_+ + \frac{C}{\eps N} (\cH_N + \cN_+^3 + 1) (\cN_+ + 1)  
\end{equation}
with $\cH_N = \cK + \cV_N$. 
\end{theorem}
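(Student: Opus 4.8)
The plan is to compute $\cG_N = e^{-B_\mu} \cL_N e^{B_\mu}$ by conjugating each of the five pieces of $\cL_N$ in \eqref{eq:cLN} separately and collecting the resulting terms. Since $\mu_p = \eta_p + \tau_p$ and $B_\mu = B_\eta + B_\tau$, one option is to iterate the known analysis of \cite{BBCS}: first conjugate with $e^{B_\eta}$ (which is what produces $\wt\cG_N$), then with $e^{B_\tau}$ (which produces $\wt\cJ_N$), and finally reorganize. However, since we want errors smaller than $(\log N)/N$ and since we also want to keep track of the cubic term $\cC_{\cG_N}$ and the quartic term $\cT^{(4)}_{\cG_N}$ that were previously absorbed into error terms, it is cleaner to work directly with $B_\mu$. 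The basic tool is the identity $e^{-B_\mu} A e^{B_\mu} = A + \int_0^1 e^{-sB_\mu}[A,B_\mu]e^{sB_\mu}\,ds$, applied repeatedly, together with the commutation relations \eqref{eq:CCR_b}, \eqref{eq:baa}, and the standard fact that $e^{-B_\mu} b_p e^{B_\mu} = \gamma_p b_p + \sigma_p b_{-p}^* + (\text{error terms controlled by } \cN_+/N)$, with analogous expansions for $a_p^* a_q$ and for the quadratic and cubic operators. The decay estimates in Lemma \ref{lemma:coefficients}, especially $\|\mu\|_2 \le C$, $\sum p^2 \mu_p^2 \le CN$, and $|\mu_p| \le C|p|^{-2}$, are what make all the commutator expansions converge and what allow one to truncate them after finitely many terms, with remainders bounded as in \eqref{eq:cE-bds} via Prop. \ref{prop:apri}.

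Concretely I would proceed term by term. The constant $\cL_N^{(0)}$ is unchanged. For $\cK$, conjugation by $e^{B_\mu}$ produces, besides $\cK$ itself, the familiar $\sum_p p^2(\gamma_p^2+\sigma_p^2-1)b_p^*b_p + \sum_p p^2 \gamma_p\sigma_p(b_p^*b_{-p}^*+b_pb_{-p})$ plus constants $\sum_p p^2\sigma_p^2$, plus error terms; the off-diagonal quadratic piece here is the source of the third line of $\cT^{(2)}_{\cG_N}$ (through the mismatch between the true $\mu_p$-dependence and the linearized $\eta_p$, encoded in the factor $2\sigma_q^2 + \gamma_q\sigma_q/\mu_q - 1$). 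For $\cL_N^{(2,V)}$, conjugation produces the quadratic form with coefficients built from $\widehat V(p/N)$, $\gamma_p$, $\sigma_p$, contributing to the constant $C_{\cG_N}$ and to the dispersion. For $\cL_N^{(3)}$, the key point is that conjugation by $e^{B_\mu}$ turns the cubic term into $\cC_{\cG_N}$ as written in \eqref{eq:cC_G_N} (the $b_q$ becomes $\gamma_q b_q + \sigma_q b_{-q}^*$), plus the linear-in-$\eta$ quadratic corrections appearing on the second line of \eqref{eq:expansion_G_N}, plus the first line of $\cT^{(2)}_{\cG_N}$ and the first two lines of $\cT^{(4)}_{\cG_N}$. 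For $\cV_N$, conjugation produces $\cV_N$ itself plus terms where some of the four operators get replaced by their $\sigma$-components; the fully off-diagonal quartic piece $b^*b^*b^*b^*$ is precisely the first line of $\cT^{(4)}_{\cG_N}$, and there are additional contributions to the constant and the dispersion. The crucial cancellation is that the quadratic form one obtains from $\cK + \cL_N^{(2,V)} + (\text{relevant part of } \cV_N)$, after using the scattering equation \eqref{scattering} to rewrite $(\widehat V(\cdot/N)*\widehat f_{N,\ell})_p$, diagonalizes into $\sum_p \sqrt{|p|^4 + 2(\widehat V(\cdot/N)*\widehat f_{N,\ell})_p p^2}\,a_p^*a_p$ — this is exactly the choice of $\tau_p$ in \eqref{eq:deftau}, now folded into $\mu_p$.

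The main obstacle, and the bulk of the work, is the error bookkeeping: showing that every term not displayed in \eqref{eq:expansion_G_N} is bounded as in \eqref{eq:cE-bds}. This requires (a) controlling all the remainders in the Duhamel expansions $e^{-B_\mu}(\cdot)e^{B_\mu}$ — typically each additional commutator gains a factor $\cN_+/N$ or $\|\mu\|_2$, so after two or three iterations the remainder is $O(N^{-1})$ times a controllable operator; (b) handling the growth of $\cN_+$ under conjugation, i.e. using $e^{-B_\mu}(\cN_++1)^k e^{B_\mu} \le C(\cN_++1)^k$ uniformly in $N$, which follows from $\|\mu\|_2 \le C$; (c) controlling the cross terms where the potential energy $\cV_N$ meets a commutator with $B_\mu$, which is where the factor $\cH_N$ (rather than just $\cN_+$) on the right-hand side of \eqref{eq:cE-bds} becomes necessary — one extracts kinetic or potential energy to absorb momentum sums that would otherwise diverge. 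Many of these estimates are close to those in \cite{BBCS} (Sections 5–7), so the strategy is to isolate the genuinely new contributions — those of relative size $(\log N)/N$, coming from the cubic operator and from $\cT_{\cG_N}$ — and track them exactly, while invoking the existing machinery for the rest; the $\eps$-dependence in \eqref{eq:cE-bds} is just the standard splitting of a product $A^*B$ via $\eps A^*A + \eps^{-1}B^*B$. I expect that isolating and correctly normalizing $\cT^{(2)}_{\cG_N}$ and $\cT^{(4)}_{\cG_N}$ — in particular getting the precise combinatorial coefficients such as $\gamma_q\sigma_q - \eta_q/2 - \sigma_q^2/(2\mu_q)$ right — is the most delicate part of the term-by-term computation.
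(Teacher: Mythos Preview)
Your proposal follows essentially the same approach as the paper: decompose $\cL_N$ into its five pieces, conjugate each by $e^{B_\mu}$ using $e^{-B_\mu} b_p e^{B_\mu} = \gamma_p b_p + \sigma_p b_{-p}^* + d_p$, collect the terms, and observe that the choice of $\tau_p$ in \eqref{eq:deftau} diagonalizes the resulting quadratic form. A few of your attributions are off, though: the cubic term $\cL_N^{(3)}$ only produces $\cC_{\cG_N}$ plus error (the quadratic correction on the second line of \eqref{eq:expansion_G_N} is added and subtracted by hand at the very end, anticipating the cubic conjugation), the second line of $\cT^{(2)}_{\cG_N}$ and part of $\cT^{(4)}_{\cG_N}$ come from the conjugation of $\cK$, and the remaining pieces of $\cT^{(4)}_{\cG_N}$ come from $\cV_N$ and from the careful expansion of the collected $d$-operator terms; also, the ``error terms controlled by $\cN_+/N$'' in your expansion of $e^{-B_\mu} b_p e^{B_\mu}$ require sharper bounds than in \cite{BBCS} because here $\|\mu\|_2$ is not small, and the paper handles this via an integral representation for $d_p$ (Lemma \ref{lemma:d} and \eqref{eq:d_detailed_expansion}) that you should not treat as routine.
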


{\it Remark.}  In the representation \eqref{eq:expansion_G_N} of the renormalized excitation Hamiltonian, we distinguish three types of error terms (terms which will not contribute to the energy of the Hamiltonian, up to order $(\log N) / N$). First of all, in $\cE_{\cG_N}$ we absorb several contributions that are controlled by the second term on the r.h.s. of (\ref{eq:cE-bds}). With the a-priori bounds in Prop. \ref{prop:apri}, these terms are small, of order $N^{-1}$, in the limit $N \to \infty$. Other contributions to the error $\cE_{\cG_N}$ are bounded by $\eps \cN_+$, for an arbitrary small $\eps > 0$. Since the cubic conjugation only increases $\cN_+$ by $\cO (N^{-1})$, we will control these terms using a little bit of kinetic energy. Terms in $\cT_{\cG_N}$, on the other hand, could only be controlled, at this point, by \[ | \langle \xi, \cT_{\cG_N} \xi \rangle | \leq C N^{-1/2} \| \cK^{1/2} \xi \| \| (\cN_+ + 1) \xi \|  \] 
or by
\[ | \langle \xi, \cT_{\cG_N} \xi \rangle | \leq C N^{-1/2} \| \cK^{1/2} \cN_+^{1/2} \xi \| \| (\cN_+ + 1)^{1/2} \xi \| \,. \] 
Notice, in the two bounds, the presence of the operator $(\cN_+ + 1)$, instead of just $\cN_+$, due to the fact that contributions in $\cT_{\cG_N}$ only contain creation or annihilation operators, but never both. Since moreover the cubic conjugation changes the expectation of $\cK$ by order one, this estimate does not yet allow us conclude that $\cT_{\cG_N}$ is negligible (instead, we first have to apply the cubic conjugation to $\cT_{\cG_N}$ and only afterwards we will be able to show that it can be dropped).  

\medskip

To get rid of the cubic term $\cC_{\cG_N}$ in \eqref{eq:expansion_G_N}, we conjugate $\cG_N$ with a second unitary transformation, given by the exponential of a cubic expression in (modified) creation and annihilation operators. We define  
\begin{equation} \label{eq:def_A}
A=\frac{1}{\sqrt{N}}\sum_{\substack{r,v \in \Lambda_+^*\\r+v\neq 0}}b^*_{r+v}b^*_{-r}(\eta_r\gamma_vb_v+\nu_{r,v}b^*_{-v})-\mathrm{h.c.}=A_{\gamma}+A_{\nu},
\end{equation}
where 
\begin{equation}\label{eq:def_wtsigma}
\nu_{r,v}=  \frac{2r^2\eta_r \sigma_v}{|r+v|^2+r^2+v^2},
\end{equation}
and $\eta$, $\gamma$, and $\sigma$ were defined in \eqref{eq:defeta} and in \eqref{eq:def_gs} (recall from Lemma \ref{lemma:coefficients} that $|\eta_r| , |\sigma_r| \leq C |r|^{-2}$, while $|\gamma_r| \leq C$, for an appropriate constant $C> 0$). The choice of the coefficients in $A$ guarantees that the commutator $[\cH_N , A]$ produces a contribution cancelling the cubic term $\cC_{\cG_N}$ in (\ref{eq:cC_G_N}). Compared with the cubic phase used in \cite{BBCS} and in later works, where the coefficient in front of the operator $b_{r+v}^* b_{-r}^* b_{-v}^*$ was simply given by $\eta_r \sigma_v$, we modify here the choice of $\nu_{r,v}$ to eliminate certain terms arising from the commutator $[\cK, A]$ which would not be negligible at the level of accuracy required to show Theorem \ref{thm:main}; see the remark after Lemma \ref{lemma:K,V,A} (notice that in \cite{BBCS}, the operator $A$ was defined summing only over momenta $r,v$ with $|r| \gg |v|$; in this region, $\nu_{r,v} \simeq \eta_r \sigma_v$).

For our analysis, it is very important to control the growth of number and energy of excitations, w.r.t. conjugation by $e^A$. 
\begin{prop} \label{prop:a_priori_A}
Let $A$ be defined as in \eqref{eq:def_A}, $\mathcal{N}_+$ be the number of particles operator on $\cF_\perp^{\leq N}$ and $\mathcal{H}_N = \cK + \cV_N$ as be defined as in Theorem \ref{thm:quadratic}. Then, for any $k \in \bN$, for any  $s \in [0;1]$, there exists $C > 0$ (depending on $k$) such that 
\begin{equation}
    \label{eq:growthN}
    e^{-sA}(\cN_++1)^k e^{sA} \leq C (\cN_++1)^k\,.
\end{equation}
Furthermore, there is $C > 0$ such that  
\begin{equation}
    \label{eq:growthNimproved}
    \langle \xi, e^{-sA}\cN_+ e^{sA} \xi\rangle \leq \langle \xi, \cN_+\xi\rangle  + \frac{C}N \langle \xi, (\cN_++1)^2\xi\rangle\,.
\end{equation}
Moreover, for every $k \in \bN$ and every $s \in [0;1]$, there exists $C > 0$ such that  
\begin{equation}\label{eq:growthHN}
\begin{split}
    e^{-sA}(\mathcal{H}_N+1) (\mathcal{N}_++1)^k e^{sA}\le C(\mathcal{H}_N+1)  (\mathcal{N}_++1)^k + C(\cN_++1)^{k+2} \, .
\end{split}
\end{equation} 
\end{prop}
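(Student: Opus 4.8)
The strategy is the standard Grönwall-type argument for controlling conjugations by the antisymmetric operator $A$. For \eqref{eq:growthN}, I would fix $k$, set $\Phi(s) = \langle \xi, e^{-sA} (\cN_+ + 1)^k e^{sA} \xi \rangle$, and compute $\partial_s \Phi(s) = \langle \xi, e^{-sA} [(\cN_+ + 1)^k, A] e^{sA} \xi \rangle$. Using the commutation relations \eqref{eq:CCR_b}, \eqref{eq:baa} together with the fact that $A = A_\gamma + A_\nu$ is a sum of terms containing three $b$-type operators (up to the $1/\sqrt{N}$ prefactor), one sees that the commutator $[(\cN_+ + 1)^k, A]$ is again cubic in $b$'s and bounded, as a form, by $C(\cN_+ + 1)^k$. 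The key pointwise inputs are $|\eta_r| \le C|r|^{-2}$, $|\sigma_v| \le C|v|^{-2}$, $|\gamma_v| \le C$ from Lemma \ref{lemma:coefficients}, and the elementary bound $|\nu_{r,v}| \le C |\eta_r| |\sigma_v| \le C|r|^{-2}|v|^{-2}$ which follows directly from \eqref{eq:def_wtsigma} since the denominator is $\ge r^2$. Then Cauchy--Schwarz in the momentum sums, using $\|\eta\|_2, \|\sigma\|_2 \le C$, closes the estimate $\partial_s \Phi(s) \le C \Phi(s)$, and Grönwall gives \eqref{eq:growthN}. The summation over $s \in [0;1]$ is uniform because all constants depend only on $k$ and the fixed $\ell$.

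For the improved bound \eqref{eq:growthNimproved}, the point is that $[\cN_+, A]$ is \emph{exactly} cubic with \emph{one} extra factor, so that it carries a genuine $1/\sqrt{N}$ gain after pairing. More precisely, writing $g(s) = \langle \xi, e^{-sA} \cN_+ e^{sA} \xi \rangle$, one computes $g'(s) = \langle e^{sA}\xi, [\cN_+, A] e^{sA}\xi\rangle$; since $A$ changes the number of excitations by $\pm 1$ (creating or annihilating three particles), $[\cN_+, A]$ is itself of the same form as $A$ but without the cancellation, and can be bounded by $C N^{-1/2}(\cN_+ + 1)^{3/2}$ in form sense — actually more carefully by $C N^{-1/2}\|(\cN_+ + 1) e^{sA}\xi\|\,\|(\cN_+ + 1)^{1/2} e^{sA}\xi\|$. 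Applying \eqref{eq:growthN} with $k = 2$ to the first norm gives $g'(s) \le C N^{-1/2} \langle \xi, (\cN_+ + 1)^2 \xi\rangle^{1/2} \langle \xi, (\cN_+ + 1)\xi\rangle^{1/2}$; integrating in $s$ and using $\sqrt{ab} \le \tfrac12(a/N^{1/2} \cdot N^{1/2} + \dots)$ — more simply, bounding $\langle\cN_+\rangle^{1/2} \le \langle(\cN_++1)^2\rangle^{1/4}$ and absorbing — yields \eqref{eq:growthNimproved} after noting $N^{-1/2}\langle(\cN_++1)^2\rangle^{3/4} \le N^{-1}\langle(\cN_++1)^2\rangle + \langle \cN_+\rangle$ type Young inequalities; one has to be slightly careful to land exactly on $C N^{-1}\langle(\cN_++1)^2\rangle$, which works because the cubic structure of $[\cN_+,A]$ effectively lowers one power.

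For the energy bound \eqref{eq:growthHN}, I would run the same Grönwall scheme with $\Psi(s) = \langle \xi, e^{-sA}(\cH_N + 1)(\cN_+ + 1)^k e^{sA}\xi\rangle$ and compute $\partial_s \Psi(s) = \langle e^{sA}\xi, [(\cH_N + 1)(\cN_+ + 1)^k, A] e^{sA}\xi\rangle$. The $(\cN_+ + 1)^k$ part is handled as above; the new and delicate piece is $[\cH_N, A] = [\cK, A] + [\cV_N, A]$. The commutator $[\cK, A]$ produces a term with coefficients $(|r+v|^2 + r^2 + v^2)\nu_{r,v}$ and $(|r+v|^2 + r^2 + v^2)\eta_r\gamma_v$ — note that the choice \eqref{eq:def_wtsigma} of $\nu_{r,v}$ is precisely engineered so that $(|r+v|^2 + r^2 + v^2)\nu_{r,v} = 2r^2\eta_r\sigma_v$, which is what makes the resulting cubic term match $\cC_{\cG_N}$ and, crucially here, decays in momentum like $|r|^{-2}\cdot |r|^{-2} = |r|^{-4}$ times $r^2$, i.e.\ like $|v|^{-2}$ only, still summable after Cauchy--Schwarz against $\|\sigma\|_2$. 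The term $[\cV_N, A]$ is bounded using $\widehat V \in \ell^\infty$ (since $V \in L^1$) and the standard estimates: it is controlled by $(\cV_N + 1)(\cN_+ + 1)^{k+1}$ plus lower-order pieces, using that each factor $b^*$ or $b$ contributes an $\cN_+^{1/2}$. Collecting everything one obtains $\partial_s \Psi(s) \le C \Psi(s) + C\langle \xi, (\cN_+ + 1)^{k+2}\xi\rangle$, and combining Grönwall with \eqref{eq:growthN} (to control the inhomogeneous term) gives \eqref{eq:growthHN}.

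The main obstacle I anticipate is the bookkeeping in $[\cK, A]$ and $[\cV_N, A]$: one must track precisely how many $b$-operators appear, pair them correctly via Cauchy--Schwarz so that the surviving momentum sums converge (this is where $|\nu_{r,v}| \le C|r|^{-2}|v|^{-2}$ and $\|\eta\|_2, \|\sigma\|_2 \le C$ are used repeatedly), and verify that no term is produced which is merely $O(1)$ rather than $O(N^{-1})$ or controllable by kinetic energy — in other words, that the modified choice of $\nu_{r,v}$ genuinely removes the dangerous pieces of $[\cK, A]$ flagged in the remark after Lemma \ref{lemma:K,V,A}. The rest is a routine, if lengthy, differential-inequality argument.
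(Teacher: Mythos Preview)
Your approach is essentially the same as the paper's: a Grönwall argument in $s$ for each of the three bounds, driven by commutator estimates for $[\cN_+,A]$ and $[\cH_N,A]$. The paper packages the bound on $[\cH_N,A]$ separately as Lemma~\ref{lemma:K,V,A} (specifically \eqref{eq:comm-gron}), and for \eqref{eq:growthN} and the $[(\cN_++1)^k,A]$ piece of \eqref{eq:growthHN} simply cites the analogous results in \cite{BBCS}; otherwise your sketch covers the same ground.

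Two small corrections. For the $A_\gamma$ piece, $[\cK,A_\gamma]$ carries the coefficient $(|r+v|^2+r^2-v^2)\eta_r\gamma_v = 2(r^2+r\cdot v)\eta_r\gamma_v$, not $+v^2$; the $r\cdot v$ part (the paper's $[\cK,A]_2$) is the term that requires $\cK^{1/2}$ on both sides of the Cauchy--Schwarz, as in \eqref{eq:bound[K,A]2}. And the precise form of $\nu_{r,v}$ is actually irrelevant for Proposition~\ref{prop:a_priori_A}: only the pointwise bound $|\nu_{r,v}|\le C|r|^{-2}|v|^{-2}$ is used here, while the exact choice \eqref{eq:def_wtsigma} becomes important only later, in the proof of Theorem~\ref{thm:cubic}. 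These are cosmetic; your plan is correct.
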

The proof of Proposition \ref{prop:a_priori_A} will be given in Section \ref{sec:gron-A} (there, the choice of the coefficients (\ref{eq:def_wtsigma}) will become clear). 

With $A$ defined as in (\ref{eq:def_A}), we introduce the cubically renormalized excitation Hamiltonian 
\begin{equation} \label{eq:def_mathcal_J}
    \mathcal{J}_{N}=e^{-A} \mathcal{G}_N e^{A}.
\end{equation}
In the next theorem, we describe the operator $\cJ_N$. 
\begin{theorem}\label{thm:cubic} 
Let $V\in L^3(\mathbb{R}^3)$ be non-negative, compactly supported, and spherically symmetric. Let $\mathcal{J}_N$ be defined in \eqref{eq:def_mathcal_J}. Let 
\begin{equation} \label{eq:def_C_J}
    \begin{split}
        C_{\mathcal{J}_N} = & \; 4\pi \frak{a} (N-1) + e_\Lambda \frak{a}^2 - \frac{1}{2} \sum_{p \in \Lambda^*_+} \Big[ p^2 + 8\pi \frak{a} - \sqrt{|p|^4 + 16 \pi \frak{a} p^2} - \frac{(8\pi \frak{a})^2}{2p^2} \Big] \\
        &+ \frac{1}{N}\sum_{\substack{p,q\in\Lambda_+^*\\p+q\ne0}}\bigg[\big(\widehat{V}(\cdot/N)\ast\widehat f_{N,\ell}\big)_p +\big(\widehat{V}(\cdot/N)\ast\widehat f_{N,\ell}\big)_{p+q}\bigg]\\
            &\hspace{4cm}\times\eta_p\,\eta_q\frac{2\eta_{q+p}(q+p)^2-2\eta_q(p\cdot q)}{p^2+q^2+(p+q)^2}\\
    \end{split}
\end{equation}
with $e_\Lambda$ defined in (\ref{eq:eLambda}).
Then we have 
\begin{equation} \label{eq:cJN-ex}
\begin{split} 
 \mathcal{J}_N= \; &C_{\mathcal{J}_N} + \sum_{p \in \L_+^*} \sqrt{|p|^4 + 16 \pi \frak{a}  p^2} \; a_p^* a_p  + \cV_N + \mathcal{E}_{\mathcal{J}_N}
\end{split} \end{equation}
where, for any $\eps > 0$ ($\eps$ can also depend on $N$, provided $\eps > C (\log N)/N$), 
\[ \pm \cE_{\cJ_N} \leq \eps \cK + \frac{C}{N} \Big[ (\log N)^{1/2} + \eps^{-1} \Big]  (\cH_N+1) ( \cN_+ + 1)^4 \] 
 \end{theorem}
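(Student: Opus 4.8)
The plan is to carry out the conjugation $\cJ_N = e^{-A}\cG_N e^A$ term by term, starting from the representation of $\cG_N$ in \eqref{eq:expansion_G_N} of Theorem \ref{thm:quadratic}, and to track which contributions survive up to errors of order $(\log N)/N$. The key identity is Duhamel's formula: for any operator $\cO$,
\[ e^{-A}\cO e^A = \cO + \int_0^1 e^{-sA}[\cO,A]e^{sA}\,ds, \]
which we iterate once or twice depending on the size of the nested commutators. The main structural fact driving the choice of $A$ (and of the coefficients $\nu_{r,v}$ in \eqref{eq:def_wtsigma}) is that $[\cH_N, A]$ produces a cubic term that cancels $\cC_{\cG_N}$ from \eqref{eq:cC_G_N}, up to a constant (the double-commutator-type term $\tfrac12[[\cH_N,A],A]$ evaluated in the vacuum direction contributes precisely the $\eta_p\eta_q\cdots/(p^2+q^2+(p+q)^2)$ sum appearing in $C_{\cJ_N}$ in \eqref{eq:def_C_J}) plus errors absorbable into $\cE_{\cJ_N}$.

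Concretely, I would organize the computation as follows. First, conjugate the quadratic diagonal part $\sum_p \sqrt{|p|^4 + 2(\widehat V(\cdot/N)*\widehat f_{N,\ell})_p p^2}\,a_p^*a_p$: its commutator with $A$ is cubic and, combined with $[\cV_N, A]$ and with $\cC_{\cG_N}$ itself, yields the cancellation of the cubic term together with the constant correction in $C_{\cJ_N}$; here Lemma \ref{lemma:scattering}(ii) is used to replace $(\widehat V(\cdot/N)*\widehat f_{N,\ell})_p$ by $8\pi\frak a$ at the needed accuracy, turning the dispersion $\sqrt{|p|^4+2(\cdots)_p p^2}$ into $\sqrt{|p|^4+16\pi\frak a p^2}$ with an error controlled by \eqref{eq:Vf-8pia} and the bounds on $\ell$. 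Second, show that conjugation of $\cV_N$ and of the already-present error $\cE_{\cG_N}$ only generates terms absorbable into $\cE_{\cJ_N}$, using Proposition \ref{prop:a_priori_A} to control the growth of $\cH_N(\cN_++1)^k$ under $e^{sA}$. Third — and this is the delicate part — handle the terms $\cT_{\cG_N} = \cT^{(2)}_{\cG_N} + \cT^{(4)}_{\cG_N}$ from \eqref{eq:cT_G_N}: as the Remark after Theorem \ref{thm:quadratic} stresses, these are \emph{not} negligible before the cubic conjugation, but the commutator $[\cT_{\cG_N}, A]$ produces a term that cancels the "bad" part of $\cT_{\cG_N}$ (the mechanism is that $A$ creates/annihilates pairs that pair up with the unbalanced creation operators in $\cT_{\cG_N}$), leaving only contributions that are genuinely $\cO((\log N)^{1/2}/N)$. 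Fourth, bound the second line of \eqref{eq:expansion_G_N} (the $N^{-1}\sum \eta_q(\cdots)$ quadratic correction): since $\|\eta\|_2 \le C$ and $|\widehat V| \le C$, this is already $\cO(N^{-1})$ in expectation against $\xi_N$ by Proposition \ref{prop:apri}, and its commutator with $A$ carries an extra $N^{-1/2}$, so it goes into $\cE_{\cJ_N}$.

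The main obstacle I anticipate is the precise bookkeeping in the third step: extracting from $[\cT_{\cG_N}, A]$ and from the second Duhamel iteration of the quadratic part exactly the cancellations that reduce the a priori size $N^{-1/2}\|\cK^{1/2}\xi\|\|(\cN_++1)\xi\|$ of $\cT_{\cG_N}$ down to $(\log N)^{1/2}/N$. This requires splitting momentum sums into regions (e.g. $|p| \le N$ versus $|p| > N$, or $|r| \gg |v|$ versus comparable), using the sharp decays $|\eta_p|\le C|p|^{-2}$, $|\tau_p|\le C|p|^{-4}$, $|\sigma_p|\le C|p|^{-2}$ from Lemma \ref{lemma:coefficients}, and being careful that the logarithm appears only from the borderline sum $\sum_{|p|\le N}|p|^{-2}\cdot|p|^{-2}\cdot p^2 \sim \log N$ (which is exactly the mechanism flagged in Remark 5). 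Secondary difficulties are: verifying that the constant produced by $\tfrac12[[\cH_N,A],A]$ in the vacuum sector matches \emph{exactly} the displayed sum in $C_{\cJ_N}$ (including the asymmetric numerator $2\eta_{q+p}(q+p)^2 - 2\eta_q(p\cdot q)$, which reflects the specific form of $\nu_{r,v}$); and checking that all double and triple commutators with $\cV_N$ are controlled by $\cH_N(\cN_++1)^4/N$ via Proposition \ref{prop:a_priori_A} and the standard estimates on $\cV_N$. These are, however, of the same nature as the estimates already carried out in \cite{BBCS}, so the novelty is concentrated in the refined choice of $\nu_{r,v}$ and in tracking the $(\log N)/N$ term rather than discarding it.
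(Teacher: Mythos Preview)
Your overall architecture (Duhamel, conjugate $\cH_N$, $\cC_{\cG_N}$, the quadratic remainder, and $\cT_{\cG_N}$ separately, then use Proposition~\ref{prop:a_priori_A} to close) matches the paper's. But two of your four steps rest on an incorrect picture of what actually happens, and one of them is a genuine gap.

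\medskip

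\textbf{Step 4 is wrong, and this is the serious gap.} The operator on the second line of \eqref{eq:expansion_G_N},
\[
E = -\frac{1}{N}\sum_{p,q}\big(\widehat V(q/N)+\widehat V((q+p)/N)\big)\eta_q\Big[(\g_p^2+\s_p^2)b_p^*b_p+\g_p\s_p(b_p^*b_{-p}^*+\hc)\Big],
\]
is \emph{not} of size $\cO(N^{-1})$. The sum $N^{-1}\sum_q \widehat V(q/N)\eta_q$ is bounded but not small (think of the scattering equation: this is essentially $(\widehat V(\cdot/N)*\widehat f_{N,\ell})_0 - \widehat V(0)\approx 8\pi\frak a - \widehat V(0)$), so $E$ is an order-one quadratic operator. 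You cannot absorb it into $\cE_{\cJ_N}$. What actually happens is an exact cancellation: the conjugation of $\cC_{\cG_N}$ and of $\cH_N$ each produces a copy of $-E$ (with coefficients $2$ and $-1$ respectively; see Prop.~\ref{prop:A_on_cubic} and Prop.~\ref{prop:cubic_on_H}), and these combine with the $E$ already present in $\cG_N$ to give zero. Missing this cancellation leaves an order-one discrepancy in \eqref{eq:cJN-ex} and the theorem fails.

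\medskip

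\textbf{Step 3 misidentifies the mechanism.} There is no cancellation between $\cT_{\cG_N}$ and $[\cT_{\cG_N},A]$. The point of the Remark after Theorem~\ref{thm:quadratic} is subtler: one cannot bound $\langle e^A\xi,\cT_{\cG_N}e^A\xi\rangle$ directly, because $\|\cK^{1/2}e^A\xi\|$ differs from $\|\cK^{1/2}\xi\|$ by order one. The cure is exactly Duhamel, but used in the opposite direction from what you describe: write $e^{-A}\cT_{\cG_N}e^A=\cT_{\cG_N}+\int_0^1 e^{-sA}[\cT_{\cG_N},A]e^{sA}\,ds$, bound the \emph{first} term on the unconjugated $\xi$ (where $\|\cK^{1/2}\xi\|$ is fine) via
\[
|\langle\xi,\cT^{(2)}_{\cG_N}\xi\rangle|\le CN^{-1/2}\|\cK^{1/2}\xi\|\|(\cN_++1)^{1/2}\xi\|\le \eps\langle\xi,\cK\xi\rangle+\frac{C}{\eps N}\langle\xi,(\cN_++1)\xi\rangle,
\]
using that the coefficients $O_p$ of $\cT^{(2)}_{\cG_N}$ satisfy $\sum_p |O_p|^2/p^2\le C/N$, and bound the commutator in the integrand (which carries an \emph{extra} factor $N^{-1/2}$) using Prop.~\ref{prop:a_priori_A}. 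The quartic piece $\cT^{(4)}_{\cG_N}$ is handled the same way. No pairing or cancellation is needed; the $(\log N)^{1/2}$ does not come from here at all --- it arises from the double commutator $[[\cK,A]_2,A]$ in the analysis of $e^{-A}\cH_N e^A$ (Lemma~\ref{lemma:second_commut_H_A}).
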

The proof of Theorem \ref{thm:cubic} will be given below, in Section \ref{sec:cubic}. 

We can now apply Theorem~\ref{thm:cubic}, to show our main result. 

\begin{proof}[Proof of Theorem \ref{thm:main}]  
We claim that the ground state energy $E_N$ of  (\ref{eq:ham0}) is such that 
\begin{equation*}
    | E_N-C_{\cJ_N} | \le \frac{C \, (\log N)^{1/2}}{N}.
\end{equation*}

{\it Upper bound}. From (\ref{eq:cJN-ex}), taking $\eps>0$ a constant, we obtain
\[ \cJ_N \leq C_{\cJ_N} + C  \cK  + \cV_N + \frac{C (\log N)^{1/2}}{N} (\cH_N+ 1) (\cN_+ + 1)^4. \]
We conclude, taking expectation in the vacuum, that 
\[ E_N \leq \langle \Omega, \cJ_N \Omega \rangle \leq C_{\cJ_N}+ \frac{C (\log N)^{1/2}}{N}. \]

{\it Lower bound.} From (\ref{eq:cJN-ex}), taking $\eps = (\log N)^{-1/2}$, and using the positivity of $\cV_N$, we find 
\[ \cJ_N \geq C_{\cJ_N} + \Big[ 1- \frac{1}{(\log N)^{1/2}} \Big] \cK - \frac{C (\log N)^{1/2}}{N} (\cH_N + 1 ) ( \cN_+ + 1)^4. \]
Let $\theta_N \in \cF^{\leq N}_\perp$ denote a normalized ground state vector for $\cJ_N$. Then $\xi_N = e^{A} \theta_N$ is a normalized ground state vector of the excitation Hamiltonian $\cG_N$ defined in (\ref{eq:cGN-def}). Combining Prop. \ref{prop:apri} with Prop. \ref{prop:a_priori_A}, we conclude that 
\[ \langle \theta_N, (\cH_N + 1) (\cN_+ + 1)^4 \theta_N \rangle \leq C \langle \xi_N, (\cH_N + 1) (\cN_+ + 1)^5 \xi_N \rangle \leq C \] 
and therefore that 
\[ E_N = \langle \theta_N, \cJ_N \theta_N \rangle \geq C_{\cJ_N} - \frac{C (\log N)^{1/2}}{N} .\]

To conclude the proof of Theorem \ref{thm:main}, we still need to evaluate the constant $C_{\cJ_N}$. To this end, we write (\ref{eq:def_C_J}) as 
\[ C_{\cJ_N}  = 4\pi \frak{a} (N-1) + e_\Lambda \frak{a}^2 - \frac{1}{2} \sum_{p \in \Lambda^*_+} \Big[ p^2 + 8\pi \frak{a} - \sqrt{|p|^4 + 16 \pi \frak{a} p^2} - \frac{(8\pi \frak{a})^2}{2p^2} \Big] + \wt{C}_{\cJ_N} \]
with 
\begin{equation}\label{eq:wtCJN}  
\begin{split} 
\wt{C}_{\cJ_N} =  \; &= \frac{1}{N}\sum_{\substack{r,v\in\Lambda_+^*\\r+v\ne0}}\bigg[\big(\widehat{V}(\cdot/N)\ast\widehat f_{N,\ell}\big)_r +\big(\widehat{V}(\cdot/N)\ast\widehat f_{N,\ell}\big)_{r+v}\bigg]\\
            &\hspace{4cm}\times\eta_r\,\eta_v\frac{2\eta_{r+v}(r+v)^2-2\eta_v(r\cdot v)}{r^2+v^2+(r+v)^2}\\
    \end{split}
\end{equation} 
We first show that \begin{equation*}\label{eq:final-claim} \wt{C}_{\cJ_N} = \frac{1024 \pi^4 \frak{a}^4}{N} \sum_{r,v \in \Lambda^*_+ : |r|, |v| \leq N} \frac{r\cdot v - v^2}{r^2 + v^2 + r \cdot v} \frac{1}{r^2 v^4}  + \cO (N^{-1}). \end{equation*} 
To reach this goal, we apply the scattering equation \eqref{scattering} to the second line of (\ref{eq:wtCJN}). Noticing that the contribution arising from the r.h.s. of \eqref{scattering} is negligible (the r.h.s decays faster, it makes the term of order $N^{-1}$), and combining with symmetry we arrive at 
\begin{equation}
\label{CS_3}
    \begin{split}
        &\widetilde{C}_{\cJ_N} \\ & = \frac{1}{8N}\sum_{\substack{r,v\in\Lambda_+^*\\r+v\ne0}} \frac{(r\cdot v)-v^2}{r^2+v^2+r\cdot v}\frac{1}{r^2v^4}\big( \widehat{V} (\cdot / N)*\widehat{f}_{N,\ell}\big)_v^2 \big( \widehat{V} (\cdot / N)*\widehat{f}_{N,\ell}\big)_r \big( \widehat{V} (\cdot / N)*\widehat{f}_{N,\ell}\big)_{r+v}\\
        & \; \; -\frac{1}{8N} \sum_{\substack{r,v\in\Lambda_+^*\\r+v\ne0}}\frac{1}{r^2+v^2+r\cdot v}\frac{1}{r^2v^2}\big( \widehat{V} (\cdot / N)*\widehat{f}_{N,\ell}\big)_{r+v}^2 \big( \widehat{V} (\cdot / N)*\widehat{f}_{N,\ell}\big)_r \big( \widehat{V} (\cdot / N)*\widehat{f}_{N,\ell}\big)_v\\& \; \;+\frac{1}{8N} \sum_{\substack{r,v\in\Lambda_+^*\\r+v\ne0}}\frac{(r\cdot v)}{r^2+v^2+r\cdot v}\frac{1}{r^2v^4} \big( \widehat{V} (\cdot / N)*\widehat{f}_{N,\ell}\big)_r^2 \big( \widehat{V} (\cdot / N)*\widehat{f}_{N,\ell}\big)_v^2+\cO (N^{-1}) .
    \end{split}
\end{equation}
In the next step, we restrict all sums to $|r|, |v| \leq N$. For the second term on the r.h.s. of the last equation, it is easy to check that the corresponding error is negligible, of order $N^{-1}$. In fact,
\begin{equation} \label{eq:CS31} \begin{split} \Big| \frac{1}{N} &\sum_{r,v : |r| > N} \frac{1}{r^2+v^2+r\cdot v}\frac{1}{r^2v^2}\big( \widehat{V} (\cdot / N)*\widehat{f}_{N,\ell}\big)_{r+v}^2 \big( \widehat{V} (\cdot / N)*\widehat{f}_{N,\ell}\big)_r \big( \widehat{V} (\cdot / N)*\widehat{f}_{N,\ell}\big)_v \Big| \\  \leq \; & 
\frac{C}{N} \sum_{r,v : |r| > N}  \frac{1}{(r^2 + v^2) v^2 r^2} \big| \big( \widehat{V} (\cdot / N)*\widehat{f}_{N,\ell}\big)_r \big| \leq \frac{C}{N} \sum_{|r| > N} \frac{1}{|r|^3} \big| \big( \widehat{V} (\cdot / N)*\widehat{f}_{N,\ell}\big)_r \big| \leq  \frac{C}{N} 
\end{split} \end{equation} 
using the bound $\| \widehat{V} (\cdot / N) * \widehat{f}_{N,\ell} \|_2 = \| N^3 V (N\cdot) f_{N,\ell} \|_2 \leq C N^{3/2}$. 
Similarly, one can also bound the contribution to this term arising from the region $|v| > N$. Let us now consider the last term on the r.h.s. of (\ref{CS_3}). Observing that 
\[ \begin{split} \Big| &\frac{1}{N} \sum_{r,v : |v| > N} \frac{(r\cdot v)}{r^2+v^2+r\cdot v}\frac{1}{r^2v^4}  \big( \widehat{V} (\cdot / N)*\widehat{f}_{N,\ell}\big)_r^2 \big( \widehat{V} (\cdot / N)*\widehat{f}_{N,\ell}\big)_v^2 \Big| \\\ &\leq \frac{C}{N^2} \sum_{r,v : |v| > N} \frac{1}{r^2+v^2}\frac{1}{|r| v^2} \big|  \big( \widehat{V} (\cdot / N)*\widehat{f}_{N,\ell}\big)_r \big|^2 \leq \frac{C}{N^2}  \sum_{r \in \Lambda^*_+} \frac{1}{r^2} \big| \big( \widehat{V} (\cdot / N)*\widehat{f}_{N,\ell}\big)_r \big|^2 \leq \frac{C}{N}, \end{split} \] 
we can restrict the sum to $|v| \le N$. To restrict it also to $|r| \leq N$, we estimate, using a change of variable $v \to - v$,  

\begin{equation}\label{eq:CS32} \begin{split} \Big| \frac{1}{N} &\sum_{|v| \leq N, |r| > N} \frac{(r\cdot v)}{r^2+v^2+r\cdot v}\frac{1}{r^2v^4}  \big( \widehat{V} (\cdot / N)*\widehat{f}_{N,\ell}\big)_r^2 \big( \widehat{V} (\cdot / N)*\widehat{f}_{N,\ell}\big)_v^2 \Big| \\ = \; & \frac{1}{2} \sum_{|v| \leq N , |r| > N} \frac{(r\cdot v)^2}{\big( r^2 + v^2 + r \cdot v \big) \big( r^2 + v^2 - r \cdot v \big)} \frac{1}{r^2 v^4} \big( \widehat{V} (\cdot / N)*\widehat{f}_{N,\ell}\big)_r^2 \big( \widehat{V} (\cdot / N)*\widehat{f}_{N,\ell}\big)_v^2 \\ \leq \; &C \sum_{|v| \leq N , |r| > N} \frac{1}{\big( r^2 + v^2 \big) v^2} \frac{1}{r^2} \big( \widehat{V} (\cdot / N)*\widehat{f}_{N,\ell}\big)_r^2 \big( \widehat{V} (\cdot / N)*\widehat{f}_{N,\ell}\big)_v^2 \\ \leq \; &\frac{1}{N} \sum_{|r| > N} \frac{1}{|r|^3} \big( \widehat{V} (\cdot / N)*\widehat{f}_{N,\ell}\big)_r^2 \leq \frac{C}{N}.
\end{split} \end{equation} 
The term in the first line of (\ref{CS_3}) can be handled similarly. In fact, we control the contribution proportional to $-v^2$ as in (\ref{eq:CS31}). As for the contribution proportional to $(r\cdot v)$, we proceed analogously to (\ref{eq:CS32}). There is here  an additional term arising from the change of variable $v \to -v$, due to the potential $(\widehat{V} (\cdot / N) * \widehat{f}_{N,\ell})_{r+v}$, which can be bounded using that 
\begin{equation}\label{eq:reno-V}  \big| (\widehat{V} (\cdot / N) * \widehat{f}_{N,\ell})_{r+v} - (\widehat{V} (\cdot / N) * \widehat{f}_{N,\ell})_{r-v} \big| \leq C |v|/N \, . \end{equation} 
Finally, we replace all renormalized potentials $(\widehat{V} (\cdot / N) * \widehat{f}_{N,\ell})_p$ with factors of $(\widehat{V} (\cdot / N) * \widehat{f}_{N,\ell})_0$, and then, using \eqref{eq:Vf-8pia}, by factors of $8\pi \frak{a}$. To bound the corresponding errors, we rely again on estimates of the form (\ref{eq:reno-V}). Considering an example among the terms arising from the contribution on the second line of (\ref{CS_3}), we can bound
\[ \begin{split} \Big| \frac{1}{N} \sum_{|r|,|v| \leq N} &\frac{1}{r^2 +v^2 + r\cdot v} \frac{1}{r^2 v^2} \big( \widehat{V} (\cdot / N)*\widehat{f}_{N,\ell}\big)_{r+v}^2  \big( \widehat{V} (\cdot / N)*\widehat{f}_{N,\ell}\big)_v \\ &\hspace{3cm} \times \big[ \big( \widehat{V} (\cdot / N)*\widehat{f}_{N,\ell}\big)_r - \big( \widehat{V} (\cdot / N)*\widehat{f}_{N,\ell}\big)_0 \big] \Big| \\ &\hspace{1.5cm} \leq  \frac{C}{N^2} \sum_{|r|, |v| \leq N} \frac{1}{r^2 + v^2} \frac{1}{|r| v^2} \leq \frac{C}{N^2} \sum_{|r|, |v| \leq N} \frac{1}{|r|^{5/2} |v|^{5/2}} \leq \frac{C}{N} .  \end{split} \]
To estimate the errors arising from the third line of (\ref{CS_3}), on the other hand, we proceed similarly to (\ref{eq:CS32}), performing a change of variable $v \to -v$, to bound 
\[ \begin{split} \Big| \frac{1}{N} &\sum_{|r|,|v| \leq N} \frac{(r\cdot v)}{r^2 + v^2 + r \cdot v} \frac{1}{r^2 v^4} \big( \widehat{V} (\cdot / N)*\widehat{f}_{N,\ell}\big)_v \big[ \big( \widehat{V} (\cdot / N)*\widehat{f}_{N,\ell}\big)_r - \big( \widehat{V} (\cdot / N)*\widehat{f}_{N,\ell}\big)_0 \big] \Big| \\ &= \Big| \frac{1}{2N} \sum_{|r|,|v| \leq N} \frac{(r\cdot v)^2}{(r^2 + v^2 + r \cdot v)(r^2+v^2 - r\cdot v)} \frac{1}{r^2 v^4} \big( \widehat{V} (\cdot / N)*\widehat{f}_{N,\ell}\big)_v \\ &\hspace{6cm} \times  \big[ \big( \widehat{V} (\cdot / N)*\widehat{f}_{N,\ell}\big)_r - \big( \widehat{V} (\cdot / N)*\widehat{f}_{N,\ell}\big)_0 \big] \Big| \\ &\leq \frac{C}{N^2} \sum_{|r|,|v| \leq N} \frac{|r|}{v^2 (r^2 + v^2)^2} \leq \frac{C}{N^2} \sum_{|r|, |v| \leq N} \frac{1}{|r|^{5/2} |v|^{5/2}} \leq \frac{C}{N}. \end{split} \]
Also the errors from the first line of (\ref{CS_3}) can be bounded analogously (also here the change of variables $v \to -v$ needed to handle terms proportional to $(r\cdot v)$ will produce additional contributions, containing an additional difference $(\widehat{V} (\cdot / N) * \widehat{f}_{N,\ell})_{r+v} -  (\widehat{V} (\cdot / N) * \widehat{f}_{N,\ell})_{r-v}$, which can be estimated by $|v|/N$ and can be handled similarly as above). We conclude that 
\begin{equation} \label{eq:wtCRN_2} \wt{C}_{\cJ_N} = \frac{1024 \, \pi^4 \frak{a}^4}{N} \sum_{\substack{r,v\in2\pi\mathbb{Z}^3 \backslash \{ 0 \} \\|r|,|v| \leq N}} \frac{r\cdot v - v^2}{r^2 + v^2 + r\cdot v} \frac{1}{r^2 |v|^4} + \cO (N^{-1}). \end{equation} 

At this point, we can approximate the sum with an integral. Consider first the contribution proportional to $-v^2$. For $\tilde{r} = (\tilde{r}_1, \tilde{r}_2, \tilde{r}_3), \tilde{v} = (\tilde{v}_1, \tilde{v}_2, \tilde{v}_3)  \in 2\pi \bZ^3$, with $2\pi \leq |\tilde{r}|, |\tilde{v}| \leq N$ and $r \in B_{\tilde{r}} = [\tilde{r}_1 -\pi ; \tilde{r}_1 + \pi] \times [\tilde{r}_2 -\pi , \tilde{r}_2 + \pi ] \times [\tilde{r}_3 -\pi ; \tilde{r}_3 + \pi ]$, $v \in B_{\tilde{v}} = [\tilde{v}_1 -\pi ; \tilde{v}_1 + \pi ] \times [\tilde{v}_2 -\pi ; \tilde{v}_2 + \pi ] \times [\tilde{v}_3 -\pi ; \tilde{v}_3 + \pi ]$, we find
\[ \Big| \frac{1}{r^2 + v^2 + r \cdot v} \frac{1}{r^2 v^2} - \frac{1}{\tilde{r}^2 + \tilde{v}^2 + \tilde{r} \cdot \tilde{v}} \frac{1}{\tilde{r}^2 \tilde{v}^2} \Big| \leq C \frac{1}{r^2 + v^2} \frac{1}{r^2 v^2}\big(|r|^{-1}+|v|^{-1}\big). \]
Setting 
\[ U = \bigcup_{\substack{\tilde{r} , \tilde{v} \in 2\pi \bZ^3 : \\ 2\pi \leq |\tilde{r}| , |\tilde{v}| \leq N}}  B_{\tilde{r}} \times B_{\tilde{v}} \]
this implies that 
\[ \begin{split} \Big| \sum_{\substack{ r,v \in 2\pi \bZ^3 \\  2\pi\le |r|, |v| \leq N}} \frac{1}{r^2 + v^2 + r \cdot v} \frac{1}{r^2 v^2} \;-\; & \frac{1}{(2\pi)^6} \int_U \frac{1}{r^2 + v^2 + r \cdot v} \frac{1}{r^2 v^2}  dr dv \Big| \\ & \leq C\int_{|r|, |v| \ge \pi}  \frac{1}{r^2 + v^2} \frac{1}{r^2 |v|^3} dr dv \leq C. \end{split} \]
Observing that 
\[ \{ (r,v) \in \bR^6 : 2\pi \leq |r|, |v| \leq N - C \} \subset U \subset \{ (r,v) \in \bR^6 : \pi  \leq |r|, |v| \leq N +C \} ,  \]
and estimating
\[ \begin{split} \int_{\{ \pi \leq |r|  \leq 2\pi\} \cup \{ N-C \leq |r| \leq N+C \}} dr  \int_{\pi \leq |v| \leq N+C} dv \frac{1}{r^2 + v^2 + r \cdot v} \frac{1}{r^2 v^2} \leq C \end{split} \]
we conclude that 
\[ \begin{split} \Big| \sum_{\substack{ r,v \in 2\pi \bZ^3 \\  2\pi\le |r|, |v| \leq N}} \frac{1}{r^2 + v^2 + r \cdot v} \frac{1}{r^2 v^2} \;-\; & \frac{1}{(2\pi)^6} \int_{2\pi \leq |r|, |v| \leq N} \frac{1}{r^2 + v^2 + r \cdot v} \frac{1}{r^2 v^2}  dr dv \Big|  \leq C. \end{split} \]
Using the identity
\begin{equation*}
\sum_{|r|, |v| \leq N} \frac{r \cdot v}{r^2 + v^2 + r \cdot v} \frac{1}{r^2 |v|^4} = - \sum_{|r|,|v| \leq N}  \frac{(r\cdot v)^2}{\big[ r^2 + v^2 + r \cdot v \big] \big[r^2 + v^2 - r\cdot v \big]}  \frac{1}{r^2 |v|^4} \end{equation*} 
we can obtain a similar bound also for the contribution associated with the factor $r\cdot v$ appearing on the r.h.s. of (\ref{eq:wtCRN_2}). Thus 
\[ \wt{C}_{\cJ_N} =\frac{1024 \pi^4 \frak{a}^4}{N} \frac{1}{(2\pi)^6} \int_{2\pi \leq |r|, |v| \leq N} \frac{r\cdot v - v^2}{r^2 + v^2 + r \cdot v} \frac{1}{r^2 |v|^4 }dr dv + \cO (N^{-1}). \]
By explicit computation (for fixed $r$, we first integrate over $v$ using spherical coordinates $(|v|, \theta, \ph)$, with $r \cdot v = |r| |v| \cos \theta$; the result of the $v$ integral is a radial function of $r$, which can be integrated using spherical coordinates for $r$),  we find 
\[ \wt{C}_{\cJ_N} = - 64 \pi \Big( \frac{4}{3} \pi -\sqrt{3} \Big) \frak{a}^4  \frac{(\log N)}{N} + \cO (N^{-1}), \]
which concludes the proof of (\ref{eq:main}). 
\end{proof}

\section{Quadratic renormalization: proof of Theorem \ref{thm:quadratic}} 
\label{sec:cGN} 

In order to show Theorem \ref{thm:quadratic} we will rely on bounds controlling the growth of the number of excitations and of their energy w.r.t. the action of the generalized Bogoliubov transformation $e^{B_\mu}$. Recall that, from Lemma \eqref{eq:norms_mu,g,s}, $\| \mu \|_2 \le C$ uniformly in $N$. As shown for example in \cite[Lemma 3.1]{BS}, this implies that for every $j \geq 0$ there exists $C > 0$ such that 
\begin{equation}
\label{eq:control-eBmu}  
e^{-B_\mu} (\cN_+ + 1)^j e^{B_\mu} \leq C(  \cN_+ + 1)^j . 
\end{equation}
From (\ref{eq:norms_eta}), we also obtain rough, non-uniform estimates on the growth of the energy. The proof of the following lemma can be found in \cite[Lemma 7.1]{BBCS} (the coefficients $\mu_p$ and $\eta_p$ satisfy the same bounds). 
\begin{lemma} \label{lemma:a_priori_quadratic}
Let $\cK, \cV_N$ be defined as in \eqref{eq:K,V}. Under the same assumptions as in Theorem \ref{thm:quadratic}, for every $j\in\mathbb{N}$ there exists $C>0$ such that
    \begin{equation} \label{eq:a_priori_quadratic}
        \begin{split}
            e^{-B_\mu} \mathcal{K}(\mathcal{N}_++1)^j e^{B_\mu} \le\;& C \mathcal{K}(\mathcal{N}_++1)^j+ CN(\mathcal{N}_++1)^{j+1}\\
            e^{-B_\mu} \mathcal{V}_N (\mathcal{N}_++1)^j e^{B_\mu} \le\;& C \mathcal{V}_N(\mathcal{N}_++1)^j+C N(\mathcal{N}_++1)^j.
        \end{split}
    \end{equation}
\end{lemma}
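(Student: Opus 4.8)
The plan is to prove both estimates in \eqref{eq:a_priori_quadratic} through a Grönwall-type argument in the parameter of the generalized Bogoliubov transformation. Fix $\xi$ in a dense domain of $\cF_\perp^{\leq N}$, put $\xi_s = e^{sB_\mu}\xi$ for $s\in[0,1]$ (note that $e^{sB_\mu}$ leaves $\cF_\perp^{\leq N}$ invariant), and consider $\varphi_j(s) = \langle \xi_s, \mathcal{K}(\mathcal{N}_++1)^j \xi_s\rangle$. Differentiating, $\varphi_j'(s) = \langle\xi_s, [\mathcal{K}(\mathcal{N}_++1)^j, B_\mu]\xi_s\rangle$, and by the Leibniz rule the commutator splits as $[\mathcal{K}, B_\mu](\mathcal{N}_++1)^j + \mathcal{K}\,[(\mathcal{N}_++1)^j, B_\mu]$. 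The strategy is to bound this by $C\varphi_j(s) + CN\langle\xi_s,(\mathcal{N}_++1)^{j+1}\xi_s\rangle$, and then use the already established uniform control \eqref{eq:control-eBmu} to replace $\langle\xi_s,(\mathcal{N}_++1)^{j+1}\xi_s\rangle$ by $C\langle\xi,(\mathcal{N}_++1)^{j+1}\xi\rangle$ before integrating.

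The computation that drives everything is the basic commutator. Since the weight $\sqrt{(N-\mathcal{N}_+)/N}$ commutes with $\mathcal{K}$, one has $[\mathcal{K},b_p^*]=p^2 b_p^*$ and $[\mathcal{K},b_p]=-p^2 b_p$, hence $[\mathcal{K},B_\mu]=\sum_{p\in\Lambda_+^*}p^2\mu_p\,(b_p^*b_{-p}^*+b_p b_{-p})$. Writing $p^2\mu_p=|p|\cdot(|p|\mu_p)$ and applying Cauchy–Schwarz, $|\langle\eta,[\mathcal{K},B_\mu]\eta\rangle|\le C\big(\sum_p p^2\|b_p\eta\|^2\big)^{1/2}\big(\sum_p p^2\mu_p^2\|b_{-p}^*\eta\|^2\big)^{1/2}\le C N^{1/2}\|\mathcal{K}^{1/2}\eta\|\,\|(\mathcal{N}_++1)^{1/2}\eta\|$, where we used $b_p^*b_p\le a_p^*a_p$ (so $\sum_p p^2\|b_p\eta\|^2\le\langle\eta,\mathcal{K}\eta\rangle$), the uniform bound $\|b_{-p}^*\eta\|\le\|(\mathcal{N}_++1)^{1/2}\eta\|$, and crucially $\sum_p p^2\mu_p^2\le CN$ from \eqref{eq:norms_mu,g,s}. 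Thus $\pm[\mathcal{K},B_\mu]\le\delta\mathcal{K}+C\delta^{-1}N(\mathcal{N}_++1)$ for every $\delta>0$; this is exactly the origin of both the factor $N$ and the extra power of $\mathcal{N}_+$ in the first line of \eqref{eq:a_priori_quadratic}. The remaining factor $(\mathcal{N}_++1)^j$ is harmless after commuting $(\mathcal{N}_++1)$ past $[\mathcal{K},B_\mu]$ at the cost of constants, and the second Leibniz term $\mathcal{K}[(\mathcal{N}_++1)^j,B_\mu]$ is treated the same way: expanding $[(\mathcal{N}_++1)^j,B_\mu]$ into a finite sum of terms $(\mathcal{N}_++1)^a B_\mu^{(\pm)}(\mathcal{N}_++1)^b$ (with $B_\mu^{(\pm)}$ the creation/annihilation parts, which raise/lower $\mathcal{N}_+$ by two), and moving $\mathcal{K}$ through these pieces, one reproduces $[\mathcal{K},B_\mu]$ up to the $\pm p^2\mu_p$ weights, to which the same $N$-weighted Cauchy–Schwarz bound applies. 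This gives $\varphi_j'(s)\le C\varphi_j(s)+CN\langle\xi_s,(\mathcal{N}_++1)^{j+1}\xi_s\rangle$, and Grönwall's lemma finishes the first estimate.

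For the potential energy I would run the identical scheme with $\psi_j(s)=\langle\xi_s,\mathcal{V}_N(\mathcal{N}_++1)^j\xi_s\rangle$. The new ingredient is the commutator $[\mathcal{V}_N,B_\mu]$: since $\mathcal{V}_N$ is quartic in $a^*,a$ and $B_\mu$ is quadratic in $b^*,b$, this produces a cubic expression in $b^*,b$ with kernel of the form $\widehat V(\cdot/N)*\mu$, together with several lower-order terms generated by the non-canonical commutation relations \eqref{eq:CCR_b} of the modified operators. Using $\|\widehat V(\cdot/N)\|_\infty\le\|V\|_1\le C$, $\|\mu\|_2\le C$, the bound $b_p^*b_p\le a_p^*a_p$, and a Cauchy–Schwarz splitting that extracts a factor with the structure of $\mathcal{V}_N$, all of these are controlled by $\delta\mathcal{V}_N(\mathcal{N}_++1)^j+C\delta^{-1}N(\mathcal{N}_++1)^j$, so that $\psi_j'(s)\le C\psi_j(s)+CN\langle\xi_s,(\mathcal{N}_++1)^j\xi_s\rangle$ and Grönwall yields the second line of \eqref{eq:a_priori_quadratic}. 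The main obstacle is not any single estimate — every one of them reduces to Cauchy–Schwarz against the norms of $\mu$ recorded in Lemma \ref{lemma:coefficients} — but the bookkeeping required to enumerate and control the many lower-order terms produced by $[\mathcal{V}_N,B_\mu]$ and by the iterated commutators with powers of $\mathcal{N}_+$; as noted in the text, this is carried out in detail in \cite[Lemma 7.1]{BBCS}, and nothing in that argument changes when $\eta$ is replaced by $\mu$, since $\tau$ — and hence $\mu=\eta+\tau$ — obeys the same (indeed better) bounds by \eqref{eq:tau-bds} and \eqref{eq:norms_mu,g,s}.
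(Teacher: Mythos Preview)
Your proposal is correct and follows essentially the same approach as the paper: the paper does not give a proof but simply cites \cite[Lemma 7.1]{BBCS} and observes that ``the coefficients $\mu_p$ and $\eta_p$ satisfy the same bounds'', which is exactly the remark you make at the end. Your sketch of the Gr\"onwall argument --- the commutator $[\mathcal{K},B_\mu]=\sum_p p^2\mu_p(b_p^*b_{-p}^*+b_pb_{-p})$, the Cauchy--Schwarz splitting using $\sum_p p^2\mu_p^2\le CN$ from \eqref{eq:norms_mu,g,s}, and the analogous treatment of $\mathcal{V}_N$ --- is precisely the content of that reference, so you have in fact supplied more detail than the paper itself.
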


To prove Theorem \ref{thm:quadratic}, we will also need to compute the action of the generalized Bogoliubov transformation $e^{B_\mu}$ on creation and annihilation operators more precisely. To this end, we introduce the notation $\gamma_p = \cosh \mu_p$, $\s_p = \sinh \mu_p$ and we define operators $d_p, d^*_p$, for $p \in \L^*_+$, through the identities  
\begin{equation}
e^{-B_\mu} b_p e^{B_\mu} = \gamma_pb_p+\sigma_p b^*_{-p} + d_p,\qquad e^{-B_\mu} b_p^* e^{B_\mu} =\gamma_pb_p^*+\sigma_p b_{-p} + d_p^*. \label{eq:def_d}
\end{equation}
In position space, we similarly introduce operator-valued distributions $\check{d}_x, \check{d}_x^*$, requiring that 
\begin{equation*}
e^{-B_\mu}\check{b}_xe^{B_\mu} = b(\check{\gamma}_x)+b^*(\check{\sigma}_x)+\check{d}_x,\qquad e^{-B_\mu}\check{b}_x^*e^{B_\mu} = b(\check{\gamma}_x)^*+b(\check{\sigma}_y)+\check{d}_x^*.
\end{equation*}
Here $\check{\s}_x (y) = \check{\s} (y;x) = \sum_{p \in \L_+^*} \s_p \exp (i p \cdot (x-y))$ and similarly for the distribution $\check{\gamma}_x$. 

On states with few excitations, ie $\cN_+ \ll N$, the operators $b^*, b$ are close to the standard creation and annihilation operators $a^*, a$. Thus, we expect $B_\mu$ to act almost as a Bogoliubov transformation; equivalently, we expect $d^* , d$ to be small. To prove bounds on the fields $d^* , d$, we will use the integral representation proven in the following lemma. 
\begin{lemma}
    For $s\in[0,1]$, let $\gamma_p^{(s)}=\cosh(s \mu_p)$ and $\sigma_p^{(s)}=\sinh(s \mu_p)$. Then
    \begin{equation} \label{eq:d_detailed_expansion}
        \begin{split}
            d_p=\;&-\frac{1}{N} \int_0^1 ds\, e^{-(1-s) B_\mu}\bigg[\gamma_p^{(s)} \Big( \mu_p \mathcal{N}_+ b^*_{-p}+\sum_{q\in\Lambda_+^*} \mu_q b^*_q a^*_{-q} a_p \Big)\\
            &\qquad\qquad\qquad\qquad\qquad+\sigma_p^{(s)} \Big( \mu_p \mathcal{N}_+ b_p+\sum_{q\in\Lambda_+^*} \mu_q a^*_{-p} a_{-q} b_q \Big) \bigg] e^{(1-s)B_\mu}.
        \end{split}
    \end{equation}
\end{lemma}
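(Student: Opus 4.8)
The plan is to derive \eqref{eq:d_detailed_expansion} by a Duhamel-type interpolation in the parameter of the Bogoliubov transformation, exploiting that the \emph{canonical} part of the commutators $[b_p,B_\mu]$ and $[b_{-p}^*,B_\mu]$ reproduces exactly the Bogoliubov rotation $b_p \mapsto \gamma_p b_p + \sigma_p b_{-p}^*$; the operator $d_p$ then collects only the $O(1/N)$ remainders coming from the non-canonical terms in \eqref{eq:CCR_b}.

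Concretely, for $s\in[0,1]$ I would set $X_p(s) = \gamma_p^{(s)} b_p + \sigma_p^{(s)} b_{-p}^*$ and define the interpolating operator
\[
\Xi_p(s) = e^{-(1-s)B_\mu}\, X_p(s)\, e^{(1-s)B_\mu} .
\]
Because $\gamma_p^{(0)}=1$, $\sigma_p^{(0)}=0$, $\gamma_p^{(1)}=\gamma_p$ and $\sigma_p^{(1)}=\sigma_p$, the endpoints are $\Xi_p(0) = e^{-B_\mu} b_p e^{B_\mu}$ and $\Xi_p(1) = \gamma_p b_p + \sigma_p b_{-p}^*$, so, by the definition of $d_p$ in \eqref{eq:def_d} and the fundamental theorem of calculus,
\[
d_p = \Xi_p(0) - \Xi_p(1) = -\int_0^1 \partial_s \Xi_p(s)\, ds .
\]
Since $B_\mu$ commutes with $e^{\pm(1-s)B_\mu}$ and $\partial_s \gamma_p^{(s)} = \mu_p \sigma_p^{(s)}$, $\partial_s \sigma_p^{(s)} = \mu_p \gamma_p^{(s)}$, differentiating gives
\[
\partial_s \Xi_p(s) = e^{-(1-s)B_\mu}\Big( \big[B_\mu, X_p(s)\big] + \mu_p \sigma_p^{(s)} b_p + \mu_p \gamma_p^{(s)} b_{-p}^* \Big) e^{(1-s)B_\mu} .
\]

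The next step is to evaluate $\big[B_\mu, X_p(s)\big] = -\gamma_p^{(s)}[b_p,B_\mu] - \sigma_p^{(s)}[b_{-p}^*,B_\mu]$ from \eqref{eq:CCR_b}, using the parity $\mu_q=\mu_{-q}$. After normal ordering one is led to
\[
[b_p,B_\mu] = \mu_p b_{-p}^* - \frac1N\Big( \mu_p \mathcal{N}_+ b_{-p}^* + \sum_{q\in\Lambda_+^*}\mu_q\, b_q^* a_{-q}^* a_p \Big),
\]
and an analogous identity $[b_{-p}^*,B_\mu] = \mu_p b_p - \frac1N\big( \mu_p \mathcal{N}_+ b_p + \sum_{q\in\Lambda_+^*}\mu_q\, a_{-p}^* a_{-q} b_q \big)$. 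Inserting these into the previous display, the canonical contributions $-\gamma_p^{(s)}\mu_p b_{-p}^*$ and $-\sigma_p^{(s)}\mu_p b_p$ cancel exactly against $\mu_p \gamma_p^{(s)} b_{-p}^*$ and $\mu_p \sigma_p^{(s)} b_p$, so that $\partial_s \Xi_p(s)$ equals $\tfrac1N\, e^{-(1-s)B_\mu}(\cdots)e^{(1-s)B_\mu}$ with $(\cdots)$ exactly the operator appearing in the square bracket of \eqref{eq:d_detailed_expansion}. The overall minus sign in $d_p = -\int_0^1 \partial_s\Xi_p(s)\,ds$ then reproduces \eqref{eq:d_detailed_expansion}.

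I expect the only real work to be the two commutator computations with the correct $O(1/N)$ remainders. Beyond the $-\mathcal{N}_+/N$ corrections in \eqref{eq:CCR_b}, normal-ordering products such as $a_q^* a_p b_{-q}^*$ — writing $b_q^* = a_q^*\sqrt{(N-\mathcal{N}_+)/N}$ and using the shift rules $f(\mathcal{N}_+)a_p = a_p f(\mathcal{N}_+-1)$ and $f(\mathcal{N}_+) a_p^* = a_p^* f(\mathcal{N}_+ + 1)$ — produces additional ``tadpole'' contributions $\propto \delta_{p,-q}\, b_{-p}^*$; these have to be combined with the $\mathcal{N}_+/N$-terms for the commutators to collapse to the clean form above. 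This bookkeeping is routine but must be carried out with care, in particular regarding the relabeling $q\to -q$ of the summation index and the ordering of the factors $\sqrt{(N-\mathcal{N}_+)/N}$ relative to the creation and annihilation operators.
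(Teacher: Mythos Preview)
Your proposal is correct and is essentially the same as the paper's proof: the paper differentiates $e^{sB_\mu}\big(\gamma_p^{(s)} b_p + \sigma_p^{(s)} b_{-p}^*\big)e^{-sB_\mu}$ and integrates, which after an overall conjugation by $e^{-B_\mu}$ is precisely your interpolation $\Xi_p(s)$. The commutator identities you write down match the paper's (up to the harmless ordering ambiguity $\mathcal{N}_+ b_p$ versus $b_p\mathcal{N}_+$), and the ``tadpole'' bookkeeping you anticipate is already absorbed in the form of $[b_p,B_\mu]$ and $[b_{-p}^*,B_\mu]$ obtained directly from \eqref{eq:CCR_b} and \eqref{eq:baa}, so no extra work is needed there.
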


\begin{proof}
    The commutators
    \begin{equation*}
        \begin{split}
            [b_p,B]=\mu_p\big(1-\tfrac{\mathcal{N}_+}{N}\big) b^*_{-p}-\frac{1}{N}\sum_{q\in\Lambda_+^*} \mu_q b^*_q a^*_{-q} a_p\\
            [b^*_{-p},B]=\mu_p b_p \big(1-\tfrac{\mathcal{N}_+}{N}\big)-\frac{1}{N} \sum_{q\in\Lambda_+^*} \mu_q a^*_{-p} a_{-q} b_q
        \end{split}
    \end{equation*}
    imply the identity
    \begin{equation*}
    \begin{split}
        \frac{d}{ds}\bigg( e^{sB_\mu} \Big( \gamma_p^{(s)} b_p+\sigma_p^{(s)} b^*_{-p} \Big) e^{-sB_\mu}\bigg)=\;&\frac{1}{N}e^{sB_\mu} \bigg[ \gamma_p^{(s)}\Big( \mu_p \mathcal{N}_+ b^*_{-p}+\sum_{q\in\Lambda_+^*}\mu_q b^*_q a^*_{-q} a_p  \Big) \\
        &+\sigma_p^{(s)}\Big( \mu_p b_p \mathcal{N}_+ +\sum_{q\in\Lambda_+^*} \mu_q a^*_{-p} a_{-q} b_q \Big)\bigg]e^{-sB_\mu}.
    \end{split}
    \end{equation*}
    Integrating both sides from $s=0$ to $s=1$ and comparing with \eqref{eq:def_d} concludes the proof.
\end{proof}

In the next lemma, we collect bounds for the operators $d^*, d$ that will be used throughout the proof of Theorem \ref{thm:quadratic}. Similar estimates have been shown in \cite{BBCS}, but only under the assumption that the $\ell^2$-norm of the coefficients in the Bogoliubov transformation is small enough (which is not satisfied here, since we included $\tau$ in (\ref{eq:defmu})). With the help of the representation (\ref{eq:d_detailed_expansion}), we relax this assumption. 

\begin{lemma} \label{lemma:d}
Let $\mu\in \ell^2(\Lambda_+^*)$, and let $n\in \mathbb{N}$. Then there exists $C>0$ such that
\begin{equation} \label{eq:decay_d}
\begin{split}
\|(\mathcal{N}_++1)^{n/2}d_p\xi\|\le\;&\frac{C}{N}\Big[ |\mu_p|\|(\mathcal{N}_++1)^{(n+3)/2}\xi\|+\|b_p(\mathcal{N}_++1)^{(n+2)/2}\xi\| \Big]\\
\|(\mathcal{N}_++1)^{n/2}d_p^*\xi\|\le\;&\frac{C}{N}\|(\mathcal{N}_++1)^{(n+3)/2}\xi\|\\
\|(\mathcal{N}_++1)^{n/2}a_qd_p\xi\| \leq & \; \frac{C}{N}\Big[|\mu_q|\|(\cN_++1)^{(n+3)/2}a_p\xi\| + |\mu_p|\|(\cN_++1)^{(n+3)/2}a_q\xi\|\\
   & \qquad+ \d_{p,-q} |\mu_q|\|(\cN_++1)^{(n+3)/2}\xi\| +\|(\cN_++1)^{(n+2)/2}a_pa_q\xi\|\\
   &\qquad+|\mu_p||\mu_q|\|(\cN_++1)^{(n+4)/2}\xi\|\Big]\\
\|(\mathcal{N}_++1)^{n/2}a_qd_p^*\xi\| \leq &\; \frac{C}{N}\Big[|\mu_q|\|(\cN_++1)^{(n+4)/2}\xi\| + \|(\cN_++1)^{(n+3)/2}a_q\xi\|\\
   &\qquad+ \d_{p,-q}\|(\cN_++1)^{(n+2)/2}\xi\|\Big]\,,
\end{split}
\end{equation}
for all $p\in\Lambda_+^*$ and $\xi\in\mathcal{F}_\perp^{\le N}$. Moreover, as distributions,
\begin{equation}\label{eq:decay_d_position}
\begin{split}  \| (\mathcal{N}_+ + 1)^{n/2} \check{d}_x \xi \| \le\;&\frac{C}{N} \Big[ \| (\mathcal{N}_+ + 1)^{(n+3)/2} \xi \| + \| \check{a}_x (\mathcal{N}_+ + 1)^{(n+2)/2} \xi \| \Big] \\
\| (\mathcal{N}_+ + 1)^{n/2} \check{a}_y \check{d}_x \xi \| \leq \;& \frac{C}{N} \Big[ \| \check{a}_x (\mathcal{N}_+ + 1)^{(n+1)/2} \xi \| + \big(1 + |\check{\mu} (x-y)|\big) \| (\mathcal{N}_+ + 1)^{(n+2)/2} \xi \| \\ 
&\qquad + \| \check{a}_y (\mathcal{N}_+ + 1)^{(n+3)/2} \xi \| + \| \check{a}_x \check{a}_y (\mathcal{N}_+ + 1)^{(n+2)/2} \xi \| 
\Big] \\ 
\| (\mathcal{N}_+ + 1)^{n/2} \check{d}_x \check{d}_y \xi \| \leq \; &\frac{C}{N^2} \Big[ \| (\mathcal{N}_++ 1)^{(n+6)/2} \xi \| + |\check{\mu} (x-y)| \| (\mathcal{N}_+ + 1)^{(n+4)/2}  \xi \| \\
&\qquad + \| \check{a}_x (\mathcal{N}_+ + 1)^{(n+5)/2} \xi \| + \| \check{a}_y (\mathcal{N}_+ + 1)^{(n+5)/2} \xi \| \\
&\qquad + \| \check{a}_x \check{a}_y (\mathcal{N}_+ +  1)^{(n+4)/2} \xi \| \Big] \,.\end{split} \end{equation}
\end{lemma}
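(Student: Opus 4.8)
The plan is to argue directly from the integral representation \eqref{eq:d_detailed_expansion}, which writes $d_p = -N^{-1}\int_0^1 ds\; e^{-(1-s)B_\mu}\,T_p^{(s)}\,e^{(1-s)B_\mu}$, where the bracket $T_p^{(s)}$ is a sum of four operators: the two diagonal pieces $\gamma_p^{(s)}\mu_p\cN_+ b_{-p}^*$ and $\sigma_p^{(s)}\mu_p\cN_+ b_p$, and the two sum pieces $\gamma_p^{(s)}\sum_{q}\mu_q b_q^* a_{-q}^* a_p$ and $\sigma_p^{(s)}\sum_{q}\mu_q a_{-p}^* a_{-q}b_q$. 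Since $\|(1-s)\mu\|_2\le\|\mu\|_2<\infty$ for $s\in[0,1]$, the transformation $e^{(1-s)B_\mu}$ obeys \eqref{eq:control-eBmu} uniformly in $s$, so in all of the estimates we may move powers of $(\cN_++1)$ across the factors $e^{\pm(1-s)B_\mu}$ at the cost of a constant. Taking adjoints of \eqref{eq:d_detailed_expansion} gives an analogous representation for $d_p^*$. Thus everything reduces to estimating $\|(\cN_++1)^{m/2}T_p^{(s)}\eta\|$ (and its position-space and $d_p^*$-analogues) for $\eta\in\cF_\perp^{\le N}$.

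For the first two lines of \eqref{eq:decay_d} the four terms are bounded as follows. The diagonal pieces are controlled with $\|b_p\eta\|,\|b_p^*\eta\|\le\|(\cN_++1)^{1/2}\eta\|$ and the commutation identity $a_p^\# f(\cN_+)=f(\cN_+\mp1)a_p^\#$ used to pull the number operators outside; this yields the $|\mu_p|\,\|(\cN_++1)^{(n+3)/2}\xi\|$ contribution. For the sum pieces we first replace each $b_q^\#$ by the corresponding $a_q^\#$ (harmless, since $b_q=\sqrt{(N-\cN_+)/N}\,a_q$), then apply Cauchy--Schwarz in $q$ together with $\|\mu\|_2<\infty$ and the elementary bound $\sum_{q}\|a_qa_{-q}\eta\|^2\le\|\cN_+\eta\|^2$. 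In the representation of $d_p$ the rightmost operator $a_p$ survives the estimate, producing the term $\|b_p(\cN_++1)^{(n+2)/2}\xi\|$ (after re-inserting the harmless factor $\sqrt{(N-\cN_+)/N}$); in the representation of $d_p^*$ the corresponding external field is simply bounded by $(\cN_++1)^{1/2}$, which is why the $d_p^*$-estimate takes the cleaner form $N^{-1}\|(\cN_++1)^{(n+3)/2}\xi\|$. The first line of \eqref{eq:decay_d_position} follows by running the same argument in position space, the sums over $q$ being replaced by contractions against the distributions $\check\gamma$, $\check\sigma$.

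The bounds carrying an extra external field---the last two lines of \eqref{eq:decay_d} and of \eqref{eq:decay_d_position}---are obtained by commutation. For $a_qd_p$ I would write $a_qd_p\xi=[a_q,d_p]\xi+d_p(a_q\xi)$: the term $d_p(a_q\xi)$ is estimated by the $d_p$-bound just established, applied to the vector $a_q\xi$, and it already produces the $|\mu_p|\,\|(\cN_++1)^{(n+3)/2}a_q\xi\|$ and $\|b_p(\cN_++1)^{(n+2)/2}a_q\xi\|$ contributions (the latter being of the form $\|(\cN_++1)^{(n+2)/2}a_pa_q\xi\|$). For the commutator, one pushes $a_q$ inside the integral representation; commuting $a_q$ through $e^{\pm(1-s)B_\mu}$ costs (by Duhamel's formula in the conjugation parameter, using \eqref{eq:CCR_b} and \eqref{eq:decay_mu,g,s}) a $\mu_q$-weighted creation field plus corrections of relative size $N^{-1}$, while commuting $a_q$ through the explicit fields in $T_p^{(s)}$ by the canonical relations produces the Kronecker term $\delta_{p,-q}$ and, in position space, the distribution $\check\mu(x-y)$. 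Collecting these contributions gives exactly the announced right-hand sides, including the mixed terms $\delta_{p,-q}|\mu_q|\,\|(\cN_++1)^{(n+3)/2}\xi\|$ and $|\mu_p||\mu_q|\,\|(\cN_++1)^{(n+4)/2}\xi\|$. Finally, the double estimate for $\check d_x\check d_y$ is obtained by iteration: apply the first line of \eqref{eq:decay_d_position} to $\check d_x(\check d_y\xi)$, then estimate the resulting $\check a_x\check d_y$ term by the $\check a\check d$-bound, collecting the two factors of $N^{-1}$.

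I expect the main difficulty to be the bookkeeping in the last step: tracking which contractions survive when the external annihilation operator is pushed through the generalized Bogoliubov transformation sitting \emph{inside} the representation \eqref{eq:d_detailed_expansion}, and organizing them so that each error term carries the announced power of $(\cN_++1)$ and of $N^{-1}$. A related, more conceptual point is that---unlike in \cite{BBCS}---we must not assume $\|\mu\|_2$ to be small (since $\mu=\eta+\tau$, and $\|\eta\|_2$ is only $O(1)$), so the contraction/iteration argument used there for the $d$-fields is unavailable; the explicit formula \eqref{eq:d_detailed_expansion} is precisely what lets us dispense with that smallness.
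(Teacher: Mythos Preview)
Your proposal is correct and follows essentially the paper's strategy: the paper packages the step of pushing an external $a_p$ through $e^{\pm(1-s)B_\mu}$ into an explicit Gronwall lemma (bounding $\|(\cN_++1)^{n/2}a_p e^{sB_\mu}\xi\|$ and $\|(\cN_++1)^{n/2}a_p a_q e^{sB_\mu}\xi\|$ in terms of $a_p\xi$, $a_pa_q\xi$ and $|\mu|$-weighted corrections) and then applies it directly to the integral representation, rather than going through your commutator split $a_q d_p = [a_q,d_p] + d_p a_q$; but this is only an organizational difference, since your Duhamel argument for $[a_q,e^{\pm(1-s)B_\mu}]$ is precisely the differential form of their Gronwall bound. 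The iteration you propose for $\check d_x\check d_y$ is exactly what the paper does.
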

A proof of Lemma \ref{lemma:d} is given in Appendix \ref{sec:d-ops}. The main message from Lemma \ref{lemma:d} is the fact that $d$-operators, defined through (\ref{eq:def_d}), are small (and therefore the action of $e^{-B_\mu}$ is close to the action of a Bogoliubov transformation) on states with few excitations. For $d_p$ (but not for $d^*_p$), these bounds also allow us to extract some decay for large momenta $p \in \Lambda^*_+$.

We proceed now with the computation of the renormalized excitation Hamiltonian $\cG_N$, which will lead to the proof of Theorem \ref{thm:quadratic}. From (\ref{eq:cLN}), we find 
\[ \cG_N = \frac{\wh{V} (0)}{2} (N-1) + \cG_N^{(2,\cK)} + \cG_N^{(2,V)} + \cG_N^{(3)} + \cG_N^{(4)} \]
with \[  \begin{split}  \mathcal{G}_N^{(2,\mathcal{K})} &= e^{-B_\mu} \mathcal{K} e^{B_\mu}, \, \quad \mathcal{G}_N^{(2,V)}= e^{-B_\mu} \mathcal{L}^{(2,V)}_N e^{B_\mu}, \\  \mathcal{G}_N^{(3)} &= e^{-B_\mu} \mathcal{L}^{(3)}_N e^{B_\mu}, \quad \mathcal{G}_N^{(4)}= e^{-B_\mu} \mathcal{V}_N e^{B_\mu}.\end{split} \] The form of the operators $\cG^{(2,K)}_N, \cG^{(2,V)}_N, \cG_N^{(3)}, \cG_N^{(4)}$ will be determined in Props. \ref{prop:quad_on_K} - \ref{prop:quad_on_quartic}, up to negligible errors. To this end, we will argue similarly as in the proof of  \cite[Proposition 3.2]{BBCS}; however, to resolve the energy to lower order, we will need to keep several additional terms, which did not play an important role in \cite{BBCS}.

First of all, we establish the form of the operator $\cG_N^{(2,K)} = e^{-B_\mu} \mathcal{K} e^{B_\mu}$.
\begin{prop} \label{prop:quad_on_K} Under the same assumptions of Theorem \ref{thm:quadratic} we have
\begin{equation} \label{eq:quadratic_on_K}
\begin{split}
\mathcal{G}^{(2,K)}_N=\;& \mathcal{K} + \sum_{p \in\Lambda_+^* } p^2 \Big[\sigma^2_p + 2 \s_p^2 b_p^* b_p + \gamma_p\sigma_p(b_pb_{-p}+b^*_p b^*_{-p})\Big] +\frac{1}{N} \sum_{p\in\Lambda_+^*} p^2 \eta_p^2(1-\mathcal{N}_+) \\
&+\frac{1}{N} \sum_{p,q\in\Lambda_+^*} p^2 \eta_p^2\Big[ \sigma_q^2+\big(\gamma_q^2+\sigma_q^2\big) b^*_q b_q+\gamma_q\sigma_q\big(b_qb_{-q}+b^*_q b^*_{-q}\big) \Big] \\
&+\frac{1}{2N} \sum_{p,q \in \Lambda_+^*} p^2 \eta_p\big(b_pb_{-p}+b^*_pb^*_{-p}\big)\bigg[\sigma_q^2- \frac{\gamma_q\sigma_q}{\eta_q} + 1\bigg]\\
&+\frac{1}{2N} \sum_{p,q\in \Lambda_+^*} p^2 \eta_p b^*_p b^*_{-p}b^*_qb^*_{-q}\bigg[\gamma_q\sigma_q-\frac{\sigma_q^2}{\eta_q}\bigg]+\mathrm{h.c.}\\
&+\sum_{p\in\Lambda_+^*} p^2 \eta_p \left(b_{-p}d_p + d^*_p b_{-p}^*\right)+\mathcal{E}_{\cG_N^K}
\end{split}
\end{equation}
with
\begin{equation}
    \pm\mathcal{E}_{\cG_N^K} \le  \eps \cN_+ +\frac{C}{\eps N} (\mathcal{K} + \cN_+^2 + 1)(\mathcal{N}_++1) .
\end{equation}
\end{prop}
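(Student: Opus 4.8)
The plan is to follow the standard scheme for conjugating an operator by a generalized Bogoliubov transformation, as in the proof of \cite[Proposition 3.2]{BBCS}, but keeping track of all contributions down to order $1/N$. I would start from the identity
\[ \mathcal{G}^{(2,K)}_N = e^{-B_\mu}\mathcal{K}e^{B_\mu} = \mathcal{K} + \int_0^1 e^{-sB_\mu}\,[\mathcal{K},B_\mu]\,e^{sB_\mu}\,ds, \]
and compute the commutator using \eqref{eq:baa}: since $[\mathcal{K},b_p^*]=p^2 b_p^*$, $[\mathcal{K},b_p]=-p^2 b_p$, and $(-p)^2=p^2$, one gets $[\mathcal{K},B_\mu]=\sum_{p\in\Lambda_+^*}p^2\mu_p\,(b_p^*b_{-p}^*+b_pb_{-p})$. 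It then remains to conjugate each bilinear $b_p^{\#}b_{-p}^{\#}$ by $e^{sB_\mu}$ and integrate.

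For this I would insert the expansion \eqref{eq:def_d} at parameter $s$, i.e. $e^{-sB_\mu}b_p e^{sB_\mu}=\gamma_p^{(s)}b_p+\sigma_p^{(s)}b_{-p}^*+d_p^{(s)}$ and its adjoint, and multiply out. Organizing the result one obtains four families: (a) purely \emph{quadratic} terms, bilinear in $b^{\#}$, with coefficients products of $\gamma^{(s)},\sigma^{(s)}$; (b) scalar, $\mathcal{N}_+/N$ and $a_p^*a_p/N$ corrections produced when normal-ordering $b_pb_p^*$ via \eqref{eq:CCR_b}; (c) terms linear in $d^{(s)}$ multiplied by a single $b^{\#}$; (d) terms quadratic in $d^{(s)}$. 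For (a)–(b) the $s$-integrals are elementary,
\[ \int_0^1 \mu_p\cosh(2s\mu_p)\,ds=\gamma_p\sigma_p, \qquad \int_0^1 \mu_p\sinh(2s\mu_p)\,ds=\sigma_p^2, \]
and, after relabelling $-p\to p$ and trading $\sigma_p^2$ against $\eta_p^2$ at the price of errors proportional to $\tau$ (negligible by the fast decay \eqref{eq:tau-bds}), they reproduce the constant $\sum_p p^2\sigma_p^2$, the diagonal $2\sum_p p^2\sigma_p^2\,b_p^*b_p$ and the off-diagonal $\sum_p p^2\gamma_p\sigma_p(b_pb_{-p}+b_p^*b_{-p}^*)$ on the first two lines of \eqref{eq:quadratic_on_K}. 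Among the $d^{(s)}$-linear terms in (c), those paired with a $b^{\#}$ through a factor $\sigma_p^{(s)}$ carry an extra power of $\mu_p$, hence extra momentum decay, and are moved to the error; the remaining ones, paired through $\gamma_p^{(s)}\simeq 1$, recombine — modulo commutators and modulo the difference between $d_p^{(s)},\gamma_p^{(s)}$ and their values at $s=1$, all of which are error terms — into the explicit $\sum_p p^2\eta_p\,(b_{-p}d_p+d_p^*b_{-p}^*)$ of \eqref{eq:quadratic_on_K}; this term is kept because it is not small on its own and will be processed together with analogous contributions of $\mathcal{G}^{(2,V)}_N$ in the subsequent propositions. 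Finally, the term $\tfrac1N\sum_p p^2\eta_p^2(1-\mathcal{N}_+)$ and the double sums $\tfrac1N\sum_{p,q}(\dots)$ on lines 2–5 of \eqref{eq:quadratic_on_K} are produced by combining the $\mathcal{N}_+/N$, $a^*a/N$ corrections of \eqref{eq:CCR_b} with the contributions obtained by substituting the detailed representation \eqref{eq:d_detailed_expansion} of $d_p^{(s)}$ into the surviving $d^{(s)}$-linear and $d^{(s)}$-quadratic terms, keeping the leading piece and performing the $s$-integrals; the coefficients $\gamma_q\sigma_q/\eta_q$, $\sigma_q^2/\eta_q$ arise precisely from $\int_0^1(\gamma_q^{(s)})^2ds=\tfrac12+\tfrac{\gamma_q\sigma_q}{2\mu_q}$, $\int_0^1(\sigma_q^{(s)})^2ds=\tfrac{\gamma_q\sigma_q}{2\mu_q}-\tfrac12$, together with $\gamma_q\sigma_q/\mu_q=\gamma_q\sigma_q/\eta_q+\mathcal{O}(\tau)$.

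Everything not listed above is collected into $\mathcal{E}_{\mathcal{G}_N^K}$. To bound it I would use the estimates of Lemma \ref{lemma:d} on $d_p$, $d_p^*$ and $a_qd_p$, the a~priori bounds of Lemma \ref{lemma:a_priori_quadratic} and \eqref{eq:control-eBmu} to control the $e^{\pm(1-s)B_\mu}$-conjugated $\mathcal{K}$, $\mathcal{V}_N$, $\mathcal{N}_+$ that appear after applying the $d$-bounds, the decay $|\mu_p|\le C|p|^{-2}$, $\sum_p p^2\mu_p^2\le CN$ and $|\tau_p|\le C|p|^{-4}$ from Lemma \ref{lemma:coefficients}, and Cauchy--Schwarz in $p$ with suitably weighted Young inequalities, to arrive at $\pm\mathcal{E}_{\mathcal{G}_N^K}\le \eps\mathcal{N}_++\tfrac{C}{\eps N}(\mathcal{K}+\mathcal{N}_+^2+1)(\mathcal{N}_++1)$.

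I expect the main obstacle to be precisely this last step: the $d^{(s)}$-linear and $d^{(s)}$-quadratic remainders — especially the cross-terms in which one factor still carries an inner $e^{\pm(1-s)B_\mu}$ conjugation — have to be reorganized (moving the $b_p$-factor out of $d_p$ as in Lemma \ref{lemma:d}, and commuting operators through the $B_\mu$-conjugation where needed) so that, after the Cauchy--Schwarz splittings, they land exactly in the stated form without losing powers of $N$. The momentum decay of $\mu$ — and the improved decay of $\tau$, which is what allows us first to fold $\tau$ into $\mu$ and then to drop it in all the $1/N$-suppressed coefficients — is the structural input that makes these estimates close.
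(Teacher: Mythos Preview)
Your Duhamel route is a legitimate alternative, but there is a concrete gap. You assert that the $d^{(s)}$-linear contributions ``recombine --- modulo \dots the difference between $d_p^{(s)},\gamma_p^{(s)}$ and their values at $s=1$, all of which are error terms --- into the explicit $\sum_p p^2\eta_p\,(b_{-p}d_p+d_p^*b_{-p}^*)$''. That is false: $d_p^{(s)}-d_p$ is \emph{not} small (e.g.\ $d_p^{(0)}=0$ while $d_p\neq0$), so $\int_0^1 p^2\mu_p\,(\cdots)\,d_p^{(s)}\,ds$ is nowhere near $p^2\eta_p\,(\cdots)\,d_p$. In a Duhamel scheme you never see the $s=1$ object $d_p$; to reproduce the term $\sum_p p^2\eta_p(b_{-p}d_p+\mathrm{h.c.})$ exactly as it appears in \eqref{eq:quadratic_on_K} you would have to add and subtract it by hand and then verify that the residual, together with your $s$-integrated pieces, matches the remaining explicit double sums --- a nontrivial bookkeeping exercise you have not outlined.

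The paper sidesteps this entirely by \emph{not} using Duhamel. It first rewrites $\mathcal{K}$ algebraically as
\[
\mathcal{K}=\frac{N-1}{N}\sum_{p}p^2 b_p^*b_p+\sum_{p}p^2 b_p^*b_p\,\frac{\mathcal{N}_+}{N}+\mathcal{K}\,\frac{(\mathcal{N}_+-1)^2}{N^2},
\]
discards the last term via \eqref{eq:a_priori_quadratic}, converts the middle term into a $b$-quartic, and then conjugates $\sum_p p^2\,e^{-B_\mu}b_p^*b_p e^{B_\mu}$ and $\tfrac{1}{N}\sum_{p,q}p^2\,e^{-B_\mu}b_p^*b_q^*b_qb_p e^{B_\mu}$ directly at $s=1$ using \eqref{eq:def_d}. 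The first of these splits as $E_1+E_2+E_3$ with $E_2=\sum_p p^2(\gamma_pb_p^*+\sigma_pb_{-p})d_p+\mathrm{h.c.}$; the piece $\sigma_pb_{-p}d_p$ immediately gives (after $\sigma_p\to\eta_p$) the displayed $\eta_p b_{-p}d_p$ term, while the piece $b_p^*d_p$ is the one that gets further expanded via \eqref{eq:d_detailed_expansion} to produce the double sums on lines~3--4 of \eqref{eq:quadratic_on_K}. The quartic conjugation supplies line~2 and the remaining pieces of lines~3--4. Because everything happens at $s=1$, the operator $d_p$ of the statement appears naturally and no $s$-integrated $d^{(s)}$ ever has to be matched against it.
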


{\it Remark.} Compared with the corresponding result in \cite{BBCS}, we additionally need to keep track of the terms appearing in the third and fourth line of \eqref{eq:quadratic_on_K}. These terms will be included in the operator $\mathcal{T}_{\mathcal{G}_N}$ defined in \eqref{eq:cT_G_N} and eventually will be proven to be negligible after cubic conjugation.

\begin{proof}[Proof of Proposition \ref{prop:quad_on_K}]
    Writing
    \begin{equation*}
        \mathcal{K}=\frac{N-1}{N} \sum_{p\in\Lambda_+^*}p^2 b^*_p b_p+\sum_{p\in \Lambda_+^*}p^2 b^*_p b_p\frac{\mathcal{N}_+}{N}+\mathcal{K}\frac{(\mathcal{N}_+-1)^2}{N^2} \, ,
    \end{equation*}
    and using \eqref{eq:a_priori_quadratic} to get rid of the last term, we find
    \begin{equation} \label{eq:first_decomp_quad_K}
        e^{-B_\mu} \mathcal{K} e^{B_\mu}= \sum_{p\in \Lambda_+^*}p^2 e^{-B_\mu} b^*_p b_pe^{B_\mu}+\frac{1}{N}\sum_{p,q\in \Lambda_+^*}p^2 e^{-B_\mu} b^*_p b^*_q b_p b_q e^{B_\mu} +{\mathcal{E}_{\mathcal{K},1}}
    \end{equation}
    with
    \begin{equation*}
        \pm{\mathcal{E}_{\mathcal{K},1}} \le CN^{-1} (\mathcal{K}+ \cN_+^2 + 1)( \cN_+ + 1) \,.
    \end{equation*}

We decompose the first term on the r.h.s. of (\ref{eq:first_decomp_quad_K}) as  
    \begin{equation*}
        \sum_{p\in \Lambda_+^*}p^2 e^{-B_\mu} b^*_p b_pe^{B_\mu}=E_1+E_2+E_3,
    \end{equation*}
    with
    \begin{equation*}
        \begin{split}
            E_1=\;&\sum_{p\in\Lambda_+^*}p^2(\gamma_p b^*_{-p}+\sigma_p b_{-p})(\gamma_p b_p+\sigma_p b^*_{-p})\\
            E_2=\;&\sum_{p\in\Lambda_+^*}p^2 (\gamma_p b^*_{p}+\sigma_p b_{-p}) d_p + \mathrm{h.c.} \\
            E_3=\;&\sum_{p\in\Lambda_+^*}p^2 d^*_pd_p.
        \end{split}
    \end{equation*}
    With the commutation relations (\ref{eq:CCR_b}) (and using $|\mu_p - \eta_p| = |\tau_p| \leq C / |p|^4$) we obtain  
    \begin{equation}\label{eq:E1}
        E_1=\mathcal{K}+\sum_{p\in\Lambda_+^*} p^2\bigg(\sigma_p^2-\eta_p^2\frac{\mathcal{N}_+}{N}\bigg)+\sum_{p\in\Lambda_+^*}\Big(p^2 \gamma_p \sigma_p(b_pb_{-p}+b^*_p b^*_{-p})+2p^2\sigma_p^2b^*_p b_p\Big)+\mathcal{E}_{\mathcal{K},2},
    \end{equation}
    with
 $\pm \mathcal{E}_{\mathcal{K},2} \le CN^{-1} \mathcal{K}(\mathcal{N}_++1)$. Applying \eqref{eq:decay_d}, we find (see \cite[Equation above (7.16)]{BBCS}) 

    \begin{equation}\label{eq:E3}
        \pm E_3\le C N^{-1} (\mathcal{K}+ \cN^2_+ +1)(\mathcal{N}_++1) \, . 
    \end{equation}
As for $E_2$, using \eqref{eq:decay_d}, $|\gamma_p - 1| \leq C/|p|^4$, $|\sigma_p - \mu_p| \leq C / |p|^6$ and again $|\mu_p - \eta_p| = |\tau_p| \leq C / |p|^4$, we obtain  
   \begin{equation*}
    \begin{split}
        E_2=\;&\sum_{p\in\Lambda_+^*}p^2 \eta_p( b_{-p} d_p+\mathrm{h.c.}) +\sum_{p\in \Lambda_+^*}p^2 (b^*_pd_p+\mathrm{h.c.})+\mathcal{E}_{\mathcal{K},3},
    \end{split}
    \end{equation*}
    with $\pm\mathcal{E}_{\mathcal{K},3} \le C N^{-1} (\mathcal{N}_++1)^2$. The first term contributes to the r.h.s. of \eqref{eq:quadratic_on_K}. As for the second term, we expand $d_p$ using \eqref{eq:d_detailed_expansion}. In the resulting expression, we observe that the coefficients $\gamma_p^{(s)}$ and $\s_p^{(s)}$ can be replaced by $1$ and $s \mu_p$, respectively, up to negligible errors. As an example, consider 
         \begin{equation} \label{eq:example_bound_E_2}
        \begin{split}
            \pm \frac{1}{N}\sum_{p,q\in \Lambda_+^*}&p^2\mu_q  \int_0^1ds\,(\gamma_p^{(s)}-1) \Big( \big\langle \xi,b^*_pe^{(s-1)B_\mu} b^*_q a^*_{-q} a_pe^{-(s-1)B_\mu}\xi\big\rangle+\mathrm{h.c.}\Big)\\
            \le\;&\frac{C}{N}\int_0^1ds\sum_{p,q\in\Lambda_+^*}\frac{1}{|q|^2} \big\|b_qe^{-(s-1)B_\mu}b_p\xi\big\|\,\big\|a_p\mathcal{N}_+^{1/2}e^{-(s-1)B_\mu}\xi\big\|\\
            \le\;&C N^{-1} \langle \xi, (\mathcal{N}_++1)^3\xi\rangle,
        \end{split}
    \end{equation}
We obtain (replacing $a, a^*$ with $b,b^*$ only produces small corrections) 
\begin{equation*} \label{eq:ugly_expansion_b^*d}
        \begin{split}
            \sum_{p\in\Lambda_+^*}& p^2(b^*_p d_p+\mathrm{h.c.})\\
            =\;&-\frac{1}{N} \sum_{p\in\Lambda_+^*}p^2 \mu_p \int_0^1 ds\,b^*_p \,e^{(s-1)B_\mu}b^*_{-p}\Big(\sum_{q\in\Lambda_+^*}b^*_q b_q+1\Big)e^{-(s-1)B_\mu}+\mathrm{h.c.}\\
            &-\frac{1}{N} \sum_{p,q\in\Lambda_+^*} p^2\mu_q \int_0^1ds \,b^*_p\,e^{(s-1)B_\mu} b^*_q b^*_{-q} b_pe^{-(s-1)B_\mu}+\mathrm{h.c.}\\&-\frac{1}{N} \sum_{p,q,\in\Lambda_+^*} p^2\mu_p\mu_q \int_0^1 ds\,s\, b^*_p e^{(s-1)B_\mu} b^*_{-p} b_q b_{-q}e^{-(s-1)B_\mu}+\mathrm{h.c.}+\mathcal{E}_{\mathcal{K},4}
        \end{split}
    \end{equation*}
    with $\pm \mathcal{E}_{\mathcal{K},4}\le C N^{-1} (\mathcal{N}_++1)^3$. With (\ref{eq:def_d}) we observe that $e^{(s-1) B_\mu} b_{-p}^* e^{-(s-1) B_\mu} \simeq b_{-p}^*$ (in the first and third line) and that $e^{(s-1) B_\mu} b_{-p}^* e^{-(s-1) B_\mu} \simeq \mu_p b_{p}$ (in second line), up to contributions that can be bounded similarly as in (\ref{eq:example_bound_E_2}). Setting $t = 1-s$, we obtain 
         \begin{equation*}
        \begin{split}
       \sum_{p\in\Lambda_+^*}& p^2(b^*_p d_p+\mathrm{h.c.})\\
            =\;&-\frac{1}{N} \sum_{p,q\in\Lambda_+^*} p^2\mu_p\,b^*_p b^*_{-p} \int_0^1dt \,e^{-tB_\mu}\Big( b^*_q b_q+t\,\mu_q\, b^*_q b^*_{-q}+(1-t)\mu_q\,b_qb_{-q}\Big) e^{tB_\mu}+\mathrm{h.c.}\\
            &-\frac{1}{N} \sum_{p\in\Lambda_+^*} p^2 \mu_p\, \big(b^*_{p} b^*_{-p}+\mathrm{h.c.}\big) +\mathcal{E}_{\mathcal{K},5}      
        \end{split}
    \end{equation*}
    with $\pm \mathcal{E}_{\mathcal{K},5}\le CN^{-1}(\mathcal{N}_++1)^3$. Applying again (\ref{eq:def_d}), noticing that contributions involving $d,d^*$ operators are negligible as a consequence of \eqref{eq:decay_d}, rearranging terms in normal order, computing the integrals over $t$ explicitly and using that $|\mu_p - \eta_p| \leq C / |p|^4$, we arrive at 
     \begin{equation}\label{eq:E2} 
        \begin{split}
            E_2=\;&\sum_{p\in\Lambda_+^*} p^2(\eta_p b_{-p} d_p+\mathrm{h.c.})\\
            &-\frac{1}{N} \sum_{p,q\in\Lambda_+^*} p^2\eta_p\,b^*_p b^*_{-p}\bigg[\bigg(\sigma_q^2+\frac{\gamma_q\sigma_q}{\mu_q}\bigg)b^*_q b_q+\frac{1}{2}\bigg(\frac{\sigma^2_q}{\mu_q}+\gamma_q\sigma_q\bigg) \big(b_qb_{-q}+b^*_q b^*_{-q}\big)\bigg]\\
            &-\frac{1}{N} \sum_{p\in\Lambda_+^*} p^2 \eta_p\,\big( b^*_{p} b^*_{-p}+\mathrm{h.c.}\big)\bigg[ 1+\frac{1}{2}\sum_{q\in\Lambda_+^*} \Big(\sigma_q^2+\frac{\gamma_q\sigma_q}{\mu_q}-1\Big)\bigg] +\mathcal{E}_{\mathcal{K},6},
        \end{split}
    \end{equation}
where $\pm\mathcal{E}_{\mathcal{K},6} \le C N^{-1} (\mathcal{N}_++1)^3$.
   
Finally, we consider the second term on the r.h.s. of (\ref{eq:first_decomp_quad_K}). Again we apply (\ref{eq:def_d}) and we observe that all contributions involving the operators $d,d^*$ are negligible, by (\ref{eq:decay_d}). Furthermore, we notice that the coefficients $\gamma_p$ and $\sigma_p$ can be replaced everywhere by one and, respectively, $\eta_p$, up to a negligible error (using again $|\mu_p - \eta_p| \leq C /|p|^4$). Finally, we remark that all terms proportional to $b_p^* b_p$ are also irrelevant (because the sum over $p$ can be controlled by $\cK$). Collecting all terms proportional to $b_p^* b_{-p}^*, b_p b_{-p}$ (and all commutators arising from normal ordering), we arrive at 
  \begin{equation}\label{eq:E2_2}
         \begin{split}
        \frac{1}{N}&\sum_{p,q\in\Lambda_+^*}p^2 e^{-B_\mu}b^*_pb^*_q b_q b_pe^{B_\mu}\\
        =\;&\frac{1}{N}\sum_{p\in\Lambda_+^*}p^2 \eta_p^2 \bigg(1+\sum_{q\in\Lambda_+^*}\sigma_q^2\bigg)+\frac{1}{N}\sum_{p,q\in\Lambda_+^*} p^2\eta_p^2\Big[(\gamma_q^2+\sigma_q^2)b^*_q b_q+\gamma_q\sigma_q\big(b_qb_{-q}+b^*_{q} b^*_{-q}\big)\Big]\\
            &+\frac{1}{N} \sum_{p,q\in\Lambda_+^*}p^2\eta_p b^*_p b^*_{-p} \Big( \big(\gamma_q^2+\sigma_q^2\big) b^*_q b_q+ \gamma_q \sigma_q \big(b_qb_{-q}+b^*_q b^*_{-q}\big)\Big)+\mathrm{h.c.}\\
            &+\frac{1}{N} \sum_{p\in\Lambda_+^*}p^2\eta_p b^*_p b^*_{-p}\bigg(1+\sum_{q\in\Lambda_+^*} \sigma_q^2\bigg)+\mathcal{E}_{\mathcal{K},7},
    \end{split}
    \end{equation}
    with $\pm\mathcal{E}_{\mathcal{K},7} \le C N^{-1} (\mathcal{K} + \cN_+^2 + 1) (\cN_+ + 1)$. Lastly, we notice that, in (\ref{eq:E2}) and in (\ref{eq:E2_2}), the quartic terms which contain both creation and annihilation operators can be treated as errors bounded by $\eps \cN_++C\eps^{-1}N^{-1}\cK(\cN_++1)^2$. As an example, consider \begin{equation} \label{eq:example_bound_E_2_2}
        \begin{split}
            \pm \frac{1}{N}\sum_{p,q\in \Lambda_+^*}&p^2\eta_p\,  \gamma_q\sigma_q\langle \xi,b^*_pb^*_{-p}b_qb_{-q}\xi \rangle +\mathrm{h.c.} \\
            \le\;& \frac{C}{N}\sum_{p,q\in \Lambda_+^*}p^2|\eta_p|\,  |\sigma_q| \|b^*_qb_{-p}b_p \xi\| \; \|b_{-q} \xi\|
            \le\; \eps \langle \xi, \cN_+\xi\rangle +\frac{C}{\eps N} \langle \xi, \mathcal{K}(\mathcal{N}_++1)^2\xi\rangle.
        \end{split}
    \end{equation} Combining this with (\ref{eq:E1}), (\ref{eq:E3}), (\ref{eq:E2}), we obtain \eqref{eq:quadratic_on_K}. 
\end{proof}

Next, we consider the operator $\mathcal{G}^{(2,V)}_N = e^{-B_\mu} \mathcal{L}^{(2,V)}_N e^{B_\mu}$.
\begin{prop} 
Under the same assumptions of Theorem \ref{thm:quadratic} we have
\begin{equation} \label{eq:quad_on_L^2}
\begin{split}
\mathcal{G}^{(2,V)}_N=\;&\sum_{p\in\Lambda_+^*}\widehat{V} (p/N) \left(\sigma_p^2+\gamma_p\sigma_p-\eta_p\frac{\cN_+}{N}\right)\\
&+\sum_{p\in\Lambda_+^*}\widehat{V} (p/N) (\gamma_p+\sigma_p)^2\left(b^*_pb_p+\frac{1}{2}(b_pb_{-p}+b^*_pb^*_{-p})\right)\\
&+\frac{1}{2}\sum_{p\in\Lambda_+^*}\widehat{V} (p/N) \Big[b_{-p}d_p+d_p (b_{-p}+\eta_pb^*_{p}) +\mathrm{h.c.} \Big] +\mathcal{E}_{\mathcal{G}_{N}^{V}},
\end{split}
\end{equation}
with 
\begin{equation*}
    \pm \mathcal{E}_{\mathcal{G}_N^{V}}\le C N^{-1}  (\mathcal{K} + \cN_+^3 + 1) (\mathcal{N}_++1) .
\end{equation*}
\end{prop}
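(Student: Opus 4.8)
The plan is to conjugate $\mathcal L_N^{(2,V)}$ with $e^{B_\mu}$ one monomial at a time, inserting \eqref{eq:def_d} to split $e^{-B_\mu}b_p e^{B_\mu}$ and $e^{-B_\mu}b_p^* e^{B_\mu}$ into their Bogoliubov parts $\gamma_p b_p+\sigma_p b_{-p}^*$, $\gamma_p b_p^*+\sigma_p b_{-p}$ and the small remainders $d_p$, $d_p^*$. The two genuinely $O(1/N)$ pieces of $\mathcal L_N^{(2,V)}$, namely $-N^{-1}\sum_p\widehat V(p/N)a_p^*a_p$ and $-(2N)^{-1}\widehat V(0)\,\mathcal N_+(\mathcal N_+-1)$, satisfy $\pm(\,\cdot\,)\le CN^{-1}(\mathcal N_++1)^2$ (using $|\widehat V(p/N)|\le\widehat V(0)$), so by \eqref{eq:control-eBmu} their conjugates go straight into $\mathcal E_{\mathcal G_N^V}$.

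For the two remaining monomials $\sum_p\widehat V(p/N)b_p^*b_p$ and $\tfrac12\sum_p\widehat V(p/N)(b_p^*b_{-p}^*+b_pb_{-p})$, I would multiply out $e^{-B_\mu}(\,\cdot\,)e^{B_\mu}$ and split the result into (i) products of two Bogoliubov fields, (ii) terms linear in a $d^\#$, (iii) terms quadratic in $d^\#$. In group~(i) one normal-orders with \eqref{eq:CCR_b}: the $\delta$-parts produce the constant $\sum_p\widehat V(p/N)(\sigma_p^2+\gamma_p\sigma_p)$ and the correction $-N^{-1}\mathcal N_+\sum_p\widehat V(p/N)(\sigma_p^2+\gamma_p\sigma_p)$ (from $\sigma_p^2\,b_{-p}b_{-p}^*$ in the first monomial and $\sigma_p\gamma_p\,b_{-p}b_{-p}^*$ in the second), while the $-N^{-1}a_q^*a_p$-parts of \eqref{eq:CCR_b} together with the difference $b_p^*b_p-a_p^*a_p$ are $O(N^{-1}(\mathcal N_++1)^2)$ and enter the error; the remaining quadratic terms from the two monomials combine so that $b_p^*b_p$ and $\tfrac12(b_p^*b_{-p}^*+b_pb_{-p})$ each acquire the coefficient $(\gamma_p+\sigma_p)^2$, i.e.\ the second line of \eqref{eq:quad_on_L^2}. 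Using $|\sigma_p^2+\gamma_p\sigma_p-\eta_p|\le C|p|^{-4}$ (from $\mu_p=\eta_p+\tau_p$ and Lemma \ref{lemma:coefficients}), which is summable in three dimensions, I would replace $\sigma_p^2+\gamma_p\sigma_p$ by $\eta_p$ in the $\mathcal N_+/N$ term at cost $O(N^{-1}\mathcal N_+)$, obtaining $-N^{-1}\sum_p\widehat V(p/N)\eta_p\mathcal N_+$. Group~(iii) is immediate: any $d$-quadratic monomial, e.g.\ $\sum_p\widehat V(p/N)d_p^*d_p$, is $O(N^{-2}(\mathcal N_++1)^3)$ by Lemma \ref{lemma:d} (using $\|\mu\|_2\le C$).

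The delicate part — and the one I expect to be the main obstacle — is group~(ii), since $\widehat V(p/N)$ stays $O(1)$ in $p$ and one cannot discard $d$-linear terms wholesale. I would first check that the $d$-linear contributions from $\sum_p\widehat V(p/N)b_p^*b_p$, namely $\sum_p\widehat V(p/N)[\gamma_p(b_p^*d_p+d_p^*b_p)+\sigma_p(b_{-p}d_p+d_p^*b_{-p}^*)]$, are all errors: for the $\gamma_p$-part one uses $\sum_p\|b_p\xi\|^2\le\langle\xi,\mathcal N_+\xi\rangle$ and $\sum_p\|d_p\xi\|^2\le CN^{-2}\langle\xi,(\mathcal N_++1)^3\xi\rangle$ from Lemma \ref{lemma:d}, and for the $\sigma_p$-part the prefactor $\sigma_p$ supplies $\sum_p|\sigma_p|^2\|b_{-p}^*\xi\|^2\le C\langle\xi,(\mathcal N_++1)\xi\rangle$; in both cases Cauchy--Schwarz in $p$ and a splitting $\epsilon\mathcal N_++\epsilon^{-1}(\cdots)$ with $\epsilon=N^{-1}$ closes the estimate. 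The $d$-linear monomials produced instead by $\tfrac12\sum_p\widehat V(p/N)(b_p^*b_{-p}^*+b_pb_{-p})$ contain the combinations $\gamma_p\,b_{-p}d_p$, $\gamma_p\,d_pb_{-p}$, $\gamma_p\,d_p^*b_{-p}^*$ and $\sigma_p\,d_pb_p^*$ (and their conjugates), which carry no decaying coefficient and must be kept; replacing $\gamma_p\to1$ and $\sigma_p\to\eta_p$ (the differences being $O(|p|^{-4})$ times a $d^\#$, hence negligible by Lemma \ref{lemma:d}) and reordering $b^\#$ and $d^\#$ up to small commutators, they assemble into $\tfrac12\sum_p\widehat V(p/N)[b_{-p}d_p+d_p(b_{-p}+\eta_pb_p^*)+\mathrm{h.c.}]$, the third line of \eqref{eq:quad_on_L^2}. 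The rule of thumb to watch is that a $d$-linear monomial may be dropped only when the field it leaves uncontracted admits an $\ell^2(\Lambda_+^*)$-summable coefficient — supplied either by a $\sigma_p\sim\eta_p$ prefactor or by the intrinsic momentum decay of $d_p$ (available for $d_p$ but not for $d_p^*$, and never for a bare $b_p^*$). Collecting the surviving terms yields \eqref{eq:quad_on_L^2}, with all discarded contributions bounded by $CN^{-1}(\mathcal K+\mathcal N_+^3+1)(\mathcal N_++1)$ via Lemma \ref{lemma:d}, Lemma \ref{lemma:a_priori_quadratic} and \eqref{eq:control-eBmu}.
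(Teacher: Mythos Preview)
Your approach is correct and mirrors the paper's proof: split $\mathcal L_N^{(2,V)}$ into its four monomials, absorb the two explicit $O(1/N)$ pieces via \eqref{eq:control-eBmu}, expand the other two through \eqref{eq:def_d}, and sort the result into Bogoliubov-quadratics, $d$-linear terms (retaining precisely those lacking an $\ell^2$-summable prefactor, after replacing $\gamma_p\to1$, $\sigma_p\to\eta_p$), and $d$-quadratics. One small correction to your write-up: group~(iii) is not quite ``immediate'' --- the off-diagonal piece $\sum_p\widehat V(p/N)(d_p^*d_{-p}^*+\mathrm{h.c.})$ is exactly where the $\mathcal K$ in the error bound originates, since neither $\widehat V(p/N)$ nor $\|d_{-p}^*\xi\|$ decays in $p$, and one must use a $|p|^2$-weighted Cauchy--Schwarz on the $d_p$ side (invoking $\sum_p p^2|\mu_p|^2\le CN$ from Lemma~\ref{lemma:coefficients}) to close the estimate, as the paper does explicitly.
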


\begin{proof}
    With \eqref{eq:def_L}, we write 
    \begin{equation*}
    \cG_N^{(2,V)} =  e^{-B_\mu}\mathcal{L}_N^{(2,V)}e^{B_\mu}=F_1+F_2+F_3+F_4
    \end{equation*}
    with
    \begin{equation*}
        \begin{split}
            F_1=\;& \sum_{p\in\Lambda_+^*}\widehat{V} (p/N) e^{-B_\mu}b^*_p b_pe^{B_\mu}\\
            F_2=\;&-\frac{1}{N}\sum_{p\in\Lambda_+^*}\widehat{V} (p/N)e^{-B_\mu}a^*_p a_pe^{B_\mu}\\
            F_3=\;&\frac{1}{2}\sum_{p\in\Lambda_+^*}\widehat{V} (p/N)e^{-B_\mu}\big(b^*_pb^*_{-p}+b_pb_{-p}\big)e^{B_\mu}\\
            F_4=\;&\frac{\widehat{V}(0)}{2N}e^{-B_\mu}\mathcal{N}_+(\mathcal{N}_+-1)e^{B_\mu}.
        \end{split}
    \end{equation*}
    Clearly $\pm(F_2+F_4) \le CN^{-1}(\mathcal{N}_++1)^2$. Proceeding as in \cite[Proposition 7.2]{BBCS} (using (\ref{eq:def_d}) and the bounds (\ref{eq:decay_d})) and normal ordering using (\ref{eq:CCR_b}), we find
    \begin{equation*} 
        F_1=\sum_{p\in\Lambda_+^*}\widehat{V}(p/N)\Big(\big(\gamma_p^2+\sigma_p^2)b^*_pb_p+\gamma_p\sigma_p \big(b^*_pb^*_{-p}+b_pb_{-p}\big)\Big)+\mathcal{E}_{V,1},
    \end{equation*}
    with $\pm \mathcal{E}_{V,1}\le CN^{-1}(\mathcal{N}_++1)^2$. As for $F_3$, we apply \eqref{eq:def_d} to decompose    \begin{equation*}
        \begin{split}
F_3=\;&\frac{1}{2}\sum_{p\in\Lambda_+^*}\widehat{V} (p/N) \Big(\big(\gamma_p^2+\sigma_p^2)(b^*_pb^*_{-p}+b_pb_{-p})+2\gamma_p\sigma_p(2b^*_pb_p+1)-2\eta_p\frac{\cN_+}{N}\Big)\\
            &+\frac{1}{2}\sum_{p\in\Lambda_+^*}\widehat{V} (p/N) \Big[b_{-p}d_p+d_p (b_{-p}+\eta_pb^*_{p}) \Big]+\mathrm{h.c.} +\mathcal{E}_{V,2},
        \end{split}
    \end{equation*}
    with 
    \begin{equation}
    \label{eq:E_V_2}
    \begin{split}
        \mathcal{E}_{V,2}=&\frac{1}{2}\sum_{p\in\Lambda_+^*}\widehat{V} (p/N) \Big[ 2(\eta_p-\gamma_p\sigma_p)\frac{\cN_+}{N}+(\gamma_p-1)(b_{-p}d_p+d_pb_{-p})\\
        &\qquad \qquad \qquad  +\sigma_pb^*_{p}d_p+(\sigma_p-\eta_p)d_pb^*_{-p}+d^*_p d^*_{-p}+d_{-p}d_p\Big].
        \end{split}
    \end{equation}
    Using \eqref{eq:decay_d}, we bound the last two terms of $\mathcal{E}_{V,2}$ by 
    \begin{equation*}
    \begin{split}
        \pm \frac{1}{2}\sum_{p\in\Lambda_+^*}&\widehat{V} (p/N) \,\big(\langle \xi,d^*_p d^*_{-p}\xi  \rangle +\mathrm{h.c.}\big) \\\le\;& C \sum_{p\in\Lambda_+^*} |\widehat{V} (p/N) | \,\big\|(\mathcal{N}_++1)^{1/2}d^*_{-p}\xi\big\|\,\big\|(\mathcal{N}_++1)^{-1/2}d_p\xi\big\|\\
        \le\;&\frac{C}{N^2}\big\|(\mathcal{N}_++1)^{2}\xi\big\|\bigg(\sum_{p\in\Lambda_+^*}\frac{1}{|p|^2} |\widehat{V} (p/N)| \bigg)^{1/2}\\
        &\qquad\times \bigg(\sum_{p\in\Lambda_+^*} p^2|\mu_p|^2\big\|(\mathcal{N}_++1)\xi\big\|^2+\sum_{p\in\Lambda_+^*} p^2\big\|b_p(\mathcal{N}_++1)^{1/2}\xi\big\|^2 \bigg)^{1/2}\\
        \le\;&C N^{-1} \langle\xi, (\mathcal{K} + \cN_+^3 + 1) (\mathcal{N}_++1)\xi\rangle.
    \end{split}
    \end{equation*}
Using \eqref{eq:decay_d}, the rest of the terms of $\mathcal{E}_{V,2}$ can be shown to be negligible as well.
Arranging the main contributions to $F_1$ and $F_3$ in normal order, we obtain \eqref{eq:quad_on_L^2}, up to another negligible remainder.
\end{proof}

We now discuss the cubic term $\mathcal{G}^{(3)}_N=e^{-B_\mu} \mathcal{L}^{(3)}_Ne^{B_\mu}$.
\begin{prop} 
Under the same assumptions of Theorem \ref{thm:quadratic} we have 
\begin{equation}\label{eq:cGN3} 
\begin{split}
\mathcal{G}^{(3)}_N=\; &\mathcal{C}_{\mathcal{G}_N} + \mathcal{E}_{\mathcal{G}_{N}^{3}}
\end{split}
\end{equation}
where $\cC_{\cG_N}$ is defined as in (\ref{eq:cC_G_N}),  
and 
\begin{equation*}
    \pm \mathcal{E}_{\mathcal{G}_{N}^{3}} \le \eps \cN_++ \frac{C}{\eps N} (\mathcal{H}_N + \cN_+^2 + 1) (\cN_+ + 1) .
\end{equation*}
\end{prop}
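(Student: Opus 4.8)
The plan is to follow the computation of $e^{-B_\eta}\cL_N^{(3)}e^{B_\eta}$ carried out in \cite{BBCS}, tracking the (harmless) replacement $\eta\to\mu=\eta+\tau$: this is innocuous because $|\mu_p-\eta_p|=|\tau_p|\le C|p|^{-4}$ is summable against every kernel that occurs. First I would reduce to $b$-operators. Using $b_q=\sqrt{(N-\cN_+)/N}\,a_q$ and $b^*_{-p}=a^*_{-p}\sqrt{(N-\cN_+)/N}$, the product $a^*_{-p}a_q$ agrees with $b^*_{-p}b_q$ up to a relative factor $\cN_+/N$, which together with the $N^{-1/2}$ prefactor of $\cL_N^{(3)}$ and the trivial bound $|\widehat V(p/N)|\le\|V\|_1$ is absorbed into $\cE_{\cG_N^3}$. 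It then suffices to conjugate $N^{-1/2}\sum_{p,q}\widehat V(p/N)\,b^*_{p+q}b^*_{-p}b_q+\mathrm{h.c.}$

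Next I would conjugate factor by factor, using \eqref{eq:def_d}: $e^{-B_\mu}b^*_{p+q}e^{B_\mu}=\gamma_{p+q}b^*_{p+q}+\sigma_{p+q}b_{-p-q}+d^*_{p+q}$, and likewise for $b^*_{-p}$ and $b_q$; expanding the product of the three trinomials gives $27$ terms. The one that survives is obtained by keeping the $\gamma$-part on the two creation operators and replacing $\gamma_{p+q},\gamma_p$ by $1$ (legitimate since $|\gamma_k-1|\le C|k|^{-4}$), keeping the full rotation $\gamma_qb_q+\sigma_qb^*_{-q}$ on $b_q$, and discarding every $d,d^*$ operator; after normal ordering and adding the h.c., this is exactly $\cC_{\cG_N}$ from \eqref{eq:cC_G_N}.

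It remains to bound the discarded terms, which I would group as follows. (i) The defect $(\gamma_{p+q}\gamma_p-1)$ times the main monomial, controlled by $|\gamma_{p+q}\gamma_p-1|\le C(|p+q|^{-4}+|p|^{-4})$, Cauchy–Schwarz in the two momenta, and $\widehat V(p/N)$ on the leftover momentum, giving a contribution $\le\eps\cN_++C\eps^{-1}N^{-1}(\cN_++1)^3$. (ii) All terms with at least one off-diagonal factor $\sigma_{p+q}$ or $\sigma_p$, where $|\sigma_k|\le C|k|^{-2}$ plays the role of $|\gamma_k-1|$ and a factor of kinetic energy $\cK$ is extracted, whenever a momentum cannot be absorbed by a coefficient, via $\sum_k k^2|\sigma_k|^2\le CN$ and $\sum_k k^2\|a_k\xi\|^2=\langle\xi,\cK\xi\rangle$. (iii) All terms containing a $d$ or $d^*$ operator, which by Lemma~\ref{lemma:d} cost a factor $N^{-1}$ each and are therefore of order $N^{-3/2}$ times powers of $\cN_+$ (and $\cK^{1/2}$ when a $b_k$-tail from Lemma~\ref{lemma:d} is present). (iv) The commutators produced when normal-ordering the surviving product, which carry an extra $N^{-1}$. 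In families (i)–(iv) one meets terms where \emph{both} momenta tied together by $\widehat V(p/N)$ remain attached to operators — for instance $N^{-1/2}\sum_{p,q}\widehat V(p/N)\sigma_{p+q}\,b^*_q b^*_{-p-q}b_{-p}$ and the purely creation or annihilation monomials — and these cannot be controlled by $\cN_+$ and $\cK$ alone; they are instead estimated by passing to position space and using $\int N^3V(N(x-y))\|\check a_x\check a_y\xi\|^2\,dx\,dy=2N\langle\xi,\cV_N\xi\rangle$, which brings the potential energy $\cV_N$ (hence $\cH_N$, not just $\cK$) into the bound. Collecting everything gives $\pm\cE_{\cG_N^3}\le\eps\cN_++C\eps^{-1}N^{-1}(\cH_N+\cN_+^2+1)(\cN_++1)$.

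The main obstacle is the bookkeeping of the two coupled momenta: in each of the many error terms one must correctly allocate one momentum to $\widehat V(p/N)$ (giving a $\|V\|_1$ or a $\|\widehat V(\cdot/N)*\widehat f_{N,\ell}\|_2\le CN^{3/2}$ type bound), a second to a decaying coefficient ($\eta,\mu,\sigma$, or $\gamma-1$), and — when needed — a third to a power of $\cK$ or $\cV_N$, so that every momentum sum converges with the correct power of $N^{-1}$. Keeping this budget at order $N^{-1}$ for the terms in family (iv), where positivity of $\cV_N$ and the identity relating it to the two-body kernel are essential, is the delicate point; the rest is a direct adaptation of the corresponding steps in \cite{BBCS}.
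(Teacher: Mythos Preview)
Your plan is essentially the paper's own proof: reduce $a^*_{-p}a_q$ to $b^*_{-p}b_q$, expand each factor via \eqref{eq:def_d}, keep $\gamma_{p+q}\gamma_p(\gamma_qb_q+\sigma_qb^*_{-q})$ with $\gamma_{p+q}\gamma_p\to 1$ as $\cC_{\cG_N}$, and estimate the rest by Lemma~\ref{lemma:d} and Cauchy--Schwarz. Two points where your description misfires, though neither is fatal.

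First, the correction $\cE_{3,1}$ from $a^*_{-p}a_q\to b^*_{-p}b_q$ is \emph{not} disposed of by ``the $N^{-1/2}$ prefactor and $|\widehat V(p/N)|\le\|V\|_1$''. After conjugation by $e^{-B_\mu}$ you still face $N^{-3/2}\sum_{p,q}\widehat V(p/N)\,(\,\cdot\,)$ with a non-summable kernel, and a naive bound loses a power of $N$. The paper handles this by writing $\cE_{3,1}=N^{-3/2}\sum\widehat V(p/N)\,(e^{-B_\mu}b^*_{p+q}b^*_{-p}e^{B_\mu})\,(e^{-B_\mu}\cN_+e^{B_\mu})\,(e^{-B_\mu}b_qe^{B_\mu})$ plus a genuinely small remainder, then expanding the outer factors exactly as for the main term; the leading piece is $N^{-1}\cC_{\cG_N}\,e^{-B_\mu}\cN_+e^{B_\mu}$, bounded via $\sum_p|\widehat V(p/N)|/p^2\le CN$ and a factor of $\cK$ from the two creation operators. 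So the step is absorbable, but it costs the same work as the main expansion.

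Second, your example $N^{-1/2}\sum_{p,q}\widehat V(p/N)\sigma_{p+q}\,b^*_qb^*_{-p-q}b_{-p}$ does \emph{not} require $\cV_N$: Cauchy--Schwarz with $\|\sigma\|_2<\infty$ gives $CN^{-1/2}\|\cN_+^{1/2}\xi\|\|\cN_+\xi\|$, as in the paper's treatment of the $M_0$ leftovers. The potential energy enters only for the terms with two or three $d$-operators (the paper's $M_2,M_3$), where one switches to position space and meets $\int N^2V(N(x-y))\|\check a_x\check a_y\,\cdot\,\|^2$; that is where your reference to $\cV_N$ belongs.
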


\begin{proof}
From (\ref{eq:def_L}), we find 
      \begin{equation*} \label{eq:G_N^3_first_decomp}
        \mathcal{G}_N^{(3)}=\frac{1}{\sqrt{N}} \sum_{\substack{p,q\in \Lambda_+^*\\p+q\ne0}} \widehat{V} (p/N)  e^{-B_\mu} \big(b^*_{p+q} b^*_{-p} b_q+\mathrm{h.c.}\big)e^{B_\mu}+\mathcal{E}_{3,1},
    \end{equation*}
    with
    \begin{equation} \label{eq:E_G_N^3}
        \mathcal{E}_{3,1} =\frac{1}{\sqrt{N}} \sum_{\substack{p,q \in\Lambda_+^*\\p+q\ne0}}\widehat{V} (p/N)  e^{-B_\mu}b^*_{p+q}a^*_{-p}\frac{\mathcal{N}_+}{N}a_q e^{B_\mu}+\mathrm{h.c.}.
    \end{equation}
 Let us first focus on the main term, we will show later that $\cE_{3,1}$ can be absorbed in the error $\cE_{\cG_N^3}$. With \eqref{eq:def_d}, we decompose  
    \begin{equation*} \label{eq:M_decomp}
        \frac{1}{\sqrt{N}} \sum_{\substack{p,q\in \Lambda_+^*\\p+q\ne0}} \widehat{V} (p/N)  e^{-B_\mu} b^*_{p+q} b^*_{-p} b_qe^{B_\mu}=M_0+M_1+M_2+M_3,
    \end{equation*}
    where, for $j=0,\dots ,3$, $M_j$ collects contributions with $j$ factors $d, d^*$, ie.  
    \begin{equation*}
    \begin{split}
        M_0=\;&\frac{1}{\sqrt{N}} \sum_{\substack{p,q\in\Lambda_+^*\\p+q\ne0}} \widehat{V} (p/N) 
      \big( \g_{p+q} b_{p+q}^* + \s_{p+q} b_{-p-q} \big) \big( \g_p b_{-p}^* + \s_p b_p \big) \big(\g_q b_q + \s_q b_{-q}^* \big)   
            \end{split}
    \end{equation*}
    and
    \begin{equation*}
        \begin{split}
            M_1=\;& \frac{1}{\sqrt{N}}\sum_{\substack{p,q \in\Lambda_+^*\\p+q\ne0}} \widehat{V} (p/N) \Big\{ \big( \gamma_{p+q} b^*_{p+q} + \s_{p+q} b_{-p-q} \big)\big( \gamma_p b_{-p}^* + \s_p b_p \big) d_q \\ &\hspace{2cm} + 
       \big( \gamma_{p+q} b^*_{p+q} + \s_{p+q} b_{-p-q} \big) d^*_{-p}  \big(\gamma_q b_q+\sigma_q b^*_{-q}\big) \\ &\hspace{2cm} + d_{p+q}^* \big( \gamma_p b_{-p}^* + \s_p b_p \big)  \big(\gamma_q b_q+\sigma_q b^*_{-q}\big)   \Big\}  \\
        M_2=\;&\frac{1}{\sqrt{N}}\sum_{\substack{p,q \in\Lambda_+^*\\p+q\ne0}} \widehat{V} (p/N) \Big\{ \big( \gamma_{p+q} b_{p+q}^* + \s_{p+q} b^*_{-p-q} \big) d_{-p}^* d_q \\ &\hspace{2cm} + d^*_{p+q}  \big( \gamma_p b_{-p}^* + \s_p b_p \big)  d_q + d_{p+q}^* d_{-p}^* \big(\gamma_q b_q+\sigma_q b^*_{-q}\big)  \Big\} 
\\
        M_3=\;&\frac{1}{\sqrt{N}}\sum_{\substack{p,q \in\Lambda_+^*\\p+q\ne0}} \widehat{V} (p/N) d^*_{p+q} d^*_{-p} d_q.
        \end{split}
    \end{equation*}
    Let us first consider $M_0$. Rearranging terms in normal order and noticing that all contributions arising from commutators are negligible (because, due to translation invariance, labels of creation operators  cannot coincide with labels of annihilation operators without violating the condition $p, q, p+q \not = 0$), we find 
    \begin{equation}\label{eq:M0}
        \begin{split} 
        M_0 = \; &\mathcal{C}_{\mathcal{G}_N} +  \frac{1}{\sqrt{N}}\sum_{\substack{p,q\in \Lambda_+^*\\p+q\ne0}}\widehat{V} (p/N)(\gamma_{p+q} \gamma_p-1)\,b^*_{p+q}b^*_{-p}(\gamma_qb_q+\sigma_qb^*_{-q})+\mathrm{h.c.}\\
        &+\frac{1}{\sqrt{N}}\sum_{\substack{p,q\in \Lambda_+^*\\p+q\ne0}}\widehat{V}(p/N) \gamma_{p+q} \sigma_p\, b^*_{p+q}(\gamma_qb_q+\sigma_qb^*_{-q})b_{p}+\mathrm{h.c.}\\
        &+\frac{1}{\sqrt{N}}\sum_{\substack{p,q\in \Lambda_+^*\\p+q\ne0}}\widehat{V}(p/N) \sigma_{p+q} \gamma_p\, b^*_{-p}(\gamma_qb_q+\sigma_qb^*_{-q})b_{-p-q}+\mathrm{h.c.}\\
        &+\frac{1}{\sqrt{N}}\sum_{\substack{p,q\in \Lambda_+^*\\p+q\ne0}}\widehat{V}(p/N) \sigma_{p+q} \sigma_p(\gamma_qb_q+\sigma_qb^*_{-q}) b_{p}b_{-p-q}+\mathrm{h.c.} + \mathcal{E}_{3,2}
        \end{split} 
    \end{equation}
    with $\pm \mathcal{E}_{3,2} \le C N^{-3/2} (\mathcal{N}_++1)^2$. Except for $\mathcal{C}_{\mathcal{G}_N}$, all the terms in $M_0$ can also be treated as error as they are bounded by $\eps \cN_+ +C\eps^{-1} N^{-1}(\cN_++1)^2$. In fact, considering for example one of the contributions in  the last line of (\ref{eq:M0}), we have
\[ \begin{split} 
\Big| \frac{1}{\sqrt{N}}\sum_{\substack{p,q\in \Lambda_+^*\\p+q\ne0}} &\widehat{V}(p/N) \sigma_{p+q} \sigma_p \gamma_q \langle \xi, b_q  b_{p}b_{-p-q} \xi \rangle \Big| \\ \leq \; &\frac{C}{\sqrt{N}}\sum_{\substack{p,q\in \Lambda_+^*\\p+q\ne0}} |\eta_p| |\eta_{p+q}|  \| b_q b_p b_{-p-q} (\cN_+ + 1)^{-1} \xi \| \| (\cN_+ + 1) \xi \| \\ \leq \; &\frac{C}{\sqrt{N}} \Big[ \sum_{\substack{p,q\in \Lambda_+^*\\p+q\ne0}}  \| b_q b_p (\cN_+ + 1)^{-1/2} \xi \|^2 \Big]^{1/2} \Big[ \sum_{\substack{p,q\in \Lambda_+^*\\p+q\ne0}} |\eta_{p+q}|^2 |\eta_p|^2 \Big]^{1/2} \\ \leq \; &C N^{-1/2}  \| \cN_+^{1/2} \xi \| \| (\cN_+ + 1) \xi \|. \end{split} \] 
    
Next, we show that the terms $M_1$, $M_2$, and $M_3$ are negligible. First, let us consider $M_1$. Terms with at least one $\gamma$-coefficient can be estimated by Cauchy-Schwarz, using (\ref{eq:decay_d}). For example, 
         \begin{equation*}
    \begin{split}
        \pm &\frac{1}{\sqrt{N}} \sum_{\substack{p,q \in\Lambda_+^*\\p+q\ne0}} \widehat{V} (p/N)  \gamma_{p+q} \sigma_q \big( \langle \xi,b^*_{p+q} d^*_{-p} b^*_{-q}\xi\rangle +\mathrm{h.c.} \big)\\
        \le\;&\frac{C}{\sqrt{N}} \Big[ \sum_{\substack{p,q \in\Lambda_+^*\\p+q\ne0}} \frac{|\widehat{V} (p/N)|}{|p+q|^2} \sigma_q^2\|(\mathcal{N}_++1)^{3/2}\xi\|^2\Big]^{\frac{1}{2}} \Big[\sum_{\substack{p,q \in\Lambda_+^*\\p+q\ne0}}|p+q|^2\|(\mathcal{N}_++1)^{-1}d_{-p}b_{p+q}\xi\|^2\Big]^{\frac{1}{2}}\\
        \le\;&C N^{-1} \langle \xi, (\mathcal{K} + \cN^2_+ + 1) (\cN_+ + 1) \xi\rangle \, .
    \end{split}
    \end{equation*}
    The term proportional to $b_{-p-q} b_p d_q$ can be handled similarly, estimating $\| b_{-p-q}^* (\cN_+ + 1) \xi \| \leq \| (\cN_+ + 1)^{3/2} \xi \|$ and using the factor $\sigma_{p+q}$ to sum over $q$. The terms proportional to $b_{-p-q} d^*_{-p} b^*_{-q}$ or $d^*_{p+q} b_p b^*_{-q}$ are slightly more challenging, because we prefer to avoid commutators between $b$ and $d$ operators. Still, using (\ref{eq:decay_d}) (and the smallness of $[ b_{-p}, b^*_{-p-q}]$, for $q \not = 0$) we can estimate  
    \begin{equation*}
        \begin{split}
            \pm &\frac{1}{\sqrt{N}}\sum_{\substack{p,q \in\Lambda_+^*\\p+q\ne0}} \widehat{V} (p/N)  \sigma_{p+q} \sigma_{q} \big( \langle \xi, b_{-p-q} d^*_{-p} b^*_{-q} \xi\rangle+\mathrm{h.c.}\big)\\
            \le\;&\frac{C \|(\mathcal{N}_++1)^{3/2}\xi\|}{N^{3/2}} \sum_{\substack{p,q \in\Lambda_+^*\\p+q\ne0}} |\widehat{V} (p/N) | |\sigma_{p+q}| |\sigma_{q}| \Big[ |\eta_p|\|(\mathcal{N}_++1)^{1/2}b^*_{-p-q}\xi\|+\|b_{-p} b^*_{-p-q}\xi\|\Big] \\
            \le \;&
        C N^{-1} \langle \xi, (\mathcal{K} + \cN^2_+ + 1) (\cN_+ + 1) \xi\rangle    
                   \end{split}
    \end{equation*}
and similarly for the term proportional to $d^*_{p+q} b_p b^*_{-q}$. Thus 
    \begin{equation*}
        \pm (M_1+\mathrm{h.c.}) \le C N^{-1} (\mathcal{K} + \cN^2_+ + 1) (\cN_+ + 1) \,.
    \end{equation*}
    As for $M_2$, it follows from \cite[Eq. (7.32)]{BBCS} that
    \begin{equation*}
        \pm (M_2+\mathrm{h.c.}) \le C N^{-1} (\mathcal{V}_N + \cN_+ + 1) (\cN_+ + 1) \, .
    \end{equation*}
    To bound $M_3$, we switch to position space. With \eqref{eq:decay_d_position} and using (\ref{eq:norms_eta}) to show $\|\check{\eta}\|_\infty\le CN$, we obtain
    \begin{equation*}
    \begin{split}
        |\langle \xi, (M_3+\mathrm{h.c.})\xi\rangle| \le\;&  \int dx dy N^{5/2}V(N(x-y))\,\|(\mathcal{N}_++1)^{-1} \check{d}_x \check{d}_y\xi\|\, \|(\mathcal{N}_++1) \check{d}_x\xi\|\\
        \le\;& \frac{C}{N^3}\int dx dy N^{5/2}V(N(x-y))\Big(\|(\mathcal{N}_++1)^{5/2}\xi\|+\|\check{a}_x(\mathcal{N}_++1)^2\xi\|\Big)\\
        &\qquad\times\Big(N\|(\mathcal{N}_++1)\xi\|+\|\check{a}_x(\mathcal{N}_++1)^{3/2}\xi\|+\|\check{a}_x\check{a}_y(\mathcal{N}_++1)\xi\|\Big)\\
        \le\;& C N^{-3/2} \langle \xi, \big(\mathcal{V}_N + \cN_+^2 + 1) (\cN_+ + 1) \xi\rangle.
    \end{split}
    \end{equation*}
    Finally, let us get back to the term $\cE_{3,1}$, from (\ref{eq:E_G_N^3}). Using Lemma  \ref{lemma:a_priori_quadratic}, we can write 
    \[ \cE_{3,1} = \frac{1}{N^{3/2}} \sum_{\substack{p,q,r \in \L^*_+ \\ p+q \not = 0}} \widehat{V} (p/N) e^{-B_\mu} b^*_{p+q} b^*_{-p} e^{B_\mu} e^{-B_\mu} \cN_+ e^{B_\mu}  e^{-B_\mu} b_q e^{B_\mu}  + \cE_{3,3} \]
    where $\pm \cE_{3,3} \leq C N^{-1} (\cH_N + 1) (\cN_+ + 1)$. Now, we expand $ e^{-B_\mu} b^*_{p+q} b^*_{-p} e^{B_\mu} $ and, on the other side, $e^{-B_\mu} b_{q} e^{B_\mu}$ using (\ref{eq:def_d}). The resulting terms can be controlled as above; by (\ref{eq:control-eBmu}), the additional factor $e^{-B_\mu} \cN_+ e^{B_\mu}$ does not affect the estimates (when applying Cauchy-Schwarz, it is however important not to act with the kinetic energy operator $\cK$ on $e^{-B_\mu} \cN_+ e^{B_\mu}$). At the end, the main contribution has the form $\cC_{\cG_N} e^{-B_\mu} \cN_+ e^{B_\mu}/N$ (with $\cC_{\cG_N}$ as in (\ref{eq:cC_G_N})) and can be bounded, using repeatedly \eqref{eq:control-eBmu}, by 
    \begin{equation*}
    \begin{split}
        \pm \frac{1}{N}\Big\langle \xi, &\, \mathcal{C}_{\mathcal{G}_N} e^{-B_\mu}\mathcal{N}_+e^{B_\mu} \xi\Big\rangle \\ \le\;& \frac{1}{N^{3/2}} \bigg( \sum_{\substack{p,q\in\Lambda_+^*\\p+q \ne0}}\frac{|\widehat{V} (p/N)| }{p^2} \big[ \|b_q e^{-B_\mu}\mathcal{N}_+e^{B_\mu} \xi\|^2 + \s_q^2 \| \cN_+^{1/2} e^{-B_\mu}\mathcal{N}_+e^{B_\mu}\xi \|^2 \bigg)^{1/2}\\
        &\qquad\times \bigg(\sum_{\substack{p,q\in\Lambda_+^*\\p+q \ne0}} p^2 \|b_{-p} b_{p+q}\xi\|^2\bigg)^{1/2}\\    \le\;&C N^{-1} \langle\xi, (\mathcal{K}+ \cN_+^2 + 1) (\cN_+ +1) \xi\rangle.
    \end{split}
    \end{equation*}
We conclude that $\pm \cE_{3,1} \leq C N^{-1} (\cK + \cN_+^2 + 1) (\cN_+ +1)$. Together with (\ref{eq:M0}) and with the bounds for $M_1, M_2, M_3$, this concludes the proof of the proposition. 
\end{proof}

Finally, we consider the action of $B_\mu$ on the operator $\cV_N$, defined in (\ref{eq:K,V}). 
\begin{prop} \label{prop:quad_on_quartic} 
Under the same assumptions of Theorem \ref{thm:quadratic} we have
\begin{equation}
\label{eq:cG4}
\begin{split}
\mathcal{G}^{(4)}_N=\;&\mathcal{V}_N+\frac{1}{2N} \sum_{p,q \in \L^*_+} \widehat V ((p-q)/N)  \s_p\g_p\s_q\g_q \Big( 1+ \frac{1}{N} - 2\frac{\cN_+}{N}\Big) \\
& + \frac{1}{2N}\sum_{p,q  \in \L^*_+} \widehat V ((p-q)/N)  \eta_q\Big[ \g_p^2b^*_pb^*_{-p}\Big(1+\frac{1}{N}-\frac{\cN_+}{N}\Big) + 2 \g_p\s_p b^*_pb_p + \s_p^2 b_pb_{-p}\\
 & \hskip4.5cm+ b_{-p}d_p+ d_p(b_{-p} +\eta_pb^*_p) +\hc   \Big]\\
 &+\frac{1}{N^2}\sum_{p,q,u\in \Lambda_+^*} \widehat{V} ((p-q)/N) \eta_p\eta_q\Big[ (\gamma_u^2+\sigma_u^2)b^*_u b_u+\gamma_u \sigma_u(b_ub_{-u}+b^*_ub^*_{-u})+\sigma_u^2 \Big]\\
 &+ \frac{1}{2N}\sum_{p, q \in \L^*_+, r \in \L^*}\widehat{V} (r/N) \gamma_{p+r}\gamma_{q}\sigma_{p}\sigma_{q+r}\,b^*_{p+r}b^*_{q}b^*_{-p}b^*_{-q-r} + \hc\\
 & + \frac{1}{N^2} \sum_{p,q,u\in \Lambda^*_+} \widehat{V} ((p-q)/N)  \eta_q b^*_pb^*_{-p} \Big[(\gamma_u\sigma_u b^*_ub^*_{-u})+\sigma_u^2 \Big]+\mathrm{h.c.}\\
 & + \cE_{\cG_N^{4}}
\end{split}
\end{equation}
with
\begin{equation*}
    \pm \cE_{\cG_N^{4}} \leq \eps \cN_++\frac{C}{\eps N}(\cH_N +\cN^2_+ +1)(\cN_++1).
\end{equation*}
\end{prop}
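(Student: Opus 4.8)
The plan is to expand $\cG_N^{(4)} = e^{-B_\mu}\cV_N e^{B_\mu}$ along the same lines as the computations of $\cG_N^{(2,K)}$ in Proposition \ref{prop:quad_on_K} and of $\cG_N^{(3)}$. First I would rewrite $\cV_N$ from (\ref{eq:K,V}) in terms of the modified operators $b_p,b_p^*$; since $\cV_N$ already carries a factor $N^{-1}$ and low-energy states satisfy $\cN_+=\mathcal{O}(1)$, the corrections produced by this replacement are proportional to powers of $\cN_+/N$ and are of the size absorbed in $\cE_{\cG_N^4}$ (alternatively one controls them through (\ref{eq:a_priori_quadratic})). Next, into each of the four field operators I would insert the decomposition (\ref{eq:def_d}), $e^{-B_\mu}b_pe^{B_\mu}=\gamma_pb_p+\sigma_pb_{-p}^*+d_p$ (and its adjoint), obtaining a sum of $3^4$ terms that I would organise by the number of $d$- and $d^*$-factors and, among the $d$-free ones, by the number of $\sigma$-coefficients; throughout, the bounds $|\sigma_p-\eta_p|,\,|\gamma_p-1|\le C|p|^{-4}$ from Lemma \ref{lemma:coefficients} allow replacing $\sigma$ by $\eta$ and $\gamma$ by $1$ wherever harmless.

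For the $d$-free terms: the one with no $\sigma$-factor reconstructs $\cV_N$ after normal ordering (using that $\gamma_p-1$ is summable); the terms with exactly one $\sigma$, rewritten in momentum space, yield the quadratic blocks on the second and third lines of (\ref{eq:cG4}) --- the pieces with $\gamma_p^2,\gamma_p\sigma_p,\sigma_p^2$ and the prefactor $\eta_q$, the latter coming from a single contraction $[b(\check\gamma),b^*(\check\sigma)]\simeq\eta$ --- together with the $\cN_+$-dependent constant $\tfrac{1}{2N}\sum_{p,q}\widehat V((p-q)/N)\sigma_p\gamma_p\sigma_q\gamma_q(\dots)$ from the double contractions; the terms with exactly two $\sigma$-factors give the pure-creation quartic $\tfrac1{2N}\sum_{p,q,r}\widehat V(r/N)\gamma_{p+r}\gamma_q\sigma_p\sigma_{q+r}\,b_{p+r}^*b_q^*b_{-p}^*b_{-q-r}^*+\hc$ and, after normal ordering, the remaining $N^{-2}$-blocks on lines four and six of (\ref{eq:cG4}); contributions with three or four $\sigma$-factors are negligible by the decay of $\eta,\sigma$ and the extra $N^{-1}$. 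The crucial point is that the commutators generated when putting these expressions in normal order must be computed with the modified relations (\ref{eq:CCR_b}): the $-\cN_+/N$ and $-N^{-1}a_q^*a_p$ terms there produce exactly the factors $(1+\tfrac1N-2\tfrac{\cN_+}{N})$ and $(1+\tfrac1N-\tfrac{\cN_+}{N})$ displayed in (\ref{eq:cG4}) and, after a further expansion of a residual $d$-operator via (\ref{eq:d_detailed_expansion}), the extra momentum sum $\sum_u$ in the $N^{-2}$-blocks. As in the treatment of $\cG_N^{(3)}$, the vast majority of these commutator terms vanish by translation invariance (a creation and an annihilation label forced to coincide would have to equal $0$, which is excluded), so only finitely many genuinely contribute.

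It then remains to handle the terms containing $d$-operators. The only ones to be kept are $\tfrac{1}{2N}\sum_{p,q}\widehat V((p-q)/N)\eta_q\big[b_{-p}d_p+d_p(b_{-p}+\eta_pb_p^*)+\hc\big]$ on the third line of (\ref{eq:cG4}); these are not small in themselves, but will later combine with the analogous $d$-terms of $\cG_N^{(2,K)}$ and $\cG_N^{(2,V)}$ and be removed with the scattering equation (\ref{scattering}). Every other $d$-contribution is absorbed into $\cE_{\cG_N^4}$: the ones with a single $d$ by Cauchy--Schwarz and the bounds (\ref{eq:decay_d}); the ones with two or more $d$'s by switching to position space and using (\ref{eq:decay_d_position}), the estimates $\|\check\eta\|_\infty,\|\check\sigma\|_\infty\le CN$ (which follow from (\ref{eq:norms_eta}) and Lemma \ref{lemma:coefficients}) and the weight $N^2V(N(x-y))$, exactly as the term $M_3$ was handled in the computation of $\cG_N^{(3)}$. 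Finally, the quartic-in-$b$ contributions in the expansion that contain both creation and annihilation operators are not kept; they are bounded, as in (\ref{eq:example_bound_E_2_2}), by $\eps\cN_++C\eps^{-1}N^{-1}\cK(\cN_++1)^2$, which accounts for the $\eps\cN_+$ in $\cE_{\cG_N^4}$. I expect the main obstacle to be not any single estimate but the bookkeeping: tracking, among the dozens of terms, precisely which normal-ordering commutators survive and generate the explicit $N^{-1},N^{-2}$ and $\cN_+/N$ corrections of (\ref{eq:cG4}), and identifying the borderline terms --- the pure-creation quartic, the $\eta^2N^{-2}$ quadratic terms, and the retained $d$-contributions --- which, although of small size, cannot be discarded at the accuracy $(\log N)/N$ and will only be eliminated after the cubic conjugation $e^{A}$.
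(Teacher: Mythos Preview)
Your overall strategy --- rewrite $\cV_N$ in terms of $b$-operators, expand each factor via $e^{-B_\mu}b_pe^{B_\mu}=\gamma_pb_p+\sigma_pb_{-p}^*+d_p$, organize by the number of $\sigma$'s and $d$'s, and control the $d$-pieces through Lemma~\ref{lemma:d} --- matches the paper's approach and would succeed with enough bookkeeping. There is, however, one genuine gap.

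You claim that the corrections arising from the replacement $a\to b$ in $\cV_N$ are ``of the size absorbed in $\cE_{\cG_N^4}$'' and can be controlled via \eqref{eq:a_priori_quadratic}. This is not correct at the resolution required here. The conversion produces, besides the main quartic $\tfrac{N+1}{2N^2}\sum\widehat V(r/N)\,b^*_{p+r}b^*_qb_pb_{q+r}$, a sextic remainder
\[
\frac{1}{N^2}\sum\widehat V(r/N)\,e^{-B_\mu}b^*_{p+r}b^*_qb^*_ub_ub_pb_{q+r}e^{B_\mu}
\]
(see \eqref{eq:cG41}), and this term is \emph{not} negligible: it is precisely the source of the two $N^{-2}$ blocks carrying the extra momentum sum $\sum_u$ on the fourth and sixth lines of \eqref{eq:cG4}. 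In the paper these are extracted as the contributions $W_0$ and $W_1$ in \eqref{eq:W0W1}--\eqref{eq:W1}, by expanding $e^{-B_\mu}b^*_ub_ue^{B_\mu}$ inside the sextic. Your proposed bound via \eqref{eq:a_priori_quadratic} would yield at best $e^{-B_\mu}(\cN_+/N)\cV_N e^{B_\mu}\lesssim N^{-1}\cV_N(\cN_++1)+C(\cN_++1)$, and the last piece cannot be absorbed into $\eps\cN_++C\eps^{-1}N^{-1}(\cH_N+\cN_+^2+1)(\cN_++1)$. Your alternative suggestion, that the $\sum_u$ blocks arise from ``a further expansion of a residual $d$-operator'', is not how they are produced: the retained $d$-terms on line~3 of \eqref{eq:cG4} are kept unexpanded (to be eliminated later with the scattering equation), and the other $d$-combinations are estimated away, not expanded. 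The fix is simply to keep the sextic piece explicitly from the start and expand the $e^{-B_\mu}b^*_ub_ue^{B_\mu}$ factor inside it.
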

{\it Remark.} Compared with the corresponding result in \cite{BBCS}, here we also keep track of the terms appearing in the fifth and sixth line of \eqref{eq:cG4}. These terms will be included in the operator $\mathcal{T}_{\mathcal{G}_N}$ defined in \eqref{eq:cT_G_N} and will be shown to be negligible after cubic conjugation.

\begin{proof}
Proceeding as in the proof of \cite[Lemma 7.4]{BBCS}, we find (with the notation $\sum^*_{p,q,r} = \sum_{p,q \in \Lambda^*_+, r \in \Lambda^* : r \not = -p , -q}$) 
      \begin{equation}\label{eq:cG41}
        \begin{split}
            \mathcal{G}_N^{(4)}=\;&\frac{N+1}{2N^2}\sum^* \widehat{V} (r/N) e^{-B_\mu}b^*_{p+r} b^*_qb_pb_{q+r} e^{B_\mu}\\
            &+\frac{1}{N^2}\sum^* \widehat{V} (r/N) e^{-B_\mu}b^*_{p+r} b^*_q b^*_u b_u b_{p} b_{q+r}e^{B_\mu}+\mathcal{E}_{4,1},
        \end{split}
    \end{equation}
    with $\pm \mathcal{E}_{4,1} \le CN^{-1}(\mathcal{V}_N+\mathcal{N}_++1)(\mathcal{N}_++1)$. Again, following \cite[Lemma 7.4]{BBCS}, we can write the first term on the r.h.s. as 
 \begin{equation}\label{eq:e42} 
 \frac{N+1}{2N^2}\sum^* \widehat{V} (r/N) e^{-B_\mu}b^*_{p+r} b^*_qb_pb_{q+r} e^{B_\mu}  = V_0 + V_1 + \cE_{4,2} \end{equation} 
where 
    \begin{equation*}
    \begin{split}
        V_0=\;&\frac{N+1}{2N^2}\sum^* \widehat{V} (r/N) \Big[\gamma_{p+r}\gamma_qb^*_{p+r} b^*_q+\gamma_{p+r}\sigma_qb^*_{p+r}b_{-q}+\sigma_{p+r}\sigma_qb_{-p-r}b_{-q}\\
        &\qquad\qquad\qquad\qquad\qquad\;+\sigma_{p+r}\gamma_q(b^*_q b_{-p-r}-N^{-1}a^*_qa_{-p-r})\Big]\\
        &\qquad\qquad\qquad\qquad\quad\times\Big[\sigma_p\sigma_{q+r}\,b^*_{-p} b^*_{-q-r}+\sigma_p\gamma_{q+r}\,b^*_{-p}b_{q+r}+\gamma_p\gamma_{q+r}\,b_pb_{q+r}\\
        &\qquad\qquad\qquad\qquad\qquad\;+\gamma_p\sigma_{q+r}(b^*_{-q-r}b_p-N^{-1}a^*_{-q-r}a_p)\Big]\\
        &+\frac{N+1}{2N^2}\sum_{p,q\in\Lambda_+^*}\widehat{V} ((p-q)/N) \gamma_q \sigma_q \Big[ \big(\gamma_p^2 b^*_pb^*_{-p}+2\gamma_p \sigma_pb^*_p b_p- N^{-1}\gamma_p \sigma_pa^*_p a_p\\
        &\qquad\qquad\qquad\qquad\qquad\qquad\qquad+\sigma_p^2b_pb_{-p}\big)\big(1-\tfrac{\mathcal{N}_+}{N}\big)+\mathrm{h.c.} \Big]\\
        &+\frac{N+1}{2N^2}\sum_{p,q\in\Lambda_+^*}\widehat{V} ((p-q)/N)  \gamma_p\sigma_p \gamma_q\sigma_q\big(1-\tfrac{\mathcal{N}_+}{N}\big)^2, \\
     V_1 =\; &\frac{1}{2N} \sum_{p,q \in\Lambda_+^*} \widehat{V}\Big( \frac{p-q}{N} \Big)\gamma_q\sigma_q \big[d_p(\gamma_pb_{-p}+\sigma_pb^*_p)+(\gamma_pb_p+\sigma_p b^*_{-p})d_{-p}\big]+\mathrm{h.c.} \end{split} 
    \end{equation*}
  and $\pm\mathcal{E}_{4,2} \le CN^{-1}(\mathcal{V}_N+\mathcal{N}_++1)(\mathcal{N}_++1)$ (this error includes also the terms $V_{12}, V_{13}$ in the proof of \cite[Lemma 7.4]{BBCS}). Considering separately quartic, quadratic and constant contributions to $V_0$, we find
\begin{equation} \label{eq:V0deco} \begin{split} V_0 = \; &\cV_N + \frac{1}{2N} \sum_{r \in \L^*, p,q \in \L^*_+} \widehat V (r/N) \g_{p+r}\g_{q}\s_p\s_{q+r} b^*_{p+r}b^*_{q}b^*_{-p}b^*_{-q-r} +\hc\\ 
        &\,+ \frac{1}{N} \sum_{r \in \L^*, p,q \in \L^*_+} \widehat V (r/N) \g_{p+r}\g_{q}\s_p\g_{q+r} b^*_{p+r}b^*_{q}b^*_{-p}b_{q+r} +\hc\\
        &\, + \frac{1}{N} \sum_{r \in \L^*, p,q \in \L^*_+} \widehat V (r/N) (\g_{p+r}\g_{q}-1)b^*_{p+r}b^*_{q}b_{p}b_{q+r} +\hc \\ 
  &\, + \frac{1}{2N}\sum_{\substack{p,q\in\Lambda_+^*}}\widehat{V} ((p-q)/N)  \g_q\s_q \\  &\hspace{2cm} \times \Big[\g_p^2b^*_pb^*_{-p}\left(1+\frac{1}{N} -\frac{\cN_+}{N}\right) +2\g_p\s_pb^*_pb_p + \s_p^2b_pb_{-p}+\hc \Big]  \\
  &\, +\frac{1}{2N}\sum_{\substack{p,q\in\Lambda_+^*}}\widehat{V} ((p-q)/N)  \s_q\g_q\s_p\g_p\left(1+ \frac{1}{N}-2\frac{\cN_+}{N}\right) + \cE_{4,3}\end{split} \end{equation} 
  where $\pm\cE_{4,3} \leq \, C N^{-1} (\cH_N+\cN_++1)(\cN_++1)$. Additionally, except for $\cV_N$, the quartic terms containing both creation and annihilation operators can be considered as errors and are bounded by $\eps\cN_+ + C\eps^{-1}N^{-1}\cK(\cN_++1)^2$ as in (\ref{eq:example_bound_E_2_2}). For the term $V_1$, proceeding as for (\ref{eq:E_V_2}), we get   
  \begin{equation}
      \label{eq:V_1}
       V_1 = \frac{1}{2N}\sum_{p,q  \in \L^*_+} \widehat V ((p-q)/N)  \sigma_q\gamma_q\big[(b_{-p}d_p+ d_p(b_{-p} +\eta_pb^*_p)\big] +\hc   
  \end{equation}
  Let us now consider the second term on the r.h.s. of (\ref{eq:cG41}). Proceeding as in the proof of \cite[Lemma 7.4]{BBCS}, we find 
  \begin{equation}\label{eq:W0W1} \begin{split} \frac{1}{N^2} \sum^* \; &\widehat{V} (r/N) e^{-B_\mu}b^*_{p+r} b^*_q b^*_u b_u b_{p} b_{q+r}e^{B_\mu} \\
 = \; & \frac{1}{N^2} \sum_{p,q,u \in \L^*_+} \widehat V ((p-q)/N) \s_q\g_q\s_p\g_p\Big[ (\gamma_u^2+\sigma_u^2)b^*_u b_u+\gamma_u \sigma_u(b_ub_{-u}+b^*_ub^*_{-u}) +\sigma_u^2 \Big] \\
 &+ W_1 + \cE_{4,4} \end{split} \end{equation} 
 where 
 \begin{equation*} \begin{split} W_1 =  \; &\frac{1}{N^2}  \sum_{p,q,u \in \L^*_+} \widehat V ((p-q)/N) \s_q\g_q \\ &\times \Big[ \g_p^2b^*_pb^*_{-p} + 2 \g_p\s_p b^*_pb_p - N^{-1}\g_p\s_pa^*_pa_p +\s_p^2b_pb_{-p} +\g_pb^*_{-p}d^*_p +\s_pb_pd^*_p \\ &\hspace{1cm} +\g_pd^*_{-p}b^*_p +\s_pd^*_pb_p + d^*_{p}d^*_{-p}\Big] \big(e^{-B}b^*_ub_ue^B\big) (1-\cN_+/N) +\hc + \cE_{4,4} \end{split} \end{equation*} 
and $\pm \cE_{4,4} \leq C N^{-1} (\cV_N +\cN_+ +1)(\cN_++1)$ (the first line on the r.h.s. of (\ref{eq:W0W1}) corresponds to the term $W_0$ in the proof of \cite[Lemma 7.4]{BBCS}; 
the term $W_2$ is absorbed here into the error $\cE_{4,4}$). We decompose
     \begin{equation}\label{eq:W1}
    \begin{split}
        W_1 = &\, \frac{1}{N^2}  \sum_{p,q,u \in \L^*_+} \widehat V ((p-q)/N) \s_q\g_q \g_p^2\, b^*_pb^*_{-p}e^{-B_\mu}b^*_ub_ue^{B_\mu} +\hc +W_{11}\\
        =&\, \frac{1}{N^2}  \sum_{p,q,u \in \L^*_+} \widehat V ((p-q)/N) \s_q\g_q \g_p^2\, b^*_pb^*_{-p}(\g_ub^*_u +\s_u b_{-u})(\g_ub_u + \s_u b^*_{-u}) +\hc\\
        &+W_{11} +W_{12}\,.
    \end{split}
    \end{equation}
Furthermore, we can write   \[
    \begin{split}
        W_{11} = &\, \frac{1}{N^2}  \sum_{p,q  \in \L^*_+} \widehat V ((p-q)/N) \s_q\g_q \big(  2 \g_p\s_p b^*_pb_p - N^{-1}\g_p\s_pa^*_pa_p +\s_p^2b_pb_{-p}\big) \\
       &\hspace{3cm} \times e^{-B_\mu}\cN_+\big(1-(\mathcal{N}_+-1)/N\big)e^{B_\mu} (1- \cN_+/N) +\hc\\
       &\, - \frac{1}{N^3}  \sum_{p,q \in \L^*_+} \widehat V ((p-q)/N) \s_q\g_q \g_p^2b^*_pb^*_{-p} e^{-B_\mu}\cN_+\big(1-(\mathcal{N}_+-1)/N\big) e^{B_\mu} \cN_++\hc\\
       &\, + \frac{1}{N^2}  \sum_{p,q \in \L^*_+} \widehat V ((p-q)/N) \s_q\g_q \big(  
       \g_pb^*_{-p}d^*_p +\s_pb_pd^*_p +\g_pd^*_{-p}b^*_p +\s_pd^*_pb_p + d^*_{p+q}d^*_q\big)\\
       &\hspace{3cm} \times e^{-B_\mu} \cN_+\big(1-(\mathcal{N}_+-1)/N\big) e^{B_\mu} (1- \cN_+/N)  +\hc\\
       =&\, W_{111}+W_{112} +W_{113}.
    \end{split}
    \]
    With \eqref{eq:control-eBmu}, we can bound 
    \[
    \begin{split}
       |\langle\xi, W_{111}\xi\rangle| \leq &\, \frac{C}{N^2} \sum_{p,q  \in \L^*_+} |\widehat V ((p-q)/N)| |\mu_q| \|b_p  (\cN_+ + 1)^{1/2} \xi \| \\ & \hspace{3cm} \times \| b_p (\cN_+ + 1)^{-1/2} e^{-B_\eta} \cN_+ e^{B_\eta} (1-\cN_+ /N)  \xi\| \\ 
       &+ \frac{C}{N^2} \sum_{p,q  \in \L^*_+} |\widehat V ((p-q)/N)| |\mu_q|  |\s_p|^2 \| (\cN_+ + 1) \xi \|^2 
       \\  \leq \; &\frac{C}{N} \langle \xi, (\cN_+ + 1)^2 \xi \rangle .
    \end{split}
    \]
    We control $W_{112}$ in position space, again with the help of \eqref{eq:control-eBmu}. We find 
    \[
    \pm W_{112} \leq CN^{-3/2} (\cV_N+\cN_++1)(\cN_++1).
    \]
    As for $W_{113}$, we partially switch to position space and use Eq. \eqref{eq:decay_d}, \eqref{eq:decay_d_position}, the bound $\|\check{\s}*\check{\g}\|_\infty \leq CN$ and the inequality 
    \[
    \frac{C}{N^2}\sum_{p,q \in \L^*_+} |\widehat V ((p-q)/N)| |\mu_p||\mu_q| \leq C < \infty 
    \]
    to estimate 
    \[
    \begin{split}
       |\langle\xi, W_{113} \xi\rangle| \leq &\, \frac{C}{N^3}\sum_{p,q \in \L^*_+} |\widehat V((p-q)/N)| |\mu_q|\Big[|\mu_p|\|(\cN_++1)^{3/2}b_p\xi\| +\|b_pb_{-p}(\cN_++1)\xi\| \\
       &\hskip3cm +|\mu_p|^2\|(\cN_++1)^{3/2}\xi\|\| + |\s_p|\big(\|b^*_pb_p(\cN_++1)\xi\| \\
       &\hskip3cm+\|(\cN_++1)\xi\|\big)\Big]\|e^{-B}\cN_+e^B\xi\|\\
       &\, + \frac{C}{N^2} \int dxdy\, N^3 V(N(x-y)) |(\check{\s}*\check{\g})(x-y)|\\
       &\hskip2cm\times\Big[ \| \check{b}_x\check d_y\xi\| + \|r_x\|\|(\cN_++1)^{1/2}\check d_y\xi\| + \|\check d_x\check d_y\xi\|\Big]\|\cN_+e^B\xi\|\\
       \leq &\, \frac{C}{N}\langle\xi, \big(\cK + \cN_+ +1\big)(\cN_++1)\xi\rangle \\
       &\, + \frac{C}{N^2}\int dxdy\, N^3 V(N(x-y))
       \Big[ \|\check a_x(\cN_++1)^{3/2}\xi\| +  \|\check a_y(\cN_++1)^{3/2}\xi\|\\
       &\hskip1.5cm+ (N + |\check \mu(x-y)|)\|(\cN_++1)\xi\| + \|\check a_y\check a_x(\cN_++1)\xi\|\Big]\|\cN_+e^B\xi\|\\
       \leq &\, \frac{C}{N}\langle\xi, \big(\cK + \cV_N + \cN_+ +1\big)(\cN_++1)\xi\rangle\,.\\
    \end{split}
    \]
    As for the term $W_{12}$ on the r.h.s. of (\ref{eq:W1}), we find, with \eqref{eq:decay_d}, 
    \[\begin{split}
    | &\langle\xi, W_{12} \xi\rangle|\\  \leq &\, \frac{C}{N^2} \sum_{p,q,u \in \L^*_+} | \widehat V ((p-q)/N)| |\mu_q| \\ &\times \Big[\| (\cN_+ + 1)^{-\frac{3}{2}} d_ub_pb_{-p}\|\big \{ \|b_u (\cN_+ + 1)^{\frac{3}{2}} \xi\| + |\s_u|\|(\cN_++1)^{2} \xi\| + \| (\cN_+ + 1)^{\frac{3}{2}} d_u\xi \| \big\} \\
    &\hskip3cm + \big\{ \| b_u b_p b_{-p} (\cN_+ + 1)^{-1/2} \xi\| + |\s_u| \| b_pb_{-p}  \xi\| \big\}  \|(\cN_++1)^{1/2} d_u \xi\| \Big]  \\    \leq &\, \frac{C}{N^3} \sum_{p,q,u \in \L^*_+} |\widehat V ((p-q)/N) | |\mu_q| \\ &\times \Big[|\mu_u|^2 \| b_p b_{-p}\xi\| \|(\cN_++1)^{2}\xi\| + |\mu_u| \| b_p b_{-p}\xi\|\|(\cN_++1)^{3/2} b_u\xi\|\\
    &\hskip.2cm +\|b_ub_pb_{-p} (\cN_+ + 1)^{-\frac{1}{2}} \xi\|\|b_u (\cN_+ + 1)^{\frac{3}{2}} \xi\| + |\mu_u| \|b_ub_pb_{-p} (\cN_+ + 1)^{-\frac{1}{2}} \xi\|\|(\cN_++1)^{2}\xi\|\Big]\\
    \leq & \, \frac{C}{N^{3/2}} \|\cK^{1/2}(\cN_++1)^{1/2} \xi\|\|(\cN_++1)^2\xi\|\, \leq \frac{C}{N} \langle \xi, (\cK+ \cN^2_+ + 1) (\cN_+ + 1) \xi \rangle .
    \end{split}\] 
    Combining (\ref{eq:cG41}), \eqref{eq:e42}, (\ref{eq:V0deco}),(\ref{eq:V_1}), (\ref{eq:W0W1}), (\ref{eq:W1}) with the bounds for $W_{111}, W_{112}, W_{113}, W_{12}$ and using $|\g_q \s_q - \mu_q| \leq C |q|^{-6}$, we obtain (\ref{eq:cG4}). 
\end{proof}

We are now ready to conclude the proof of Theorem \ref{thm:quadratic}. 

\begin{proof}[Proof of Theorem \ref{thm:quadratic}] Collecting all terms linear in the $d,d^*$ operators from (\ref{eq:quadratic_on_K}), (\ref{eq:quad_on_L^2}) and from (\ref{eq:cG4}), we define
\begin{equation*}\label{eq:cDN}
    \begin{split}
    \cD_N= \, & \sum_{p \in \L^*_+} \Big[p^2\eta_p +\frac12 \widehat V (p/N) + \frac1{2N} \big(\widehat V (\cdot / N) *\eta \big)_p\Big] b_pd_{-p} +\hc\\
    & + \frac12 \sum_{p \in \L^*_+}\big(\widehat V (\cdot / N)*\widehat{f}_{N,\ell} \big)_p  \big[d_p(b_{-p} + \eta_pb^*_p) \big]+\mathrm{h.c.}.
    \end{split}
\end{equation*}
With the scattering equation (\ref{scattering}) and (\ref{eq:decay_d}), we obtain
\[ \cD_N = \frac{1}{2} \sum_{p \in \L^*_+} (\widehat{V} (\cdot /N) *\widehat{f}_{N,\ell})_p  [d_p(b_{-p} + \eta_p b_p^*)] + \text{h.c.} + \cE_1 \]
where $\pm \cE_1 \leq N^{-1} (\cK+ \cN_+ + 1)(\cN_+ +1)$. Handling the contribution proportional to $\eta_p$ as in \cite[Section 7.5]{BBCS} (where the contribution is labeled $D_{33}$), and using \eqref{eq:d_detailed_expansion} to expand the rest of the term, we find 
\[ \begin{split} \cD_N  = \; &-\frac{1}{2N}\sum_{\substack{p\in\Lambda^*\\q\in\Lambda_+^*}}\big(\widehat{V}(\cdot / N)*\widehat{f}_{N,\ell} \big)_p\eta_p\big[\gamma_q\sigma_q(b_qb_{-q}+b^*_q b^*_{-q})+(\sigma_q^2+\gamma_q^2-1)b^*_q b_q+\sigma_q^2\big] \\&- \frac{1}{2N} \sum_{p,q\in\Lambda_+^*} \big( \widehat{V} (\cdot / N) *\widehat{f}_{N,\ell} \big)_p \eta_q \\ &\qquad\qquad\times\int_0^1ds  \,\gamma_p^{(s)}\,e^{(-1+s)B_\mu} b^*_{-q} a^*_q a_p e^{(1-s)B_\mu}b_{-p}+\mathrm{h.c.} + \cE_2 
    \end{split}
\]
where $\pm \cE_2 \leq CN^{-1} (\cN_+ + 1)^2$. Next, we compute the action of $(1-s) B_\mu$ on $b_q^* b_{-q}^* b_{p}$; with (\ref{eq:def_d}), we find 
\begin{equation*}
    \begin{split}
        \mathcal{D}_N=\;&-\frac{1}{2N}\sum_{\substack{p\in\Lambda^*\\q\in\Lambda_+^*}}\big(\widehat{V}( \cdot / N) *\widehat{f}_{N,\ell} \big)_p\eta_p \big[ \gamma_q\sigma_q(b_qb_{-q}+b^*_q b^*_{-q})+(\sigma_q^2+\gamma_q^2-1)b^*_q b_q+\sigma_q^2\big]\\
        &-\frac{1}{4N} \sum_{p,q\in\Lambda_+^*} \big(\widehat{V} (\cdot / N) *\widehat{f}_{N,\ell}\big)_p b^*_p b^*_{-p}\\
        &\qquad\qquad\times\Big[(-\mu_q+\gamma_q\sigma_q) b^*_q b^*_{-q}+(\mu_q+\gamma_q\sigma_q)  b_q b_{-q} +2\sigma_q^2 b^*_q b_{q} + \sigma_q^2\Big]+\mathrm{h.c.}\\
        &+\mathcal{E}_{3},
    \end{split}
\end{equation*}
with $\pm\mathcal{E}_{3} \le C N^{-1} (\mathcal{K} + \cN_+^2 + 1) (\cN_+ + 1)$. In a similar way to \eqref{eq:example_bound_E_2_2}, one can show that the contribution of the quartic terms which contain both creation and annihilation operators can be treated as errors bounded by $\eps \cN_++C\eps^{-1}N^{-1}\cK(\cN_++1)^2$. Combining this with all other contributions in (\ref{eq:quadratic_on_K}), (\ref{eq:quad_on_L^2}), (\ref{eq:cGN3})  and (\ref{eq:cG4}), we obtain 
\begin{equation}\label{eq:cGN-fin1} \begin{split} 
\cG_N = C_{\cG_N} + \sum_{p \in \L^*_+} \Big[ \Phi_p b_p^* b_p + \frac{1}{2}  \Gamma_p \big( b_p^* b_{-p}^* + b_p b_{-p} \big) \Big] + \cC_{\cG_N} + \cV_N + \cT_{\cG_N}  + \cE_{\cG_N} 
\end{split} \end{equation}
where $C_{\cG_N}, \cC_{\cG_N},\cT_{\cG_N}$ are defined as in (\ref{eq:C_G_N}), (\ref{eq:cC_G_N}), (\ref{eq:cT_G_N}), and 
\begin{equation} \label{eq:PhiGamma_p}
\begin{split}
    \Phi_p =  \;& p^2 (\g_p^2 + \s_p^2) + \widehat V (p/N) (\g_p +\s_p)^2 + \frac{2\g_p \s_p}{N} \big(\widehat V (\cdot /N)*\eta\big)_p  - \frac{(\g_p^2 + \s_p^2)}{N} \sum_{q \in \L^*}\widehat V (q/N)\eta_q \\  
      \Gamma_p = \;& 2 p^2\g_p\s_p + \widehat V (p/N) (\g_p +\s_p)^2 + \frac{(\g_p^2 +\s_p^2)}{N}   \big( \widehat V (\cdot /N) * \eta)_p  - \frac{2\g_p\s_p}{N} \sum_{q \in \L^*} \widehat V (q/N) \eta_q.
    \end{split}
\end{equation}
Moreover, 
\begin{equation}\label{eq:wtcEGN} \pm \cE_{\cG_N} \leq   \eps \cN_+ + \frac{C}{\eps N} (\cH_N + \cN_+^3 + 1) (\cN_+ + 1) . \end{equation}

To conclude the proof of Theorem \ref{thm:quadratic}, we consider the quadratic term on the r.h.s. of (\ref{eq:cGN-fin1}). Adding and subtracting the contributions that will arise from the cubic conjugation in Theorem \ref{thm:cubic}, we rewrite the coefficients in (\ref{eq:PhiGamma_p}) as 
 \[ \begin{split} \Phi_p &= F_p - \frac{(\g_p^2 + \s_p^2)}{N} \sum_{q \in \Lambda_+^*} \big( \widehat{V} (q/N) + \widehat{V} ((q+p)/N) \big) \eta_q \\ \Gamma_p &= G_p - \frac{2\gamma_p \sigma_p}{N} \sum_{q \in \L^*_+} \big( \widehat{V} (q/N) + \widehat{V} ((q+p)/N) \big) \eta_q \end{split} \] 
 where   
 \[ \begin{split} F_p &= p^2 (\gamma_p^2 + \s_p^2) + \big( \widehat{V} (\cdot /N) * \widehat{f}_{N,\ell} \big)_p (\g_p + \s_p)^2 \\
 G_p &= 2p^2 \g_p \s_p +  \big( \widehat{V} (\cdot /N) * \widehat{f}_{N,\ell} \big)_p (\g_p + \s_p)^2 .
 \end{split} \]
 Recalling $\g_p = \cosh \mu_p = \cosh (\eta_p + \tau_p)$, $\s_p = \sinh \mu_p = \sinh (\eta_p + \tau_p)$ and the definition (\ref{eq:deftau}) of the coefficients $\tau_p$, we obtain, 
\[ \begin{split} 
G_p &= \big[ p^2 + \big( \widehat{V} (\cdot /N) * \widehat{f}_{N,\ell} \big)_p \big] \sinh (2\tau_p + 2\eta_p) + \big( \widehat{V} (\cdot /N) * \widehat{f}_{N,\ell} \big)_p \cosh (2\tau_p + 2 \eta_p) \\ &= 
 \big[ p^2 + \big( \widehat{V} (\cdot /N) * \widehat{f}_{N,\ell} \big)_p \big]  \cosh (2\tau_p) (\sinh (2\eta_p) + \tanh (2\tau_p) \cosh (2\eta_p)) \\ &\hspace{3cm} + \big( \widehat{V} (\cdot /N) * \widehat{f}_{N,\ell} \big)_p \cosh (2\tau_p) (\cosh (2\eta_p) + \tanh (2\tau_p) \sinh (2\eta_p)) \end{split} \]
which implies that $G_p = 0$, and 
\begin{equation}\label{eq:F-compu} \begin{split} F_p &= \big[ p^2 + \big( \widehat{V} (\cdot /N) * \widehat{f}_{N,\ell} \big)_p \big] \cosh (2\tau_p + 2 \eta_p) + \big( \widehat{V} (\cdot /N) * \widehat{f}_{N,\ell} \big)_p \sinh (2\tau_p + 2 \eta_p) \\ &= \big[ p^2 + \big( \widehat{V} (\cdot /N) * \widehat{f}_{N,\ell} \big)_p \big]  \cosh (2\tau_p) (\cosh (2\eta_p) + \tanh (2\tau_p) \sinh (2\eta_p)) \\ &\hspace{1cm} + \big( \widehat{V} (\cdot /N) * \widehat{f}_{N,\ell} \big)_p \cosh (2\tau_p)  (\sinh (2\eta_p) + \tanh (2\tau_p) \cosh (2\eta_p))  
\end{split} \end{equation} 
 leading to $F_p = \sqrt{|p|^4 + 2 (\widehat{V} (\cdot / N) * \widehat{f}_{N,\ell})_p p^2}$. This concludes the proof of Theorem \ref{thm:quadratic}. 
\end{proof}

\section{Proof of Proposition \ref{prop:a_priori_A}} 
\label{sec:gron-A} 

The goal of this section is to show Prop. \ref{prop:a_priori_A}, controlling the growth of the number of excitations and of their energy with respect to cubic conjugation. To reach this goal, we are going to estimate the commutators of $\cN_+$ and of the Hamiltonian $\cH_N = \cK + \cV_N$ with the cubic operator $A$, as defined in (\ref{eq:def_A}). Since  in the proof of Theorem \ref{thm:cubic} we will need to compute $e^{-A} \cH_N e^A$, the next proposition contains precise estimates, which are not really needed in the proof of Prop. \ref{prop:a_priori_A}.
\begin{lemma} \label{lemma:K,V,A}
We have
\begin{equation}\label{eq:[K,A]1} 
 \big[ \cK, A \big] = [ \cK , A ]_1 +[ \cK , A ]_2+  \mathrm{h.c.}  \end{equation}
where
    \begin{align}\label{eq:[K,A]3} 
            [\mathcal{K},A]_1=\;&\frac{2}{\sqrt{N}}\sum_{\substack{r,v\in\Lambda_+^*\\r+v\ne0}  }r^2\eta_r \,b^*_{r+v}b^*_{-r}(\gamma_vb_{v}+\sigma_vb^*_{-v})\\ \label{eq:[K,A]2}[\mathcal{K},A]_2=\;&\frac{2}{\sqrt{N}}\sum_{\substack{r,v\in \Lambda_+^*\\r+v\ne0} }(r\cdot v)\eta_r\gamma_v\,b^*_{r+v}b^*_{-r}b_{v}\,.
            \end{align} 
Moreover, 
\begin{equation} \label{eq:[VN,A]1} \big[ \cV_N , A \big] = [ \cV_N , A]_1 + [\cV_N , A ]_2 + \hc +  \cE_{[\cV_N , A]}  \end{equation} 
where 
\begin{equation}  \label{eq:[VN,A]2}     \begin{split}   
             [\mathcal{V}_N,A]_1=\; &\frac{1}{N^{3/2}}\sum_{\substack{r,v,s \in \Lambda_+^*\\r+v,\,r+v+s\ne0}} \widehat{V} ((r-s)/N) \eta_s \,b^*_{r+v}b^*_{-r}(\gamma_v b_{v}+\sigma_vb^*_{-v}) \\
             [\mathcal{V}_N,A]_2=\;& 
             \frac{1}{N^{3/2}} \sum_{\substack{\;r,v,s\in \Lambda_+^*\\\,s+v,\,r+v\ne0}}\widehat{V}((r-s)/N) \eta_{s}\frac{2\eta_{s+v}s^2-2\sigma_v\,(v\cdot s)}{s^2+v^2+|s+v|^2}\,b^*_{r+v}b^*_{-r} b^*_{-v}
                   \end{split}
    \end{equation}
    and 
    \begin{equation}\label{eq:[VN,A]3} \begin{split} | \langle \xi, \cE_{[\cV_N , A]} \xi \rangle | \leq CN^{-1} \langle \xi , (\cH_N + \cN^2_+ + 1) (\cN_+^2 + 1) \xi \rangle .\end{split} \end{equation}  
    Furthermore, we have 
    \begin{equation}\label{eq:comm-gron} | \langle \xi_1, \big[ \cK + \cV_N , A \big] \xi_2 \rangle | \leq C \langle \xi_1, (\cH_N + \cN_+^2 +1) \xi_1 \rangle + \langle \xi_2, (\cH_N + \cN_+^2 + 1) \xi_2 \rangle \end{equation} 
  for  any $\xi_1,\xi_2 \in \cF^{\leq N}_+$. 
\end{lemma}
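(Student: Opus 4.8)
The plan is to compute the commutators $[\cK,A]$ and $[\cV_N,A]$ termwise, using the explicit form of $A = A_\gamma + A_\nu$ from \eqref{eq:def_A} together with the canonical commutation relations \eqref{eq:CCR_b} and \eqref{eq:baa}, and then to isolate the ``main'' pieces $[\cdot\,,A]_1$, $[\cdot\,,A]_2$ from genuinely lower-order remainders. First I would treat $[\cK,A]$: since $\cK = \sum_p p^2 a_p^* a_p$ is quadratic and diagonal, commuting it with the cubic monomial $b^*_{r+v}b^*_{-r}b_v$ (resp. $b^*_{r+v}b^*_{-r}b^*_{-v}$) produces, via \eqref{eq:baa} and $[a_p^*a_p, b_q^{(*)}]$-type identities, the factor $(|r+v|^2 + r^2 - v^2)$ (resp. $(|r+v|^2 + r^2 + v^2)$) times the same monomial, up to $1/N$-corrections coming from the difference between $b$'s and $a$'s. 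Writing $|r+v|^2 + r^2 - v^2 = 2r^2 + 2(r\cdot v)$ splits the $A_\gamma$-contribution into exactly $[\cK,A]_1$ (the $2r^2\eta_r$ part, matched against the $\gamma_v b_v + \sigma_v b^*_{-v}$ from both $A_\gamma$ and $A_\nu$ once one checks $\nu_{r,v}(|r+v|^2+r^2+v^2) = 2r^2\eta_r\sigma_v$, which is precisely the design of \eqref{eq:def_wtsigma}) plus $[\cK,A]_2$ (the $2(r\cdot v)\eta_r\gamma_v$ part). The key point to verify carefully is that the $A_\nu$-term's contribution to $[\cK,A]$ combines with the $\sigma_v$-part of the $A_\gamma$-term to reconstruct the $b^*_{-v}$ piece appearing in \eqref{eq:[K,A]3}; this is exactly where the choice \eqref{eq:def_wtsigma} is used and should be highlighted in a remark.

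Next I would compute $[\cV_N,A]$. Using \eqref{eq:K,V}, $\cV_N$ is quartic in $a^*,a$; commuting it with the cubic $A$ and contracting one $a$ from $A$ against one $a^*$ from $\cV_N$ (or vice versa) via \eqref{eq:baa} yields a quintic term plus cubic terms. The quintic terms and the cubic terms in which the contraction involves a $d$-type correction (i.e. where $b\neq a$ matters) go into the error $\cE_{[\cV_N,A]}$; they are bounded by Cauchy--Schwarz exactly as in the proof of the earlier propositions, using $\|b_p\xi\|\le\|\cN_+^{1/2}\xi\|$, the decay $|\eta_p|,|\sigma_p|\le C|p|^{-2}$ from Lemma~\ref{lemma:coefficients}, and $\|\widehat V(\cdot/N)*\eta\|$-type estimates, producing the bound \eqref{eq:[VN,A]3} with $(\cH_N + \cN_+^2+1)(\cN_+^2+1)$. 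The surviving cubic contributions are of two kinds: those where the $b_v$ (or $b^*_{-v}$) factor in $A$ is a spectator — these give $[\cV_N,A]_1$ after replacing $b\to a$ up to errors — and those where the contraction happens at the $A_\nu$-level, giving the $b^*_{r+v}b^*_{-r}b^*_{-v}$ structure of $[\cV_N,A]_2$; one must again check that the $\cV_N$-commutator with $A_\gamma$ and with $A_\nu$ combine so that the coefficient is $\eta_s\,(2\eta_{s+v}s^2 - 2\sigma_v(v\cdot s))/(s^2+v^2+|s+v|^2)$, i.e. the renormalized-potential analogue of the identity used for $[\cK,A]_2$. The restriction of momentum sums (the $r+v\neq0$, $s+v\neq0$ constraints) and the distinction between $\widehat V(r/N)$ and $(\widehat V(\cdot/N)*\widehat f_{N,\ell})$-type weights should be tracked but are routine.

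Finally, for the a priori bound \eqref{eq:comm-gron}, I would not re-expand but rather estimate each of $[\cK,A]_1,[\cK,A]_2,[\cV_N,A]_1,[\cV_N,A]_2$ (and $\cE_{[\cV_N,A]}$) directly as a sesquilinear form in $\xi_1,\xi_2$. Each is a cubic (or quintic) expression; splitting the three operators symmetrically and using Cauchy--Schwarz — for instance bounding $[\cK,A]_1$ by $N^{-1/2}\sum r^2|\eta_r|\|b_{r+v}b_{-r}\xi_1\|(\|b_v\xi_2\|+|\sigma_v|\|\xi_2\|)$, then summing $r^2|\eta_r|$ against $\|\cK^{1/2}\cdot\|$-type norms using $\sum r^2\eta_r^2\le CN$ from \eqref{eq:norms_eta} — produces exactly $\cH_N + \cN_+^2 + 1$ on each side. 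The main obstacle, as usual in this circle of arguments, is bookkeeping: keeping track of which $1/N$-corrections (from $b$ vs.\ $a$, from $\widehat V(p/N)$ vs.\ $(\widehat V(\cdot/N)*\widehat f_{N,\ell})_p$, and from the commutators $[b_p,b_q^*]$) are genuinely negligible and which must be retained in $[\cdot\,,A]_1,[\cdot\,,A]_2$, and verifying the two algebraic identities that make the $A_\nu$-terms slot correctly into \eqref{eq:[K,A]3} and \eqref{eq:[VN,A]2}; these identities are precisely what motivates the non-obvious choice \eqref{eq:def_wtsigma} and deserve an explicit check rather than being hidden.
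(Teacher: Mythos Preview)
Your plan follows essentially the same route as the paper's proof: direct commutator computation for $[\cK,A]$ exploiting the design identity $\nu_{r,v}(|r+v|^2+r^2+v^2)=2r^2\eta_r\sigma_v$, explicit extraction of quintic and corrected-cubic remainders in $[\cV_N,A]$, and termwise Cauchy--Schwarz for \eqref{eq:comm-gron}.

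There is, however, one step you treat as routine that is not. To bound $[\cV_N,A]_2$ as a form one must control
\[
\alpha_{r,v}:=\frac{1}{\sqrt N}\sum_{s}\widehat V((r-s)/N)\,\eta_s\Big[\frac{2\eta_{s+v}s^2}{s^2+v^2+|s+v|^2}-\frac{2\sigma_v(v\cdot s)}{s^2+v^2+|s+v|^2}\Big],
\]
and specifically something like $N^{-1}\sum_{v}\sup_z\sum_r|\alpha_{r,v}|^2/(r+z)^2\le C$. For the $(v\cdot s)$ piece the naive pointwise estimate (bound $\widehat V$ by its sup and use $|\eta_s|\le C/s^2$) does not close: the $s$-sum produces at best a $\log N$, and the resulting uniform-in-$r$ bound cannot then be summed against $\sum_r(r+z)^{-2}$. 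The paper isolates this in a separate auxiliary lemma whose proof symmetrizes $s\to-s$ (for $|s|\le N$) to convert the odd factor $(v\cdot s)$ into either a difference $\widehat V((r+s)/N)-\widehat V((r-s)/N)$ (bounded by $C|s|/N$ via Lipschitz continuity of $\widehat V$) or a quotient with $(v\cdot s)^2$ in the numerator, and closes one of the remaining $r$-sums in position space. This is the only genuinely delicate analytic input in the lemma, and your sketch does not flag it; the same $\alpha_{r,v}$ bounds are also what make the second-commutator estimates in Lemma~\ref{lemma:second_commut_H_A} go through.

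Two minor points. First, your reference to ``$d$-type corrections'' in $\cE_{[\cV_N,A]}$ is off: the $d$-operators of Lemma~\ref{lemma:d} pertain to $e^{B_\mu}$, not to $e^A$; the remainders here are genuine quintic terms from incomplete contractions plus cubic pieces coming from replacing $\sigma$ by $\eta$ or $\gamma$ by $1$. Second, one of those quintic terms is controlled in the paper by passing to position space and invoking Hardy's inequality through $|\check\eta(y-z)|\le C/|y-z|$, which is a bit more than generic Cauchy--Schwarz.
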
 

\textit{Remark:} The coefficients $\nu_{r,v}$ entering the cubic operator $A$ are defined in (\ref{eq:def_wtsigma}) exactly so that the contribution to the commutator $[\cK,A]$  proportional to $b^*_{r+v} b^*_{-r} b^*_{-v}$ only enters in the term containing $r^2$ in \eqref{eq:[K,A]3} (and it is absent from the term containing $r\cdot v$  in (\ref{eq:[K,A]2})).
\medskip

To prove Lemma \ref{lemma:K,V,A}, we will use the following auxiliary lemma.  
\begin{lemma}
 For $r,v \in \Lambda^*_+$, we define the coefficients 
    \begin{equation}
\label{eq:defalpha}\begin{split}
\alpha_{r,v}:=&\, \frac{1}{N^{1/2}} \sum_{\substack{\;s\in \Lambda_+^*\\\,s+v\ne0}}\widehat{V}((r-s)/N) \eta_{s}\Big[\frac{2\eta_{s+v}s^2}{s^2+v^2+|s+v|^2} - \frac{2\sigma_v \,(v\cdot s)}{s^2+v^2+|s+v|^2}\Big]\, .
\end{split}\end{equation}
Then, we have 
\begin{equation}
\label{eq:boundalpha}
\begin{split}
 \sum_{v\in\Lambda_+^*} \sup_{r\in\Lambda_+^*} |\alpha_{r,v}|^2 + \frac{1}{N} &\sum_{v\in\Lambda_+^*} \left[ \sup_{z \in \Lambda^*} \sum_{r\in\Lambda_+^*} \frac{|\alpha_{r,v}|^2}{(r+z)^2} \right] + N^{1/2}\sup_{r\in\Lambda_+^*} |\alpha_{r,v}| |v|\leq C\,.
\end{split}
\end{equation}

\end{lemma}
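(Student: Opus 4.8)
The plan is to bound $\alpha_{r,v}$ by splitting it into the two pieces coming from the bracket in \eqref{eq:defalpha}: $\alpha_{r,v}=\alpha^{(1)}_{r,v}+\alpha^{(2)}_{r,v}$ with $\alpha^{(1)}_{r,v}=2N^{-1/2}\sum_{s}\widehat V((r-s)/N)\eta_s\eta_{s+v}\,s^2/D_{s,v}$ and $\alpha^{(2)}_{r,v}=-2N^{-1/2}\sigma_v\sum_{s}\widehat V((r-s)/N)\eta_s\,(v\cdot s)/D_{s,v}$, where $D_{s,v}:=s^2+v^2+|s+v|^2=2(s^2+v^2+s\cdot v)$ and all $s$-sums run over $\Lambda^*_+$ with $s+v\neq 0$. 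The only inputs are: the pointwise bounds $|\eta_p|,|\sigma_p|\le C|p|^{-2}$ together with the high-momentum tail bound $\sum_{|p|>M}|\eta_p|^2\le CNM^{-2}$ (a consequence of $\sum_p p^2\eta_p^2\le CN$ from Lemma~\ref{lemma:coefficients}); the trivial bound $|\widehat V(p/N)|\le\|V\|_1=:C$, the Lipschitz bound $|\widehat V(p/N)-\widehat V(q/N)|\le C|p-q|/N$, and the $\ell^2$-bound $\sum_p|\widehat V(p/N)|^2\le CN^3$; the geometric inequalities $D_{s,v}\ge s^2+v^2\ge\max\{s^2,v^2,2|s\cdot v|\}$ and $D_{s,v}\ge|s+v|^2$; and the elementary lattice sums $\sum_{s\in\Lambda^*_+}|s|^{-2}|s+v|^{-2}\le C|v|^{-1}$, $\sum_{0<|s|\le M}|s|^{-1}\le CM^2$, $\sum_{0<|s|\le M}|s|^{-2}\le CM$.

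For the pointwise estimate $\sup_r|\alpha_{r,v}|\le CN^{-1/2}|v|^{-1}$ (the third term in \eqref{eq:boundalpha}) I would bound $|\widehat V((r-s)/N)|\le C$ and estimate the $s$-sums in absolute value, splitting into $|s|\le|v|$ and $|s|>|v|$. For $\alpha^{(1)}$, using $s^2/D_{s,v}\le1$ and the convolution sum gives $|\alpha^{(1)}_{r,v}|\le CN^{-1/2}\sum_s|s|^{-2}|s+v|^{-2}\le CN^{-1/2}|v|^{-1}$ at once. For $\alpha^{(2)}$ a naive bound loses a factor $\log N$ (from $\sum_{|s|>|v|}|s|^{-3}$), so I would first symmetrize the $s$-sum by $s\mapsto -s-v$ (which sends $s+v\mapsto -s$, $\eta_s\mapsto\eta_{s+v}$, $v\cdot s\mapsto -v\cdot s-v^2$ and fixes $D_{s,v}$), rewriting $\sum_s\widehat V((r-s)/N)\eta_s(v\cdot s)/D_{s,v}$ as a sum of $\tfrac{v\cdot s}{D_{s,v}}\bigl(\widehat V((r-s)/N)\eta_s-\widehat V((r+s+v)/N)\eta_{s+v}\bigr)$ and of $-\tfrac{v^2}{D_{s,v}}\widehat V((r+s+v)/N)\eta_{s+v}$. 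The first bracket is now a difference, controlled by $|\eta_s-\eta_{s+v}|\le C|v||s|^{-3}$ together with the Lipschitz bound on $\widehat V$; combined with $|v\cdot s|/D_{s,v}\le\min\{|s|/|v|,|v|/|s|\}$ and $|\sigma_v|\le C|v|^{-2}$ each piece is $O(N^{-1/2}|v|^{-1})$ with no logarithm, and the second term is handled via $v^2/D_{s,v}\le\min\{1,v^2/s^2\}$.

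For $\sum_v\sup_r|\alpha_{r,v}|^2\le C$ (the first term) I would split the $v$-sum at $|v|=N$. On $|v|\le N$ the pointwise estimate gives $\sum_{0<|v|\le N}\sup_r|\alpha_{r,v}|^2\le CN^{-1}\sum_{0<|v|\le N}|v|^{-2}\le C$. On $|v|>N$ one needs decay strictly faster than $N^{-1/2}|v|^{-1}$, which comes from the tail bound: on the region $|s|\le|v|/2$ one has $|s+v|\ge|v|/2$ and $s^2/D_{s,v}\le s^2/v^2$, so using $s^2|\eta_s|\le C$ and $\sum_{|u|\ge|v|/2}|\eta_u|\le C(N/|v|)^{1/2}$ (Cauchy–Schwarz with $\sum u^2\eta_u^2\le CN$) this region contributes $O(N^{1/2}|v|^{-5/2})$; the complementary region, together with the symmetrized form of $\alpha^{(2)}$ and the pointwise decay of $\eta$ there, yields a comparable power of $|v|^{-1}$, giving $\sum_{|v|>N}\sup_r|\alpha_{r,v}|^2\le C$. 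For the weighted bound (the middle term), observe $\alpha_{r,v}=N^{-1/2}(\widehat V(\cdot/N)\ast\rho^{(v)})_r$ with $\rho^{(v)}_s=\eta_s\,s^2\eta_{s+v}/D_{s,v}$ for $\alpha^{(1)}$ and, for $\alpha^{(2)}$, $\rho^{(v)}$ replaced by the two $\ell^1$-densities produced by the symmetrization; a Cauchy–Schwarz in the convolution variable and a dyadic decomposition of $(r+z)^{-2}$ give $\sum_r\frac{|\alpha_{r,v}|^2}{(r+z)^2}\le CN^{-1}\|\rho^{(v)}\|_1\sum_s|\rho^{(v)}_s|\sum_r\frac{|\widehat V((r-s)/N)|^2}{(r+z)^2}\le C\|\rho^{(v)}\|_1^2$, since the inner $r$-sum is $\le CN$ uniformly in the shift (use $|\widehat V|\le C$ on the low-momentum part and $\sum_{|r|>N}|\widehat V(r/N)|^2\le CN^3$ with $(r+z)^{-2}\lesssim N^{-2}$ on the high part, the rest by dyadic bookkeeping). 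As $\|\rho^{(v)}\|_1$ obeys the same dichotomy in $|v|$ as above, $\tfrac1N\sum_v\|\rho^{(v)}\|_1^2\le C$.

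The hard part is the regime $|v|\gtrsim N$ together with the fact that $\widehat V(\cdot/N)$ has no pointwise decay, only $\|\widehat V(\cdot/N)\|_\infty\le C$ and $\|\widehat V(\cdot/N)\|_{\ell^2}\le CN^{3/2}$: one must never pull $\widehat V(\cdot/N)$ out of an $s$-sum in $\ell^2$ (this would cost the fatal factor $N^{3/2}$), and in the $(v\cdot s)$-term one cannot avoid a logarithmic loss without the symmetrization $s\mapsto -s-v$. Reconciling these two constraints, while splitting every momentum sum at the scale $N$ and invoking $\sum_p p^2\eta_p^2\le CN$ for the high tails, is the technical core of the argument; everything else is the elementary lattice-sum bookkeeping indicated above.
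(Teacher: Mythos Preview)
Your overall strategy---split $\alpha=\alpha^{(1)}+\alpha^{(2)}$, treat $\alpha^{(1)}$ by the convolution bound $\sum_s|\eta_s||\eta_{s+v}|\le C|v|^{-1}$, and kill the logarithm in $\alpha^{(2)}$ by a reflection in $s$---matches the paper's. The gap is in \emph{which} reflection you use.

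You take $s\mapsto -s-v$, which fixes $D_{s,v}$ but sends $\eta_s\mapsto\eta_{s+v}$; this forces you to control the difference $\eta_s-\eta_{s+v}$, and you invoke $|\eta_s-\eta_{s+v}|\le C|v||s|^{-3}$ without justification. That is a gradient bound on $\eta$ which is \emph{not} among the inputs in Lemma~\ref{lemma:coefficients}; proving it would require going back to the scattering equation and to the explicit asymptotics of $\widehat\chi_\ell$, with separate care near $|s|\sim|v|$ and $|s|\sim N$. The paper instead uses $s\mapsto -s$, which changes $D_{s,v}$ into $s^2+v^2+(s-v)^2$ but leaves $\eta_s$ fixed (since $\eta$ is radial). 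One then only needs the Lipschitz bound on $\widehat V$ and the identity
\[
\frac{1}{s^2+v^2+(s-v)^2}-\frac{1}{D_{s,v}}=\frac{4(s\cdot v)}{D_{s,v}\bigl(s^2+v^2+(s-v)^2\bigr)},
\]
both of which are immediate. This choice eliminates the gap; there is no need to differentiate $\eta$.

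A second issue concerns the weighted $r$-sum. After your reflection the ``difference'' piece $\tfrac{v\cdot s}{D_{s,v}}\bigl(\widehat V((r-s)/N)\eta_s-\widehat V((r+s+v)/N)\eta_{s+v}\bigr)$ is no longer of the form $(\widehat V(\cdot/N)\ast\rho^{(v)})_r$, so your $\|\rho^{(v)}\|_1$-argument does not apply to it (writing it as a difference of two convolutions brings you back to the original, logarithmically divergent $\ell^1$-norm). In the paper the analogous piece is $\alpha^{(2,<,1)}$; for $|r+z|\le N$ one simply uses the pointwise bound times $\sum_{|r+z|\le N}(r+z)^{-2}\le CN$, while for $|r+z|>N$ the paper switches to position space and uses $|e^{is\cdot x/N}-e^{-is\cdot x/N}|\le C|s|/N$ together with $\|V\cdot(V*V*V)\|_1<\infty$. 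This step has no counterpart in your outline.

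A minor point: the paper cuts $\alpha^{(2)}$ at $|s|=N$ \emph{before} symmetrizing (the tail $|s|>N$ is disposed of directly via $\sum_{|s|>N}|\eta_s|/|s|\le C$ and $\sum_v v^2\sigma_v^2\le CN$), which keeps the reflected region symmetric and avoids boundary bookkeeping.
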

\begin{proof}
We split $\alpha_{r,v} = \alpha^{(1)}_{r,v}+\alpha^{(2)}_{r,v}$, with $\alpha^{(1)}_{r,v}, \alpha^{(2)}_{r,v}$ indicating the contribution of the first, respectively, the second term in the square brackets. We will show \eqref{eq:boundalpha} separately, for $\alpha^{(1)}_{r,v}$ and $\alpha^{(2)}_{r,v}$. We have 
\begin{equation*}
\label{bound:f1_1}
\begin{split}
 \sum_{v} \sup_r |\alpha^{(1)}_{r,v}|^2&\leq \frac{C}{N}\sum_{s,s',v} |\eta_s| |\eta_{s+v}||\eta_{s'+v}||\eta_{s'}| \\ &\leq\frac{C}{N}\sum_{v,s} \frac{|\eta_s| |\eta_{s+v}|}{|v|} \leq \frac{C}{N} \sum_{|v|\leq N}\frac{1}{|v|^2}+\frac{C}{N^2}\sum_{s,v;\,|v|>N} |\eta_s| |\eta_{s+v}|\leq C,
\end{split}
\end{equation*}
Similarly, we obtain 
\begin{equation*}
\label{bound:f1_2}
\begin{split}
\frac{1}{N} \sum_{v} \left[ \sup_{z\in \Lambda^*}\sum_{r} \frac{|\alpha^{(1)}_{r,v}|^2}{(r+z)^2}\right]&\leq \frac{C}{N^2}\sum_{v} \sup_{z \in \Lambda^*}\sum_{s,s',r}\frac{|\widehat{V}((r-s)/N)|}{(r+z)^2}|\eta_s| |\eta_{s+v}||\eta_{s'+v}||\eta_{s'}|\\& \leq \frac{C}{N} \sum_{s,s',v} |\eta_s| |\eta_{s+v}||\eta_{s'+v}||\eta_{s'}|\leq C \, , 
\end{split}
\end{equation*}
using the bound $\| \widehat{V} (\cdot / N)\|_2\leq CN^{3/2}$, in the region $|r+z|>N$. The bound for the third summand on the left hand side of \eqref{eq:boundalpha} follows immediately from $\sum_s|\eta_s||\eta_{s+v}|\leq |v|^{-1}$.

To handle $\alpha^{(2)}_{r,v}$, we decompose $\alpha^{(2)}_{r,v} = \alpha^{(2, >)}_{r,v} + \alpha^{(2,<)}_{r,v}$ with 
\[ \begin{split} \alpha^{(2,>)}_{r,v} &= - \frac{1}{N^{1/2}} \sum_{\substack{\;s\in \Lambda_+^*, |s| > N \\\,s+v\ne0}} \widehat{V}((r-s)/N) \eta_{s} \, \frac{2\sigma_v \,(v\cdot s)}{s^2+v^2+|s+v|^2} \, , \\
\alpha^{(2,<)}_{r,v} &= - \frac{1}{N^{1/2}} \sum_{\substack{\;s\in \Lambda_+^*, |s| \leq N \\\,s+v\ne0}} \widehat{V}((r-s)/N) \eta_{s} \, \frac{2\sigma_v \,(v\cdot s)}{s^2+v^2+|s+v|^2}  \end{split} \]
Estimating $|\alpha^{(2,>)}_{r,v}| \leq |v| |\sigma_v| / \sqrt{N}$, it is easy to check that \eqref{eq:boundalpha} holds true, if we replace $\alpha_{r,v}$ with $\alpha^{(2,>)}_{r,v}$. 
Let us now consider the contribution of $\alpha^{(2,<)}_{r,v}$.  Through a change of variable $s\rightarrow-s$ we find
\begin{equation} \label{eq:alpha2}\begin{split} 
\alpha^{(2,<)}_{r,v} :=\; &\frac{1}{N^{1/2}} \sum_{|s|\leq N}\eta_s\sigma_v(s\cdot v) \left[\frac{\widehat{V}((r+s)/N)}{s^2+v^2+(s-v)^2}-\frac{\widehat{V}((r-s)/N)}{s^2+v^2+(s+v)^2}\right] \\ =\;&\frac{1}{N^{1/2}} \sum_{|s|\leq N}[\widehat{V}((r+s)/N)-\widehat{V}((r-s)/N)]\frac{\eta_s\sigma_v(s\cdot v)}{s^2+v^2+(s+v)^2}\\ &\qquad+\frac{1}{N^{1/2}}\sum_{|s|\leq N}\widehat{V}((r+s)/N)\frac{4\eta_s\sigma_v(s\cdot v)^2}{(s^2+v^2+(s+v)^2)(s^2+v^2+(s-v)^2)} \\ =\;&\alpha^{(2,<,1)}_{r,v} + \alpha^{(2,<,2)}_{r,v} 
.\end{split} 
\end{equation}
The contribution of $\alpha_{r,v}^{(2,<,2)}$ to (\ref{eq:boundalpha}) can be bounded easily, as we did with $\alpha^{(1)}_{r,v}$. Let us focus on the contribution of $\alpha_{r,v}^{(2,<,1)}$. From the Lipschitz continuity of $\widehat{V}$, we obtain 
\begin{equation}\label{eq:alpha2<1} \begin{split} 
\sum_{v} \sup_r |\alpha^{(2,<,1)}_{r,v}|^2&\leq \frac{C}{N^3}\sum_{v} |\sigma_v|^2 v^2\Big(\sum_{|s|\leq N} |\eta_s| \Big)^2 \leq C.
\end{split} \end{equation}
Similarly, we can also estimate the last term on the l.h.s. of \eqref{eq:boundalpha}. As for the second term in \eqref{eq:boundalpha}, it can be bounded by (\ref{eq:alpha2<1}), if we restrict the sum over $r \in \Lambda^*_+$ to momenta with $|r+z|\leq N$. For $|r+z|>N$, on the other hand, we switch to position space and get
\begin{equation}
\label{bound:f2_1}
\begin{split}
\frac{1}{N}& \sum_{v} \left[ \sup_z \sum_{r} \frac{|\alpha^{(2,<,1)}_{r,v}|^2}{(r+z)^2}\right]\leq \, \frac{C}{N^2}\sum_{v}|\sigma_v|^{2} \sup_{z} \left( \sum_{\substack{|r+z|>N}}\frac{1}{(r+z)^4}\right)^{1/2}  \\ & \times \Big(\sum_{\substack{r\\ |s_1|,|s_2|,\\ |s_3|,|s_4|\leq N}}\prod_{i=1,2,3,4}[\widehat{V}((r+s_i)/N)-\widehat{V}((r-s_i)/N)]\frac{\eta_{s_i}(s_i\cdot v)}{s_i^2+v^2+(s_i+v)^2} \Big)^{1/2}\\ &\leq \frac{C}{N}\sum_{v}|\sigma_v|^2 v^{2}\, \Big(\int_{x_1,x_2,x_2\in \Lambda} dx_1dx_2dx_3|V(x_1)||V(x_2)||V(x_3)||V(x_1+x_2+x_3)|\\
&\qquad \qquad  \times \sum_{\substack{ |s_1|,|s_2|,\\ |s_3|,|s_4|\leq N}} \, \prod_{i=1,2,3,4}\,\frac{ |\eta_{s_i}|}{N}\Big)^{1/2}\\
&\leq C \, \|V\cdot (V\ast V\ast V) \|_1 \leq  C
\end{split}
\end{equation}
where we used the bound $|e^{i(s\cdot x)/N}- e^{-i(s\cdot x)/N}|\leq C|s|/N$ for $x\in \Lambda$. 
\end{proof}

Now we are ready to prove Lemma \ref{lemma:K,V,A}. 
\begin{proof}[Proof of Lemma \ref{lemma:K,V,A}]
With the commutation relations (\ref{eq:CCR_b}) and the definition (\ref{eq:def_A}), we find
\[ \begin{split} \big[ \cK, A \big] = \; &\frac{1}{\sqrt{N}} \sum_{\substack{r,v \in \Lambda_+^* \\ r+v \not = 0}} \eta_r \gamma_v ((r+v)^2 +r^2 -v^2) b^*_{r+v} b^*_{-r} b_v \\ &+ \frac{1}{\sqrt{N}} \sum_{\substack{r,v \in \Lambda_+^* \\ r+v \not = 0}} \nu_{r,v} ((r+v)^2 +r^2 + v^2) b^*_{r+v} b^*_{-r} b^*_{-v} + \text{h.c.} \end{split} \]
Recalling (\ref{eq:def_wtsigma}), we immediately obtain (\ref{eq:[K,A]1}). 

A slightly longer (but still straightforward) computation shows that (\ref{eq:[VN,A]1}) holds true, with 
\begin{equation}\label{eq:cEVA} \begin{split} &\cE_{[\cV_N,A]} \\  =\; &\frac{1}{N^{3/2}} \sum_{\substack{r,v,s\in \Lambda_+^*\\r+v+s \ne0}}\! \widehat{V} ((r-s)/N) \Big[\frac{ 2s^2 \eta_s(\sigma_{s+v} -\eta_{s+v}) }{s^2+v^2+(s+v)^2} + \frac{2 v^2 [\eta_v(\sigma_s-\eta_s) - \eta_s(\sigma_v-\eta_v)]}{s^2+v^2+(s+v)^2}\Big]\\ 
& \hspace{8cm}\times b^*_{r+v}b^*_{-r}b^*_{-v}+\mathrm{h.c.} \\ 
& + \frac{1}{N^{3/2}} \sum_{\substack{u\in \Lambda^*,\;p,r,v\in\Lambda_+^*\\v-u,r+v-u,p+u\ne0}}\widehat{V} (u/N)\eta_r \,(\gamma_{v}-\g_{v-u})\, b^*_{r+v-u}b^*_{-r}a^*_{p+u}a_pb_v+\mathrm{h.c.}\\ &+ \frac{1}{N^{3/2}} \sum_{\substack{u\in \Lambda^*,\;p,r,v\in\Lambda_+^*\\r+v,r+u, p+u\ne0}}\widehat{V} (u/N) \eta_r \,\gamma_{v}\, b^*_{r+v}b^*_{-r-u}a^*_{p+u}a_pb_v+\mathrm{h.c.} \\ &+ \frac{1}{N^{3/2}} \sum_{\substack{u\in \Lambda^*,\;p,r,v\in\Lambda_+^*\\v-u\ne0\\r+v-u,p+u\ne0}}\widehat{V} (u/N)\big[\nu_{r,v} +\nu_{r,v-u}+ \nu_{r-u,v}\big] b^*_{r+v-u}b^*_{-r}b^*_{-v}a^*_{p+u}a_p+\mathrm{h.c.} 
 \end{split} \end{equation} 
 
 Using that $|\s_s - \eta_s| \leq C |\tau_s|$, we can bound the first term by
 \[\begin{split} \Big| \frac{1}{N^{3/2}} \sum_{\substack{r,v,s\in \Lambda_+^*\\r+v+s \ne0}} \widehat{V} ((r+s)/N) \Big[\frac{ 2s^2 \eta_s(\sigma_{s+v} -\eta_{s+v}) }{s^2+v^2+(s+v)^2} + \frac{2 v^2 [\eta_v(\sigma_s-\eta_s) - \eta_s(\sigma_v-\eta_v)]}{s^2+v^2+(s+v)^2}\Big] \\
 \times \langle \xi, b^*_{r+v}b^*_{-r}b^*_{-v} \xi \rangle \Big| \leq \frac{C}{N} \| \cK^{1/2} \xi \| \| (\cN_+ + 1) \xi \|  \, . \end{split}\]  
As for the second term on the r.h.s. of (\ref{eq:cEVA}), we observe that $|\gamma_v - \g_{v-u}| \leq C (\eta_v^2 + \eta_{v-u}^2)$. We obtain  
\[ \begin{split} \Big| \frac{1}{N^{3/2}} \sum_{\substack{u\in \Lambda^*,\;p,r,v\in\Lambda_+^*\\v-u,r+v-u,p+u\ne0}}\widehat{V} (u/N)\eta_r \,(\gamma_{v}-\g_{v-u})\, &\langle \xi, b^*_{r+v-u}b^*_{-r}a^*_{p+u}a_pb_v \xi \rangle \Big| \\  &\leq \frac{C}{N} \| \cK^{1/2} (\cN_+ + 1)^{1/2} \xi \| \| (\cN_+ + 1)^{3/2} \xi \| \, . \end{split} \] 
To bound the third term, we switch to position space. We find 
\[
\begin{split}
\Big| \frac{1}{N^{3/2}} &\sum_{\substack{u\in \Lambda^*,\;p,r,v\in\Lambda_+^*\\r+v,r+u, p+u\ne0}}\widehat{V} (u/N) \eta_r \,\gamma_{v}\, \langle \xi, b^*_{r+v}b^*_{-r-u}a^*_{p+u}a_pb_v \xi \rangle \Big|  \\
\leq  &\, \frac{1}{N^{1/2}} \int dx\, dy\, dz\,  N^{2}V(N(x-z))|\check\eta (y-z)||\langle \xi, \check b^*_y\check b^*_z \check a^*_x\check a_x b(\check \g_y)\xi\rangle| \\
    \leq &\, \frac{C}{N} \Big[\int dx\, dy\, dz\,  N^{2}V(N(x-z))\|\check b_y\check b_z \check a_x \xi\|^2\Big]^{1/2} \\
    &\hspace{2cm}\times\Big[\int dx\, dy\, dz\,  N^{3}V(N(x-z))|\check\eta (y-z)|^2 \|\check a_x  b (\check{\gamma}_y) \xi\|^2\Big]^{1/2}\\
    \leq &\, \frac{C}{N} \| \cV_N^{1/2} (\cN_+ + 1)^{1/2} \xi \| \| \cK^{1/2} (\cN_+ + 1)^{1/2} \xi \| \end{split} \] 
  where, in the last step, we used $|\check{\eta} (y-z)| \leq C / |y-z|$ (see \eqref{eq:defeta} and \eqref{eq:decay_w}) and Hardy's inequality (and the estimate $\| \check a_x  b (\check{\gamma}_y) \xi\| \leq \| \check a_x  \check a_y \xi\| + \| \check a_x \cN_+^{1/2} \xi \|$). Finally, let us consider the last term on the r.h.s. of (\ref{eq:cEVA}). We only show how to bound the contribution proportional to $\nu_{r,v}$, the others can be estimated similarly. With $|\nu_{r,v}|\leq C|r|^{-2}|v|^{-2}$, we obtain
  \[\begin{split}
    &\Big|\frac{1}{N^{3/2}} \sum_{\substack{u\in \Lambda^*,\;p,r,v\in\Lambda_+^*\\v-u\ne0\\r+v-u,p+u\ne0}}\widehat{V} (u/N) \nu_{r,v} \, \langle \xi,  b^*_{r+v-u}b^*_{-r}b^*_{-v}a^*_{p+u}a_p\xi\rangle\Big|\\ 
    &\quad \leq \frac{C}{N^{3/2}} \Big[\sum_{\substack{u\in \Lambda^*,\;p,r,v\in\Lambda_+^*\\v-u\ne0\\r+v-u,p+u\ne0}}\frac{|\widehat{V} (u/N)|}{|p+u|^2} \frac{1}{|r|^4|v|^4} \|(\cN_++1)a_p\xi\|^2\Big]^{1/2}\\
    & \quad\quad \quad\quad \times \Big[\sum_{\substack{u\in \Lambda^*,\;p,r,v\in\Lambda_+^*\\v-u\ne0\\r+v-u,p+u\ne0}} |p+u|^2 \|(\cN_++1)^{-1}a_{p+u}a_{-v}a_{-r}a_{r+v-u}\xi\|^2\Big]^{1/2}\\
    &\quad \leq \frac{C}{N} \|\cK^{1/2}(\cN_++1)^{1/2}\xi\|\|(\cN_++1)^{3/2}\xi\|\,.
  \end{split}
\]
This concludes the proof of (\ref{eq:[VN,A]3}). 
  
Proceeding as we did above, it is also simple to verify that the error term $\cE_{[\cV_N , A]}$ satisfy (\ref{eq:comm-gron}). To conclude the proof of the bound (\ref{eq:comm-gron}), we observe, first of all, that 
 \[
|\langle \xi_1 , [\cK,A]_1 \xi_2 \rangle| \leq C \|\cK^{1/2} \xi_1 \| \|(\cN_++1) \xi_2 \| 
\]
and 
\begin{equation}\label{eq:bound[K,A]2} 
\begin{split}
  |\langle \xi_1 , &[\cK,A]_2 \xi_2 \rangle | \\  \leq & \, \frac{C}{N^{1/2}} \sum_{\substack{r,v\in \Lambda_+^*\\r+v\ne0} }|r| |v||\eta_r| \|b_{r+v}b_{-r}\xi_1 \|\|b_{v}\xi_2 \|\\
  \leq &\, \frac{C}{N^{1/2}} \Big[\sum_{\substack{r,v\in \Lambda_+^*\\r+v\ne0} }|r|^2\,\|b_{r+v}b_{-r}\xi_1\|^2\Big]^{1/2}\Big[\sum_{\substack{r,v\in \Lambda_+^*\\r+v\ne0}}|\eta_r|^2|v|^2\|b_{v}\xi_2 \|^2\Big]^{1/2}\\
   \leq &\, \frac{C}{N^{1/2}} \|\cK^{1/2}(\cN_++1)^{1/2}\xi_1\|\|\cK^{1/2}\xi_2\|\,. 
\end{split}
\end{equation}

We rewrite the term $[\cV_N,A]_1$ as 
\begin{equation*}\label{eq:[cVN,A]1}\begin{split} 
[\cV_N,A]_1 = &\,  \frac{1}{N^{3/2}}\sum_{\substack{ r,v \in \Lambda_+^*\\r+v\ne0}} (\widehat{V}\left(\cdot/{N}\right)*\eta)(r)\,b^*_{r+v}b^*_{-r}(\gamma_v b_{v}+\sigma_vb^*_{-v}) \\
&\, - \frac{1}{N^{3/2}}\sum_{\substack{r,v \in \Lambda_+^*\\r+v\ne0}} 
\widehat{V} (r/N) \eta_0\,b^*_{r+v}b^*_{-r}(\gamma_v b_{v}+\sigma_vb^*_{-v}) 
\end{split}
\end{equation*}
Using \eqref{eq:defeta}, the decomposition $\eta_p = - N\d_{p,0} + N \widehat{f}_{N,\ell} (p)$, and the fact that  $|\eta_0|\leq C$ (see \eqref{eq:norms_eta}), we can bound 
\[
|\langle \xi_1 , [\cV_N,A]_1 \xi_2 \rangle| \leq  C \|\cK^{1/2}\xi_1 \|\|(\cN_++1)\xi_2\|\,.
\]

 With \eqref{eq:boundalpha}, we bound  $[\cV_N, A]_2$ as
 \begin{equation}\label{eq:bound[VN,A]2}
 \begin{split}
     |\langle\xi_1, [\cV_N,A]_2\,\xi_2\rangle| \leq &\, \frac{C}{N}\sum_{\substack{\;r,v\in \Lambda_+^*\\\,r+v\ne0}} |\alpha_{r,v}|\|(\cN_++1)^{-1}b_{r+v}b_{-r}b_{-v}\xi_1 \|\|(\cN_++1)\xi_2 \|\\
     \leq &\, \frac{C}{N^{1/2}} \|(\cN_++1)\xi_2 \|\Big[ \frac{1}{N} \sum_{\substack{\;r,v\in \Lambda_+^*\\\,r+v\ne0}} \frac{|\alpha_{r,v}|^2}{|r|^2} \Big]^{1/2}\\
     & \hspace{1.5cm}\times\Big[ \sum_{\substack{\;r,v\in \Lambda_+^*\\\,r+v\ne0}} |r|^2 \| b_{r+v} b_{-r}b_{-v}(\cN_++1)^{-1}\xi_1 \|^2\Big]^{1/2} \\
     \leq &\, \frac{C}{N^{1/2}}\|\cK^{1/2}\xi_1\|\|(\cN_++1)\xi_2 \|\,.
 \end{split}
 \end{equation}
 


\end{proof}

We can now prove Proposition \ref{prop:a_priori_A}
\begin{proof}[Proof of Prop. \ref{prop:a_priori_A}] 
The proof of the bound \eqref{eq:growthN} follows similarly as in \cite[Prop. 4.2]{BBCS}. To show \eqref{eq:growthNimproved} we consider the function
\[
g_\xi(s) =\langle \xi, e^{-sA}\cN_+ e^{sA} \xi \rangle
\]
for $s \in [0,1]$, and its derivative
\[
g'_\xi(s) = \langle \xi, e^{-sA}[\cN_+,A]e^{sA}\xi\rangle
\]
Since 
\[
\begin{split}
    [\cN_+,A] = &\, \frac{1}{\sqrt N} \sum_{r,v \in \L^*_+} b^*_{r+v}b^*_{-r} \big(\eta_r\g_vb_v+3\nu_{r,v} b^*_{-v}\big) +\hc 
\end{split}
\]
and using that $|\nu_{r,v}| \leq C |r|^{-2}|v|^{-2}$, we immediately find 
\[\begin{split}
|g'_\xi(s)| 
\leq & \frac{C}{\sqrt{N}} \sum_{r,v \in \L^*_+}  \Big(|\eta_r|\|b_{r+v}b_{-r}e^{sA}\xi\|\| b_ve^{sA}\xi\|\\
& \qquad\qquad+ |\nu_{r,v}|\|(\cN_++1)^{-1}b_{r+v}b_{-r}b_{-v}e^{sA}\xi\|\|(\cN_++1)e^{sA}\xi\|\Big) \\ 
\leq & \frac{C}{\sqrt N} \|(\cN_++1)e^{sA}\xi\|\|\cN_+^{1/2}e^{sA}\xi\|\\
\leq &\, \langle \xi, e^{-sA}\cN_+e^{sA}\xi\rangle  + \frac{C}{N} \langle \xi, e^{-sA}(\cN_++1)^2e^{sA}\xi\rangle\,.
\end{split}\]
Using Gronwall's lemma, and Eq. \eqref{eq:growthN}, we have 
\[
g_\xi(s) \leq\langle \xi, \cN_+ \xi\rangle+ \frac{C}{N} \langle\xi, (\cN_++1)^2\xi\rangle\,,
\]
which concludes the proof of \eqref{eq:growthNimproved}. To show (\ref{eq:growthHN}), we define  
\begin{equation*}
    \label{eq:phixi}
    \ph_\xi(s) :=  \langle \xi, e^{-sA} (\cN_++1)^k(\cH_N+1) e^{sA} \xi \rangle 
\end{equation*}
for $s \in [0,1]$. Differentiating with respect to $s$, we have 
\[\begin{split}
\ph'_\xi(s) = &\, \langle \xi, e^{-sA} [(\cN_++1)^k(\cH_N+1),A] e^{sA} \xi \rangle \\
= & \, \langle \xi, e^{-sA} (\cN_++1)^k[\cK +\cV_N,A] e^{sA} \xi \rangle  + \langle \xi, e^{-sA} [(\cN_++1)^k,A](\cH_N+1) e^{sA} \xi \rangle \\= &\, P_1 +P_2\,.
\end{split}\]
We consider first $P_1$. From (\ref{eq:comm-gron}), we find  
\begin{equation*}\label{eq:gronwallP1}
\begin{split}
|\langle \xi, e^{-sA} (\cN_++1)^k[\cK +\cV_N,A] e^{sA} \xi \rangle| \leq \,&C \langle \xi, e^{-sA} (\cN_++1)^k(\cH_N+1) e^{sA} \xi \rangle \\
&+ C\langle \xi, (\cN_++1)^{k+2}  \xi \rangle\,.
\end{split}
\end{equation*}
As for the term $P_2$, we observe that the proof of \cite[Prop. 4.4]{BBCS} (restricted to $k=1$) can be easily extended to general $k \in \bN$. We conclude that  
\[
|P_2| \leq C\langle \xi, e^{-sA}(\cH_N+1)(\cN_++1)e^{sA}\xi\rangle + C \langle \xi,(\cN_++1)^{k+2}\xi\rangle\,.
\]
Putting together the estimates for $P_1$ and $P_2$, we obtain 
\[
\ph'_\xi(s) \leq C\ph_\xi(s) + C \langle \xi, (\cN_++1)^{k+2}\xi\rangle 
\]
for any $\xi \in \cF^{\leq N}_+$ and for some constant $C>0$. With Gronwall's lemma we get the desired result. 
\end{proof}

\section{Cubic renormalization: proof of Theorem \ref{thm:cubic}} 
\label{sec:cubic} 

The starting point for proving Theorem \ref{thm:cubic} is the representation (\ref{eq:expansion_G_N}) for the excitation Hamiltonian $\cG_N$. To determine the structure of $\cJ_N = e^{-A} \cG_N e^A$, we will separately apply the cubic conjugation to the different summands in $\cG_N$.

\subsection{Control of quadratic terms} 

To conjugate quadratic terms in $\cG_N$ (excluding the kinetic energy), we will make use of the following lemma.   
\begin{lemma}\label{lm:quad-A}
Let $A$ be defined as in \eqref{eq:def_A}. For $p\in \Lambda_+^*$, let $w_p \in \bR$. Set 
\[ \cW = \sum_{p\in \L^*_+} w_p a_p^* a_p \]
Then
\begin{equation}\label{eq:quad-dia} | \langle \xi,  \big[ \cW , A \big] \xi \rangle | \leq \frac{C \| w \|_\infty}{\sqrt{N}} \| \cN_+^{1/2} \xi \| \| (\cN_+ + 1) \xi \| \end{equation} 
Moreover, 
\begin{equation}\label{eq:quad-dia2} | \langle \xi,  \big[ \cW , A \big] \xi \rangle | \leq \frac{C \sup_{p \in \Lambda^*_+} |w_p|/|p|}{\sqrt{N}} \| \cN_+^{1/2} \xi \| \| \cK^{1/2} (\cN_+ + 1)^{1/2}  \xi \| \end{equation} 
Suppose now $O_p \in \bC$ for all $p \in \Lambda^*_+$ and set 
\[ \cO = \sum_{p \in \L^*_+} O_p b_p^* b_{-p}^*. \]
Then  
\begin{equation}\label{eq:quad-off1} | \langle \xi,  \big[ \cO , A \big] \xi \rangle | \leq \frac{C \| O \|_2}{\sqrt{N}} \| \cN_+^{1/2} \xi \| \| (\cN_+ + 1) \xi \| \end{equation} 
and 
\begin{equation}\label{eq:quad-off} | \langle \xi,  \big[ \cO , A \big] \xi \rangle | \leq  \frac{C}{\sqrt{N}}  \left[ \| O \|_\infty + \sum_{p \in \Lambda_+^*} \frac{|O_p|^2}{p^2} \right]^{1/2} \| \cK^{1/2} \xi \| \| (\cN_+ + 1) \xi \| .\end{equation} 
\end{lemma}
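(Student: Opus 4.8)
The plan is to prove Lemma~\ref{lm:quad-A} by direct computation of the commutators, using the commutation relations \eqref{eq:CCR_b}, \eqref{eq:baa} together with the explicit form \eqref{eq:def_A} of $A$, and then bounding the resulting cubic expressions by Cauchy--Schwarz exactly in the spirit of the estimates appearing in the proof of Lemma~\ref{lemma:K,V,A} (e.g.\ \eqref{eq:bound[K,A]2}, \eqref{eq:bound[VN,A]2}). Since each of the four displayed bounds involves a quadratic operator commuted with a cubic one, every commutator will again be a sum of cubic terms in $b,b^*$ (up to lower-order pieces absorbed by the approximate CCR), and the whole point is to distribute the available factors of momentum and of $(\cN_++1)$ in the two ways needed for the $L^\infty$-type and the kinetic-energy-type estimates.

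First I would treat $\cW=\sum_p w_p a_p^*a_p$. Using $[a_p^*a_p, b^*_{r+v}b^*_{-r}b_v] = (\delta_{p,r+v}+\delta_{p,-r}-\delta_{p,v})\,b^*_{r+v}b^*_{-r}b_v$ and the analogous identity for the $b^*_{-v}$-term via \eqref{eq:baa}, one finds that $[\cW,A]$ is again a cubic expression with coefficients $\eta_r\gamma_v(w_{r+v}+w_{-r}-w_v)$ and $\nu_{r,v}(w_{r+v}+w_{-r}+w_{-v})$ (plus the h.c.). For \eqref{eq:quad-dia} one bounds $|w_{r+v}+w_{-r}-w_v|\le 3\|w\|_\infty$, uses $|\eta_r|,|\gamma_v|\le C$ and $|\nu_{r,v}|\le C|r|^{-2}|v|^{-2}$ (from Lemma~\ref{lemma:coefficients} and \eqref{eq:def_wtsigma}), and then applies Cauchy--Schwarz splitting $\|b_{r+v}b_{-r}\xi\|\,\|b_v\xi\|$ with the summable weight $|\eta_r|\le C|r|^{-2}$ to control the sum over $r$ and with $\sum_v$ absorbed into $\|\cN_+^{1/2}\xi\|^2$; the $\nu$-term is handled as in \eqref{eq:bound[VN,A]2}, pulling out $(\cN_++1)^{-1}$ to close the three annihilation operators against $\|\cN_+^{1/2}\xi\|$. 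For \eqref{eq:quad-dia2} the difference is that one writes $|w_{r+v}+w_{-r}-w_v|\le C(\sup_p|w_p|/|p|)(|r+v|+|r|+|v|)$ and uses one of these extra momentum factors, after the change of variable, to produce a kinetic factor $\|\cK^{1/2}(\cN_++1)^{1/2}\xi\|$; this is exactly the mechanism already used to pass from a crude bound to \eqref{eq:comm-gron} in Lemma~\ref{lemma:K,V,A}.

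Next, for $\cO=\sum_p O_p b_p^*b_{-p}^*$ I would compute $[\cO,A]$. Commuting $b_p^*b_{-p}^*$ with $A_\gamma$ produces (using \eqref{eq:CCR_b}) cubic terms with one creation and two creations, i.e.\ $b^*b^*b^*$-type and $b^*b^*b$-type contributions, with coefficients that are products of $O$ with $\eta$ and $\gamma$ (or $\nu$); commuting with $A_\nu$ produces $b^*b^*b^*$ and again lower-order pieces, which are the small corrections coming from the non-exact CCR and which can be absorbed. For \eqref{eq:quad-off1} one uses $\|O\|_2$ together with $|\eta_r|,|\gamma_v|\le C$, applying Cauchy--Schwarz so that one copy of $O$ carries a sum that closes against $\|(\cN_++1)\xi\|$ while the two $b$-operators (or $b^*$-operators, moved past via a commutator contributing a lower-order term) close against $\|\cN_+^{1/2}\xi\|$. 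For \eqref{eq:quad-off} one instead keeps a factor $|p|$ from one of the creation operators to build $\|\cK^{1/2}\xi\|$: schematically one estimates $|p|^{-1}\cdot|p|$, bounding the first factor by the square root of $\|O\|_\infty+\sum_p|O_p|^2/p^2$ (after Cauchy--Schwarz in the index $p$) and the second by $\cK^{1/2}$ acting on $\xi$; the remaining two operators go into $\|(\cN_++1)\xi\|$.

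The main obstacle I anticipate is purely bookkeeping rather than conceptual: in $[\cO,A]$ the commutator of $b^*_pb^*_{-p}$ with the annihilation operator $b_v$ inside $A_\gamma$ (and with $b^*_{-v}$ inside $A_\nu$) generates, through \eqref{eq:CCR_b}, terms in which a momentum label gets identified, producing either genuinely cubic survivors or quadratic/linear remainders of relative size $1/N$; one has to check carefully that all such remainders are indeed controlled by the right-hand sides claimed (they carry an extra $N^{-1/2}$ and are dominated by the displayed bounds). A secondary technical point is ensuring that, when one wants the kinetic version of the estimate, the extra momentum factor is available \emph{after} the change of variables $r\to -r$ or $v\to v-r$ that symmetrizes the sums; this is the same subtlety flagged in the remark after Lemma~\ref{lemma:K,V,A}, and it is handled by the inequalities $|p\pm q|\le|p|+|q|$ together with the fast decay of $\eta$ and $\nu$. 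None of this requires new ideas beyond those already deployed in the proof of Lemma~\ref{lemma:K,V,A}; the estimates \eqref{eq:bound[K,A]2} and \eqref{eq:bound[VN,A]2} are essentially templates for \eqref{eq:quad-dia2} and \eqref{eq:quad-off1}, \eqref{eq:quad-off} respectively.
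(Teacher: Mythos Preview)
Your proposal is correct and follows essentially the same approach as the paper: compute the commutators explicitly (the paper writes out the same cubic expansion of $[\cW,A]$ with coefficients $(w_{r+v}+w_r\mp w_v)$, and an analogous four-term expansion of $[\cO+\cO^*,A]$ plus a lower-order error from the approximate CCR), then estimate each term by Cauchy--Schwarz using $\gamma\in\ell^\infty$ and $\eta,\nu\in\ell^2$. The paper makes one point slightly more explicit than you do: in the $\cO$-commutator, the first and last explicit terms (those of the form $O_{r+v}\eta_r\gamma_v\, b^*_{-r-v}b^*_vb_r$ and $O_v\eta_r\gamma_v\, b^*_{r+v}b^*_{-r}b^*_{-v}$) genuinely require the decay of $O$ (via $\|O\|_2$ or $\sum_p|O_p|^2/p^2$) because $\eta_r$ and $\|b_r\xi\|$ carry decay in the same variable, whereas the remaining terms can be handled with $\|O\|_\infty$ alone---this is exactly why the constant in \eqref{eq:quad-off} involves both $\|O\|_\infty$ and the weighted $\ell^2$-norm.
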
 

\begin{proof}
From the identity
\begin{equation*} \label{eq:commut_quad_diag}
\begin{split}
    [\mathcal{W},A]=\;&\frac{1}{\sqrt{N}} \sum_{\substack{r,v\in \Lambda_+^*\\r+v\ne0}}\big(w_{r+v}+w_r-w_v\big) \eta_r \gamma_v b^*_{r+v} b^*_{-r} b_v+\mathrm{h.c.}\\
    &+\frac{1}{\sqrt{N}} \sum_{\substack{r,v\in \Lambda_+^*\\r+v\ne0}}\big(w_{r+v}+w_r+w_v\big) \nu_{r,v} b^*_{r+v} b^*_{-r} b_{-v}^*+\mathrm{h.c.}.
\end{split}
\end{equation*}
we find (\ref{eq:quad-dia}) and (\ref{eq:quad-dia2}), using that $\gamma \in \ell^\infty$ and that $\eta , \nu \in \ell^2$ ($\nu$ is square integrable in both its variables), uniformly in $N$. To prove (\ref{eq:quad-off1}), (\ref{eq:quad-off}), we proceed as in the proof of \cite[Prop. 8.2]{BBCS}. From the commutation relations (\ref{eq:CCR_b}), we obtain 
 \begin{equation} \label{eq:commut_quad_offdiag1}
\begin{split}
    \big[ \mathcal{O}+ \mathcal{O}^*,A\big]=\;&\frac{2}{\sqrt{N}} \sum_{\substack{r,v\in \Lambda_+^*\\r+v\ne0}} O_{r+v} \, b^*_{-r-v} \big( \eta_r\gamma_v b^*_v +\nu_{r,v}b_{-v} \big) b_r + \mathrm{h.c.}\\
    &+\frac{2}{\sqrt{N}} \sum_{\substack{r,v\in \Lambda_+^*\\r+v\ne0}} O_r \, b^*_r \big( \eta_r\gamma_v b^*_v +\nu_{r,v} b_{-v} \big) b_{r+v}+\mathrm{h.c.}\\
    &+\frac{2}{\sqrt{N}} \sum_{\substack{r,v\in \Lambda_+^*\\r+v\ne0}} O_v \nu_{r,v} \, b^*_v b_{r+v} b_{-r}+\mathrm{h.c.}\\
    &-\frac{2}{\sqrt{N}} \sum_{\substack{r,v\in \Lambda_+^*\\r+v\ne0}}  O_v \eta_r \gamma_v  \,b^*_{r+v} b^*_{-r} b^*_{-v}+\mathrm{h.c.}+\mathcal{E},
\end{split}
\end{equation}
where the error term $\cE$ collects contributions due to the fact that the modified creation and annihilation operators $b, b^*$ do not exactly satisfy canonical commutation relations. We have 
\begin{equation}  
    | \langle \xi, \mathcal{E} \xi \rangle| \le \frac{C\|O\|_2}{N} \| \cN_+^{1/2} \xi \| \| (\mathcal{N}_++1) \xi \|  
\end{equation}
and also 
\begin{equation} 
    | \langle \xi , \mathcal{E}  \xi \rangle | \le \frac{C}{N} \left[ \| O \|_\infty + \sum_p \frac{|O_p|^2}{p^2} \right]  \| \cK^{1/2} \xi \| \| (\mathcal{N}_++1)  \xi \| \, .
\end{equation}
Bounding the explicit terms on the r.h.s. of (\ref{eq:commut_quad_offdiag1}) one by one, we conclude the proof of the lemma. As an example, consider the first term on the first line. We can estimate
\[ \begin{split} \Big| \frac{2}{\sqrt{N}} \sum_{r,v} O_{r+v} \eta_r \gamma_v \langle \xi, b^*_{-r-v} b_v^* b_r \xi \rangle \Big| &\leq \frac{C}{\sqrt{N}} \sum_{r,v} |O_{r+v}| \| b_{-r-v} b_v \xi \| \| b_r \xi \|  \\ &\leq \frac{C}{\sqrt{N}} \| O \|_2 \| \cN^{1/2} _+ \xi \| \| \cN_+  \xi \| \end{split} \]
or, if $O \not \in \ell^2$,  
\[ \begin{split}  \Big| \frac{2}{\sqrt{N}} \sum_{r,v} O_{r+v} \eta_r \gamma_v \langle & \xi, b^*_{-r-v} b_v^* b_r \xi \rangle \Big| \\ \leq \; &\frac{C}{\sqrt{N}} \sum_{r,v} \frac{|O_{r+v}|}{|r+v|} \, |r+v|  \| b_{-r-v} b_v \xi \| \| b_r \xi \|  \\ \leq \; &\frac{C}{\sqrt{N}} \left[ \sum_r \frac{|O_r|^2}{r^2} \right]^{1/2}  \| \cK^{1/2}  \xi \| \| \cN_+ \xi \| .\end{split} \]
 
In contrast to the proof of \eqref{eq:quad-dia} and \eqref{eq:quad-dia2} (where we only used the $\ell^\infty$ norm of $w$ and of $w_\cdot / | \cdot |$), here we need the decay of the observable $O$ (because $\eta_r$ and $\| b_r \xi \|$ both decay in the same variable $r$ and in one of the factors arising from the Cauchy-Schwarz inequality only the observable $O_{r+v}$ provides decays in $v$). The last term on the r.h.s. of (\ref{eq:commut_quad_offdiag1}) can be bounded similarly. The other terms can be estimated using the $\ell^\infty$ norm of $O$ and the fact that $\gamma \in \ell^\infty$, $\eta , \nu \in \ell^2$, uniformly in $N$.
\end{proof}  

Applying Lemma \ref{lm:quad-A}, we obtain the following proposition.
\begin{prop}\label{prop:quad-A}
Recall the operator $\cT^{(2)}_{\cG_N}$, defined in (\ref{eq:cT_G_N}). Furthermore, from (\ref{eq:expansion_G_N}), we consider the operators 
\[ \begin{split} D &= \sum_{p\in \Lambda^*_+} \big[ \sqrt{|p|^4 + 2 (\widehat{V} (\cdot/N) * \widehat{f}_{N,\ell})_p \, p^2} -p^2 \big] \, a_p^* a_p \\ 
E &= - \frac{1}{N} \sum_{p,q \in \L^*_+} \big( \widehat{V} (q/N) + \widehat{V} ((q+p)/N) \big) \eta_q \big[ (\g_p^2 + \s_p^2) b_p^* b_p + \g_p \s_p (b_p^* b_{-p}^* + b_p b_{-p} ) \big]
\end{split} \]
Then, for every $0< \eps <1$, we have 
\begin{equation}\label{eq:eT2e} \pm e^{-A} \cT^{(2)}_{\cG_N} e^A  \leq \eps \cK + \frac{C}{\eps N} (\cH_N + \cN_+^2 + 1) \end{equation} 
and also 
\[ e^{-A} D e^A = D + \cE_D, \qquad e^{-A} E e^A = E + \cE_E \]
where
\[ \pm \cE_D, \cE_E \leq \eps \cN_+ + \frac{C}{\eps N} (\cN_+ + 1)^2 \]
\end{prop}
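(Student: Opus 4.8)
The plan is to prove Proposition \ref{prop:quad-A} by differentiating the conjugations $s \mapsto e^{-sA}(\,\cdot\,)e^{sA}$ along $s\in[0,1]$ and applying Gronwall's lemma, together with the a-priori bounds from Prop. \ref{prop:a_priori_A}. The three assertions are structurally similar: in each case the derivative of the conjugated operator is the conjugation of a commutator with $A$, and the commutator estimates needed are exactly those furnished by Lemma \ref{lm:quad-A} (for $D$ and $E$) and by a position-space variant (for $\cT^{(2)}_{\cG_N}$).

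First I would treat the off-diagonal and diagonal operators $E$ and $D$. Write $E = \cW_E + \cO_E + \cO_E^*$, where $\cW_E = -N^{-1}\sum_{p,q}(\widehat V(q/N)+\widehat V((q+p)/N))\eta_q(\gamma_p^2+\sigma_p^2)\,a_p^*a_p$ plus the small correction coming from $b_p^*b_p - a_p^*a_p$, and $\cO_E = -N^{-1}\sum_{p,q}(\widehat V(q/N)+\widehat V((q+p)/N))\eta_q\gamma_p\sigma_p\,b_p^*b_{-p}^*$. The coefficient $w_p = -N^{-1}\sum_q(\widehat V(q/N)+\widehat V((q+p)/N))\eta_q(\gamma_p^2+\sigma_p^2)$ satisfies $\|w\|_\infty \le C/N$ since $\sum_q|\widehat V(q/N)||\eta_q| \le \|\widehat V(\cdot/N)\|_2\|\eta\|_2 \le CN^{3/2}$ — wait, that gives $\|w\|_\infty\le CN^{1/2}$; more carefully $\sum_q|\widehat V(q/N)||\eta_q|\le C$ by splitting $|q|\le N$ and $|q|>N$ and using $|\eta_q|\le C|q|^{-2}$ together with $\|\widehat V(\cdot/N)\|_\infty\le C$, so $\|w\|_\infty\le C/N$. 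Similarly $\|O_E\|_2 \le C/N$ and $\sum_p |O_{E,p}|^2/p^2 \le C/N^2$. Then \eqref{eq:quad-dia}, \eqref{eq:quad-off1} give $|\langle\xi,[E,A]\xi\rangle| \le CN^{-3/2}\|\cN_+^{1/2}\xi\|\|(\cN_++1)\xi\| \le CN^{-1}\langle\xi,(\cN_++1)^2\xi\rangle$; combining with \eqref{eq:growthN} and integrating over $s$ yields $e^{-A}Ee^A = E + \cE_E$ with $\pm\cE_E \le CN^{-1}(\cN_++1)^2$, which is absorbed into $\eps\cN_+ + C\eps^{-1}N^{-1}(\cN_++1)^2$. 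For $D$: the coefficient $d_p = \sqrt{|p|^4 + 2(\widehat V(\cdot/N)*\widehat f_{N,\ell})_p p^2} - p^2$ behaves like $\le C$ for all $p$ (since $(\widehat V(\cdot/N)*\widehat f_{N,\ell})_p \le C$ and $\sqrt{a+b}-\sqrt a \le b/(2\sqrt a)$ with $a = |p|^4$), so $\|d\|_\infty \le C$ and also $d_p/|p| \le C/|p|$; applying \eqref{eq:quad-dia} (using $\|d\|_\infty\le C$) gives $|\langle\xi,[D,A]\xi\rangle| \le CN^{-1/2}\|\cN_+^{1/2}\xi\|\|(\cN_++1)\xi\|$, but this only yields $\pm\cE_D \le \eps\cN_+ + C\eps^{-1}N^{-1}(\cN_++1)^2$ after using Young's inequality, exactly the claimed bound.

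Next I would handle $e^{-A}\cT^{(2)}_{\cG_N}e^A$. The operator $\cT^{(2)}_{\cG_N}$ is a sum of terms of the form $\sum_p c_p(b_pb_{-p}+b_p^*b_{-p}^*)$ with $c_p = N^{-2}(\widehat V(\cdot/N)*\eta)_p(1+2\|\sigma\|_2^2)$ or $c_p = N^{-1}\lambda\, p^2\eta_p$ where $\lambda = N^{-1}\sum_q(2\sigma_q^2 + \gamma_q\sigma_q/\mu_q - 1)$ is a bounded scalar. For the first piece, $|(\widehat V(\cdot/N)*\eta)_p| \le C$, so $\|c\|_2 \le CN^{-2}\cdot (\#\{|p|\le N\})^{1/2} + \ldots \le CN^{-2}N^{3/2} = CN^{-1/2}$; plugging into \eqref{eq:quad-off1} gives $|\langle\xi,[\cO,A]\xi\rangle| \le CN^{-1}\|\cN_+^{1/2}\xi\|\|(\cN_++1)\xi\|$, which integrates to something $\le \eps\cK + C\eps^{-1}N^{-1}(\cH_N+\cN_+^2+1)$ after using \eqref{eq:growthN}. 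For the second, more singular piece $c_p \sim N^{-1}p^2\eta_p$, we have $|c_p| \le CN^{-1}$ (since $p^2|\eta_p|\le C$) but not $\ell^2$-summable, so we must use \eqref{eq:quad-off}: $\|c\|_\infty \le CN^{-1}$ and $\sum_p |c_p|^2/p^2 \le CN^{-2}\sum_p p^2\eta_p^2 \le CN^{-2}\cdot CN = CN^{-1}$, giving $|\langle\xi,[\cO,A]\xi\rangle| \le CN^{-1/2}\|\cK^{1/2}\xi\|\|(\cN_++1)\xi\| \le \eps\cK + C\eps^{-1}N^{-1}(\cN_++1)^2$ after Young, and by \eqref{eq:growthHN} the conjugated version is controlled by $\eps\cK + C\eps^{-1}N^{-1}(\cH_N+\cN_+^2+1)$. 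Summing the Gronwall estimates over $s\in[0,1]$ yields \eqref{eq:eT2e}.

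The main obstacle is the second piece of $\cT^{(2)}_{\cG_N}$, namely the term $\sim N^{-1}p^2\eta_p(b_pb_{-p}+\text{h.c.})$: its coefficient is not square-summable, so the naive bound \eqref{eq:quad-off1} fails, and one is forced to use \eqref{eq:quad-off}, which trades one power of $\cN_+$ for a factor $\cK^{1/2}$. Carrying this through the Gronwall argument requires controlling $e^{-sA}\cK e^{sA}$ in terms of $\cH_N + (\cN_++1)^2$, for which \eqref{eq:growthHN} is exactly designed; the care needed is to ensure that after differentiating the quadratic form along $s$, the resulting commutator $[\cK$-piece$,A]$ (which is genuinely cubic and contains $b^*b^*b^*$ and $b^*b^*b$ structures) is estimated with the kinetic energy landing on the right side of the Cauchy-Schwarz split, so that one does not lose a factor of $N$. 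Once the $s$-dependent Gronwall differential inequality $\varphi_\xi'(s) \le C\varphi_\xi(s) + C\langle\xi,(\cN_++1)^2\xi\rangle$ is established with $\varphi_\xi(s) = \langle\xi, e^{-sA}(\eps\cK + \cdots)e^{sA}\xi\rangle$, the conclusion is immediate.
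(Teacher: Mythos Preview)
Your strategy matches the paper's exactly: expand $e^{-A}Xe^A = X + \int_0^1 e^{-sA}[X,A]e^{sA}\,ds$, apply Lemma~\ref{lm:quad-A} to the commutator, and propagate with Prop.~\ref{prop:a_priori_A}. For $D$ your argument is correct and identical to the paper's. There are, however, two computational slips that matter to different degrees.

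First, for $E$: your claim $\sum_q|\widehat V(q/N)||\eta_q|\le C$ is false. In three dimensions $\sum_{|q|\le N}|q|^{-2}\sim N$, so the correct bound is $\sum_q|\widehat V(q/N)||\eta_q|\le CN$ (equivalently, $|(\widehat V(\cdot/N)*\eta)_p|\le CN$ from the scattering equation). This only gives $\|w\|_\infty\le C$ and $\|O_E\|_2\le C$, not the stronger bounds you state; but since \eqref{eq:quad-dia} and \eqref{eq:quad-off1} then yield $|\langle\xi,[E,A]\xi\rangle|\le CN^{-1/2}\|\cN_+^{1/2}\xi\|\|(\cN_++1)\xi\|$, the conclusion for $\cE_E$ survives unchanged.

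Second, and more seriously, for the first piece of $\cT^{(2)}_{\cG_N}$ you assert $|(\widehat V(\cdot/N)*\eta)_p|\le C$, which again is off by a factor $N$: the truth is $|(\widehat V(\cdot/N)*\eta)_p|\le CN$, so $|c_p|\le C/N$ (not $C/N^2$) and $\|c\|_2\le CN^{1/2}$, and then \eqref{eq:quad-off1} only gives an $O(1)$ bound on the commutator, which is useless. The fix is to treat this piece exactly as you treat the second one, via \eqref{eq:quad-off}: one checks $\|c\|_\infty\le C/N$ and $\sum_p|c_p|^2/p^2\le C/N$ (split $|p|\le N$ and $|p|>N$, using $\|(\widehat V(\cdot/N)*\eta)\|_2^2\le CN^5$ for the tail), which yields $|\langle\xi,[\cT^{(2)}_{\cG_N},A]\xi\rangle|\le CN^{-1}\|\cK^{1/2}\xi\|\|(\cN_++1)\xi\|$. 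This is precisely what the paper does: it combines both pieces into a single coefficient $O_p$ and applies \eqref{eq:quad-off} once. You should also record the direct bound $|\langle\xi,\cT^{(2)}_{\cG_N}\xi\rangle|\le CN^{-1/2}\|\cK^{1/2}\xi\|\|(\cN_++1)\xi\|$, since the statement asks for control of $e^{-A}\cT^{(2)}_{\cG_N}e^A$ itself, not merely of the increment.
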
 

\begin{proof}
First of all, we observe that 
\[ \Big| \sqrt{|p|^4 + 2 (\widehat{V} (\cdot/N) * \widehat{f}_{N,\ell})_p \, p^2} -p^2 \Big| \leq C \]
and also that 
\[ \begin{split} \Big| \frac{1}{N} \sum_q \big( \widehat{V} (q/N) + \widehat{V} ((q+p)/N) \big) \eta_q (\g_p^2 + \s_p^2) \Big| &\leq C \\
\Big| \frac{1}{N} \sum_q \big( \widehat{V} (q/N) + \widehat{V} ((q+p)/N) \big) \eta_q  \g_p \s_p \Big| &\leq C |p|^{-2} \end{split}  \]
for all $p \in \Lambda_+^*$. Thus, we can apply Lemma \ref{lm:quad-A} (and in particular (\ref{eq:quad-dia}) and (\ref{eq:quad-off1})) to estimate the commutators $[ D,A] , [E,A]$. Writing 
\[ \cE_D  = \int_0^1 ds \, e^{-sA} [D,A] e^{sA}  \]
we can therefore bound
\[ |\langle \xi, \cE_D \xi \rangle | \leq \frac{C}{\sqrt{N}} \int_0^1 ds \, \| \cN_+^{1/2} e^{-sA} \xi \| \| (\cN_+ + 1) e^{-sA} \xi \|.  \]
From Prop. \ref{prop:a_priori_A}, we conclude that 
\[ |\langle \xi, \cE_D \xi \rangle | \leq \frac{C}{\sqrt{N}} \| \cN_+^{1/2} \xi \| \| (\cN_+ + 1) \xi \| + \frac{C}{N} \| (\cN_+ + 1) \xi \|^2. \]
Similarly, we can also bound $\delta_E$. 

To show (\ref{eq:eT2e}), we rewrite 
\[ \cT^{(2)}_{\cG_N} = \sum_{p \in \L^*_+} O_p (b_p b_{-p} + b_{-p}^* b_p^*) \]
with
\[ O_p = \frac{(1+2\| \sigma \|_2^2)}{N^2} \, (\widehat{V} (\cdot /N) * \eta)_p + \frac{1}{2N} \sum_{q \in \Lambda^*_+} (2\sigma_q^2 + \frac{\gamma_q \sigma_q}{\mu_q} - 1) \,  p^2 \eta_p \]
and we observe that $|O_p| \leq C/N$ for all $p\in \Lambda^*_+$ and that 
\begin{equation}\label{eq:O-ass0} \sum_{p \in \Lambda^*_+} \frac{|O_p|^2}{p^2} \leq \frac{C}{N}. \end{equation} 
This implies that 
\[  \big| \langle \xi, \cT^{(2)}_{\cG_N} \xi \rangle \big| \leq \left[ \sum_p \frac{|O_p|^2}{p^2} \right]^{1/2} \| \cK^{1/2} \xi \| \| (\cN_+ + 1)^{1/2} \xi \| \leq \frac{C}{\sqrt{N}} \| \cK^{1/2} \xi \| \| (\cN_+ + 1) \xi \| \]
and, by Lemma \ref{lm:quad-A}, also that 
\[  \big| \langle \xi, e^{-A} \big[ \cT^{(2)}_{\cG_N} , A \big] e^A \xi \rangle \big|  \leq \frac{C}{N} \| \cK^{1/2} e^A \xi \| \| (\cN_+ + 1) e^A \xi \|. \] 
Writing 
\[ e^{-A} \cT^{(2)}_{\cG_N} e^A = \cT_{\cG_N}^{(2)} + \int_0^1 ds \, e^{-sA}  \big[ \cT^{(2)}_{\cG_N} , A \big] e^{sA} \]
and applying also Prop. \ref{prop:a_priori_A}, we arrive at (\ref{eq:eT2e}). 
\end{proof}

\subsection{Control of quartic terms} 

Next, we control the quartic error term $\cT_{\cG_N}^{(4)}$, defined in (\ref{eq:cT_G_N}). To this end, we will make use of the following lemma. 
\begin{lemma} \label{lemma:commut_quartic_A_1}
Let $A$ be defined as in \eqref{eq:def_A}. We consider a quartic operator of the form 
\begin{equation*} 
\begin{split}
    \mathcal{D}= \;&\frac{1}{N} \sum_{\substack{r,p,q\in \Lambda_+^*\\r+p+q\ne0}} D_{r,p,q}\, b^*_{r+p} b^*_{q} b^*_{-p} b^*_{-r-q}+\mathrm{h.c.} 
\end{split}
\end{equation*}
where we assume that, for $r,p, q \in \L^*_+$, the coefficients $D_{r,p,q}$ are so that
\begin{equation}\label{eq:D-ass} \begin{split} 
\min \Big\{ \sum_{p,q} \sup_r |D_{r,p,q}|^2 + \frac{1}{N} &\sum_{p,q} \left[ \sup_z \sum_{r} \frac{|D_{r,p,q}|^2}{(r+z)^2} \right] + \frac{1}{N^2} \sum_{p,q} \left[ \sup_z \sum_r \frac{|D_{r,p,q}|}{(r+z)^2} \right]^2 , \\ &\hspace{5.2cm} (r \leftrightarrow p) , (r \leftrightarrow q) \Big\} \leq C \end{split}  \end{equation} 
Here, $(r \leftrightarrow p)$ means the same quantity, but inverting the role of the momenta $r$ and $p$ (ie. in the first term, we sum over $r,q$ and we take the supremum over $p$, and similarly for the other terms). This assumption reflects the fact that (in applications) one of the momenta $r,p,q$ is the argument of the potential $\widehat{V} (./N)$ while the dependence on the other two momenta is square-summable. Then there exists $C > 0$ such that 
\[ | \langle \xi, \big[\cD , A \big] \xi \rangle | \leq \frac{C}{\sqrt{N}} \| \cN^{1/2}_+ \xi \| \| (\cN_+ + 1) \xi \| +\frac{C}{N} \langle \xi, (\cK+ 1) (\cN_+ + 1)^3 \xi \rangle  \]
for all $\xi \in \cF_{\perp}^{\leq N}$. 
\end{lemma}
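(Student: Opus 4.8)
The plan is to expand the commutator $[\cD,A]$ using the explicit forms of $\cD$ and of $A=A_\gamma+A_\nu$ from \eqref{eq:def_A}, and then to estimate the resulting terms by Cauchy--Schwarz in a way that only ever pairs a momentum carried by the potential on one side against a square-summable momentum on the other. Since $\cD$ is purely a product of four creation operators (plus its h.c.), and $A$ has one creation operator less than two (the terms $b^*_{r+v}b^*_{-r}b_v$ and $b^*_{r+v}b^*_{-r}b^*_{-v}$), the commutator $[\cD,A]$ contains: (a) contractions of an annihilation operator of $A$ (coming from $A_\gamma$ or from the h.c. of $\cD$) against a creation operator of the other factor, which reduce the total operator degree; and (b) the ``type-(b)'' terms where no contraction occurs but the commutation relations \eqref{eq:CCR_b} produce a genuine $\delta$ from $[b,b^*]$, together with the small corrections $-\tfrac1N a^* a$ and $-\tfrac{\cN_+}N$. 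I would organize the bookkeeping exactly as in the proof of \cite[Prop. 8.2]{BBCS}: first write $[\cD, A_\gamma + A_\nu]$ as a finite sum of explicitly indexed terms, each a monomial of degree $4$, $5$ or $6$ in $b,b^*$, then bound each one.

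First I would treat the degree-reducing (contraction) terms: when an annihilation operator of $\cD^*$ meets a creation operator of $A$, or an annihilation operator of $A_\gamma$ meets a creation operator of $\cD$, we produce an operator with at most five $b$'s, where one momentum label is ``locked'' by the $\delta$ from the contraction. After relabelling, one is left with a sum of the form $\tfrac1{N^{3/2}}\sum D_{r,p,q}\,\eta_\bullet\,\gamma_\bullet\, (\text{5 } b\text{'s})$; applying Cauchy--Schwarz, one factor collects five $b$'s bounded by $(\cN_++1)^{5/2}$ and the other factor is $\big[\sum |D|^2|\eta|^2\big]^{1/2}$, which is $O(1)$ because $\eta\in\ell^2$ provides decay in the variable not controlled by assumption \eqref{eq:D-ass}. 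These contribute to the $\tfrac{C}{\sqrt N}\|\cN_+^{1/2}\xi\|\|(\cN_++1)\xi\|$ bound (actually with a better power of $N$). The type-(b) terms involving $A_\gamma$ (i.e.\ $[\cD, A_\gamma]$ with no contraction, so the $\delta$ from $[b_v, b^*_w]$) produce degree-$5$ monomials $b^*b^*b^*b^*b$ or $b^*b^*b^*bb$ with two locked labels; these again go by Cauchy--Schwarz against $\eta$-decay, using \eqref{eq:D-ass} for the $D$-sum. For the type-(b) terms involving $A_\nu$, we get degree-$6$ monomials $b^*b^*b^*b^*b^*b^*$ with coefficient $\tfrac1{N^{3/2}}D_{r,p,q}\nu_{r,v}$; here $|\nu_{r,v}|\le C|r|^{-2}|v|^{-2}$ supplies square-summability in $v$ and in one of $\{r,p,q\}$, so after Cauchy--Schwarz (six $b$'s bounded by $(\cN_++1)^3$) one uses the third term in \eqref{eq:D-ass} — the $\ell^1$-in-$r$ quantity $\sup_z\sum_r |D_{r,p,q}|/(r+z)^2$ — to sum the potential against $\nu$. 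These give the $\tfrac{C}{N}\langle\xi,(\cK+1)(\cN_++1)^3\xi\rangle$ contribution, the kinetic energy entering only in those variants where the combinatorics force us to extract $|r+v|^2$ or $|p|^2$ from the $b$-product (as in \eqref{eq:bound[K,A]2}), i.e.\ when the square-summability in $r$ alone is not available and one must use the weaker $\tfrac1N\sup_z\sum_r|D|^2/(r+z)^2$ bound from \eqref{eq:D-ass}.

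Finally I would handle the small ``non-canonical'' remainders: every time we use \eqref{eq:CCR_b} to commute a $b$ past a $b^*$ we also generate $-\tfrac1N a^*_q a_p$ and $-\tfrac{\cN_+}N$ corrections; these are down by a factor $1/N$ and are bounded by the same Cauchy--Schwarz scheme (they land inside the $O(N^{-1})$ terms, never needing the $\cK$-estimate). One should also double-check that the condition $r+p+q\ne0$ in the definition of $\cD$, together with the constraint $r+v\ne0$ in $A$, prevents any further self-contraction within $\cD$ itself and keeps all intermediate sums finite. The main obstacle is purely organizational: the commutator $[\cD,A]$ unfolds into a fairly long list of monomials, and for each one must decide which of the three equivalent bounds in \eqref{eq:D-ass} to invoke — in particular which momentum to treat as ``the potential's argument'' — and whether $\eta$-decay, $\nu$-decay, or a factor of kinetic energy is the right tool to make the remaining momentum sum converge; there is no single serious estimate, but getting the pairing right in the $\nu$-terms (so that $\sum_v|\nu_{r,v}|^2<\infty$ and the $r$-sum is controlled by an $\ell^1$ bound on $D$) is the delicate point, and it is exactly why assumption \eqref{eq:D-ass} is stated with the symmetrized minimum over $r\leftrightarrow p\leftrightarrow q$.
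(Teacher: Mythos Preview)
Your overall plan—expand $[\cD,A]$ into explicit monomials and bound each by Cauchy–Schwarz using \eqref{eq:D-ass}—is the paper's, but the estimates you propose for the quintic and cubic pieces are both wrong under the stated hypothesis, and the source of the $N^{-1/2}$ term is misidentified. For the quintic terms (five $b^*$'s, arising from a single contraction of an annihilator in $A$ against one $b^*$ in $\cD$), you propose putting all five operators on one side and $\big[\sum|D_{r,p,q}|^2|\eta_s|^2\big]^{1/2}$ on the other. But \eqref{eq:D-ass} controls only $\sum_{p,q}\sup_r|D|^2$, not $\sum_{r,p,q}|D|^2$, which in the applications (one variable carries $\widehat V(\cdot/N)$) is of order $N^3$; your bound would then be $O(1)$, not $O(N^{-1/2})$. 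The paper instead inserts a kinetic weight $(r+z)^2$ into the Cauchy–Schwarz: one factor involves $\sum_{p,q}\sup_z\sum_r|D|^2/(r+z)^2\le CN$ from the second quantity in \eqref{eq:D-ass}, the other involves $\|\cK^{1/2}(\cN_++1)\xi\|$, and the net rate is $N^{-1}$. These quintic terms are precisely what produce the $\tfrac{C}{N}\langle\xi,(\cK+1)(\cN_++1)^3\xi\rangle$ contribution.

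The $\tfrac{C}{\sqrt N}\|\cN_+^{1/2}\xi\|\|(\cN_++1)\xi\|$ term comes instead from the \emph{cubic} monomials $b^*_r b^*_q b^*_{-r-q}$ produced by double contractions (both annihilators in the hermitian conjugate of $A_\gamma$ hitting two $b^*$'s of $\cD$), with coefficient proportional to $D_{r,p,q}(\eta_p+\eta_{p+r})$. When the ``potential variable'' in $D$ is $p$—the same momentum already absorbed by $\eta$—no remaining operator carries $p$, so one cannot trade a kinetic weight for summability in $p$; the paper then invokes the third quantity in \eqref{eq:D-ass}, namely $\sum_{r,q}\big(\sup_z\sum_p|D_{r,p,q}|/(p+z)^2\big)^2\le CN^2$, which combined with $|\eta_p|\le C/p^2$ gives exactly the $N^{-1/2}$ rate. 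This is the delicate step and the reason the hypothesis is written as a minimum over the three roles of $r,p,q$. Finally, there are no degree-$6$ monomials: $[b^*_p,b^*_q]=0$ exactly by \eqref{eq:CCR_b}, so $[(b^*)^4,(b^*)^3]$ vanishes and $[\cD,A_\nu]$ yields only degree-$5$ and degree-$3$ terms, handled as above with $|\nu_{s,w}|\le C|s|^{-2}|w|^{-2}$ in place of $|\eta_s|$.
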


\begin{proof}
We decompose 
\[ \begin{split} 
\big[ D, A \big] = \; &\frac{1}{N^{3/2}} \sum_{r,p,q,s,w} D_{r,p,q}  \, \eta_s \gamma_w \, b_{s+w}^* b_{-s}^* \big[ b_{r+p}^* b^*_q b_{-p}^* b^*_{-r-q} , b_w \big]+\mathrm{h.c.} \\ &-\frac{1}{N^{3/2}} \sum_{r,p,q,s,w} D_{r,p,q} \, \eta_s \gamma_w b_w^* \big[ b_{r+p}^* b_q^* b_{-p}^* b_{-r-q}^* , b_{-s} b_{s+w} \big]+\mathrm{h.c.} \\ &- \frac{1}{N^{3/2}} \sum_{r,p,q,s,w} D_{r,p,q} \, \nu_{s,w} \big[ b^*_{r+p} b^*_q b_{-p}^* b_{-r-q}^*, b_{-w} b_{-s} b_{s+w} \big]+\mathrm{h.c.} \\ =: \; & \big(\text{I} + \text{II} + \text{III}\big)+\mathrm{h.c.} \end{split} \]
After a long but straightforward computation based on the commutation relations (\ref{eq:CCR_b}), we find 
\begin{equation}\label{eq:quarI+II} \begin{split} \text{I} = \; &- \frac{1}{N^{3/2}} \sum_{r,p,q,s} D^\text{I}_{r,p,q} \gamma_{r+p} \eta_s \,   b_{s+p+r}^* b_q^* b_{-s}^* b_{-p}^* b_{-r-q}^* + \cE_1 \\ 
\text{II} = \; &\frac{1}{N^{3/2}} \sum_{r,p,q,w}  D^{\text{II},1}_{r,p,q} \, (\eta_q + \eta_{q-w}) \gamma_w \,  b_w^* b_{p+r}^* b_{-r-q}^* b_{-p}^* b_{w-q} \\ &+\frac{1}{N^{3/2}} \sum_{r,p,q} D^{\text{II},2}_{r,p,q} \, (\eta_p + \eta_{p+r}) \gamma_r b_r^* b_q^* b_{-r-q}^* + \cE_2 \end{split} \end{equation} 
with the coefficients 
\begin{equation}\label{eq:Drpq}  \begin{split} 
D^{\text{I}}_{r,p,q} &= D_{r,p,q} + D_{-r, q+r, p+r} + D_{r, -p-r, q} + D_{-r, -q, -p} \, ,  \\ 
D^{\text{II},1}_{r,p,q} &= D_{r,p,q} + D_{-r, q+r, p+r} + D_{r,-q,-p} + D_{-r, -q, -p} + D_{r,p, -q-r} \, ,  \\ D^{\text{II},2}_{r,p,q} &= D_{r,p,q} + D_{r+p+q, -q, -p} + D_{p-q, r+q, q} + D_{q-p, p, r+p} + D_{-r, -q, -p} + D_{-r-p-q, p,q} \end{split} \end{equation} and
\begin{equation} \label{eq:quarIII} \begin{split} 
\text{III} = \; &\frac{1}{N^{3/2}} \sum_{p,q,r,s} D^{\text{III},1}_{r,p,q}  \, \big(\nu_{s,-q} + \nu_{s,q-s} + \nu_{q,-s} \big) \, b_{r+p}^* b_{-p}^* b_{-r-q}^* b_{-s} b_{s-q} \\ &+ \frac{1}{N^{3/2}} \sum_{r,p,q}  D^{\text{III},2}_{r,p,q}  \, \nu_{p+r,-q-r} b_{-p}^* b_q^* b_{q-p} \\ &+\frac{1}{N^{3/2}} \sum_{p,q,r} D^{\text{III},3}_{r,p,q} \, \big( \nu_{p-q,q+r} + \nu_{q+r,p-q} \big) b_q^* b_{-p}^* b_{q-p} \\
&+\frac{1}{N^{3/2}} \sum_{r,p,q} D^{\text{III},4}_{r,p,q} \, \big( \nu_{p-q,q+r}+ \nu_{q+r,p-q}\big) b_q^* b_{-p}^* b_{q-p} + \cE_3 \end{split} \end{equation} 
with 
\[ \begin{split} D^{\text{III},1}_{r,p,q} = \; & D_{r,p,q} + D_{-r, q+r, p+r} + D_{-r, -q,-p} + D_{r,p,-q-r}  \, \\
D^{\text{III},2}_{r,p,q} = \; &D_{r,p,q} + D_{r,p,-q-r} + D_{r, -p-r, -q-r} + D_{r, -p-r, q} + D_{-r, -q, p+r}  \\ &+ D_{-r, q+r, p+r} + D_{-r,-q, -p} + D_{-r, q+r, -p} + D_{p-q, r+q, q}  + D_{p-q, -r-p, -p} \\ &+ D_{q-p, p, p+r} + D_{q-p, -q, -q-r}   \, \\
D^{\text{III},3}_{r,p,q} = \; &  D_{r,p,q}+ D_{r,p, -q-r} + D_{p-q, q+r, q} + D_{p-q, -p-r, q} + D_{r,-p-r, -q-r} + D_{r, -p-r, q}  \, \\ 
D^{\text{III},4}_{r,p,q} = \; & D_{r,p,q}+ D_{r,p, -q-r} + D_{q-p, p, p+r} + D_{q-p, p, -q-r} + D_{r,-p-r, -q-r}  + D_{r, -p-r, q}   \, . \end{split} \]
In (\ref{eq:quarI+II}), (\ref{eq:quarIII}), the error terms $\cE_1, \cE_2, \cE_3$ are produced by the fact that the commutation relations (\ref{eq:CCR_b}) are not precisely canonical and can be controlled by 
\[ \pm \cE_1, \pm \cE_2, \pm \cE_3 \leq \frac{C}{N} (\cK+1) (\cN_+ + 1)^3 .\]

To bound the first term in (\ref{eq:quarI+II}), we can use Cauchy-Schwarz. Let us consider for example the contribution $\text{I}_1$, arising from the first term in the identity (\ref{eq:Drpq}) for $D^\text{I}_{r,p,q}$. We find 
\[ \begin{split} \big| \langle \xi, \text{I}_1 \,  \xi \rangle \big| &\leq \frac{C}{N^{3/2}} \| (\cN_+ + 1) \xi \|  \left[ \sum_{r,p,q,s} \frac{|D_{r,p,q}|^2}{(r+q)^2} \eta_s^2  \right]^{1/2} \\ &\hspace{2cm} \times \left[ \sum_{r,p,q,s} (q+r)^2 \, \| b_q b_{-s} b_{-p} b_{-r-q} (\cN_+ + 1)^{-1/2} \xi \|^2 \right]^{1/2} \\ &\leq \frac{C}{N^{3/2}}  \left[ \sum_{p,q} \sup_z \sum_r \frac{|D_{r,p,q}|^2}{(r+z)^2} \right]^{1/2} \| (\cN_+ + 1) \xi \| \| \cK^{1/2} (\cN_+ + 1) \xi \|. \end{split} \]
With different choices of the weight in the Cauchy-Schwarz inequality, we can exchange the role of the labels $r,p,q$. Proceeding analogously for the terms arising from the other contributions to $D^\text{I}_{r,p,q}$, we arrive at 
\[ \begin{split}  \big| \langle \xi, \text{I} \, \xi \rangle \big| &\leq \frac{C}{N^{3/2}} \min \Big\{ \sum_{p,q} \sup_z \sum_{r} \frac{|D_{r,p,q}|^2}{(r+z)^2} , \sum_{r,q} \sup_z \sum_p \frac{|D_{r,p,q}|^2}{(p+z)^2} , \sum_{r,p} \sup_z \sum_q \frac{|D_{r,p,q}|^2}{(q+z)^2}
 \Big\}^{1/2}  \\ & \hspace{9cm} \times \langle \xi, (\cK+1) (\cN_+ + 1)^2 \xi \rangle \\ &\leq \frac{C}{N} \langle \xi, (\cK +1) (\cN_+ + 1)^2 \xi \rangle\end{split} \]
where we applied the assumption (\ref{eq:D-ass}). Also the other quintic terms, in (\ref{eq:quarI+II}) and in (\ref{eq:quarIII}), can be handled similarly. Let us now consider the cubic terms. The contribution to the cubic term on the last line of (\ref{eq:quarI+II}) arising from the first term in the expression for $D^{\text{II},2}_{r,p,}$ in (\ref{eq:Drpq}) can be bounded by 
\[ \begin{split}  |\langle \xi , \text{II}_2 \xi \rangle | &\leq \frac{C\| (\cN_+ + 1)^{1/2} \xi \|}{N^{3/2}} \left[ \sum_{r,p,q} \frac{|D_{r,p,q}|^2}{r^2} \right]^{1/2} \left[ \sum_{r,,p,q} (\eta_p^2 + \eta_{p+r}^2)  r^2 \| b_r b_q \xi \|^2 \right]^{1/2} \\ &\leq \frac{C}{N^{3/2}}  \left[ \sum_{r,p,q} \frac{|D_{r,p,q}|^2}{r^2} \right]^{1/2}  \| (\cN_+ + 1)^{1/2} \xi \| \| \cK^{1/2} (\cN_+ + 1)^{1/2} \xi \|. \end{split} \]
Analogously, the estimate also holds if we replace $r^{-2}$ with $q^{-2}$. But we cannot replace $r^{-2}$ with $p^{-2}$. So, we proceed slightly differently, using $|\eta_p| \leq C / p^2$ to bound 
\[  \begin{split}  |\langle \xi , \text{II}_2 \xi \rangle | &\leq \frac{C\| (\cN_+ + 1) \xi \|}{N^{3/2}} \left[ \sum_{r,q}  \left( \sup_z \sum_p \frac{|D_{r,p,q}|}{(p+z)^2} \right)^{2} \right]^{1/2} \left[ \sum_{r,q}  \| b_r b_q (\cN_+ + 1)^{-1/2} \xi \|^2 \right]^{1/2} \\ &\leq \frac{C}{N^{3/2}} \left[ \sum_{r,q}  \left(  \sup_z \sum_p \frac{|D_{r,p,q}|}{(p+z)^2} \right)^{2} \right]^{1/2}  \| (\cN_+ + 1) \xi \| \| \cN^{1/2} _+ \xi \|. \end{split} \]
From the assumption (\ref{eq:D-ass}), we conclude that 
\[ |\langle \xi, \text{II}_2 \xi \rangle | \leq \frac{C}{N} \langle \xi, (\cK +1) (\cN_+ + 1) \xi \rangle + \frac{C}{\sqrt{N}} \| \cN^{1/2}_+ \xi \| \| (\cN_+ + 1) \xi \|. \]
Also the other contributions to the cubic term in (\ref{eq:quarI+II}) can be handled similarly. The second term $\text{III}_2$ on the r.h.s. of (\ref{eq:quarIII}) can be estimated by 
\[ \begin{split} |\langle \xi, \text{III}_2 \xi \rangle | &\leq \frac{C}{N^{3/2}}  \sum_{r,p,q} |D^{\text{III},2}_{r,p,q}| \frac{|\eta_{p+r}|}{|q+r|^{2}} \| b_{-p} b_q \xi \| \| b_{q-p} \xi \| \\ &\leq \frac{C}{N^{3/2}} \min \Big\{ \sum_{q}  \sup_{r,p}  |D^{\text{III},2}_{r,p,q}|^2 , \sum_p \sup_{r,q}  |D^{\text{III},2}_{r,p,q}|^2 \Big\}^{1/2} \, \| \cN_+ \xi \| \| \cN_+^{1/2} \xi \| \\ &\leq \frac{C}{N} \|  \cN_+ \xi \| \| \cN_+^{1/2} \xi \|. \end{split} \]
As for the last two terms on the r.h.s. of (\ref{eq:quarIII}), they can be controlled similarly as $\text{II}_2$. We skip the details.
\end{proof}

We can now control the conjugation of the quartic component of $\cT_{\cG_N}$. 
\begin{prop} \label{prop:T4-bd} 
Let $\cT_{\cG_N}^{(4)}$ be defined as in (\ref{eq:cT_G_N}) and $A$ as in (\ref{eq:def_A}). Then, for every $\eps > 0$ small enough, we have   
\[ \pm e^{-A} \cT^{(4)}_{\cG_N} e^A  \leq \eps \cK + \frac{C}{\eps N} (\cH_N + 1) (\cN_+ + 1)^4 \]
\end{prop}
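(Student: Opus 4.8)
The plan is to write $e^{-A}\cT^{(4)}_{\cG_N}e^A=\cT^{(4)}_{\cG_N}+\int_0^1ds\,e^{-sA}[\cT^{(4)}_{\cG_N},A]e^{sA}$ and to bound the two contributions separately, using Lemma \ref{lemma:commut_quartic_A_1} for the commutator and Prop. \ref{prop:a_priori_A} to conjugate the resulting estimates through $e^{sA}$. First I would prove the \emph{static} bound $\pm\cT^{(4)}_{\cG_N}\le CN^{-1}(\cN_++1)^2$ (plus, for the term with $(\widehat V(\cdot/N)\ast\eta)_p$, a contribution controlled in position space by $\cV_N$, see below). The three summands of \eqref{eq:cT_G_N} are estimated by Cauchy--Schwarz; the point is that the momentum--space coefficients $\sigma_p$, $\gamma_q\sigma_q$ and $\gamma_q\sigma_q-\tfrac{\eta_q}{2}-\tfrac{\sigma_q^2}{2\mu_q}$ (the last being $\mathcal{O}(|q|^{-4})$ by Taylor expanding $\cosh,\sinh$ and using $|\tau_q|\le C|q|^{-4}$ from Lemma \ref{lemma:coefficients}) are all in $\ell^2$, that $\widehat V(\cdot/N)$ and $(\widehat V(\cdot/N)\ast\widehat f_{N,\ell})_p$ are uniformly bounded, and that $(\widehat V(\cdot/N)\ast\eta)_p$ — which is only $\mathcal{O}(N)$ in $\ell^\infty$ and $\mathcal{O}(N^{5/2})$ in $\ell^2$, hence useless on its own — must be handled by inserting the scattering equation \eqref{scattering}, writing $(\widehat V(\cdot/N)\ast\eta)_p=-2Np^2\eta_p-N\widehat V(p/N)+(\text{remainder})$ with $|p^2\eta_p|+|\widehat V(p/N)|\le C$ and the remainder $N$ times a bounded, rapidly decaying sequence, and then estimating the corresponding $\cV_N$-like operator $\tfrac1{N^2}\sum_p(\widehat V(\cdot/N)\ast\eta)_p b^*_pb^*_{-p}\cdot(\cdots)$ in position space using that the kernel $N^3V(N\cdot)w_{N,\ell}$ is concentrated, with $\int N^3V(N(x-y))\,dy=\widehat V(0)$ and $\|N^3V(N\cdot)w_{N,\ell}\|_{L^1}\le C$. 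In each case this yields contributions $\le CN^{-1}(\cN_++1)^2\le CN^{-1}(\cH_N+1)(\cN_++1)^4$ and $\le CN^{-1/2}\|\cV_N^{1/2}(\cN_++1)^{1/2}\xi\|\,\|\cN_+^{1/2}\xi\|$, which after Young's inequality (splitting off $\eps\cN_+\le\eps\cK$) is again of the desired form.

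For the commutator I would split $\cT^{(4)}_{\cG_N}$ into a \emph{generic} part and \emph{diagonal} parts. The first summand of \eqref{eq:cT_G_N}, restricted to $r\neq 0$, becomes — after the change of variables $q\mapsto q+r$ and a reordering of the (commuting) creation operators — exactly the operator $\cD$ of Lemma \ref{lemma:commut_quartic_A_1} with $D_{r,p,q}=\tfrac12\widehat V(r/N)\sigma_p\sigma_q$; assumption \eqref{eq:D-ass} then holds in its first variant, since $\sum_{p,q}\sup_r|D_{r,p,q}|^2\le\tfrac14\|\widehat V(\cdot/N)\|_\infty^2\|\sigma\|_2^4<\infty$ by Lemma \ref{lemma:coefficients}. (The naive labeling, with coefficient $\widehat V(r/N)\sigma_p\sigma_{q+r}$, does \emph{not} satisfy \eqref{eq:D-ass}, so this re-parametrization is essential.) The remaining contributions — the $r=0$ part of the first summand and the two summands with operator structure $b^*_pb^*_{-p}b^*_qb^*_{-q}$ — are products $\cO_1\cO_2$ of quadratic expressions $\cO_i=\sum_p c^{(i)}_pb^*_pb^*_{-p}$, whose commutators with $A$ I would control via the Leibniz rule $[\cO_1\cO_2,A]=\cO_1[\cO_2,A]+[\cO_1,A]\cO_2$ together with the quadratic estimates of Lemma \ref{lm:quad-A}, again inserting \eqref{scattering} for the $(\widehat V(\cdot/N)\ast\eta)_p$ factor so that only coefficients $c$ with finite $\|c\|_\infty+\sum_p|c_p|^2/|p|^2$ (after the $1/N$ rescaling) occur. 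Altogether this gives $|\langle\xi,[\cT^{(4)}_{\cG_N},A]\xi\rangle|\le CN^{-1/2}\|\cN_+^{1/2}\xi\|\,\|(\cN_++1)\xi\|+CN^{-1}\langle\xi,(\cK+1)(\cN_++1)^3\xi\rangle$.

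To conclude, I would conjugate this estimate through $e^{sA}$: Prop. \ref{prop:a_priori_A} (in particular \eqref{eq:growthN}, \eqref{eq:growthNimproved} and \eqref{eq:growthHN}, using $\cN_+\le\cK\le\cH_N$ so that $(\cN_++1)^5\le(\cH_N+1)(\cN_++1)^4$) turns the right-hand side into $CN^{-1/2}\|\cN_+^{1/2}\xi\|\,\|(\cN_++1)\xi\|+CN^{-1}\langle\xi,(\cH_N+1)(\cN_++1)^4\xi\rangle$; integrating over $s\in[0,1]$, adding the static bound, and applying Young's inequality in the form $CN^{-1/2}\|\cN_+^{1/2}\xi\|\,\|(\cN_++1)\xi\|\le\eps\langle\xi,\cK\xi\rangle+C\eps^{-1}N^{-1}\langle\xi,(\cN_++1)^2\xi\rangle$ yields the claimed bound.

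The step I expect to be the main obstacle is the second paragraph: recognizing the correct re-parametrization that makes Lemma \ref{lemma:commut_quartic_A_1} applicable to the pure-$\sigma$ summand, and then the rather long bookkeeping of the many contractions produced when commuting the diagonal quartic pieces — and in particular the $(\widehat V(\cdot/N)\ast\eta)_p$-term — through $A$, where, as already for $\cT^{(4)}_{\cG_N}$ itself, one must first substitute the scattering equation (the $\ell^2$-norm of $\widehat V(\cdot/N)\ast\eta$ being of order $N^{5/2}$) and track particle numbers carefully so that the kinetic energy enters on the right-hand side only with the small coefficient $\eps$, rather than as $\eps\cV_N$. Some care with the operator-valued distributions in the position-space Cauchy--Schwarz arguments is also needed — one should never integrate the square of a creation-operator distribution, but instead pair each $\check b^*$ with a $\check b$ before integrating.
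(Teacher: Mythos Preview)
Your overall strategy matches the paper's: Duhamel expansion, bound the static term, control the commutator via Lemma~\ref{lemma:commut_quartic_A_1}, and conjugate through $e^{sA}$ with Prop.~\ref{prop:a_priori_A}. The differences lie in the execution, and the paper's route is considerably shorter.

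The paper treats \emph{all three} summands of $\cT^{(4)}_{\cG_N}$ uniformly via Lemma~\ref{lemma:commut_quartic_A_1}. The two ``diagonal'' summands, with operator structure $b^*_pb^*_{-p}b^*_qb^*_{-q}$, are simply quartic operators of the required form with $D_{r,p,q}$ containing a factor $\delta_{r,0}$; assumption~\eqref{eq:D-ass} is then checked using the $(r\leftrightarrow p)$ variant. In particular, for the summand with $(\widehat V(\cdot/N)\ast\eta)_p$, one uses only that $N^{-1}|(\widehat V(\cdot/N)\ast\eta)_p|\le C$ in $\ell^\infty$ and $\|(\widehat V(\cdot/N)\ast\eta)\|_2\le CN^{5/2}$: these alone yield $\sum_{r,q}\sup_p|D|^2\le C$, $N^{-1}\sum_{r,q}\sup_z\sum_p|D|^2/(p+z)^2\le C$, and $N^{-2}\sum_{r,q}[\sup_z\sum_p|D|/(p+z)^2]^2\le C$. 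So no scattering equation, no position-space $\cV_N$-estimates, and no Leibniz splitting are needed. The same goes for the static bound: the paper obtains $|\langle\xi,\cT^{(4)}_{\cG_N}\xi\rangle|\le CN^{-1/2}\|\cK^{1/2}\xi\|\,\|(\cN_++1)^{3/2}\xi\|$ directly, using $\sum_p|(\widehat V(\cdot/N)\ast\eta)_p|^2/p^2\le CN^3$ rather than passing to position space.

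Your observation about the first summand is correct, and in fact sharpens the paper: the coefficient $\widehat V(r/N)\sigma_p\sigma_{q+r}$ as literally written does \emph{not} satisfy~\eqref{eq:D-ass} in any of the three variants, because the $q$-dependence is entangled with $r$. Permuting the commuting creation operators (swapping the roles of $q$ and $-q-r$) gives the equivalent coefficient $\tfrac12\widehat V(r/N)\sigma_p\sigma_q$, which does satisfy~\eqref{eq:D-ass}. This relabeling is what the paper implicitly relies on.

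Your Leibniz route for the diagonal pieces would also go through, but note a small gap: Lemma~\ref{lm:quad-A} bounds $\langle\xi,[\cO,A]\xi\rangle$ diagonally, whereas the Leibniz decomposition produces $\langle\cO_1^*\xi,[\cO_2,A]\xi\rangle$, which requires a two-vector version; this is obtainable by inspecting the proof of Lemma~\ref{lm:quad-A} (all estimates there are Cauchy--Schwarz term by term), but it is extra bookkeeping that the uniform application of Lemma~\ref{lemma:commut_quartic_A_1} avoids.
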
 

\begin{proof}
We write 
\begin{equation}\label{eq:eT4e} \langle \xi, e^{-A} \cT^{(4)}_{\cG_N} e^A \xi \rangle  = \langle \xi,  \cT^{(4)}_{\cG_N} \xi \rangle + \int_0^1 ds \, \langle \xi, e^{-sA} \big[ \cT^{(4)}_{\cG_N} , A \big] e^{sA} \xi \rangle. \end{equation} 
Next, we observe that 
\[ | \langle \xi,  \cT^{(4)}_{\cG_N} \xi \rangle | \leq \frac{C}{\sqrt{N}} \| \cK^{1/2} \xi \| \| (\cN_+ + 1)^{3/2} \xi \|. \] 
In fact, the contribution of the first term in the definition (\ref{eq:cT_G_N}) of $\cT^{(4)}_{\cG_N}$ is bounded by 
\[ \begin{split} \Big| \frac{1}{2N} &\sum_{p,q,r} \widehat{V} (r/N) \sigma_p \sigma_{q+r} \langle \xi, b^*_{p+r} b^*_q b^*_{-p} b^*_{-q-r} \xi \rangle \Big| \\  \leq \; &\frac{C}{N} \| (\cN_+ + 1)^{3/2} \xi \| \left[ \sum_{p,q,r} (p+r)^2 \| b_{p+r} b_q b_{-p} b_{-q-r} (\cN_+ + 1)^{-3/2} \xi \|^2 \right]^{1/2}\\ &\hspace{4cm} \times  \left[ \sum_{p,q,r} \frac{|\widehat{V} (r/N)|^2}{(r+p)^2} \sigma_p^2 \sigma_{q+r}^2 \right]^{1/2} \\ \leq \; &\frac{C}{\sqrt{N}}   \| \cK^{1/2} \xi \| \| (\cN_+ + 1)^{3/2} \xi \|. \end{split} \]
Also the other contributions to $\cT_{\cG_N}^{(4)}$ can be bounded similarly. As for the second term on the r.h.s. of (\ref{eq:eT4e}), we apply Lemma \ref{lemma:commut_quartic_A_1}. To this end, we observe that all terms in $\cT_{\cG_N}^{(4)}$ satisfy the assumption (\ref{eq:D-ass}) (with $D_{r,p,q} = \widehat{V} (r/N) \sigma_p \sigma_{q+r}$ for the first term, $D_{r,p,q} = N^{-1} (\widehat{V} (./N) *\eta)_p 
\gamma_q \sigma_q \delta_{r,0}$ for the second term, $D_{r,p,q} = (\widehat{V} (./N) * \hat{f}_{N,\ell})_p (\gamma_q \sigma_q - \eta_q/2 - \sigma_q^2/(2\mu_q))$ for the third term). We obtain 
\[ \begin{split} \Big|  \langle \xi, e^{-sA} &\big[ \cT^{(4)}_{\cG_N} , A \big] e^{sA} \xi \rangle \Big|\\  \leq \; &\frac{C}{N} \langle \xi, e^{-sA} (\cK+1) (\cN_+ + 1)^3 e^{sA} \xi \rangle + \frac{C}{\sqrt{N}} \| \cN_+^{1/2} e^{sA} \xi \| \| (\cN_+ + 1) e^{sA} \xi \|\\  \leq \; &\frac{C}{N} \langle \xi, (\cH_N +1) (\cN_+ + 1)^4 \xi \rangle + \frac{C}{\sqrt{N}} \| \cN_+^{1/2} \xi \|  \| (\cN_+ + 1) \xi \| \end{split} \]
where in the last step we also used Prop. \ref{prop:a_priori_A}. 
\end{proof}

\subsection{Control of cubic term} 

In this section, we study the conjugation of the cubic term $\cC_{\cG_N}$. We will use the next lemma.

\begin{lemma} \label{lemma:commut_C,A} 
    Let $\cC_{\cG_N}$ be defined as in \eqref{eq:cC_G_N} and $A$ as in \eqref{eq:def_A}. Then, we have 
    \begin{equation} \label{eq:commut_C,A}
    \begin{split}
    [ \mathcal{C}_{\mathcal{G}_N},A]=\;&\frac{2}{N}\sum_{p,q \in\Lambda_+^* :p+q \ne 0} \big(\widehat{V} (p/N)+ \widehat{V} ((p+q)/N) \big)\eta_p \\ &\hspace{2.5cm} \times \Big[\sigma_q^2+\big(\gamma_q^2+\sigma_{q}^2\big)b^*_qb_q+\gamma_q\sigma_q\big(b_qb_{-q}+b^*_q b^*_{-q}\big)\Big]\\
    &+\frac{2}{N}\sum_{\substack{p,q\in\Lambda_+^*\\p+q\ne0}}\bigg[\widehat{V} (p/N)+\widehat{V} ((p+q)/N)\bigg]\sigma_q\eta_p\frac{2\eta_{q+p}(q+p)^2-2\sigma_q(p\cdot q)}{p^2+q^2+(p+q)^2} \\&+  \cB+ \cD  + \cE_{[\mathcal{C},A]},
    \end{split}
    \end{equation}
    where
    \begin{equation}\label{eq:B-def} 
    \cB = \sum_{p \in \Lambda^*_+} O_{p} \big( b_p^* b_{-p}^* + b_p b_{-p} \big) \end{equation} 
    is a quadratic operator, with coefficients satisfying
   \begin{equation}\label{eq:B-ass} \sum_{p \in \Lambda^*_+} \frac{|O_{p}|^2}{p^2} \leq \frac{C}{N} \,, \quad \|O \|_\infty \leq \frac{C}{N}\,\end{equation}  
    and where 
    \begin{equation}\label{eq:D-def} \mathcal{D}= \frac{1}{N} \sum_{\substack{r,p,q\in \Lambda_+^*\\r+p+q\ne0}} D_{r,p,q}\, b^*_{r+p} b^*_{q} b^*_{-p} b^*_{-r-q}+\mathrm{h.c.} \end{equation} 
    is a quartic operator with coefficients $D_{r,p,q}$ satisfying (\ref{eq:D-ass}). Furthermore, we have 
    \begin{equation}\label{eq:cEcC-bd}
      \big| \langle \xi, \cE_{[\mathcal{C},A]} \xi \rangle \big| \le \frac{C}{\sqrt{N}} \| \cN_+^{1/2} \xi \| \| \cK^{1/2} (\cN_+ + 1) \xi \|  +  \frac{C}{N} \langle \xi , (\mathcal{K}+1)(\mathcal{N}_++1)^2 \xi \rangle  
    \end{equation}
   for every $\xi \in \cF_\perp^{\leq N}$.  
\end{lemma}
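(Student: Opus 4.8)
The plan is to prove Lemma~\ref{lemma:commut_C,A} by a direct computation of the commutator using the approximate canonical commutation relations \eqref{eq:CCR_b} and \eqref{eq:baa}, organized so that the leading contributions are separated from the error terms. First I would expand both operators into their cubic monomials: writing $\mathcal{C}_{\mathcal{G}_N}=C_\gamma+C_\sigma+C_\gamma^*+C_\sigma^*$, where $C_\gamma=\tfrac1{\sqrt N}\sum \widehat V(p/N)\gamma_q\,b^*_{p+q}b^*_{-p}b_q$ and $C_\sigma=\tfrac1{\sqrt N}\sum \widehat V(p/N)\sigma_q\,b^*_{p+q}b^*_{-p}b^*_{-q}$, and $A=A_\gamma+A_\nu-A_\gamma^*-A_\nu^*$ as in \eqref{eq:def_A}, one reduces $[\mathcal{C}_{\mathcal{G}_N},A]$ by bilinearity to a finite sum of commutators of cubic monomials, each of which is then brought to normal order term by term.

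For each monomial–monomial commutator I would normal-order using \eqref{eq:CCR_b}: one ``cross'' contraction of an annihilation operator of one factor with a creation operator of the other leaves a quartic term, two contractions leave a quadratic term, and three contractions — possible only for the pairings $[C_\sigma^{(*)},A_\nu^{(*)}]$, since these are the only ones with enough annihilation/creation operators to pair up completely — leave a constant. In each contraction I would keep only the ``$\delta$'' part of $[b_p,b^*_q]$; the corrections $-\tfrac{\mathcal{N}_+}{N}\delta_{p,q}$ and $-\tfrac1N a^*_q a_p$, together with the nested commutators produced when two operators from the same factor have to be exchanged, are smaller by a factor $N^{-1}$ (times additional number operators) and are put into $\mathcal{E}_{[\mathcal{C},A]}$. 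These are controlled by Cauchy–Schwarz, using the decay estimates of Lemma~\ref{lemma:coefficients} ($|\eta_p|,|\sigma_p|\le C|p|^{-2}$, $|\gamma_p|\le C$, $|\nu_{r,v}|\le C|r|^{-2}|v|^{-2}$) and, where convenient, a passage to position space exactly as in the proof of Lemma~\ref{lemma:K,V,A}, yielding \eqref{eq:cEcC-bd}.

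It then remains to sort the leading terms by operator degree. The fully contracted pieces come from $[C_\sigma^{(*)},A_\nu^{(*)}]$; imposing the three momentum constraints in each admissible matching, using $\nu_{r,v}=2r^2\eta_r\sigma_v/(|r+v|^2+r^2+v^2)$ and $(p+q)^2=p^2+q^2+2p\cdot q$, and relabelling the two surviving momenta, these combine into the constant on the second line of \eqref{eq:commut_C,A}. Among the quadratic terms, the ``diagonal'' ones — where the contraction constraints force the surviving dressed factors $(\gamma_q b_q+\sigma_q b^*_{-q})^{\#}$ of $\mathcal{C}_{\mathcal{G}_N}$ and of $A$ to carry the same momentum $q$, so that their product produces $\sigma_q^2+(\gamma_q^2+\sigma_q^2)b^*_qb_q+\gamma_q\sigma_q(b_qb_{-q}+b^*_qb^*_{-q})$ — assemble into the first line of \eqref{eq:commut_C,A}; the prefactor $2$ and the symmetric combination $\widehat V(p/N)+\widehat V((p+q)/N)$ arise because the surviving heavy creation operators of $\mathcal{C}_{\mathcal{G}_N}$ may be contracted with either of the two heavy creation operators of $A$. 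The remaining quadratic terms are off-diagonal, of the form $\sum_p O_p(b^*_pb^*_{-p}+b_pb_{-p})$, and a direct estimate using $|\eta_p|,|\sigma_p|\le C|p|^{-2}$, $\|\widehat V(\cdot/N)\|_\infty\le C$, $\|\widehat V(\cdot/N)\|_2\le CN^{3/2}$ gives $\|O\|_\infty\le C/N$ and $\sum_p|O_p|^2/p^2\le C/N$, i.e. \eqref{eq:B-ass}, defining $\mathcal{B}$. The all-creation (and, by h.c., all-annihilation) quartic pieces, coming from $[C_\gamma,A_\nu]$, $[C_\sigma,A_\gamma]$ and their h.c. pairings, are relabelled into the canonical form $b^*_{r+p}b^*_qb^*_{-p}b^*_{-r-q}$; since their coefficients are products of $\widehat V$ at one momentum and of square-summable $\eta,\sigma,\nu$ factors at the other two, the bound \eqref{eq:D-ass} holds for the appropriate choice among the three cyclic roles of $r,p,q$. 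All mixed quartic terms of type $b^*b^*b^*b$ and $b^*b^*bb$ that do not fit the form of $\mathcal{D}$ have coefficients of size $N^{-1}$ with the $\widehat V$-momentum again fixed by a contraction, so they are bounded by $\tfrac CN(\mathcal{K}+1)(\cN_++1)^2$ or $\tfrac{C}{\sqrt N}\|\cN_+^{1/2}\xi\|\,\|\mathcal{K}^{1/2}(\cN_++1)\xi\|$ and absorbed into $\mathcal{E}_{[\mathcal{C},A]}$.

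The main obstacle will be organizational: there are many monomial pairings, and one must track the momentum labels through all of them precisely enough to see that the explicitly written pieces combine with exactly the stated coefficients while every remaining term genuinely has the error size \eqref{eq:cEcC-bd}. The one analytically delicate point is the mixed quartic terms of type $b^*b^*b^*b$: they occur at order $N^{-1}$, but $\widehat V(\cdot/N)$ is not square-summable in its momentum, so the Cauchy–Schwarz splitting must be arranged so that the $\widehat V$-momentum is the one fixed by a contraction (or is paired with a square-summable $\eta$ or $\sigma$ factor), otherwise the $N^{-1}$ prefactor would be overwhelmed by the factor $N^{3/2}$ coming from $\|\widehat V(\cdot/N)\|_2$.
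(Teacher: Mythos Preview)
Your approach is essentially the paper's: decompose $\mathcal{C}_{\mathcal{G}_N}$ and $A$ into their $\gamma$- and $\sigma/\nu$-parts, compute the four commutators (the paper calls them $\Pi_1,\dots,\Pi_4$), normal-order, and sort by degree. Your identification of where the main constant, the main quadratic part, and the quartic $\mathcal{D}$-terms come from is correct, and your remark about the $b^*b^*b^*b$ quartics needing a carefully placed Cauchy--Schwarz is exactly the issue handled in the paper via the estimate analogous to~\eqref{eq:quartic-ex}.

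There is, however, one genuine gap. You claim that for all remaining off-diagonal quadratic terms ``a direct estimate using $|\eta_p|,|\sigma_p|\le C|p|^{-2}$, $\|\widehat V(\cdot/N)\|_\infty\le C$, $\|\widehat V(\cdot/N)\|_2\le CN^{3/2}$'' yields $\|O\|_\infty\le C/N$. This fails for the pieces that carry a factor $p\cdot q$ coming from the expansion
\[
\nu_{p,q}+\nu_{q,p}=\eta_p\sigma_q-\frac{2\eta_p\sigma_q\,(p\cdot q)}{p^2+q^2+|p+q|^2}+\text{(summable corrections)},
\]
which appear both in $\Pi_1$ (the term $\gamma_q(\nu_{p,q}+\nu_{q,p})b^*_qb^*_{-q}$) and in $\Pi_3$ (the analogous diagonal piece). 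A direct magnitude bound gives, for the associated coefficient of $b^*_qb^*_{-q}$,
\[
|O_q|\;\le\;\frac{C|\sigma_q|\,|q|}{N}\sum_{p}\frac{|\eta_p|}{|p|}\;\le\;\frac{C\log N}{N|q|},
\]
since $\sum_{|p|\le N}|p|^{-3}\sim\log N$; so $\|O\|_\infty\le C(\log N)/N$, not $C/N$. Putting these terms instead into $\mathcal{E}_{[\mathcal{C},A]}$ does not help either: the same $\log N$ survives and exceeds the bound~\eqref{eq:cEcC-bd}. The paper fixes this by exploiting the oddness under $p\to -p$: one symmetrizes and uses $\big|\widehat V((p+q)/N)-\widehat V((p-q)/N)\big|\le C|q|/N$ together with the gain $\frac{1}{p^2+q^2+(p+q)^2}-\frac{1}{p^2+q^2+(p-q)^2}=O\!\big((p\cdot q)/(p^2+q^2)^2\big)$, exactly as in the treatment of $\alpha^{(2,<)}_{r,v}$ in~\eqref{eq:alpha2}. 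Once you insert this cancellation step, your outline goes through; without it, the stated bounds \eqref{eq:B-ass} and \eqref{eq:cEcC-bd} are not reached.
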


\begin{proof}
From (\ref{eq:cC_G_N}), (\ref{eq:def_A}), we can decompose 
\begin{equation*}
    \big[\cC_{\cG_N} , A \big]= \sum_{i=1}^4 \Pi_i
\end{equation*}
with
\begin{equation*}
\begin{split}
\label{eq:Theta_dec}
\Pi_1=\;&\frac{1}{\sqrt{N}}\sum_{\substack{p,q\in \Lambda_+^*\\p+q\ne0}}\widehat{V} (p/N)  \gamma_q \big[b^*_{p+q}b^*_{-p} b_q,A_{\nu}\big]+\mathrm{h.c.}\\
\Pi_2=\;&\frac{1}{\sqrt{N}}\sum_{\substack{p,q\in \Lambda_+^*\\p+q\ne0}}\widehat{V} (p/N) \gamma_q \big[b^*_{p+q}b^*_{-p} b_q,A_{\gamma}\big]+\mathrm{h.c.}\\
\Pi_3=\;&\frac{1}{\sqrt{N}}\sum_{\substack{p,q\in \Lambda_+^*\\p+q\ne0}}\widehat{V} (p/N) \sigma_q \big[b^*_{p+q}b^*_{-p} b^*_{-q},A_{\nu}\big]+\mathrm{h.c.}\\
\Pi_4=\;&\frac{1}{\sqrt{N}}\sum_{\substack{p,q\in \Lambda_+^*\\p+q\ne0}}\widehat{V} (p/N) \sigma_q \big[b^*_{p+q}b^*_{-p} b^*_{-q},A_{\gamma}\big]+\mathrm{h.c.}.
\end{split}
\end{equation*}
We analyze the four terms separately. From (\ref{eq:CCR_b}), we obtain 
\begin{equation}
\label{eq:Pi_1}
\begin{split}
     \Pi_1=\;&\frac{1}{N}\sum_{\substack{p,q,r \in\Lambda_+^*\\p+q,\,r+q\ne0}}\widehat{V}(p/N) \gamma_q\big( \nu_{r,-q-r}+\nu_{q,r}+\nu_{r,q}\big)b^*_{-p-q}b^*_{p}b^*_{r+q}b^*_{-r}+\mathrm{h.c.}\\
     & +\frac{1}{N}\sum_{\substack{p,q,r\in\Lambda_+^*\\p+q,\,r+p\ne0}}\Big[\widehat{V} (p/N) +\widehat{V}((p+q)/N) \Big]\gamma_q\big( \nu_{r,-p-r}+\nu_{p,r}+\nu_{r,p}\big)\\
     &\qquad\qquad\qquad\times \,b^*_{p+q}b_{p+r}b_{q}b_{-r}+\mathrm{h.c.}\\
     & +\frac{2}{N}\sum_{\substack{p,q\in\Lambda_+^*\\p+q\ne0}} \Big[\widehat{V} (p/N)+\widehat{V} ((p+q)/N) \Big] \gamma_q\big(\nu_{p,q}+\nu_{q,p}+\nu_{p,-p-q}\big) b^*_qb^*_{-q}\\ &+ \widetilde{\cE}_{\Pi_1}
     \end{split}
 \end{equation} 
where the error $\wt{\mathcal{E}}_{\Pi_1}$ (and, similarly, the errors $\wt{\cE}_{\Pi_j}$, $j=2,3,4$, below) is due to the fact that the commutation relations (\ref{eq:CCR_b}) are not precisely canonical and can be estimated by 
\[ \pm \widetilde{\cE}_{\Pi_1} \leq \frac{C}{N} (\mathcal{K}+1)(\mathcal{N}_++1)^2. \]
Using the fact that $|\nu_{p,q}|\le C|p|^{-2}|q|^{-2}$, the second term on the r.h.s. of (\ref{eq:Pi_1}) can be bounded by 
\begin{equation}\label{eq:quartic-ex0}
\begin{split}
\Big| \frac{1}{N} &\sum_{\substack{p,q,r\in \Lambda_+^*\\p+q,\,r+p\ne0}} \Big[\widehat{V} (p/N) +\widehat{V}((p+q)/N) \Big] \gamma_q \big(\nu_{r,-p-r}+\nu_{p,r}+\nu_{r,p}\big) \langle \xi, b^*_{p+q}b_{p+r}b_{q}b_{-r} \xi \rangle \Big| \\ 
\leq\;& \frac{C}{N} \left[ \sum_{p,q,r} |r|^{-4}(|p|^{-4}+|p+r|^{-4}) \| b_{p+q} (\cN_+ + 1)^{1/2} \xi \|^2 \right]^{1/2}\\
&\qquad\times\left[ \sum_{p,q,r} \| b_{p+r} b_{q} b_{-r} (\cN_+ + 1)^{-1/2} \xi \|^2 \right]^{1/2} \\
\leq\;& \frac{C}{N} \langle\xi , (\cN_+ + 1)^2 \xi \rangle.
\end{split}
\end{equation} 
As for the third term on the r.h.s. of (\ref{eq:Pi_1}), we notice that $|\nu_{p,-p-q}| \le C|p|^{-2}|p+q|^{-2}$ implies that the quadratic term proportional to $\nu_{p,-p-q}$ is bounded by $CN^{-1} \cK\mathcal{N}_+$. To handle the other contributions, we write 
\begin{equation}
\label{eq:nu+nu}
    \nu_{p,q}+\nu_{q,p}=\eta_p \sigma_q -\frac{2q^2}{p^2+q^2+|p+q|^2}[\eta_p(\sigma_q-\eta_q) -\eta_{q}(\sigma_p-\eta_p)] -\eta_p\sigma_q\frac{2p\cdot q}{p^2+q^2+|p+q|^2}.
\end{equation}
The contributions proportional to $\eta_q(\sigma_p-\eta_p)$ and $\eta_p(\sigma_q-\eta_q)$ are bounded by $CN^{-1}(\mathcal{N}_++1)$, since $|\sigma_p-\eta_p|\le C|p|^{-4}$. To bound the contribution of the last term, we distinguish $|p|>N$ and $|p|\leq N$. For $|p|>N$, we estimate 
\begin{equation}
\label{eq:Pi_1quadratic}
\begin{split}
\Big|\frac{2}{N}\sum_{\substack{p,q\in \Lambda_+^*\\|p|>N}}& \Big(\widehat{V} (p/N)+\widehat{V} ((p+q)/N) \Big) \gamma_q\eta_p\sigma_q\frac{2p\cdot q}{p^2+q^2+|p+q|^2} \langle \xi,b^*_qb^*_{-q}\xi \rangle \Big| \\ 
\leq&\; \frac{C}{N}\sum_{\substack{q\in \Lambda_+^*}}|q|\,|\sigma_q| \|(\cN_++1)^{1/2}\xi\| \|b_q\xi\| \leq \frac{C}{N} \langle \xi,(\cK+1)\xi\rangle
\end{split}
\end{equation}
since $\sum_{|p|>N}|\eta_p| |p|^{-1}\leq C$.
For $|p|\leq N$, we proceed with a change of variable $p\rightarrow -p$ and obtain 
\[
\begin{split}
\frac{2}{N} &\sum_{p,q\in \Lambda_+^* , |p|\leq N} \eta_p\sigma_q(p\cdot q)\bigg[\frac{\widehat{V} (p/N)+\widehat{V} ((p-q)/N) }{p^2+q^2+|p-q|^2}-\frac{\widehat{V} (p/N)+\widehat{V} ((p+q)/N) }{p^2+q^2+|p+q|^2} \bigg] \\ &\hspace{9cm} \times \gamma_q \langle \xi, \big( b^*_qb^*_{-q}  + \text{h.c.} \big) \xi \rangle\\ =&\;\frac{2}{N} \sum_{\substack{p,q\in \Lambda_+^*\\|p|\leq N}}[\widehat{V}((p-q)/N)-\widehat{V}((p+q)/N)]\frac{\eta_p\sigma_q(p\cdot q)}{p^2+q^2+|p+q|^2}\gamma_q \langle \xi, \big( b^*_qb^*_{-q}+\mathrm{h.c.}\big) \xi \rangle \\ &\;+\frac{8}{N}\sum_{\substack{p,q\in \Lambda_+^*\\|p|\leq N}}\frac{\Big[\widehat{V}(p/N)+\widehat{V}((p+q)/N)\Big]\eta_p\sigma_q(p\cdot q)^2}{(p^2+q^2+|p+q|^2)(p^2+q^2+|p-q|^2)}\gamma_q \langle \xi , \big(b^*_qb^*_{-q}+\mathrm{h.c.}\big) \xi \rangle .
\end{split}
\]
Using Lipschitz contiuity of $\widehat{V}$ in the first term on the r.h.s and the bound $\sum_{p}|p+q|^{-2}|p|^{-2}\leq C|q|^{-1}$ in the second term, we obtain
\[
\begin{split}
\Big| \frac{2}{N} &\sum_{p,q\in \Lambda_+^* , |p|\leq N} \eta_p\sigma_q(p\cdot q)\bigg[\frac{\widehat{V} (p/N)+\widehat{V} ((p-q)/N) }{p^2+q^2+|p-q|^2}-\frac{\widehat{V} (p/N)+\widehat{V} ((p+q)/N) }{p^2+q^2+|p+q|^2} \bigg] \\ &\hspace{9cm} \times \gamma_q \langle \xi, \big( b^*_qb^*_{-q}  + \text{h.c.} \big) \xi \rangle \Big| \\ 
&\hspace{2cm} \leq \; \frac{C}{N} \sum_{q \in \Lambda^*_+} \sigma_q |q| \langle \xi, \big( b^*_qb^*_{-q}  + \text{h.c.} \big) \xi \rangle  \leq \frac{C}{N} \langle \xi, (\cK + 1) \xi \rangle
\end{split} \] 
proceeding similarly as in (\ref{eq:Pi_1quadratic}). 

Noticing that the coefficients $D^{(1)}_{q,r,p} = \widehat{V} (p/N) \gamma_q\big( \nu_{r,-q-r}+\nu_{q,r}+\nu_{r,q}\big)$ satisfy the assumption (\ref{eq:D-ass}), we denote by $\cD^{(1)}$ the quartic operator on the first line on the r.h.s. of (\ref{eq:Pi_1}); it will be absorbed in (\ref{eq:D-def}). We conclude that 
\begin{equation}
\label{eq:Pi1-fin} 
    \Pi_1=\frac{1}{N} \sum_{\substack{p,q\in\Lambda_+^*\\p+q\ne0}}\Big[\widehat{V} (p/N)+\widehat{V} ((p+q)/N) \Big] \eta_p\gamma_q \sigma_ q\, b^*_{q} b^*_{-q} +\mathrm{h.c.}+\mathcal{D}^{(1)} +\mathcal{E}_{\Pi_1},
\end{equation}
where 
\[ \pm \cE_{\Pi_1} \leq \frac{C}{N} (\cK+1) (\cN_+ + 1)^2. \]

Let us now consider $\Pi_2$. From (\ref{eq:CCR_b}), we obtain 
 \begin{equation}
 \label{eq:Pi_2} 
     \begin{split}
     \Pi_2=\;&\frac{1}{N}\sum_{\substack{p,r,v\in\Lambda_+^*\\r+v,\,p-v\ne0}}\widehat{V} (p/N) (\eta_v +\eta_{r+v} ){\gamma}_r\gamma_v\,b^*_{r+v}b^*_{p-v}b^*_{-p}b_r+\mathrm{h.c.}\\ 
     &-\frac{1}{N}\sum_{\substack{q,r,v\in\Lambda_+^*\\r+v,\,q+v\ne0}}\bigg[\widehat{V}(v/N)+\widehat{V}((v+q)/N) \bigg]\gamma_q{\gamma}_v\eta_r\, b^*_{-r}b^*_{r+v}b^*_{-v-q}b_{-q}+\mathrm{h.c.}\\
     & +\frac{1}{N}\sum_{\substack{q,r,v\in\Lambda_+^*\\r+v,\,q+v\ne0}} \bigg[\widehat{V} (v/N)+\widehat{V}((v+q)/N) \bigg](\eta_v+\eta_{v+r}) {\gamma}_r\gamma_q\, b^*_{r+v}b^*_{q}b_rb_{q+v}+\mathrm{h.c.} \\
     &-\frac{1}{N}\sum_{\substack{p,r,v\in\Lambda_+^*\\r+v,\,p+v\ne0}}\widehat{V} (p/N) \eta_r {\gamma}_v^2\,b^*_{r+v}b^*_{-r}b_{p+v}b_{-p}+\mathrm{h.c.}\\
     & +\frac{2}{N}\sum_{\substack{q,v\in\Lambda_+^*\\q+v\ne0}}\bigg[\widehat{V} (v/N)+\widehat{V} ((v+q)/N)\bigg]\eta_v{\gamma}_q^2\,b^*_qb_{q}+ \wt{\cE}_{\Pi_2}
     \end{split}
 \end{equation}
 with $\pm \widetilde{\cE}_{\Pi_2} \leq C N^{-1} (\mathcal{K}+1)(\mathcal{N}_++1)^2$. The first term on the r.h.s. can be controlled, using Cauchy-Schwarz, by 
 \begin{equation}\label{eq:quartic-ex} \begin{split} 
 \Big| \frac{1}{N}\sum_{p,r,v} & \widehat{V} (p/N) (\eta_v +\eta_{r+v} ){\gamma}_r\gamma_v\, \langle \xi, b^*_{r+v}b^*_{p-v}b^*_{-p}b_r \xi \rangle \Big| \\ \leq \; &\frac{C}{N} \left[ \sum_{p,r,v} p^2 \| b_{r+v} b_{p-v} b_{-p} \xi \|^2 \right]^{1/2} \left[ \sum_{p,r,v}  \frac{|\widehat{V} (p/N)|^2}{p^2} (\eta_v^2 + \eta_{r+v}^2) \| b_r \xi \|^2 \right]^{1/2} \\ \leq \; &\frac{C}{\sqrt{N}} \| \cK^{1/2} \cN_+ \xi \| \| \cN_+^{1/2} \xi \| \end{split} \end{equation}
The other quartic terms on the r.h.s. of (\ref{eq:Pi_2}) can be bounded similarly. We conclude that 
\begin{equation}\label{eq:Pi2-fin} \Pi_2 = \frac{2}{N}\sum_{\substack{q,v\in\Lambda_+^*\\q+v\ne0}}\bigg[\widehat{V} (v/N)+\widehat{V} ((v+q)/N)\bigg]\eta_v{\gamma}_q^2\,b^*_qb_{q} +  \cE_{\Pi_2} \end{equation} 
where
\[ |\langle \xi, \cE_{\Pi_2} \xi \rangle | \leq \frac{C}{\sqrt{N}}  \| \cN_+^{1/2} \xi \| \| \cK^{1/2} \cN_+ \xi \| + \frac{C}{N} \langle \xi,  (\cK+1) (\cN_+ + 1)^2 \xi \rangle  \]
for all $\xi \in \cF_\perp^{\leq N}$.

 Next, we consider $\Pi_3$. With (\ref{eq:CCR_b}), we find 
\begin{equation} \label{eq:Pi_3}
\begin{split}
    \Pi_3=\;& \frac{1}{N} \sum_{\substack{p,q,r\in\Lambda_+^*\\p+q,\,q+r\ne0}} \widehat{V}(p/N) \sigma_q \big( \nu_{r,-q-r}+\nu_{q,r} + \nu_{r,q}\big) b^*_{p+q} b^*_{-p} b_{-r} b_{r+q} + \mathrm{h.c.}\\
    &+ \frac{1}{N} \sum_{\substack{p,q,r\ne0\\p+q,\,p+r\ne0}} \Big[ \widehat{V}(p/N) + \widehat{V}((p+q)/N) \Big] \sigma_q \big(\nu_{r,-p-r} + \nu_{p,q}+ \nu_{r,p}\big) \\
    &\qquad\qquad\qquad\qquad\qquad\times b^*_{p+q} b^*_{-q} b_{-r} b_{r+p}+\mathrm{h.c.}\\
    &+\frac{2}{N} \sum_{\substack{p,q\in\Lambda_+^*\\p+q\ne0}} \Big[ \widehat{V}(p/N) + \widehat{V}((p+q)/N) \Big] \sigma_q \\
    &\qquad\qquad\times\big( \nu_{-p-q,q}+\nu_{q,-p-q}+\nu_{p,-p-q}+\nu_{-p-q,p}+\nu_{p,q}+\nu_{q,p}\big) b^*_{-p}b_{-p}\\
    &+\frac{2}{N} \sum_{\substack{p,q\in\Lambda_+^*\\p+q\ne0}}  \Big[\widehat{V}(p/N)+\widehat{V}((p+q)/N)\Big] \sigma_q  \big(\nu_{-p-q,p}+\nu_{p,q}+\nu_{q,p}\big) b^*_{-q} b_{-q} \\
    &+\frac{2}{N} \sum_{\substack{p,q\in\Lambda_+^*\\p+q\ne0}}  \Big[\widehat{V}(p/N)+\widehat{V}((p+q)/N)\Big] \sigma_q  \big(\nu_{-p-q,p}+\nu_{p,q}+\nu_{q,p}\big)+\widetilde{\cE}_{\Pi_3} 
\end{split}
\end{equation}
where $\pm \widetilde{\cE}_{\Pi_3} \le CN^{-1}(\mathcal{K}+1)(\mathcal{N}_++1)^2$. The first term on the r.h.s. can be handled similarly to (\ref{eq:quartic-ex}). The second term can be bounded analogously to (\ref{eq:quartic-ex0}). Let us now consider the quadratic terms. 

The term proportional to $b^*_{-p}b_{-p}$ is controlled by $C N^{-1}(\mathcal{N}_++1)$, since the sum over $q$ can be bounded using $|\sigma_q| \leq C |q|^{-2}$, $|\nu_{r,s}|\leq C |r|^{-2} |s|^{-2}$. In the next quadratic term, proportional to $b^*_{-q} b_{-q}$, the bound $|\nu_{-p-q,p}|\le C |p|^{-2}|p+q|^{-2}$ allows us to estimate the corresponding contribution by $C N^{-1}(\mathcal{N}_++1)$. To handle the contributions proportional to $\nu_{p,q}$, $\nu_{q,p}$, we recall (\ref{eq:nu+nu}). The contribution of all terms on the r.h.s. of  (\ref{eq:nu+nu}), with the exception of the term proportional to $\eta_p \sigma_q$, can be estimated by $C N^{-1} \log N \mathcal{N}_+$ and can therefore be neglected. 

As for the constant term on the r.h.s. of (\ref{eq:Pi_3}), we write 
\[
\begin{split}
\sigma_q  \big(\nu_{-p-q,p}+\nu_{p,q}+\nu_{q,p}\big) = &\, \eta_p\s_q^2 + \sigma_q\eta_p\frac{2\eta_{p+q} (p+q)^2 - 2\sigma_q \,(p\cdot q)}{p^2+q^2+(p+q)^2}\\
& \,+ \frac{2q^2 \sigma_q}{p^2+q^2+(p+q)^2}[\eta_q(\sigma_p-\eta_p) + \eta_p (\eta_q-\sigma_q)]\\
& \, + \frac{2(p+q)^2 \sigma_q}{p^2+q^2+(p+q)^2}\eta_{p+q}(\sigma_p-\eta_p) 
\end{split}
\]
and we notice that the contribution associated with the last two lines is of the order $N^{-1}$ (because these terms are all summable over $p,q$). We conclude that 
\begin{equation}\label{eq:Pi3-fin} \begin{split} \Pi_3 = \; &\frac{2}{N}\sum_{\substack{p,q\in\Lambda_+^*\\p+q\ne0}}\bigg[\widehat{V} (p/N)+\widehat{V} ((p+q)/N)\bigg]\eta_p\sigma_q^2\\&+\frac{2}{N}\sum_{\substack{p,q\in\Lambda_+^*\\p+q\ne0}}\bigg[\widehat{V} (p/N)+\widehat{V} ((p+q)/N)\bigg]\sigma_q\eta_p\frac{2\eta_{q+p}(q+p)^2-2\sigma_q(p\cdot q)}{p^2+q^2+(p+q)^2}\\
     &+\frac{2}{N} \sum_{\substack{p,q\in\Lambda_+^*\\p+q\ne0}}\bigg[\widehat{V}(v/N)+\widehat{V} ((v+q)/N) \bigg] \eta_v \sigma_q^2\,b^*_q b_q + \cE_{\Pi_3}
     \end{split} \end{equation} 
where 
\[ |\langle \xi, \cE_{\Pi_3} \xi \rangle| \leq \frac{C}{\sqrt{N}} \| \cN_+^{1/2} \xi \|  \| \cK^{1/2} (\cN_+ +1) \xi \| 
+ \frac{C}{N} \langle \xi, (\cK+1 ) (\cN_+ + 1)^2 \xi \rangle \]
for all $\xi \in \cF_\perp^{\leq N}$.

Finally, we consider $\Pi_4$. Again with (\ref{eq:CCR_b}), we find 
\begin{equation}
 \label{eq:Pi_4}
     \begin{split}
     \Pi_4=\;&-\frac{1}{N}\sum_{\substack{p,r,v\in\Lambda_+^*\\r+v,\,p+v\ne0}}\widehat{V} (p/N) \eta_r {\gamma}_v\sigma_v\, b^*_{r+v}b^*_{-r}b^*_{p-v}b^*_{-p}+\mathrm{h.c.}\\
      & -\frac{1}{N}\sum_{\substack{p,r,v\in\Lambda_+^*\\r+v,\,q+v\ne0}} \bigg[\widehat{V} (v/N)+\widehat{V} ((v+p)/N) \bigg]\eta_r {\gamma}_v \sigma_p\, b^*_{r+v}b^*_{-r}b^*_{p}b^*_{-p-v}+\mathrm{h.c.}\\
    &+\frac{1}{N}\sum_{\substack{p,r,v\in\Lambda_+^*\\r+v,\,p+v\ne0}}\widehat{V} (p/N)\big(\eta_v +\eta_{r+v}){\gamma}_r\sigma_v\,b^*_{r+v}b_{p+v}b_{-p}b_r+\mathrm{h.c.}\\ 
     & +\frac{1}{N}\sum_{\substack{p,r,v\in\Lambda_+^*\\r+v,\,p+v\ne0}} \bigg[\widehat{V} (v/N)+\widehat{V}((v+p)/N) \bigg] \big(\eta_v +\eta_{r+v})\sigma_p {\gamma}_r \,b^*_{r+v}b_{p+v}b_{-p}b_{r}+\mathrm{h.c.}\\
         & +\frac{1}{N}\sum_{\substack{q,v\in\Lambda_+^*\\q+v\ne0}}\bigg[\widehat{V} (v/N)+\widehat{V} ((v+q)/N) \bigg]\eta_v{\gamma}_q\sigma_q\,b^*_qb^*_{-q}+\mathrm{h.c.}\\
     &+\frac{1}{N}\sum_{\substack{r,v\in\Lambda_+^*\\r+v\ne0}}\bigg[\widehat{V} (r/N)+\widehat{V} ((r+v)/N)\bigg](\eta_{r+v}+\eta_{v}){\gamma}_{r}\sigma_v\,b_{r}b_{-r}+\mathrm{h.c.}+\wt{\cE}_{\Pi_4}
     \end{split}
 \end{equation}
with $\pm \wt{\cE}_{\Pi_4} \le CN^{-1}(\mathcal{K}+1)(\mathcal{N}_++1)^2$. Since the coefficients $D^{(2)}_{v,r,p} = - \widehat{V} (p/N) \eta_r \gamma_v \sigma_v$ and $D^{(3)}_{v,r,p} = \big[\widehat{V} (v/N)+\widehat{V} ((v+p)/N) \big] \eta_r {\gamma}_v \sigma_p$ satisfy the assumption (\ref{eq:D-ass}), we denote by $\cD^{(2)}$ and $\cD^{(3)}$ the quartic operators on the first and second line of (\ref{eq:Pi_4}); we will absorb them in (\ref{eq:D-def}). The other quartic contributions to $\Pi_4$ can be bounded similarly to (\ref{eq:quartic-ex0}) (term on the fourth line) and (\ref{eq:quartic-ex}) (term on the third line). As for the quadratic terms, the contribution proportional to $\eta_{r+v}$ in the last line is small, since 
\begin{equation*}
    \begin{split}
        \pm\frac{1}{N}\sum_{\substack{r,v\in\Lambda_+^*\\r+v\ne0}}&\bigg[\widehat{V} (r/N) +\widehat{V}((r+v)/N) \bigg]\eta_{r+v}{\gamma}_{r}\sigma_v\big(\langle \xi,b_{r}b_{-r}\xi\rangle+\mathrm{h.c.}\big)\\
        \leq\;& \frac{C}{N}\sum_{\substack{r,v\in\Lambda_+^*\\r+v\ne0}}\frac{1}{|r+v|^2|v|^2}\|b_r\xi\|\,\big\|(\cN_++1)^{1/2}\xi\big\| \\ 
        \leq\;& \frac{C}{N}\sum_{r\in\Lambda_+^*}\frac{1}{r}\|b_r\xi\|\,\big\|(\cN_++1)^{1/2}\xi\big\|  \leq  \frac{C}{N}\langle \xi, (\cK+ \cN_++1)\xi \rangle.
    \end{split}
\end{equation*}
Similarly, in the contribution proportional to $\eta_v$ in the last line, we can replace $\gamma_r$ by $1$, up to an error bounded by $N^{-1} (\cN_++1)$. Observing that the coefficients  
\[ O_r = \frac{1}{N} \sum_{v \in \L^*_+ :  r+v \not = 0} \bigg[\widehat{V} (r/N)+\widehat{V} ((r+v)/N) \bigg]\eta_{v}\sigma_v \]
satisfy (\ref{eq:B-ass}), we conclude that 
 \begin{equation}
 \label{eq:Pi4-fin}
     \begin{split}
     \Pi_4=\;&\frac{1}{N}\sum_{\substack{q,v\in\Lambda_+^*\\q+v\ne0}}\bigg[\widehat{V} (v/N)+\widehat{V}((v+q)/N) \bigg]\eta_v{\gamma}_q\sigma_q\,b^*_qb^*_{-q}+\mathrm{h.c.}\\
     &+\frac{1}{N}\sum_{\substack{r,v\in\Lambda_+^*\\r+v\ne0}}\bigg[\widehat{V} (r/N) + \widehat{V}((r+v)/N)\bigg]\eta_{v}\sigma_v\,b^*_r b^*_{-r}+\mathrm{h.c.} \\ &+ \cB+ \cD^{(2)} + \cD^{(3)}  + \cE_{\Pi_4}
     \end{split}
 \end{equation}
where 
\[ |\langle \xi, \cE_{\Pi_4} \xi \rangle | \leq \frac{C}{\sqrt{N}} \| \cN^{1/2}_+ \xi \| \| \cK^{1/2} (\cN_+ + 1) \xi \| + \frac{C}{N} \langle \xi, (\cK+1)( \cN_+ + 1)^2 \xi \rangle  \]
for all $\xi \in \cF^{\leq N}_{\perp}$, where $\cB$ has the form (\ref{eq:B-def}), with coefficients satisfying the corresponding conditions (\ref{eq:B-ass}), and where $\cD^{(2)}, \cD^{(3)}$ have the form (\ref{eq:D-def}), with coefficients satisfying (\ref{eq:D-ass}). Combining (\ref{eq:Pi1-fin}), (\ref{eq:Pi2-fin}), (\ref{eq:Pi3-fin}), (\ref{eq:Pi4-fin}), we obtain (\ref{eq:commut_C,A}).
\end{proof}

Making use of the last lemma, we can compute $e^{-A} \cC_{\cG_N} e^A$, up to small errors.
\begin{prop} \label{prop:A_on_cubic}
 Let $\cC_{\cG_N}$ be defined as in \eqref{eq:cC_G_N} and $A$ as in \eqref{eq:def_A}. Under the same assumptions as Theorem \ref{thm:cubic}, we have 
    \begin{equation}\label{eq:eCe-fin} 
    \begin{split}
        &e^{-A} \mathcal{C}_{\mathcal{G}_N}e^{A}\\
         &=  \, \cC_{\cG_N} \\
        &\hspace{.2cm} + \frac{2}{N}\sum_{\substack{p,q\in\Lambda_+^*\\p+q \ne0}}\Big(\widehat{V} (p/N)+\widehat{V} ((p+q)/N) \Big) \eta_p\Big[\sigma_q^2+\big(\gamma_q^2+\sigma_{q}^2\big)b^*_qb_q+\gamma_q\sigma_q\big(b_qb_{-q}+b^*_q b^*_{-q}\big)\Big]\\
    &\hspace{.2cm} +\frac{2}{N}\sum_{\substack{p,q\in\Lambda_+^*\\p+q\ne0}}\bigg[\widehat{V} (p/N)+\widehat{V} ((p+q)/N)\bigg]\sigma_q\eta_p\frac{2\eta_{q+p}(q+p)^2-2\sigma_q(p\cdot q)}{p^2+q^2+(p+q)^2} +\mathcal{E}_\mathcal{C_{\cG_N}},
    \end{split}
    \end{equation}
    where, for every $\eps > 0$ small enough, we have 
   \[ \pm \cE_{\cC_{\cG_N}} \leq \eps \cK + \frac{C}{\eps N}  (\cH_N + 1) ( \cN_+ + 1)^4 \]   
    for all $\xi \in \cF^{\leq N}_\perp$. 
    \end{prop}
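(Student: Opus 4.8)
The plan is to write $e^{-A}\cC_{\cG_N}e^A$ using the fundamental theorem of calculus,
\[
e^{-A}\cC_{\cG_N}e^A = \cC_{\cG_N} + \int_0^1 ds\, e^{-sA}[\cC_{\cG_N},A]e^{sA},
\]
and then to insert the decomposition of the commutator $[\cC_{\cG_N},A]$ obtained in Lemma \ref{lemma:commut_C,A}. This splits the integrand into four kinds of contributions: (i) the two explicit quadratic-plus-constant terms appearing on the first two lines of \eqref{eq:commut_C,A} (the ones proportional to $\eta_p[\sigma_q^2+(\gamma_q^2+\sigma_q^2)b_q^*b_q+\gamma_q\sigma_q(b_qb_{-q}+b_q^*b_{-q}^*)]$ and to $\sigma_q\eta_p\,(2\eta_{q+p}(q+p)^2-2\sigma_q(p\cdot q))/(p^2+q^2+(p+q)^2)$); (ii) the quadratic error operator $\cB$ from \eqref{eq:B-def}; (iii) the quartic operator $\cD$ from \eqref{eq:D-def}; and (iv) the remainder $\cE_{[\cC,A]}$ bounded in \eqref{eq:cEcC-bd}.

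The main point is that each of these four pieces is \emph{stable} under further conjugation by $e^{sA}$, up to errors of the size claimed. First I would treat (i): the two explicit terms are $N^{-1}$ times a quadratic expression with coefficients that are uniformly bounded and, in the off-diagonal part, decay like $|q|^{-2}$; hence they fall under Lemma \ref{lm:quad-A} (estimates \eqref{eq:quad-dia}, \eqref{eq:quad-dia2}, \eqref{eq:quad-off1}, \eqref{eq:quad-off}), so conjugating them by $e^{sA}$ changes them only by $\cO(N^{-1})$ errors of the form $\eps\cN_+ + C\eps^{-1}N^{-1}(\cH_N+\cN_+^2+1)(\cN_++1)$, controlled via Prop. \ref{prop:a_priori_A}; these become part of $\cE_{\cC_{\cG_N}}$ while the leading (un-conjugated) pieces survive and produce exactly the second and third lines of \eqref{eq:eCe-fin}. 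For (ii), the operator $\cB$ already satisfies $\|O\|_\infty\le C/N$ and $\sum_p |O_p|^2/p^2\le C/N$, so $\langle\xi,\cB\xi\rangle$ is bounded by $CN^{-1/2}\|\cK^{1/2}\xi\|\|(\cN_++1)\xi\|$ and its commutator with $A$ is controlled by Lemma \ref{lm:quad-A}, exactly as in the proof of \eqref{eq:eT2e} in Prop. \ref{prop:quad-A}; thus $e^{-A}\cB e^A$ is itself of order $\eps\cK + C\eps^{-1}N^{-1}(\cH_N+\cN_+^2+1)$, i.e. absorbed. For (iii), the operator $\cD$ has coefficients satisfying assumption \eqref{eq:D-ass}, so $\langle\xi,\cD\xi\rangle$ is bounded using Cauchy--Schwarz (distributing the weight $\sup_z\sum_r|D_{r,p,q}|^2/(r+z)^2$ onto a kinetic factor, as in the proof of Prop. \ref{prop:T4-bd}) by $CN^{-1/2}\|\cK^{1/2}\xi\|\|(\cN_++1)^{3/2}\xi\|$, while $[\cD,A]$ is controlled by Lemma \ref{lemma:commut_quartic_A_1}; together with Prop. \ref{prop:a_priori_A} this gives $\pm e^{-A}\cD e^A \le \eps\cK + C\eps^{-1}N^{-1}(\cH_N+1)(\cN_++1)^4$. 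Finally (iv): the remainder $\cE_{[\cC,A]}$ is already bounded as in \eqref{eq:cEcC-bd}; after noting that $e^{sA}$ does not spoil this estimate (again via \eqref{eq:growthN}, \eqref{eq:growthHN} of Prop. \ref{prop:a_priori_A}), it contributes $\cO(N^{-1/2})\|\cN_+^{1/2}\xi\|\|\cK^{1/2}(\cN_++1)\xi\| + \cO(N^{-1})\langle\xi,(\cK+1)(\cN_++1)^2\xi\rangle$ to the error, which is $\le \eps\cK + C\eps^{-1}N^{-1}(\cH_N+1)(\cN_++1)^4$ by Young's inequality.

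The one genuinely delicate point is the bookkeeping needed to make sure that the \emph{un-conjugated} explicit terms of \eqref{eq:commut_C,A} appear in \eqref{eq:eCe-fin} with exactly the right coefficients, and not some $s$-averaged version: concretely, one writes $e^{-sA}(\text{explicit term})e^{sA} = (\text{explicit term}) + \int_0^s dt\, e^{-tA}[\text{explicit term},A]e^{tA}$ and checks that the double integral is an error of the type just discussed, so that integrating over $s\in[0,1]$ simply reproduces the explicit term once. The rest is assembling the four estimates, choosing $\eps$ (possibly $N$-dependent but with $\eps > C(\log N)/N$, consistent with the hypothesis of Theorem \ref{thm:cubic}), and collecting all error contributions into $\cE_{\cC_{\cG_N}}$ with the stated bound $\pm\cE_{\cC_{\cG_N}} \le \eps\cK + C\eps^{-1}N^{-1}(\cH_N+1)(\cN_++1)^4$. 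I do not expect any serious obstacle beyond careful tracking of the operator-norm powers of $(\cN_++1)$ through Prop. \ref{prop:a_priori_A}.
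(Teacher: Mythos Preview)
Your proposal is correct and follows essentially the same approach as the paper: expand via the fundamental theorem of calculus, insert the decomposition of $[\cC_{\cG_N},A]$ from Lemma~\ref{lemma:commut_C,A}, and control each piece exactly as you describe (explicit quadratic terms via Prop.~\ref{prop:quad-A}/Lemma~\ref{lm:quad-A}, $\cB$ as in the proof of \eqref{eq:eT2e}, $\cD$ as in Prop.~\ref{prop:T4-bd} via Lemma~\ref{lemma:commut_quartic_A_1}, and $\cE_{[\cC,A]}$ via Prop.~\ref{prop:a_priori_A}). Your remark about the ``bookkeeping'' for the un-conjugated explicit terms is precisely what the paper does implicitly when it writes $e^{-sA}(\text{term})e^{sA}=(\text{term})+\cE_1$ and integrates over $s$.
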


\begin{proof} 
We write 
\begin{equation}\label{eq:eCe-1} 
e^{-A} \cC_{\cG_N} e^A = \cC_{\cG_N} + \int_0^1 ds \, e^{-sA} \big[ \cC_{\cG_N} , A \big] e^{sA}. \end{equation} 
From Lemma \ref{lemma:commut_C,A}, we recall the identity  
\[ \begin{split}
    [ \mathcal{C}_{\mathcal{G}_N},A]=\;&\frac{2}{N}\sum_{p,q \in\Lambda_+^* :p+q \ne 0} \big(\widehat{V} (p/N)+ \widehat{V} ((p+q)/N) \big)\eta_p \\ &\hspace{2.5cm} \times \Big[\sigma_q^2+\big(\gamma_q^2+\sigma_{q}^2\big)b^*_qb_q+\gamma_q\sigma_q\big(b_qb_{-q}+b^*_q b^*_{-q}\big)\Big]\\
    &+\frac{2}{N}\sum_{\substack{p,q\in\Lambda_+^*\\p+q\ne0}}\bigg[\widehat{V} (p/N)+\widehat{V} ((p+q)/N)\bigg]\sigma_q\eta_p\frac{2\eta_{q+p}(q+p)^2-2\sigma_q(p\cdot q)}{p^2+q^2+(p+q)^2} \\& +  \cB + \cD + \cE_{[\mathcal{C},A]},
    \end{split} \]
 where $\cB$ is a quadratic operator having the form (\ref{eq:B-def}), $\cD$ is a quartic operator like (\ref{eq:D-def}) and $\cE_{[\cC,A]}$ satisfies the bounds \eqref{eq:cEcC-bd} . From Prop. \ref{prop:quad-A}, we find 
\[ \begin{split}  \frac{2}{N} \sum_{p,q} \big( \widehat{V} &(p/N) + \widehat{V} ((p+q)/N) \big) \eta_p \, e^{-sA}  \big[ (\g_q^2 + \s^2_q) b_q^*b_q + \g_q \s_q (b_q b_{-q} + b_q^* b_{-q}^* ) \big] e^{sA} \\ &= \frac{2}{N} \sum_{p,q} \big( \widehat{V} (p/N) + \widehat{V} ((p+q)/N) \big) \big[ (\g_q^2 + \s^2_q) b_q^*b_q + \g_q \s_q (b_q b_{-q} + b_q^* b_{-q}^* ) \big] + \cE_1 \end{split} \]
where, for any $\eps > 0$ small enough, 
\[  \pm \cE_1  \leq \eps \cN_+ + \frac{C}{\eps N} (\cN_+ + 1)^2 . \]
Using (\ref{eq:B-ass}) and proceeding as in the proof of (\ref{eq:eT2e}) (applying Lemma \ref{lm:quad-A} and using that the coefficients $O_{p}$ in (\ref{eq:B-def}) satisfy (\ref{eq:B-ass})), we obtain that, for any $\eps > 0$, 
\[ \pm  e^{-sA} \cB e^{sA}  \leq \eps \cK + \frac{C}{\eps N}  (\cH_N + \cN_+^2 + 1).  \]
Moreover, proceeding as in Prop. \ref{prop:T4-bd} (we can apply Lemma \ref{lemma:commut_quartic_A_1}, because the coefficients $D_{r,p,q}$ in (\ref{eq:D-def}) satisfy (\ref{eq:D-ass})), we obtain that, for any $\eps > 0$, 
\[ \pm e^{-sA} \cD e^{sA}   \leq  \eps \cK + \frac{C}{\eps N} (\cH_N +1) (\cN_+ + 1)^4 .\] 
From Prop. \ref{prop:a_priori_A}, we also get 
\[ | \langle \xi, e^{-sA} \cE_{[\cC,A]} e^{sA} \xi \rangle | \leq   \frac{C}{\sqrt{N}} \| \cN_+^{1/2} \xi \| \| (\cH_N +1)^{1/2} (\cN_+ + 1)^{3/2} \xi \|  + \frac{C}{N} \langle \xi, (\cH_N +1) (\cN_+ + 1)^3 \xi \rangle. \]
Inserting in (\ref{eq:eCe-1}), we obtain (\ref{eq:eCe-fin}). 
\end{proof}

\subsection{Control of $\cH_N = \cK + \cV_N$} 

Finally, we conjugate the Hamiltonian $\cH_N$. Besides Lemma \ref{lemma:K,V,A}, we will also need estimates for the the commutators of the terms $[\cK, A]_2, [\cV_N ,A]_2$, defined in (\ref{eq:[K,A]2}) and, respectively, (\ref{eq:[VN,A]2}), with $A$.
\begin{lemma} \label{lemma:second_commut_H_A} 
Let $A$ be defined as in \eqref{eq:def_A} and $[\mathcal{K},A]_2$, $[\mathcal{V}_N,A]_2$ be defined as in Lemma~\ref{lemma:K,V,A}. Then
\begin{equation} \label{eq:K,A_2,A}
        \pm \big[[\mathcal{K},A]_2,A\big] + \mathrm{h.c.}\leq \frac{C (\log N)^{1/2}}{N}(\mathcal{K}+1)(\mathcal{N}_++1)^2.
\end{equation}
Moreover, 
\begin{equation} \label{eq:V,A_2,A}
    \begin{split}
        \big[[\mathcal{V}_N,A]_2,A\big] + \mathrm{h.c.}  =&\;\frac{2}{N^2}\sum_{\substack{p,q\in\Lambda_+^*\\p+q\ne0}}\bigg[\big(\widehat{V}(\cdot/N)\ast\eta\big)_p +\big(\widehat{V}(\cdot/N)\ast\eta\big)_{p+q}\bigg] \\&\hspace{3cm}\times\,\eta_p\sigma_q\frac{2\eta_{q+p}(q+p)^2-2\sigma_q(p\cdot q)}{p^2+q^2+(p+q)^2} \\&+\mathcal{E}_{[[\mathcal{V}_N,A]_2,A]}
    \end{split}
\end{equation}
with 
\begin{equation*} \label{eq:err_V,A,A}
    \pm\mathcal{E}_{[[\mathcal{V}_N,A]_2,A]} \le \frac{C}{N}(\cK+1)(\mathcal{N}_++1)^2.
\end{equation*}
\end{lemma}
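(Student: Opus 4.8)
The strategy is to compute the double commutators $[[\cK,A]_2,A]$ and $[[\cV_N,A]_2,A]$ explicitly by repeated use of the commutation relations \eqref{eq:CCR_b}, and then to bound the resulting cubic and quartic terms, keeping track of the one contribution (in the $\cV_N$ case) that survives as a genuine constant of order $(\log N)/N$. First I would write $[\cK,A]_2 = \tfrac{2}{\sqrt N}\sum (r\cdot v)\eta_r\gamma_v\, b^*_{r+v}b^*_{-r}b_v$ and commute it with $A = A_\gamma + A_\nu$; since $[\cK,A]_2$ is of the same structural type as $\cC_{\cG_N}$ (a cubic expression with two creation operators and one $b$ or $b^*$), the computation is a variant of Lemma \ref{lemma:commut_C,A}, producing quadratic, cubic, quartic and constant terms. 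The key point for \eqref{eq:K,A_2,A} is that \emph{no} order-one constant appears: the coefficient $(r\cdot v)\eta_r\gamma_v$ has a sign-changing factor $(r\cdot v)$, so that in the constant term (which requires pairing all momenta) the sum $\sum_{r,v}(r\cdot v)\eta_r^2\ldots /|v|^?$ vanishes or is absolutely summable by an odd-parity/change-of-variable argument. Each remaining explicit term is estimated by Cauchy--Schwarz exactly as in the proofs of Lemmas \ref{lemma:commut_C,A} and \ref{lemma:commut_quartic_A_1}; the $(\log N)^{1/2}$ loss comes from one logarithmically-divergent momentum sum (e.g.\ $\sum_{|r|\le N}|r|^{-3}\sim \log N$) that is absorbed once, into a bound of the form $\tfrac{C(\log N)^{1/2}}{N}(\cK+1)(\cN_++1)^2$.

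For \eqref{eq:V,A_2,A}, I would start from $[\cV_N,A]_2 = \tfrac{1}{N^{3/2}}\sum \widehat V((r-s)/N)\,\alpha_{r,v}\, b^*_{r+v}b^*_{-r}b^*_{-v}$ (with $\alpha_{r,v}$ as in \eqref{eq:defalpha}, up to the $N^{1/2}$ normalization), i.e.\ a purely cubic operator with three creation operators. Commuting with $A_\gamma + A_\nu$ (using \eqref{eq:CCR_b}), the only way to produce a scalar is to contract the three $b^*$'s against the three $b$'s or $b^*$'s in $A$, which forces all momenta to be paired and yields precisely the double sum displayed on the right-hand side of \eqref{eq:V,A_2,A}. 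Concretely: commuting $b^*_{r+v}b^*_{-r}b^*_{-v}$ with the annihilation part $b^*_{s'+w'}b^*_{-s'}(\eta_{s'}\gamma_{w'}b_{w'}+\nu_{s',w'}b^*_{-w'})$ of $A$, the $b_{w'}$ (or the commutator from $b^*_{-w'}$) contracts with one of the three $b^*$'s, reducing to a quartic operator; a second contraction then produces the constant. Tracking the combinatorics, the surviving scalar is $\tfrac{2}{N^2}\sum_{p,q}[(\widehat V(\cdot/N)*\eta)_p + (\widehat V(\cdot/N)*\eta)_{p+q}]\eta_p\sigma_q\frac{2\eta_{q+p}(q+p)^2 - 2\sigma_q(p\cdot q)}{p^2+q^2+(p+q)^2}$, where the $(\widehat V*\eta)$ factors come from performing the $s$-sum inside $\alpha_{r,v}$ together with the $\widehat V$ from $A$ (this is the same renormalized-potential bookkeeping used throughout Section \ref{sec:cGN}, and it is consistent with the constant $\wt C_{\cJ_N}$ in \eqref{eq:wtCJN}).

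The operator-valued remainder $\cE_{[[\cV_N,A]_2,A]}$ collects: (i) the quartic and cubic terms left after the first contraction but before pairing, (ii) the error terms from the non-canonical part of \eqref{eq:CCR_b}, and (iii) the difference between $\sigma$ and $\eta$ in various coefficients (controlled by $|\sigma_p - \eta_p|\le C|p|^{-4}$, hence absolutely summable). For (i) I would invoke the estimate \eqref{eq:boundalpha} on $\alpha_{r,v}$ already proved in the excerpt, exactly as in the bound \eqref{eq:bound[VN,A]2} for $[\cV_N,A]_2$ itself: writing the quartic piece in the form $\cD$ of Lemma \ref{lemma:commut_quartic_A_1} and checking the hypothesis \eqref{eq:D-ass} via \eqref{eq:boundalpha}, one gets a bound $\tfrac{C}{N}(\cK+1)(\cN_++1)^2$ (no $\log N$ here, because the extra factor $\tfrac1N$ from $[\cV_N,A]_2$ already tames the divergent sum). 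The error terms from the non-canonical commutators and from the $\sigma$-vs-$\eta$ substitutions are handled by the same Cauchy--Schwarz estimates as in Lemma \ref{lemma:K,V,A}, all of order $N^{-1}(\cK+1)(\cN_++1)^2$.

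\textbf{Main obstacle.} The genuinely delicate part is the bookkeeping in the second statement: one must track, through a double commutator of a cubic operator with $A=A_\gamma+A_\nu$, \emph{exactly} which pairings of six creation/annihilation labels give a scalar, with the correct combinatorial multiplicity ``$2$'' and the correct coefficient, while simultaneously verifying that every other pairing is either quadratic/quartic of the required size or absorbed into $\cO(N^{-1})$ via the decay $|\eta_p|,|\sigma_p|\le C|p|^{-2}$, $|\nu_{r,v}|\le C|r|^{-2}|v|^{-2}$. A careful but routine parity/change-of-variables check is also needed in \eqref{eq:K,A_2,A} to see that the factor $(r\cdot v)$ kills what would otherwise be an order-one constant (and reduces it to something absorbable into the operator bound with only a single $(\log N)^{1/2}$ loss).
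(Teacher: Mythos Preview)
Your strategy is essentially the same as the paper's: explicit computation of the double commutators via \eqref{eq:CCR_b}, followed by Cauchy--Schwarz estimates on the resulting quartic and quadratic pieces, with the $\alpha_{r,v}$ bounds \eqref{eq:boundalpha} doing the work in the $\cV_N$ case and the constant extracted from the full contraction $[b_{r+v}b_{-r}b_{-v},\,b^*_{p+q}b^*_{-p}b^*_{-q}]$ against $\nu_{p,q}$.

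Two small corrections. First, in \eqref{eq:K,A_2,A} no constant is generated at all, for a structural reason rather than a parity one: $[\cK,A]_2$ and its hermitian conjugate are of type $b^*b^*b$ and $b^*bb$, so commuting with any cubic piece of $A$ leaves at least a quadratic operator after all possible contractions. The change-of-variable arguments ($r\to -r$, $r\to -r-v$) are indeed needed, but their role is to tame the \emph{quadratic} terms (e.g.\ the $b_v^*b_v$ term in $T_3$ and the $b_v^*b_{-v}^*$ terms in $T_4$), not to kill a constant. The $(\log N)^{1/2}$ appears exactly where you say, from $\sum_{r,v}(r\cdot v)^2(r+v)^{-2}\eta_r^2\eta_v^2\le C\log N$ in the quartic Cauchy--Schwarz estimates. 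Second, for the quartic remainders in the $\cV_N$ case you should not route through Lemma~\ref{lemma:commut_quartic_A_1}: the quartic pieces here (e.g.\ $b^*_{p+r+v}b^*_{-p}b^*_{-r}b^*_{-v}$ in $S_1$ and $b^*_{p+r}b^*_{-p}b_{r+v}b_{-v}$ in $S_3^{(4)}$) are bounded directly by Cauchy--Schwarz using \eqref{eq:boundalpha}, exactly as in \eqref{eq:bound[VN,A]2}, and this is cleaner than trying to fit them into the hypothesis \eqref{eq:D-ass}.
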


\begin{proof}

    We start by proving \eqref{eq:K,A_2,A}. Recalling the definitions \eqref{eq:def_A} and (\ref{eq:[K,A]2}) 
    of $A$ and $[\cK,A]_2$, we can split 
     \begin{equation*}
        \big[[\mathcal{K},A]_2,A\big] + \text{h.c.} =\sum_{j=1}^4 T_j
    \end{equation*}
    with
    \begin{equation*}
        \begin{split}
            T_1=\;&\frac{2}{N} \sum_{\substack{r,v,p,q\in \Lambda_+^*\\r+v,p+q\ne0}}r\cdot v\,\eta_r  \gamma_v  \eta_p \gamma_q \big[b^*_{r+v}b^*_{-r} b_v,b^*_{p+q}b^*_{-p}b_q\big]+\mathrm{h.c.}\\            
            T_2=\;&\frac{2}{N} \sum_{\substack{r,v,p,q\in \Lambda_+^*\\r+v,p+q\ne0}}r\cdot v\,\eta_r \gamma_v\nu_{p,q} \big[b^*_{r+v}b^*_{-r} b_v,b^*_
            {p+q}b^*_{-p}b^*_{-q}\big]+\mathrm{h.c.}\\
            T_3=\;&\frac{2}{N} \sum_{\substack{r,v,p,q\in \Lambda_+^*\\r+v,p+q\ne0}}r\cdot v\,\eta_r \gamma_v \eta_p   \gamma_q \big[b^*_v b_{r+v} b_{-r},b^*_{p+q} b^*_{-p} b_q \big]+\mathrm{h.c.}\\   
            T_4=\;&\frac{2}{N} \sum_{\substack{r,v,p,q\in \Lambda_+^*\\r+v,p+q\ne0}}r\cdot v\,\eta_r   \gamma_v \nu_{p,q} \big[b^*_v b_{r+v} b_{-r},b^*_{p+q} b^*_{-p} b^*_{-q}\big]+\mathrm{h.c.} \\      
        \end{split}
    \end{equation*}
 Up to terms of lower order (due to the fact that the operators $b,b^*$ do not exactly satisfy canonical commutation relations), the operators $T_1, T_2$ only contain quartic contributions, satisfying (\ref{eq:K,A_2,A}). As an example, consider the term 
 \[ T_{1,1} = \frac{2}{N} \sum_{r,v,p} r\cdot v \, \eta_r  \g_v\eta_v \g_{v-p}\, b^*_{r+v} b^*_{-r} b^*_{-p} b_{v-p}+\mathrm{h.c.} \]
 contributing to $T_1$, which can be bounded by 
 \begin{equation}\label{eq:T11-bd} \begin{split} |\langle \xi, T_{1,1} \xi \rangle | &\leq \frac{C}{N} \Big[ \sum_{r,v,p} (r+v)^2 \| b_{r+v} b_{-r} b_{-p} (\cN_+ + 1)^{-1/2} \xi \|^2 \Big]^{1/2} \\ &\hspace{3cm} \times  \Big[ \sum_{r,v,p} \frac{(r\cdot v)^2}{(r+v)^2} \eta_r^2 \eta_v^2 \| b_{v-p} (\cN_+ + 1)^{1/2} \xi \|^2 \Big]^{1/2} \\  &\leq \frac{C (\log N)^{1/2}}{N} \| \cK^{1/2} (\cN_+ +1)^{1/2} \xi \| \| (\cN_++1) \xi \| \\ &\leq \frac{C (\log N)^{1/2}}{N} \langle \xi, (\cK + 1) (\cN_+ + 1) \xi \rangle \end{split}  \end{equation} 
where we estimated
\[ \sum_{r,v} \frac{(r\cdot v)^2}{(r+v)^2} \eta_r^2 \eta_v^2 \leq C \sum_{r,v} \frac{1}{(r+v)^2} \frac{1}{v^2} |\eta_r| \leq C \sum_r \frac{1}{r} |\eta_r|  \leq C \log N \]
as can be proven separating $|r| \leq N$ and $|r| > N$ (in the second region, we can apply the last bound in (\ref{eq:norms_eta})). All other contributions to $T_1, T_2$ can be handled similarly. Let us focus on the other terms. With (\ref{eq:CCR_b}), we find 
    \begin{equation}\label{eq:T4}
    \begin{split}
        T_3=\;&\frac{4}{N} \sum_{\substack{r,v\in \Lambda_+^*\\r+v\ne0}} r\cdot v\,\eta_r  \eta_{r+v} \gamma_v^2\, b^*_v b_v \\
        &+\frac{2}{N} \sum_{\substack{r,v,p \in \Lambda_+^*\\r+v,\,r+v-p\ne0}} r\cdot v \, \eta_r  \big( \eta_{r+v-p}  +\eta_{r+v} \big) \gamma_v \gamma_p \,b^*_v b^*_{p-r-v}b_{-r} b_p+\mathrm{h.c.}\\
        &+\frac{2}{N} \sum_{\substack{r,v,p \in \Lambda_+^*\\r+v,\,r-p\ne0}} r\cdot v \,\eta_r  \big( \eta_{r+p}  +\eta_r \big) \gamma_v \gamma_p\, b^*_v b^*_{p+r} b_{r+v} b_{p}+ \mathrm{h.c.}\\
        &-\frac{2}{N} \sum_{\substack{r,v,p\in \Lambda_+^*\\r+v,\,p+r\ne0}} r\cdot v\,\eta_r  \eta_p  \gamma_v^2\,b^*_{p+q} b^*_{-p} b_{r+v} b_{-r}+\mathrm{h.c.}+\wt{\mathcal{E}}_{T_3}
    \end{split}
    \end{equation}
    with  
    \begin{equation}\label{eq:wtcET} 
        \pm \wt{\mathcal{E}}_{T_3} \le \frac{C}{N}(\mathcal{K}+1)(\mathcal{N}_++1)^2.
    \end{equation}
Here, and in the rest of this proof, we will denote by $\wt{\cE}_{T_j}$ contributions due to the fact that the commutation relations (\ref{eq:CCR_b}) are not exactly canonical. All these contributions satisfy an estimate like (\ref{eq:wtcET}) and are therefore negligible. All quartic terms on the r.h.s. of (\ref{eq:T4}) are small. As an example, we can bound 
    \begin{equation} \label{eq:bound_quartic_T_3}
        \begin{split}
            \Big| \frac{2}{N} \sum_{r,v,p} &r\cdot v \, \eta_r^2 \gamma_v \gamma_p \, \langle \xi, b^*_v b^*_{p+r} b_{r+v} b_{p}\xi\rangle \Big| \\
            \le\;& \frac{C}{N} \bigg( \sum_{r,v,p} \frac{r^2 v^2}{|r+v|^2}\eta_r^2\,\| b_v b_{p+r}\xi\| \bigg)^{1/2} \bigg( \sum_{r,v,p} \eta_r^2 \, |r+v|^2 \,\|b_{r+v} b_p \xi\|^2\bigg)^{1/2}\\
            \le\;&\frac{C}{N}\langle\xi,(\mathcal{K}+1)(\mathcal{N}_++1)\xi\rangle.
        \end{split}
    \end{equation}
 All other quartic terms are bounded similarly. To estimate the quadratic term in the first line of \eqref{eq:T4}, we observe that, with the change of variables $-r-v = r'$ and $v = v'$, 
     \begin{equation}
    \label{eq:trickr_r+v}
        \frac{4}{N} \sum_{r,v} r\cdot v\,\eta_r \eta_{r+v}\gamma_v^2\;b^*_{v} b_v=-\frac{2}{N} \sum_{r,v} v^2\,\eta_r \eta_{r+v}\gamma_v^2\;b^*_{v} b_v,
    \end{equation}
which is clearly controlled by  $\cK/N$. Thus, 
    \begin{equation*} \label{eq:final_T_4}
        \big|\langle\xi,T_3\xi\rangle\big| \le  \frac{C (\log N)^{1/2}}{N} \big\langle \xi,(\mathcal{K}+1)(\mathcal{N}_++1)^2\xi\big\rangle.
    \end{equation*}

Next, let us consider the term $T_4$. With (\ref{eq:CCR_b}), we find  
    \begin{equation*}
    \begin{split}
        T_{4}=\;&\frac{2}{N} \sum_{\substack{r,v \in \Lambda_+^*\\r+v\ne0}} r\cdot v\, \eta_r \gamma_v  \big(\nu_{-r-v,r}+\nu_{r,-r-v}+\nu_{-r-v,v}+\nu_{v,-r-v}+ \nu_{r,v}+\nu_{v,r}\big)  \,b^*_v b^*_{-v} +\mathrm{h.c.}\\
        &+\frac{2}{N} \sum_{\substack{r,v,p \in \Lambda_+^*\\r+v,r+p\ne0}}r\cdot v \,\eta_r \gamma_v \big( \nu_{p,-r-p}+\nu_{r,p}+\nu_{p,r} \big)  \, b^*_v b^*_{r+p} b^*_{-p} b_{r+v}+\mathrm{h.c.}\\
        &+\frac{2}{N} \sum_{\substack{r,v,p \in \Lambda_+^*\\r+v,\,p-r-v \ne0}} r\cdot v\, \eta_r \gamma_v  \big( \nu_{p,r+v-p}+
        \nu_{-r-v,p}+\nu_{p,-r-v}\big) b^*_v b^*_{-p} b^*_{p-r-v} b_{-r}+\mathrm{h.c.}\\
        &+\widetilde{\mathcal{E}}_{T_4},
    \end{split}
    \end{equation*}
    with $\pm \widetilde{\mathcal{E}}_{T_4} \le C N^{-1} (\mathcal{K}+1)(\mathcal{N}_++1)^2$. All quartic terms can be bounded similarly as \eqref{eq:bound_quartic_T_3}. For the first two quadratic terms on the first line (the ones proportional to $\nu_{-r-v,r}$ and $\nu_{r,-r-v}$),  we notice that, for any $\varepsilon>0$, $\sum_r |r|^{-3}|r+v|^{-2} \le C |v|^{-2+\varepsilon}$ and therefore, choosing $\varepsilon< 1/2$,
    \begin{equation} \label{eq:-2+eps}
    \begin{split}
        &\Big|\frac{2}{N} \sum_{r,v} r\cdot v\,\eta_r\gamma_v ( \nu_{-r-v,r}+\nu_{r,-r-v}) \langle \xi, b^*_v b^*_{-v} \xi\rangle \Big| \\
        &\qquad \le\; \frac{C}{N} \sum_{v} |v|^{-1+\varepsilon} \|b_v\xi\|\|(\mathcal{N}_++1)^{1/2}\xi\|
        \le \frac{C}{N} \langle \xi, (\mathcal{K}+1)\xi\rangle.     
    \end{split}
    \end{equation}
     Using Eq. 
    \eqref{eq:nu+nu}, we rewrite the rest of the quadratic terms as
    \begin{equation*}
        \begin{split}
        &\frac{2}{N} \sum_{\substack{r,v \in \Lambda_+^*\\r+v\ne0}} r\cdot v\, \eta_r \gamma_v  \big(\nu_{-r-v,v}+\nu_{v,-r-v}+ \nu_{r,v}+\nu_{v,r}\big)  \,b^*_v b^*_{-v} +\mathrm{h.c.}\\
            = \, &\frac{2}{N} \sum_{\substack{r,v }} r\cdot v\, \eta_r \gamma_v\big(\eta_r+\eta_{r+v} \big)\sigma_v \;b^*_v b^*_{-v} \\
            &\;-\frac{4}{N}\sum_{\substack{r,v }} \; \frac{\eta_r^2\sigma_v \;(r\cdot v)^2}{r^2+v^2+|r+v|^2}\gamma_v\,b^*_v b^*_{-v}+\frac{4}{N}\sum_{\substack{r,v }}\;\frac{\eta_r\eta_{r+v}\sigma_v \;((r+v)\cdot v)(r\cdot v)}{r^2+v^2+|r+v|^2}\gamma_v\,b^*_v b^*_{-v} \\&\;-\frac{4}{N}\sum_{\substack{r,v }} \frac{(r\cdot v)v^2}{r^2+v^2+|r+v|^2}\, \eta_r \big((\eta_r+\eta_{r+v})(\sigma_v-\eta_v)\\&\hspace{5cm}+\eta_v(\sigma_{r+v}-\eta_{r+v})+\eta_v(\sigma_{r}-\eta_{r})\big) \gamma_v\, b^*_v b^*_{-v}+\mathrm{h.c.}
        \end{split}
    \end{equation*}
The terms in the last three lines can all be bounded, first summing over $r$ and then proceeding similarly as in (\ref{eq:-2+eps}), by $C N^{-1} (\cK+1)$. As for the first term on the r.h.s., the contribution proportional to $r \cdot v \, \eta^2_r \gamma_v \sigma_v$  vanishes, as can be seen replacing $r \to -r$. Moreover, switching $r \to -r - v$, 
\[ \begin{split} \frac{1}{N} \sum_{r,v} r \cdot v \, \eta_r \eta_{r+v} \gamma_v \sigma_v \big(b_v^* b_{-v}^* + \text{h.c.} \big) &= - \frac{1}{N} \sum_{r,v} (r+v) \cdot v \, \eta_r \eta_{r+v} \gamma_v \sigma_v \big(b_v^* b_{-v}^* + \text{h.c.} \big) \\ &= -\frac{1}{2N} \sum_{r,v} v^2 \eta_r \eta_{r+v} \gamma_v \sigma_v \big( b_v^* b_{-v}^* + \text{h.c.} \big) \end{split} \]
With $\sum_r | \eta_r|  |\eta_{r+v}| \leq C |v|^{-1}$, we can proceed as in (\ref{eq:-2+eps}) to prove that also this contribution is bounded by $C N^{-1} (\cK+ 1)$. We conclude that
    \begin{equation*}
        \pm T_4 \le \frac{C}{N}(\mathcal{K}+1)(\mathcal{N}_++1)^2.
    \end{equation*}

    Next, we show (\ref{eq:V,A_2,A}).
    We write 
    \[
    \big[ [\cV_N,A]_2, A \big] + \text{h.c.}  = \sum_{j=1}^3S_j
    \]
where
\begin{equation*}
    \begin{split}
        S_1 = &\, \frac{1}{N^{3/2}} \sum_{\substack{\;r,v,p,q\in \Lambda_+^*\\ r+v,p+q\ne 0}}\alpha_{r,v}\eta_p\gamma_q \,[b^*_{r+v}b^*_{-r} b^*_{-v}, b^*_{p+q}b^*_{-p}b_q]+\mathrm{h.c.}\\
        S_2 = &\, \frac{1}{N^{3/2}} \sum_{\substack{r,v,p,q\in \Lambda_+^*\\ r+v,p+q\ne 0}}\alpha_{r,v}\eta_p\gamma_q \,[b_{r+v}b_{-r} b_{-v}, b^*_{p+q}b^*_{-p}b_q]+\mathrm{h.c.}\\
        S_3 = &\, \frac{1}{N^{3/2}} \sum_{\substack{\;r,v,p,q\in \Lambda_+^*\\ r+v, p+q\ne 0}}\alpha_{r,v} \nu_{p,q} \,[b_{r+v}b_{-r} b_{-v}, b^*_{p+q}b^*_{-p}b^*_{-q}]+\mathrm{h.c.},\\
    \end{split}
\end{equation*}
and the coefficient $\alpha_{r,v}$ is defined as in \eqref{eq:defalpha}. With (\ref{eq:CCR_b}), we find 
\[\begin{split}
S_1 =&\,  -\frac{1}{N^{3/2}} \sum_{\substack{r,v,p\in \Lambda_+^*\\ r+v, p+r+v\neq 0}} \eta_p\gamma_{r+v} \big(\alpha_{r,v} + \alpha_{-r-v,v}+\alpha_{r,-r-v}\big) b^*_{p+r+v}b^*_{-p}b^*_{-r} b^*_{-v} +\hc + \wt{\cE}_{S_1} 
\end{split}\]
where $\pm \wt\cE_{S_1} \leq C N^{-1} (\cK+1)(\cN_++1)^2$. Let us bound, using Eq. \eqref{eq:boundalpha}, the  contribution proportional to $\alpha_{r,v}$; the other terms can be controlled analogously.
\begin{equation}
\label{eq:S11}
\begin{split}
\Big|\frac{1}{N^{3/2}}&\sum_{\substack{r,v,p}} \eta_p\gamma_{r+v} \alpha_{r,v} \,\langle \xi, b^*_{p+r+v}b^*_{-p}b^*_{-r} b^*_{-v}\xi \rangle\Big|\\&\leq \frac{C}{N^{3/2}}\bigg(\sum_{\substack{r,v,p}}|\eta_p|^2\frac{|\alpha_{r,v}|^2}{r^2}\bigg)^{1/2} \bigg(\sum_{\substack{r,v,p}} r^2 \|b_{p}b_{r}b_{v}(\cN_++1)^{-1/2}\xi\|^2\bigg)^{1/2}\|(\cN_++1)\xi\|\\&\leq \frac{C}{N} \langle \xi,(\cK+1)(\cN_++1) \xi\rangle \,. 
\end{split}
\end{equation}
We conclude that $\pm S_1 \leq C N^{-1} (\cK +1)(\cN_+ + 1)^2$.

Next, we consider the term $S_2$. Using (\ref{eq:CCR_b}), we find 
\[
\begin{split}
    S_2 =&\, \frac{1}{N^{3/2}} \sum_{\substack{ r,v \in \L^*_+\\ r+v\ne0 }} \big(\eta_{r+v} +\eta_v\big)\gamma_r\,\big(\alpha_{r,v}+\alpha_{-r-v,v}+\alpha_{v,r}\big)b^*_rb^*_{-r} +\hc\\
    &+\frac{1}{N^{3/2}} \sum_{\substack{ r,v \in \L^*_+\\r+v\ne 0 }} \big(\eta_p+\eta_v\big)\g_{v+p}\big(\alpha_{r,v}+\alpha_{v,r}+\alpha_{r,-r-v}\big)b^*_{-p}b_{-r}b_{r+v}b_{-v-p} +\hc\\
    & + \wt{\cE}_{S_2}\,,
\end{split}
\]
where $\pm \wt\cE_{S_2} \leq CN^{-1} (\cK + 1) (\cN_+ + 1)^2$.  The quartic terms  can be controlled in a similar way to $S_1$, using the second bound in \eqref{eq:boundalpha}. Also the first two quadratic contributions, proportional to $\alpha_{r,v}$ and $\alpha_{-r-v,v}$, can be estimated using the second bound in \eqref{eq:boundalpha}. As for the last quadratic term, the one proportional to $\alpha_{v,r}$, it can be handled using the third bound of (\ref{eq:boundalpha}). In fact, 
\begin{equation}
\label{eq:S2}
\begin{split}
\Big|\frac{1}{N^{3/2}}& \sum_{r,v } \big(\eta_{r+v} +\eta_v\big)\gamma_v\,\alpha_{v,r}\langle \xi,b^*_rb^*_{-r}\xi\rangle \Big|\\ &\leq \frac{C}{N^2}\sum_{r,v }\frac{|\eta_{r+v}|+|\eta_v|}{|r|}\|b_r\xi\|\,\|(\cN_++1)^{1/2}\xi\|\leq \frac{C}{N}\langle \xi, (\cK+1)\xi \rangle.
\end{split}
\end{equation}
We conclude that $\pm S_2 \leq C N^{-1} (\cK +1)(\cN_+ + 1)^2$. 

Finally, we consider $S_3$. We decompose $S_3 = S_3^{(0)} + S_3^{(2)} +S_3^{(4)} + \wt{\cE}_{S_3}$, where $\pm \wt{\cE}_{S_3} \leq C N^{-1} (\cK + 1)( \cN_+ + 1)^2$, 
\[
\begin{split}
    S_3^{(0)} = &\, \frac{2}{N^{3/2}}\sum_{\substack{ r,v \in \L^*_+\\ r+v\ne0 }} \alpha_{r,v}\,\big( \nu_{r,v}+\nu_{r,-r-v}+\nu_{v,r}+\nu_{v,-r-v}+\nu_{-r-v,r}+\nu_{-r-v,v}\big) \, , 
\end{split}
\]
\[
\begin{split}
S_3^{(2)} = &\frac{2}{N^{3/2}} \sum_{\substack{ r,v \in \L^*_+\\r+v\ne0}}\big(\alpha_{r,v}+2\alpha_{v,r}\big) \big( \nu_{r,v}+\nu_{r,-r-v}+\nu_{v,r}+\nu_{v,-r-v}+\nu_{-r-v,r}+\nu_{-r-v,v}\big) b^*_r b_r
\end{split}
\]
and
\begin{equation*}
\begin{split}
    S_3^{(4)} =&\, \frac{1}{N^{3/2}} \sum_{\substack{r,v,p\in \Lambda_+^*\\r+v,r+p\neq 0}} \big( \alpha_{r,v}+\alpha_{v,r}+\alpha_{-r-v,v} \big)\big(\nu_{p,-r-p}+\nu_{p,r}+\nu_{r,p} \big)b^*_{p+r}b^*_{-p}b_{r+v}b_{-v}+\hc
\end{split}
\end{equation*}
Using (\ref{eq:boundalpha}) (in particular, the bound for the first two terms on the l.h.s. of (\ref{eq:boundalpha})), as well as $|\nu_{r,p}|\leq C|\eta_r|\,|p|^{-2}$, $S_3^{(4)}$ can be estimated by $CN^{-1}(\cK+1)(\cN_++1)$.

Also the quadratic terms are negligible. Indeed,  the terms proportional to $\alpha_{r,v}$ are easily bounded by $N^{-3/2}(\cN_++1)$, using the first estimate in  (\ref{eq:boundalpha}). The quadratic terms proportional to $\alpha_{v,r}$, on the other hand, can be controlled using the third bound in (\ref{eq:boundalpha}) and $|\nu_{r,p}|\leq C\,|\eta_r||\sigma_p| $, similarly to (\ref{eq:S2}). Finally, we consider the constant term in $S^{(0)}_3$. From the definition \eqref{eq:def_wtsigma} of $\nu_{r,v}$, we arrive at 
\begin{equation}\label{eq:S3-dec}
\begin{split}
S_3^{(0)}=&\, \frac{2}{N^{3/2}} \sum_{\substack{r,v\in \Lambda_+^*\\r+v\neq 0}}\alpha_{r,v}\sigma_v\big(\eta_r+\eta_{r+v}\big)\\
&+\frac{2}{N^{3/2}} \sum_{\substack{r,v\in \Lambda_+^*\\r+v\neq 0}}\alpha_{r,v}\, \Big[\frac{\sigma_v(r\cdot v)(\eta_{r+v} -\eta_r)}{r^2+v^2+(r+v)^2} + \frac{v^2\eta_v\sigma_{r+v}}{r^2+v^2 +(r+v)^2}\Big]\\
&+\frac{2}{N^{3/2}} \sum_{\substack{r,v\in \Lambda_+^*\\r+v\neq 0}}\alpha_{r,v}\frac{v^2}{r^2+v^2+(r+v)^2}[\eta_v(\sigma_r-\eta_r) -\eta_r(\sigma_v-\eta_v)] \\
&+\frac{2}{N^{3/2}} \sum_{\substack{r,v\in \Lambda_+^*\\r+v\neq 0}}\alpha_{r,v}\big(\nu_{r,-r-v}+\nu_{-r-v,r}\big)
 \end{split}
\end{equation} 
With (\ref{eq:boundalpha}), $|\nu_{r,p}|\leq C|\eta_r||\sigma_p|$ and $|\sigma_r-\eta_r|\leq C |\tau_r|$, it is easy to show that the terms in the third and fourth lines are small, bounded by $ CN^{-1}$. The terms on the second line are also negligible, since 
\[\begin{split}
& \Big|\frac{2}{N^{3/2}} \sum_{r,v}\alpha_{r,v} \frac{\sigma_v(\eta_{r+v}-\eta_r)(r\cdot v) + v^2\eta_v\sigma_{r+v}}{r^2+v^2+(r+v)^2}\Big|\\
& \qquad \qquad \qquad\, \leq \frac{C}{N^2}\sum_{r,v}\Big(\frac{|\eta_{r+v}||\sigma_v|}{|r|}+ \frac{|\eta_{r}||r|}{(r+v)^2v^2} + \frac{|\eta_{v}||v|}{(r+v)^2r^2}\Big)\leq\frac{C}{N}.
\end{split}\]
Recalling the definition (\ref{eq:defalpha}) of $\alpha_{r,v}$ to rewrite the first term on the r.h.s. of (\ref{eq:S3-dec}), we obtain (\ref{eq:V,A_2,A}).
\end{proof}

Combining Lemma \ref{lemma:K,V,A} with Lemma \ref{lemma:second_commut_H_A}, we can now compute the action of $A$ on the Hamiltonian $\cH_N$. 
\begin{prop} \label{prop:cubic_on_H}
    Let $A$ be defined as in \eqref{eq:def_A}, and $\mathcal{H}_N=\mathcal{K}+\mathcal{V}_N$. Let $\mathcal{C}_{\mathcal{G}_N}$ be defined in \eqref{eq:cC_G_N}, $[\mathcal{K},A]_2$ and $[\mathcal{V}_N,A]_2$ be defined in Lemma \ref{lemma:K,V,A}. Then, under the same assumptions as in Theorem \ref{thm:cubic}, we have 
    \begin{equation} \label{eq:cubic_on_H}
    \begin{split}
        e^{-A}\mathcal{H}_N e^{A} =\;&\mathcal{H}_N- \mathcal{C}_{\mathcal{G}_N} \\ &-\frac{1}{N}\sum_{p,q\in\Lambda_+^* : p+q \ne 0}\Big(\widehat{V} (p/N)+\widehat{V} ((p+q)/N) \Big)\eta_p \\ &\hspace{3cm} \times \Big[\sigma_q^2+\big(\gamma_q^2+\sigma_{q}^2\big)b^*_qb_q+\gamma_q\sigma_q\big(b_qb_{-q}+b^*_q b^*_{-q}\big)\Big]\\
            &+ \frac{1}{N}\sum_{\substack{p,q\in\Lambda_+^*\\p+q\ne0}}\bigg[\big(\widehat{V}(\cdot/N)\ast\widehat f_{N,\ell}\big)_p +\big(\widehat{V}(\cdot/N)\ast\widehat f_{N,\ell}\big)_{p+q}\bigg]\\
            &\hspace{4cm}\times\eta_p\,\sigma_q\frac{2\eta_{q+p}(q+p)^2-2\sigma_q(p\cdot q)}{p^2+q^2+(p+q)^2}
            \\ &+ \cE_{\cH_N} 
                   \end{split}
    \end{equation}
    where, for every $\eps > 0$ (s.t. $\eps > C (\log N) / N$) 
    \[  \pm \cE_{\cH_N} \leq \eps \cK + \frac{C}{N} \left[ (\log N)^{1/2} + \eps^{-1} \right] (\cH_N + 1 ) (\cN_+ + 1)^4 .\]
\end{prop}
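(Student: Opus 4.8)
The plan is to expand $e^{-A}\cH_N e^A$ by Duhamel to second order,
\[ e^{-A}\cH_N e^A = \cH_N + [\cH_N, A] + \int_0^1 (1-s)\, e^{-sA}\big[[\cH_N, A], A\big]\, e^{sA}\, ds, \]
and to classify the resulting contributions into those surviving at order $(\log N)/N$ and those absorbed into $\cE_{\cH_N}$ (a further Duhamel step being used for the pieces of $[\cH_N,A]$ whose first-order contribution is itself negligible but whose commutator with $A$ still matters). For the \emph{first-order term} I would decompose $[\cH_N, A] = [\cK, A] + [\cV_N, A]$ via Lemma \ref{lemma:K,V,A}. The cubic pieces $[\cK, A]_1$ and $[\cV_N, A]_1$ carry the weights $2p^2\eta_p$ and $N^{-1}(\widehat{V}(\cdot/N)\ast\eta)_p$; since $\widehat{f}_{N,\ell}(p) = \delta_{p,0} + N^{-1}\eta_p$, the scattering equation \eqref{scattering} reads $2p^2\eta_p + (\widehat{V}(\cdot/N)\ast\widehat{f}_{N,\ell})_p = 2N^3\lambda_\ell\widehat{\chi}_\ell(p) + 2N^2\lambda_\ell(\widehat{\chi}_\ell\ast\eta)_p$, so $[\cK, A]_1 + [\cV_N, A]_1 + \mathrm{h.c.}$ collapses to $-\cC_{\cG_N}$ plus a remainder coming from the right side of \eqref{scattering} and from the momentum restrictions in the sums (the $\eta_0$-terms). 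By Lemma \ref{lemma:scattering} ($\lambda_\ell \sim \mathfrak{a}(N\ell)^{-3}$) and the decay of $\widehat{\chi}_\ell$ this remainder is a cubic operator that can be estimated just like the first-order error terms, hence absorbed. The leftover pieces $[\cK, A]_2 + \mathrm{h.c.}$, $[\cV_N, A]_2 + \mathrm{h.c.}$ and $\cE_{[\cV_N, A]}$ are handled by \eqref{eq:bound[K,A]2}, \eqref{eq:bound[VN,A]2}, \eqref{eq:[VN,A]3} together with Prop. \ref{prop:a_priori_A} for the conjugation $e^{-sA}(\cdot)e^{sA}$, after a Young step turning $N^{-1/2}\cK^{1/2}$-factors into $\eps\cK + C\eps^{-1}N^{-1}(\cdots)$; so the net first-order contribution is exactly $-\cC_{\cG_N}$ up to the error.

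For the \emph{second-order term} I write $[\cH_N, A] = -\cC_{\cG_N} + \cR + ([\cK, A]_2 + [\cV_N, A]_2 + \mathrm{h.c.}) + \cE_{[\cV_N, A]}$, with $\cR$ the scattering remainder, and commute once more with $A$. Lemma \ref{lemma:commut_C,A} expresses $[-\cC_{\cG_N}, A]$ as (minus) the explicit quadratic term on the second and third lines of \eqref{eq:cubic_on_H} — including the $\sigma_q^2$ constant there — together with (minus) the $\widehat{V}$-part of the explicit constant on the fourth and fifth lines, plus operator remainders $\cB$ (quadratic, obeying \eqref{eq:B-ass}), $\cD$ (quartic, obeying \eqref{eq:D-ass}) and $\cE_{[\cC, A]}$. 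By Lemma \ref{lemma:second_commut_H_A}, $[[\cK, A]_2, A] + \mathrm{h.c.}$ is bounded by $C(\log N)^{1/2}N^{-1}(\cK+1)(\cN_++1)^2$ — the unique source of the $(\log N)^{1/2}$ in $\cE_{\cH_N}$ — while $[[\cV_N, A]_2, A] + \mathrm{h.c.}$ produces, by \eqref{eq:V,A_2,A}, the residual $N^{-2}(\widehat{V}(\cdot/N)\ast\eta)$-constant, which merges with the $\widehat{V}$-constant above (again through \eqref{scattering}) into the $N^{-1}(\widehat{V}(\cdot/N)\ast\widehat{f}_{N,\ell})$-constant on lines four and five of \eqref{eq:cubic_on_H}; the terms $[\cR, A]$ and the contribution of $\cE_{[\cV_N, A]}$ are negligible by the same estimates. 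The factor $\int_0^1 (1-s)\, ds = \tfrac12$ yields the $\tfrac12$ in front of the surviving constant and quadratic pieces — legitimately, since conjugation by $e^{sA}$ fixes a constant and, by Prop. \ref{prop:quad-A}, leaves a quadratic operator invariant up to an error of the allowed size — while $\cB$, $\cD$, $\cE_{[\cC, A]}$ are controlled after conjugation by Lemma \ref{lm:quad-A}, by the argument of Prop. \ref{prop:T4-bd} (via Lemma \ref{lemma:commut_quartic_A_1}), and by Prop. \ref{prop:a_priori_A} with \eqref{eq:cEcC-bd}, respectively. Collecting everything and checking that each error fits $\eps\cK + CN^{-1}[(\log N)^{1/2} + \eps^{-1}](\cH_N+1)(\cN_++1)^4$ gives \eqref{eq:cubic_on_H}.

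The hard part is the control of $[[\cH_N, A], A]$, above all of $[[\cK, A]_2, A]$: since $[\cK, A]_2$ carries the large weight $r\cdot v$, rather than a bounded or square-summable one, commuting it with $A$ generates sums like $\sum_r |r|^{-1}|\eta_r|$ which diverge logarithmically — this forces the $(\log N)^{1/2}$ loss, and is precisely why the coefficients $\nu_{r,v}$ in $A$ are chosen so that the $b^*_{r+v}b^*_{-r}b^*_{-v}$-part of $[\cK, A]$ carries only the factor $r^2$ and not $r\cdot v$ (cf. the remark after Lemma \ref{lemma:K,V,A}). A secondary difficulty is the bookkeeping: one must track the double use of the scattering equation — once to collapse $[\cK,A]_1+[\cV_N,A]_1$ to $-\cC_{\cG_N}$, once to upgrade $\widehat{V}$ to $\widehat{V}(\cdot/N)\ast\widehat{f}_{N,\ell}$ in the second-order constant — against all lower-order remainders, so as to be sure that the explicit right side of \eqref{eq:cubic_on_H} is exactly correct and that nothing of order $(\log N)/N$ is left in $\cE_{\cH_N}$.
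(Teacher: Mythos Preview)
Your proposal is correct and follows essentially the same route as the paper: the same collapse of $[\cK,A]_1+[\cV_N,A]_1+\hc$ to $-\cC_{\cG_N}$ via the scattering equation, the same Lemmas \ref{lemma:commut_C,A} and \ref{lemma:second_commut_H_A} for the double commutators, the same identification of $[[\cK,A]_2,A]$ as the source of the $(\log N)^{1/2}$, and the same recombination of the $\widehat V$- and $N^{-1}(\widehat V(\cdot/N)\ast\eta)$-constants into the $\widehat V(\cdot/N)\ast\widehat f_{N,\ell}$-constant. The only organizational difference is that you open with a single second-order Duhamel $\cH_N+[\cH_N,A]+\int_0^1(1-s)\,e^{-sA}[[\cH_N,A],A]e^{sA}\,ds$, whereas the paper writes the first-order Duhamel $\cH_N+\int_0^1 e^{-sA}[\cH_N,A]e^{sA}\,ds$ and then iterates selectively: it invokes Proposition \ref{prop:A_on_cubic} (after integration over $s$) for the $-\cC_{\cG_N}$-piece, performs one more Duhamel only on the $[\cK,A]_2+[\cV_N,A]_2$-piece, and bounds the remaining error piece $\int e^{-sA}\cE_{[\cH_N,A]}e^{sA}$ directly via Proposition \ref{prop:a_priori_A}. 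Both organizations produce the same $\tfrac12$ in front of the surviving constant and quadratic contributions.

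One point to watch in your scheme: the error pieces $\cR$ and $\cE_{[\cV_N,A]}$ formally reappear inside the second-order integral as $[\cR,A]$ and $[\cE_{[\cV_N,A]},A]$, for which no bound is provided in the paper. The clean fix --- implicitly what you gesture at with ``Prop.~\ref{prop:a_priori_A} for the conjugation'' --- is to undo the Duhamel for just this piece via $Y+\int_0^1(1-s)\,e^{-sA}[Y,A]e^{sA}\,ds=\int_0^1 e^{-sA}Ye^{sA}\,ds$ and then bound the conjugated $Y$ by \eqref{eq:[VN,A]3} and Proposition \ref{prop:a_priori_A}; this is exactly what the paper's first-order organization does from the outset.
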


\begin{proof}
From Lemma \ref{lemma:K,V,A}, we find, using the  scattering equation (\ref{scattering}) to combine $[\cK,A]_1$ and $[\cV_N, A]_1$, 
\[ \big[ \cH_N , A \big] =  - \cC_{\cG_N} + \big\{ [\cK,A]_2 + [\cV_N , A]_2 + \text{h.c.} \big\}  + \cE_{[\cH_N,A]}   \]
where 
\[ |\langle \xi, \cE_{[\cH_N , A]} \xi \rangle | \leq \frac{C}{\sqrt{N}} \| \cN_+^{1/2} \xi \| \| (\cN_+ + 1) \xi \| + \frac{C}{N} \langle \xi, (\cH_N+ 1) (\cN_+ + 1)^3 \xi \rangle \]
Here we used the estimate $\| \widehat{\chi}_\ell * \widehat{f}_{N,\ell} \|_2 = \| \chi_\ell f_{N,\ell} \|_2 \leq \| \chi_\ell \|_2 \leq C$ (for fixed $\ell > 0$, independent of $N$), to absorb into the error $\cE_{[\cH_N, A]}$ the contribution arising from the r.h.s. of the scattering equation (\ref{scattering}), when combining $[\cK,A]_1$ and $[\cV_N, A]_1$.
Thus, we obtain 
\begin{equation} \label{eq:eHe} \begin{split} 
e^{-A} \cH_N e^A = \; &\cH_N + \int_0^1 ds \, e^{-sA} [\cH_N, A] e^{sA} \\ = \; &\cH_N - \int_0^1 e^{-sA} \cC_{\cG_N} e^{sA} \, ds+ \int_0^1  e^{-sA} \left\{ [\cK,A]_2 + [\cV_N , A]_2 + \text{h.c.} \right\}  e^{sA} \, ds \\ &+ \int_0^1 e^{-sA} \cE_{[\cH_N , A]} e^{sA} \, ds 
\end{split} \end{equation} 
From Prop. \ref{prop:a_priori_A}, we find  
\[ | \langle \xi, e^{-sA} \cE_{[\cH_N,A]} e^{sA} \xi \rangle | \leq \frac{1}{\sqrt{N}} \| \cN_+^{1/2} \xi \| \| (\cN_+ + 1) \xi \|   +  \frac{C}{N} \langle \xi, (\cH_N +1) (\cN_+ + 1)^4 \xi \rangle. \]

With Prop. \ref{prop:A_on_cubic} we obtain, after integration over $s$, 
\begin{equation}\label{eq:intC} \begin{split}  \int_0^1 e^{-sA} & \cC_{\cG_N} e^{sA} ds \\\; =&\,\cC_{\cG_N} + \frac{1}{N} \sum_{p,q \in \Lambda^*_+ : p+q \not=0} \Big(\widehat{V} (p/N)+\widehat{V} ((p+q)/N) \Big) \eta_p \\
    &\qquad\qquad\times\Big[\sigma_q^2+\big(\gamma_q^2+\sigma_{q}^2\big)b^*_qb_q+\gamma_q\sigma_q\big(b_qb_{-q}+b^*_q b^*_{-q}\big)\Big]\\
    &\;+\frac{1}{N}\sum_{\substack{p,q\in\Lambda_+^*\\p+q\ne0}}\bigg[\widehat{V} (p/N)+\widehat{V} ((p+q)/N)\bigg]\eta_p\sigma_q\frac{2\eta_{q+p}(q+p)^2-2\sigma_q(p\cdot q)}{p^2+q^2+(p+q)^2} \\
    &\;+\mathcal{E}_{\mathcal{C}_{\cG_N}}, \end{split}  \end{equation} 
where, for every $\eps > 0$, \[ \pm \cE_{\cC_{\cG_N}}  \leq \eps \cK + \frac{C}{\eps N} (\cH_N +1) (\cN_+ + 1)^4. \]

Finally, we compute 
\[ \begin{split} e^{-sA} &\left\{ [\cK,A]_2 + [\cV_N , A]_2 + \text{h.c.} \right\}  e^{sA} \\ &= [\cK,A]_2 + [\cV_N , A]_2 + \text{h.c.}  + \int_0^s  e^{-t A} \left\{ \big[ [\cK,A]_2 ,A \big] + \big[ [\cV_N , A]_2 , A \big] + \text{h.c.} \right\}  e^{tA} \, dt. \end{split} \]
We estimate the terms in the integral over $t$ using Lemma \ref{lemma:second_commut_H_A}. 
We conclude that 	
\begin{equation}\label{eq:cKA2+cVA2} 
\begin{split} \int_0^1 &e^{-sA} \left\{ [\cK,A]_2 + [\cV_N , A]_2 + \text{h.c.} \right\}  e^{sA} \, ds \\ =\; &
[\cK,A]_2 + [\cV_N , A]_2 + \text{h.c.} \\
&+ \frac{1}{N^2}\sum_{\substack{p,q\in\Lambda_+^*\\p+q\ne0}}\bigg[\big(\widehat{V}(\cdot/N)\ast\eta\big)_p +\big(\widehat{V}(\cdot/N)\ast\eta\big)_{p+q}\bigg]\eta_p\,\sigma_q\frac{2\eta_{q+p}(q+p)^2-2\sigma_q(p\cdot q)}{p^2+q^2+(p+q)^2} \\ & + \cE_{[[\cH_N,A],A]} \end{split} \end{equation} 
where 
\[ \pm \cE_{[[\cH_N,A],A]} \leq \frac{C (\log N)^{1/2}}{N}(\mathcal{K}+1)(\mathcal{N}_++1)^2. \] Furthermore,  from \eqref{eq:bound[VN,A]2} we have  
\[ \begin{split} \big| \langle \xi, &\big[ \cV_N, A \big]_2 \xi \rangle \big| 
\leq \frac{C}{\sqrt{N}} \| \cK^{1/2} \xi \| \| (\cN_+ + 1) \xi \|. \end{split} \] 
As for $[\cK, A]_2$, from \eqref{eq:bound[K,A]2} we also have 
\[ \big| \langle \xi, [\cK,A]_2 \xi \rangle \big| \leq \frac{C}{\sqrt{N}} \| \cK^{1/2} \cN_+^{1/2} \xi \| \| \cK^{1/2} \xi \|. \]
  
Combining  (\ref{eq:cKA2+cVA2}) with (\ref{eq:intC}), and (\ref{eq:eHe}), and using the definition of $\eta_p = -N \delta_{p,0} + N\widehat f_{N,\ell}(p)$ we arrive at (\ref{eq:cubic_on_H}). 
\end{proof}

\subsection{Proof of Theorem \ref{thm:cubic}} 

 We start from (\ref{eq:expansion_G_N}) in Theorem \ref{thm:quadratic}. We use Prop. \ref{prop:a_priori_A} to estimate 
\[ \pm e^{-A} \cE_{\cG_N} e^A \leq  \eps \cN_+ + \frac{C}{\eps N} (\cH_N +1)( \cN_+ + 1)^3 \]

Combining the expansion (\ref{eq:expansion_G_N}) for $\cG_N$ with Prop. \ref{prop:a_priori_A} and with the results of Prop. \ref{prop:quad-A}, Prop. \ref{prop:T4-bd}, Prop. \ref{prop:A_on_cubic} and Prop. \ref{prop:cubic_on_H}, we conclude that  
\begin{equation} \label{eq:cJN1}  \begin{split} \cJ_N = \; &C_{\cG_N} + \frac{1}{N} \sum_{p,q} (\widehat{V} (p/N)  + \widehat{V} ((q+p)/N)) \eta_p \s_q^2  \\
            &+ \frac{1}{N}\sum_{\substack{p,q\in\Lambda_+^*\\p+q\ne0}}\bigg[\big(\widehat{V}(\cdot/N)\ast\widehat f_{N,\ell}\big)_p +\big(\widehat{V}(\cdot/N)\ast\widehat f_{N,\ell}\big)_{p+q}\bigg]\\
            &\hspace{4cm}\times\eta_p\,\sigma_q\frac{2\eta_{q+p}(q+p)^2-2\sigma_q(p\cdot q)}{p^2+q^2+(p+q)^2}\\
&+ \sum_{p \in \Lambda^*_+} \sqrt{|p|^4+2 (\widehat{V} (\cdot /N) * \widehat{f}_{N,\ell})_p \, p^2} \, a_p^* a_p     + \cV_N  + \wt{\cE}_{\cJ_N} \end{split} \end{equation} 
where, for every $\eps > 0$ (s.t. $\eps > C (\log N) / N$), we have 
\[ \pm \wt{\cE}_{\cJ_N} \leq \eps \cK + \frac{C}{N} \left[ (\log N)^{1/2} + \eps^{-1} \right] (\cH_N+1)( \cN_+ + 1)^4\,.  \]
Notice here that the quadratic terms appearing on the second line of (\ref{eq:expansion_G_N}) cancel exactly the quadratic contribution arising from conjugation of the cubic term $\cC_{\cG_N}$ (as determined in Prop. \ref{prop:A_on_cubic}) and of the Hamiltonian $\cH_N$ (as determined in Prop. \ref{eq:cubic_on_H}) \footnote{Roughly speaking, the goal of conjugation with $\exp (A)$ is to eliminate the cubic term $\cC_{\cG_N}$ appearing in the expression (\ref{eq:expansion_G_N}) for $\cG_N$. To reach this goal, we require that, in an appropriate sense, $[\cH_N, A] \simeq - \cC_{\cG_N}$. But then, $[ [ \cH_N, A] , A] \simeq - [ \cC_{\cG_N} , A]$, which implies that $e^{-A} \cH_N e^A \simeq \cH_N - \cC_{\cG_N} - [ \cC_{\cG_N}, A] /2$ and therefore that $e^{-A} (\cH_N + \cC_{\cG_N} ) e^A \simeq \cH_N + [ \cC_{\cG_N}, A] /2$. Since $\cC_{\cG_N}$ and $A$ are both cubic in (modified) creation and annihilation operators, $[\cC_{\cG_N} , A] / 2$ is, to leading order, quartic. Arranging it in normal order, we generate quadratic terms (cancelling the quadratic operator on the second line of (\ref{eq:expansion_G_N})) and constant terms (correcting the ground state energy).} 

To complete the proof of Theorem \ref{thm:cubic}, we observe, first of all, that the terms  on the second line of (\ref{eq:cJN1}) produce, up to smaller errors, the terms on the second line of (\ref{eq:def_C_J}). To this end, we consider the difference 
\[\begin{split}
&\frac{1}{N}\sum_{\substack{p,q\in\Lambda_+^*\\p+q\ne0}}\bigg[\big(\widehat{V}(\cdot/N)\ast\widehat f_{N,\ell}\big)_p +\big(\widehat{V}(\cdot/N)\ast\widehat f_{N,\ell}\big)_{p+q}\bigg]\\
& \hspace{4cm} \times \bigg[\frac{2\eta_{q+p}\eta_p\,(\sigma_q-\eta_q)(q+p)^2}{p^2+q^2+(p+q)^2}-\frac{2\eta_p\,(\sigma_q^2-\eta_q^2)(p\cdot q)}{p^2+q^2+(p+q)^2}\bigg]\,.
\end{split}\]
where we compare $\sigma_q$ and $\sigma_q^2$ with $\eta_q$ and, respectively, $\eta_q^2$. Using $|\sigma_q-\eta_q| \leq C|q|^{-4}$, the first contribution can be shown to be of order $N^{-1}$. Also the second contribution, containing the factor $p \cdot q$, is of order $N^{-1}$; this can be proven proceeding similarly as we did after  \eqref{eq:Pi_1quadratic}, switching $p\rightarrow -p$ and using the Lipschitz continuity of $\widehat{V}$ for $|p| \leq N$.

\medskip

Finally, we claim that the constant terms on the first line on the r.h.s. of (\ref{eq:cJN1}) match, up to negligible errors, the terms on the first line on the r.h.s. of (\ref{eq:def_C_J}). To this end, we set (recall (\ref{eq:C_G_N}) for the definition of  $C_{\cG_N}$) 
\[ \begin{split} C_{\cO (1)} = \; &C_{\cG_N} + \frac{1}{N} \sum_{p,q \in \Lambda^*_+}  (\widehat{V} (p/N)  + \widehat{V} ((q+p)/N)) \eta_p \s_q^2 \\ = \; &\frac{(N-1)}{2} \, \widehat V(0)  + \sum_{p \in \L^*_+} \Big[ p^2\s_p^2  + \widehat V (p/N) \s_p \g_p + \big( \widehat{V} (\cdot/N) *\hat{f}_{N} \big)_p  \s_p^2 \Big] \\
 & + \frac{1}{2N} \sum_{p \in \L^*_+} \widehat V ((p-q)/ N) \s_q\g_q\s_p\g_p + \frac1N \sum_{p \in \L^*_+}\Big( p^2\eta_p^2 + \frac1{2N}\left(\widehat V\left(\frac{\cdot}{N}\right) *\eta\right)_p \eta_p\Big) \end{split}  \]
with $\gamma_p=\cosh(\mu_p)$, $\sigma_p=\sinh(\mu_p)$, and $\mu_p=\eta_p+\tau_p$. We now claim that 
\begin{equation}\label{eq:CO1} C_{\cO (1)} = 4\pi\mathfrak{a}(N-1)+e_\Lambda \mathfrak{a}^2-\frac{1}{2} \sum_{p\in\Lambda_+^*} \Big[p^2+8\pi\mathfrak{a}-\sqrt{|p|^4+16 \pi \mathfrak{a}p^2} - \frac{(8\pi\mathfrak{a})^2}{2p^2}\Big]+\mathcal{O}(N^{-1}) \end{equation} 
To show (\ref{eq:CO1}), we consider the following quantities, defined in terms of the kernel $\eta$:
\begin{equation*}
\begin{split}
    \widetilde{{C}}_{\mathcal{J}_N}=\;&\frac{N-1}{2}\widehat{V}(0)\\
    &+\sum_{p\in\Lambda_+^*}\bigg[p^2\sinh(\eta_p)^2+\widehat{V} (p/N) \cosh(\eta_p)\sinh(\eta_p)+\big(\widehat{V} (\cdot / N)*\widehat{f}_{N,\ell} \big)_p\sinh(\eta_p)^2\bigg]\\
    &+\frac{1}{2N}\sum_{p,q\in\Lambda_+^*} \widehat{V} ((p-q)/N) \cosh(\eta_q)\sinh(\eta_q) \cosh(\eta_p)\sinh(\eta_p)\\
    &+\frac{1}{N}\sum_{p\in\Lambda_+^*}\bigg[p^2\eta_p^2+\frac{1}{2N}\big(\widehat{V} (\cdot /N)*\eta\big)_p\eta_p\bigg] 
\end{split}
\end{equation*}
and 
\begin{equation*}
\begin{split}
    \widetilde{F}_p=\;&p^2\big(\cosh(\eta_p)^2+\sinh(\eta_p)^2\big)+\big(\widehat{V} (\cdot / N) *\widehat{f}_{N,\ell} \big)_p \big( \cosh(\eta_p)+\sinh(\eta_p)\big)^2\\
    \widetilde{G}_p=\;&2p^2\cosh(\eta_p)\sinh(\eta_p)+\big(\widehat{V} (\cdot / N) *\widehat{f}_{N,\ell} \big)_p \big( \cosh(\eta_p)+\sinh(\eta_p)\big)^2.
\end{split}
\end{equation*}
In \cite[Lemma 5.4 (i)]{BBCS} it was proven that
\begin{equation} \label{eq:acta_E_bog}
\begin{split}
    \widetilde{C}_{\mathcal{J}_N} &+ \frac{1}{2} \sum_{p\in\Lambda_+^*}\big(-\widetilde{F}_p+\sqrt{\widetilde{F}_p^2-\widetilde{G}_p^2} \, \big)\\
    =\;&4\pi\mathfrak{a}(N-1)+e_\Lambda \mathfrak{a}^2-\frac{1}{2} \sum_{p\in\Lambda_+^*} \Big[p^2+8\pi\mathfrak{a}-\sqrt{|p|^4+16 \pi \mathfrak{a}p^2} - \frac{(8\pi\mathfrak{a})^2}{2p^2}\Big]+\mathcal{O}(N^{-1}) .
\end{split}
\end{equation}
In fact, the statement in \cite[Lemma 5.4 (i)]{BBCS} contains a larger remainder, of the order $\mathcal{O}( \log N / N)$, emerging when controlling the difference
    \begin{equation}\label{eq:bog-sum} 
    \begin{split}
        \sum_{p\in\Lambda_+^*}\bigg[\;& \big(\widehat{V} (\cdot/N)*\widehat{f}_{N,\ell} \big)_p + p^2-\sqrt{p^4+2p^2\big(\widehat{V}(\cdot /N)*\widehat{f}_{N,\ell} \big)_p}+\frac{1}{2p^2}\big(\widehat{V} (\cdot /N)*\widehat{f}_{N,\ell}\big)_p^2\\
        &-\big(\widehat{V}(\cdot /N)*\widehat{f}_{N,\ell} \big)_0-p^2+\sqrt{p^4+2p^2\big(\widehat{V} (\cdot / N)*\widehat{f}_{N,\ell} \big)_0}+\frac{1}{2p^2}\Big(\widehat{V} (\cdot / N)*\widehat{f}_{N,\ell} \big)_0^2\bigg] .     
    \end{split}
    \end{equation}
    We can however show that this remainder is indeed smaller, of the order $\cO (N^{-1})$, consistently with (\ref{eq:acta_E_bog}). Taylor expansion of the square root easily implies that the whole square bracket is bounded, in absolute value, by $C |p|^{-4}$. Hence, the sum over momenta $|p|>N$ can be estimated by $CN^{-1}$. To treat $|p|\le N$, we notice that the function 
    \begin{equation*}
        g_p(x)=x-p^2\sqrt{1+2x/p^{2}}-x^2 /(2p^2)
    \end{equation*}
    has derivative 
    \begin{equation*}
    \begin{split}
        g_p'(x)=\;&1-(1+2x/p^{2})^{{-1/2}}-x/p^2\\
    \end{split}
    \end{equation*}
satisfying $|g'_p (x)| \leq C x^2 / |p|^4$ (as it follows again by expansion of the square root). This implies that $|g_p (x) - g_p (y)| \leq C |x-y| / |p|^4$, for all $x,y$ varying in a bounded interval. Estimating 
\begin{equation} \label{eq:Vfp-Vf0} \Big| (\widehat{V} (\cdot / N) *\widehat{f}_{N,\ell})_p - (\widehat{V} (\cdot /N) *\widehat{f}_{N,\ell})_0 \Big| \leq \Big| N^3 \int_\Lambda dx V (Nx) f_{N,\ell} (x) (e^{-ip\cdot x} - 1) \Big| \leq C p^2 / N^2  \end{equation} 
since $\int_\Lambda x \, V(x) f_{N,\ell} (x) dx = 0$ by symmetry, we conclude that the bracket in (\ref{eq:bog-sum}) is bounded, in absolute value, by $C / p^2 N^2$. Thus, the sum over all $|p| < N$ can also be estimated by $C/N$. 

From (\ref{eq:acta_E_bog}), it follows that, in order to show (\ref{eq:CO1}), it is enough to prove that 
\begin{equation} \label{eq:reduction_F,G_p}
    C_{\mathcal{O}(1)} = \widetilde{C}_{\mathcal{J}_N}+\frac{1}{2} \sum_{p\in\Lambda_+^*}\Big[ -\widetilde{F}_p+ \sqrt{\widetilde{F}_p^2-\widetilde{G}_p^2} \Big] +\mathcal{O}(N^{-1})
\end{equation}
To this end, we consider 
\begin{equation}\label{eq:reduction_F,G_p_2}
\begin{split}
    C_{\mathcal{O}(1)} & -\widetilde{C}_{\mathcal{J}_N}\\
    =\;&\sum_{p\in\Lambda_+^*} \bigg[ \big( p^2+\big(\widehat{V} (\cdot / N) * \widehat{f}_{N,\ell} \big)_p\big)\big(\sigma_p^2-\sinh(\eta_p)^2\big)\\
        &\qquad\quad+\widehat{V} (p/ N) \big(\gamma_p\sigma_p-\cosh(\eta_p)\sinh(\eta_p)\big)\\
        &\qquad\quad+\frac{1}{2N}\sum_{q\in\Lambda_+^*} \widehat{V} ((p-q)/N) \big(\gamma_q\sigma_q \gamma_p\sigma_p-\cosh(\eta_p)\sinh(\eta_p)\cosh(\eta_q)\sinh(\eta_q)\big)\bigg]\,.
    \end{split}
\end{equation}
Let us separately manipulate the three main contributions to the sum in the right hand side. Through elementary identities we find
\begin{equation*}
\begin{split}
    \sigma_p^2-\sinh(\eta_p)^2=\;&\frac{1}{2}\sinh(2\eta_p)\sinh(2\tau_p)+\sinh^2(\tau_p)\cosh(2\eta_p)\\
    =\;&\frac{1}{2}\sinh(2\eta_p)\sinh(2\tau_p)+\frac{1}{2}\cosh(2\eta_p)(\cosh(2\tau_p)-1),
\end{split}
\end{equation*}
and therefore the first contribution to the right hand side of \eqref{eq:reduction_F,G_p_2} is
\begin{equation} \label{eq:first_term_F,G}
\begin{split}
    \sum_{p\in\Lambda_+^*} \big( &p^2+\big(\widehat{V} (\cdot / N) * \widehat{f}_{N,\ell} \big)_p\big)\big(\sigma_p^2-\sinh(\eta_p)^2\big)\\
    =\;&\frac{1}{2}\sum_{p\in\Lambda_+^*}\big(p^2+\big(\widehat{V} (\cdot / N) *\widehat{f}_{N,\ell} \big)_p\big)\big[\sinh(2\eta_p)\sinh(2\tau_p)+\cosh(2\eta_p)(\cosh(2\tau_p)-1)\big]. 
\end{split}
\end{equation}
In a similar way we rewrite the second contribution as
\begin{equation}\label{eq:second_term_F,G}
    \begin{split}
        \sum_{p\in\Lambda_+^*}\widehat{V}& (p/ N) \big(\gamma_p\sigma_p-\cosh(\eta_p)\sinh(\eta_p)\big) \\
        = \;&\frac{1}{2}\sum_{p\in\Lambda_+^*}\widehat{V} (p/N) \big[ \sinh(2\tau_p)\cosh(2\eta_p)+\sinh(2\eta_p)(\cosh(2\tau_p)-1)\big].
    \end{split}
\end{equation}
In order to rearrange the third contribution to the right hand side of \eqref{eq:reduction_F,G_p_2}, notice that
\begin{equation*}
    \begin{split}
        4 \gamma_q &\sigma_q\gamma_p\sigma_p \\ =\;&\sinh(2\eta_q)\cosh(2\tau_q) \sinh(2\eta_p) \cosh(2\tau_p)+ \sinh(2\tau_q)\cosh(2\eta_q)\cosh(2\eta_p)\sinh(2\tau_p) \\
        &+\sinh(2\eta_q)\cosh(2\tau_q)\cosh(2\eta_p)\sinh(2\tau_p) + \cosh(2\eta_q)\sinh (2\tau_q)\sinh (2\eta_p)\cosh (2\tau_p)  
    \end{split}
\end{equation*}
Since $|\tau_p|\le C|p|^{-4}$, the second term on the r.h.s. (containing $\sinh(2\tau_q) \sinh(2\tau_p)$) yields a contribution bounded by $CN^{-1}$ when inserted in the sum over $p,q$ in \eqref{eq:reduction_F,G_p_2}. Using the estimate $|\sinh(2\eta_p) \cosh (2\tau_p) -2\eta_p |\le C|p|^{-6}$ to handle the last two terms (they give the same contribution), decomposing $\cosh (2\tau_p) = 1 + (\cosh (2\tau_p) - 1)$ and similarly for $\cosh (2\tau_q)$ and subtracting the terms $\cosh (\eta_p) \sinh (\eta_p) \cosh (\eta_q) \sinh (\eta_q) = (1/4) \sinh (2\eta_p) \sinh (2\eta_q)$, we find (adding also a negligible term proportional to $\eta_0$) 
\begin{equation} \label{eq:third_term_F,G}
\begin{split}
    \frac{1}{2N} &\sum_{p,q\in\Lambda_+^*} \widehat{V} ((p-q)/ N) \big(\gamma_q\sigma_q \gamma_p\sigma_p-\cosh(\eta_p)\sinh(\eta_p)\cosh(\eta_q)\sinh(\eta_q)\big)\\
    =\;&\frac{1}{2N}\sum_{p\in\Lambda_+^*} \big(\widehat{V} (\cdot / N) *\eta\big)_p \big[ \sinh(2\tau_p)\cosh(2\eta_p)+\sinh(2\eta_p)(\cosh(2\tau_p)-1)\big]+\mathcal{O}(N^{-1}).
\end{split}
\end{equation}
Plugging \eqref{eq:first_term_F,G}, \eqref{eq:second_term_F,G}, and \eqref{eq:third_term_F,G} into the right hand side of \eqref{eq:reduction_F,G_p_2}, the terms with a minus sign recombine into 
\begin{equation*}
\begin{split}
    -\frac{1}{2}\big( p^2+&\big(\widehat{V}(\cdot / N) *\widehat{f}_{N,\ell} \big)_p\big)\cosh(2\eta_p)-\frac{1}{2}\Big(\widehat{V} (\cdot / N)*\widehat{f}_{N,\ell} \big)_p \sinh(2\eta_p) 
    =-\frac{\widetilde{F}_p}{2}.
\end{split}
\end{equation*}
As already discussed in \eqref{eq:F-compu}, the other terms produce 
\begin{equation*}
    \begin{split}
        \frac{1}{2}\big(p^2+&\big(\widehat{V}(\cdot/ N)*\widehat{f}_{N,\ell} \big)_p\big)\big[\sinh(2\eta_p)\sinh(2\tau_p)+\cosh(2\eta_p)\cosh(2\tau_p)\big]\\
        &+\frac{1}{2}\big(\widehat{V} (\cdot / N) *\widehat{f}_{N,\ell} \big)_p\big[ \sinh(2\tau_p)\cosh(2\eta_p)+\sinh(2\eta_p)\cosh(2\tau_p)\big]\\
        =\;&\frac{1}{2}\sqrt{p^4+2p^2\Big(\widehat{V} (\cdot / N)*\widehat{f}_{N,\ell} \big)_p}=\frac{1}{2}\sqrt{\widetilde{F}_p-\widetilde{G}_p}.
    \end{split}
\end{equation*}
This concludes the proof of \eqref{eq:reduction_F,G_p}.

To conclude the proof of Theorem \ref{thm:cubic}, we observe now that, combining Lemma \ref{lemma:scattering} and (\ref{eq:Vfp-Vf0}), we find 
\[ \Big| (\widehat{V} (\cdot / N) *\widehat{f}_{N,\ell})_p  - 8\pi \frak{a} \Big| \leq C |p| / N \]
This allows us to replace the dispersion of the term on the third line of (\ref{eq:cJN1}) with $(|p|^4 + 16 \pi \frak{a} p^2)^{1/2}$, since  
\[ \pm \sum_{p \in \Lambda^*_+} \Big[ \sqrt{|p|^4 +2 (\widehat{V} (\cdot / N) *\widehat{f}_{N,\ell})_p \, p^2} - \sqrt{|p|^4 +16 \pi\frak{a} p^2} \Big] \, a_p^* a_p \leq \frac{C}{N} \cK .\]
can be absorbed in the error.

\appendix

\section{Bound for $d$-operators}
\label{sec:d-ops} 

In this section we prove Lemma \ref{lemma:d}, without any assumptions on the smallness of the $\ell^2$-norm of the kernel of the Bogoliubov transformation. We start by proving the following Lemma
\begin{lemma}\label{lemma:aprioriboundsa}
Let $B_\mu$ be defined as in \eqref{eq:Bmu}. Then there exists $C>0$ such that the following bounds hold true: 
\begin{align}
\|(\cN_++1)^{n/2} a_p e^{sB_\mu} \xi\|& \leq C \Big[ \|(\cN_++1)^{n/2} a_p  \xi\| +|\mu_p| \|(\cN_++1)^{(n+1)/2}  \xi\| \Big] \label{eq:boundap}\\
   \|(\cN_++1)^{n/2}\check a_x e^{sB_\mu} \xi\| &\leq C \Big[ \|(\cN_++1)^{n/2}\check a_x  \xi\| + \|(\cN_++1)^{n+1/2}  \xi\| \Big] \label{eq:boundax}   \\
  \|(\cN_++1)^{n/2} a_p a_qe^{sB_\mu} \xi\| &\leq C \Big[ \|(\cN_++1)^{n/2}a_pa_q\xi\| + |\mu_q| \|(\cN_++1)^{(n+1)/2}a_p\xi\| \nonumber\\
    &\quad+  |\mu_p|\|(\cN_++1)^{(n+1)/2}a_q\xi\| + \d_{p,-q}|\mu_p| \|(\cN_++1)^{(n+3)/2}\xi\| \Big] \label{eq:boundapaq} \\ 
  \|(\cN_++1)^{n/2}\check a_x \check a_ye^{sB_\mu} \xi\| &\leq C \Big[ \|(\cN_++1)^{n/2}\check a_x\check a_y  \xi\| + |\check \mu(x-y)| \|(\cN_++1)^{n/2}  \xi\| \nonumber\\
 &\quad+\|(\cN_++1)^{n+1/2}\check a_x  \xi\| +\|(\cN_++1)^{n+1/2}\check a_y  \xi\| \Big] \label{eq:boundaxay}
\end{align}
for any $p \in \L^*_+$, any $x,y \in \L$, and $\xi \in \cF_\perp^{\leq N}$. 
\end{lemma}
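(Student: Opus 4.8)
The plan is to prove each of the five bounds \eqref{eq:boundap}--\eqref{eq:boundaxay} by the standard Gronwall argument in the parameter $s$, using that $B_\mu$ is antisymmetric and that $\|\mu\|_2 \leq C$ uniformly in $N$ (Lemma \ref{lemma:coefficients}(iii)). For a fixed test function $\xi$ and fixed momenta, one differentiates the relevant norm (or rather its square, or an inner product representing it) with respect to $s$, producing a commutator of the operator in question with $B_\mu$; one then estimates this commutator by the norms appearing on the right-hand side of the desired inequality, plus a controllable ``feedback'' term, and closes the loop with Gronwall's lemma. The key algebraic inputs are the commutation relations \eqref{eq:CCR_b} and \eqref{eq:baa}, together with the a priori bound \eqref{eq:control-eBmu}, namely $e^{-sB_\mu}(\cN_++1)^j e^{sB_\mu} \leq C(\cN_++1)^j$, which allows us to absorb factors of $\cN_+$ that get conjugated along the way.

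\textbf{Key steps.} First I would treat \eqref{eq:boundap}. Writing $f_\xi(s) = \|(\cN_++1)^{n/2} a_p e^{sB_\mu}\xi\|^2$, we have $f_\xi'(s) = 2\,\mathrm{Re}\,\langle (\cN_++1)^{n/2} a_p e^{sB_\mu}\xi, (\cN_++1)^{n/2}[a_p, B_\mu] e^{sB_\mu}\xi\rangle$, and from \eqref{eq:CCR_b} one computes $[a_p, B_\mu] = \mu_p b_{-p}^* + (\text{lower order in } 1/N)$, more precisely $[a_p,B_\mu]$ is a sum of $\mu_p b_{-p}^*$ (times operators bounded by powers of $(\cN_++1)/N$) and a term $\tfrac1N \sum_q \mu_q b_q^* a_{-q}^* a_p$-type contribution; the point is that every piece is controlled by $|\mu_p|(\cN_++1)^{1/2}$ or by $\|\mu\|_2 \cdot N^{-1}$ acting with one more $a_p$. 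This gives $|f_\xi'(s)| \leq C f_\xi(s) + C|\mu_p|^2 \|(\cN_++1)^{(n+1)/2} e^{sB_\mu}\xi\|^2$, and using \eqref{eq:control-eBmu} on the last factor, Gronwall yields \eqref{eq:boundap}. The position-space bound \eqref{eq:boundax} follows either by the analogous computation with $\check a_x$ (using $[\check a_x, B_\mu] = b^*(\check\mu_x) + \ldots$ and $\|\check\mu_x\|$-type estimates which, after integration, only cost $\|\mu\|_2$) or simply by summing \eqref{eq:boundap} in Fourier and using $\|\mu\|_2 \leq C$. Then \eqref{eq:boundapaq} and \eqref{eq:boundaxay} are proved by the same scheme applied to the product $a_p a_q$ (resp. $\check a_x\check a_y$): now the commutator with $B_\mu$ produces, besides the analogues of the single-operator terms (with $a_p$ or $a_q$ kept as a spectator), a genuinely new diagonal contribution $\delta_{p,-q}\mu_p(\cN_++1)^{?}$ coming from $[a_p a_{-p}, B_\mu]$ where both labels feed into the same $b^*b^*$ pair in $B_\mu$; this is exactly the source of the $\delta_{p,-q}$ term on the right-hand sides. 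One bounds all commutator pieces using \eqref{eq:CCR_b}, \eqref{eq:baa}, the previously established \eqref{eq:boundap}, and \eqref{eq:control-eBmu}, and closes with Gronwall once more.

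\textbf{Main obstacle.} I expect the bookkeeping in \eqref{eq:boundapaq} and \eqref{eq:boundaxay} to be the principal difficulty: the commutator $[a_p a_q, B_\mu]$ has several structurally different summands (the ``spectator'' terms, the $1/N$ correction terms from the non-canonical part of \eqref{eq:CCR_b}, and the coincidence term $\delta_{p,-q}$), and one must organize the Cauchy--Schwarz splittings so that each term is controlled either by the left-hand side $f_\xi(s)$ itself (to feed Gronwall) or by one of the ``source'' terms on the right-hand side that one has already shown grows at most like a fixed power of $(\cN_++1)$ under $e^{sB_\mu}$. The subtlety is to never lose a power of $\cN_+$ beyond what is claimed, which forces one to keep $a_p$ (or $\check a_x$) explicit as spectators rather than bounding them by $(\cN_++1)^{1/2}$, and to use \eqref{eq:control-eBmu} only on factors that are pure powers of $(\cN_++1)$. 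No single estimate is hard; the care is entirely in matching the combinatorics of the commutator to the precise form of the claimed bounds. Once \eqref{eq:boundap}--\eqref{eq:boundaxay} are in place, Lemma \ref{lemma:d} itself follows by inserting the integral representation \eqref{eq:d_detailed_expansion} and applying these bounds term by term inside the $s$-integral.
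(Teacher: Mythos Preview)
Your approach is essentially the same as the paper's: differentiate $f_\xi(s)=\|(\cN_++1)^{n/2}a_pe^{sB_\mu}\xi\|^2$ in $s$, bound the resulting commutator, and close with Gronwall together with \eqref{eq:control-eBmu}. There is, however, a slip in your derivative formula: since $(\cN_++1)^{n/2}$ does \emph{not} commute with $B_\mu$, the identity
\[
f_\xi'(s)=2\,\mathrm{Re}\,\big\langle (\cN_++1)^{n/2}a_pe^{sB_\mu}\xi,\,(\cN_++1)^{n/2}[a_p,B_\mu]e^{sB_\mu}\xi\big\rangle
\]
is incomplete --- there is an additional term $\langle a_pe^{sB_\mu}\xi,\,[(\cN_++1)^{n},B_\mu]\,a_pe^{sB_\mu}\xi\rangle$ that you have dropped.

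The paper avoids this pitfall by first using $a_p^*(\cN_++1)^na_p=\cN_+^{\,n}a_p^*a_p$ and then splitting
\[
f_\xi'(s)=\underbrace{\langle e^{sB_\mu}\xi,\,[\cN_+^{\,n},B_\mu]\,a_p^*a_p\,e^{sB_\mu}\xi\rangle}_{K_1}+\underbrace{\langle e^{sB_\mu}\xi,\,\cN_+^{\,n}[a_p^*a_p,B_\mu]\,e^{sB_\mu}\xi\rangle}_{K_2}.
\]
For $K_2$ one uses \eqref{eq:baa} to get $[a_p^*a_p,B_\mu]=\mu_p b_p^*b_{-p}^*+\hc$, which is simpler than your $[a_p,B_\mu]$ (no square-root corrections from the $b$-operators) and gives exactly the $|\mu_p|\,\|(\cN_++1)^{(n+1)/2}\xi\|$ source term. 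For $K_1$ one uses the explicit formula \eqref{eq:commN^nB} for $[\cN_+^{\,n},B_\mu]$, which is bounded by $Cf_\xi(s)$ and feeds Gronwall. Your missing term is precisely $K_1$, and once you include it the argument goes through as you describe; the remaining bounds \eqref{eq:boundax}--\eqref{eq:boundaxay} are indeed handled analogously, with the $\delta_{p,-q}$ (resp.\ $\check\mu(x-y)$) contribution arising exactly as you anticipate.
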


\begin{proof}
The bounds (\ref{eq:boundap})-(\ref{eq:boundaxay}) can all be shown using Gronwall's Lemma. We focus on (\ref{eq:boundap}); the other estimates can be proven similarly. We consider
    \begin{equation}\label{eq:apeB}
    \begin{split}
        \frac{d}{ds} \|(\cN_++1)^{n/2}a_p e^{sB_\mu}\xi\|^2 = &\, \langle \xi, e^{-sB_\mu} [\cN_+^{n}, B_\mu ]a^*_p a_p e^{sB_\mu}\xi \rangle \\
    &+ \langle \xi, e^{-sB_\mu} \cN_+^{n}[ a^*_p a_p, B_\mu] e^{sB_\mu}\xi \rangle = K_1 + K_2\,.
    \end{split}
    \end{equation} 
 With the commutation relation (\ref{eq:CCR_b}), we find 
    \begin{equation}\label{eq:K2-apeB} 
    \begin{split}
        |K_2| = &\,  |\langle \xi, e^{-sB_\mu}\cN_+^n (\mu_p b^*_p b^*_{-p} +\hc) e^{sB_\mu}\xi\rangle|\\
        \leq &\, C|\mu_p| \|(\cN_++1)^{n/2}a_p e^{sB_\mu}\xi\|\|(\cN_++1)^{(n+1)/2}\xi\| \\ \leq &\, C \| (\cN_+ + 1)^{n/2} a_p e^{sB_\mu} \xi \|^2 + C |\mu_p|^2 \| (\cN_+ + 1)^{(n+1)/2} \xi \|^2  
    \end{split}
    \end{equation} 
    Now we consider $K_1$. We find 
    \begin{equation}\label{eq:commN^nB}
    [\cN_+^n, B_\mu] = \frac{1}{2}\sum_{q \in \L^*_+} \mu_q b^*_q b^*_{-q} [(\cN_++2)^n -\cN_+^n] +\hc  \end{equation}
    Therefore 
    \[
    \begin{split}
        | K_1 |  \leq &\, C \sum_{q \in \Lambda^*_+} |\mu_q |  \big\|  b_{-q} b_q (\cN_+ + 1)^{n/2-1} a_p  e^{sB_\mu} \xi \| \\ &\hspace{1.5cm}  \times \| (\cN_+ + 1)^{-n/2+1} \big[  (\cN_++3)^n -(\cN_++1)^n \big] a_p e^{sB_\mu} \xi \big\| \\ \leq &\; C \|(\cN_+ + 1)^{n/2} a_p e^{sB_\mu}  \xi \|^2 \end{split} \]
        Inserting the last equation and (\ref{eq:K2-apeB}) in (\ref{eq:apeB}) and applying Gronwall's Lemma, we obtain the desired bound. 
        \end{proof} 
        
With Lemma \ref{lemma:aprioriboundsa}, we are now ready to show Lemma \ref{lemma:d}. 
\begin{proof}[Proof of Lemma \ref{lemma:d}]
We start with the first bound in Eq. \eqref{eq:decay_d}. From \eqref{eq:d_detailed_expansion}, we find 
\[
\begin{split}
    \| (\cN_++1)^{n/2} d_p \xi \|  \leq\,&\frac{C}{N}\int_0^1 ds\, \Big[ |\mu_p| \|(\cN_++1)^{(n+3)/2} e^{(1-s)B_\mu} \xi\|  \\
    &\hspace{2cm} + \big\| \sum_{q \in \L^*_+} \mu_q (\cN_++1)^{n/2} b^*_{q}a^*_{-q}a_pe^{(1-s)B_\mu}\xi \big\| \\ 
    &\hspace{2cm} + |\s_p^{(s)}| \Big\| \sum_{q\in \L^*_+}\mu_q (\cN_++1)^{n/2} a^*_{-p}a_{-q}b_qe^{(1-s)B_\mu}\xi\| \Big]\,.
\end{split}
\]
Observing that $|\sigma_p^{(s)}| \leq C |\mu_p|$ and that 
\begin{equation} \label{eq:sum-bds} \begin{split} \big\| \sum_{q \in \L^*_+} \mu_q  &(\cN_++1)^{n/2} b^*_{q}a^*_{-q}a_pe^{(1-s)B_\mu}\xi \big\| 
\leq \| (\cN_+ + 1)^{(n+2)/2} a_p e^{(1-s) B_\mu} \xi \| \\ 
\big\| \sum_{q \in \L^*_+} \mu_q  &(\cN_++1)^{n/2} a_{-p}^* a_{-q} b_q e^{(1-s)B_\mu}\xi \big\| 
\leq \| (\cN_+ + 1)^{(n+3)/2} e^{(1-s) B_\mu} \xi \|
\end{split} \end{equation} 
we conclude, with the help of (\ref{eq:control-eBmu}) and Lemma \ref{lemma:aprioriboundsa}, that 
\[   \| (\cN_++1)^{n/2} d_p \xi \|  \leq \frac{C}{N} \Big[ |\mu_p| \| (\cN_+ + 1)^{(n+3)/2} \xi \| + \| (\cN_+ + 1)^{(n+2)/2} a_p \xi \| \Big] \]
The second estimate in (\ref{eq:decay_d}) follows similarly (it is simpler, because we do not need to extract factors decaying in $p$). Let us now consider the third bound in (\ref{eq:decay_d}). We proceed as above, applying \eqref{eq:d_detailed_expansion}. We find 
\[
\begin{split}
    \|(\cN_++1)^{n/2}a_pd_q\xi\| \leq &\, \frac{C}{N} \int_0^1 ds\, \Big[ |\mu_q| \|(\cN_++1)^{(n+2)/2}a_p b^*_{-q}e^{(1-s)B_\mu}\xi\|  \\
    &+ \big\| \sum_{r \in \L^*_+} \mu_r (\cN_++1)^{n/2} a_p b^*_{r}a^*_{-r} a_qe^{(1-s)B_\mu}\xi \big\| \\ 
    & + |\s_q^{(s)}| |\mu_q| \|(\cN_++1)^{(n+2)/2}a_p b_q e^{(1-s)B_\mu}\xi\| \\
    &+ |\s_q^{(s)}| \big\|\sum_{r\in \L^*_+}\mu_r(\cN_++1)^{n/2}a_pa^*_{-q}a_{-r}b_re^{(1-s)B_\mu}\xi \big\| \Big] 
\end{split}
\]
Commuting $a_p$ to the right of all creation operators, proceeding as in (\ref{eq:sum-bds}) to bound the terms in the second and fourth line and applying Lemma \ref{lemma:aprioriboundsa}, we conclude that  
\[ 
\begin{split}
    \|(\cN_++1)^{n/2}a_pd_q\xi\| \leq &\, \frac{C}{N}\Big[ |\mu_q| \|(\cN_++1)^{(n+3)/2}a_p \xi\| + |\mu_p| 
    \|(\cN_++1)^{(n+3)/2}a_q \xi \| \\
    & + \d_{p,-q}\|(\cN_++1)^{(n+3)/2} \xi\| +\|(\cN_++1)^{(n+2)/2}a_p a_q \xi\| \\
    &+|\mu_p| |\mu_q| \|(\cN_++1)^{(n+4)/2}\xi\|\Big] \,.\end{split} \]
Also the fourth estimate in (\ref{eq:decay_d}) can be shown analogously. To prove (\ref{eq:decay_d_position}), we rewrite  \eqref{eq:d_detailed_expansion} in position space. We obtain 
    \begin{equation*} 
    \begin{split}
        \label{eq:d_detailed_expansion_position}
        \check d_x = &\,  -\frac{1}{N} \int_0^1ds \, e^{-(1-s)B_\mu }\bigg[\int dz \, \check \gamma (x-z) \Big(\cN_+ b^*(\check \eta_z) + \int dr ds \, \check \eta(r-s) \check b^*_s\check a^*_r \check a_z \Big)\\
        & \quad\quad+ \int dz \, \check \sigma(x-z) \Big( b(\check \eta_z) \cN_+  + \int dr ds \, \check \eta(r-s) \check a^*_z\check a_r \check b_s \Big)\bigg]e^{(1-s)B_\mu}\,.
        \end{split}
    \end{equation*}
With the last identity, we can show the first two bounds in (\ref{eq:decay_d_position}) similarly as we did above for the estimates in (\ref{eq:decay_d}). The third bound in (\ref{eq:decay_d_position}) follows directly from the first two (applying the first bound to $\xi' = \check{d}_y \xi$, and then the first and the second bound to the vector $\xi$). 
\end{proof}

\section{A-priori bounds: proof of Proposition \ref{prop:apri}}
\label{app:apri} 

\begin{proof}
It follows from \cite[Proposition 4.1]{BBCS} that for every $k \in \bN$ there exists $C > 0$ such that 
\begin{align}
&\langle \xi'_N, (\cK+1)(\cN_++1)^k \xi'_N \rangle \leq C. \label{eq:boundKN_+'}
\end{align}
for $\xi'_N=e^{-B_{\eta}}U_N\psi_N$. We are going to show that the same bounds hold true, if we replace $\xi'_N$ with $\xi_N = e^{-B_\mu} U_N \psi_N = e^{-B_\mu} e^{B_\eta} \xi'_N$. Since $\cV_N \leq C \cK \cN_+$ (see \cite[Lemma 6.2]{BBCS}), this will conclude the proof of Prop. \ref{prop:apri}. From (\ref{eq:control-eBmu}) and from the corresponding bound for $B_\eta$, it follows immediately that 
\[ \langle \xi_N, (\cN_+ + 1)^k \xi_N \rangle \leq C \]
Thus, we focus on the expectation of $\cK (\cN_+ +1)^k$. For $s \in (0;1)$, we compute 
\begin{equation*}
    \begin{split}
\frac{d}{ds} \langle \xi'_N,e^{-sB_\eta}e^{sB_\mu}&\cK(\cN_++1)^k e^{-sB_\mu}e^{sB_\eta}\xi'_N\rangle \\= &- \langle \xi_s, [\cK(\cN_++1)^k, B_\mu] \xi_s \rangle + \langle \xi_s , [\cK(\cN_++1)^k,e^{-sB_\mu}B_\eta e^{sB_\mu}] \xi_s \rangle 
    \end{split}
\end{equation*}
where we defined $\xi_s = e^{-sB_\mu} e^{s B_\eta} \xi'_N$. Using (\ref{eq:def_d}), we write  
\begin{equation*}
e^{-sB_\mu}B_\eta e^{sB_\mu} = B_{\eta}+J 
\end{equation*}
where 
\begin{equation*}
J=\sum_{q \in \L^*_+} \eta_qd^{*(s)}_{q}\Big(\gamma^{(s)}_{-q} b_{-q}^*+\sigma^{(s)}_q b_{q}+d^{*(s)}_{-q}\Big)+\sum_{q \in \L^*_+} \eta_q\Big(\gamma^{(s)}_q b_q^*+\sigma^{(s)}_q b_{-q}\Big)d^{*(s)}_{-q} - \mathrm{h.c.}.
\end{equation*}
Therefore, 
\begin{equation}\label{eq:ddsKN} 
    \begin{split}
        \frac{d}{ds} \langle \xi_s,\cK(\cN_++1)^k \xi_s\rangle=&\, \langle \xi_s,[\cK,B_\eta-B_\mu](\cN_++1)^k \xi_s\rangle\\&+  \langle \xi_s,\cK[(\cN_++1)^k,B_\eta-B_\mu]\xi_s\rangle+ \langle \xi_s,[\cK(\cN_++1)^k,J]\xi_s\rangle \,. 
    \end{split}
\end{equation}
With (\ref{eq:tau-bds}), we have 
\begin{equation*}
    \begin{split}
|\langle \xi_s,[\cK,B_\eta-B_\mu](\cN_++1)^k \xi_s\rangle| &=2 \Big| \sum_{p \in \L^*_+}p^2\tau_p\langle \xi_s, b^*_{p}b^*_{-p}(\cN_++1)^k \xi_s\rangle \Big| \\ &\leq C\langle \xi_s,(\cN_++1)^{k+1} \xi_s\rangle \leq C \langle \xi'_N, (\cN_+ + 1)^{k+1} \xi'_N \rangle \leq C
    \end{split}
\end{equation*}
where we applied (\ref{eq:control-eBmu}) and then (\ref{eq:boundKN_+'}). 
Using \eqref{eq:commN^nB}, we can bound the second term on the r.h.s. of (\ref{eq:ddsKN}) by 
\begin{equation*}
    \begin{split}
|\langle \xi_s,\cK &[(\cN_++1)^k,B_\eta-B_\mu]\xi_s\rangle| \\ &\leq C\sum_{p \in \L^*_+}p^2|\tau_q|\, \Big|\langle \xi_s,a^*_pa_pb_qb_{-q} \big[ (\cN_++3)^{k} - (\cN_+ + 1)^k \big] \xi_s\rangle \Big| \\&\leq C \langle \xi_s,\cK(\cN_++1)^k \xi_s\rangle\,.      \end{split}
\end{equation*}
We now focus on the last term on the r.h.s. of (\ref{eq:ddsKN}). Using $J^*=-J$, we have 
\begin{equation*}
    \begin{split}
|\langle \xi_s,[\cK(\cN_++1)^k,J]\xi_s\rangle|\leq 2|\langle \xi_s,\cK(\cN_++1)^kJ\,\xi_s\rangle|\leq 2 (|J_1|+|J_2|+|J_3|)
    \end{split}
\end{equation*}
where we defined 
\begin{equation*}
    \begin{split}
        J_1&=\sum_{q \in \L^*_+}\eta_q \Big\langle \xi_s,\cK(\cN_++1)^k\Big(\gamma^{(s)}_q(b_{-q}d^{(s)}_q+d^{(s)}_qb_{-q})+ \sigma^{(s)}_q(b^*_{q}d^{(s)}_q+d^{(s)}_qb^*_{q})\Big)\xi_s\Big \rangle\\
        J_2&=\sum_{q \in \L^*_+}\eta_q \Big\langle \xi_s,\Big(\gamma^{(s)}_q(b_{-q}d^{(s)}_q+d^{(s)}_qb_{-q})+ \sigma^{(s)}_q(b^*_{q}d^{(s)}_q+d^{(s)}_qb^*_{q})\Big)\cK(\cN_++1)^k\xi_s\Big\rangle\\
        J_3&= \sum_{q \in \L^*_+}\eta_q \Big\langle \xi_s,\cK(\cN_++1)^k\Big(d^{(s)}_qd^{(s)}_{-q}+ d^{*(s)}_qd^{*(s)}_{-q}\Big)\xi_s \Big\rangle
    \end{split}
\end{equation*}
We first consider $J_1$. Making use of the third bound in \eqref{eq:decay_d}, the first contribution to $J_1$ is estimated in absolute value by  
\begin{equation*}
    \begin{split}
\sum_{p,q \in \L^*_+}&p^2|\eta_q|\|a_p\xi_s\| \|a_p(\cN_++1)^{k+1/2} d^{(s)}_q\xi_s\|\\ & \leq \frac{C}{N} \sum_{p,q \in \L^*_+}p^2|\eta_q|\|a_p\xi_s\| \Big( |\mu_p| \|(\cN_++1)^{k+2}a_q\xi_s\|+ |\mu_q|\|(\cN_++1)^{k+2}a_p\xi_s\|\\&\qquad \qquad \qquad \qquad+ |\mu_p||\mu_q|\|(\cN_++1)^{k+5/2}\xi_s\|+ \|(\cN_++1)^{k+3/2}a_pa_q\xi_s\|\\&\qquad \qquad \qquad \qquad  +|\eta_p|\delta_{p,q}\|(\cN_++1)^{k+3/2}\xi_s\|\Big) \\&\leq \frac{C}{N}\langle\xi_s, \cK(\cN_++1)^{2k+4}\xi_s\rangle+C\langle\xi_s, (\cN_++1)^{2k+5}\xi_s\rangle \leq C
    \end{split}
    \label{eq:boundJ1}
\end{equation*}
where we used (\ref{eq:boundKN_+'}) after applying (\ref{eq:a_priori_quadratic}), (\ref{eq:control-eBmu}), to control the growth of moments of $\cN_+$ and of $\cK$ w.r.t. the action of $B_\eta$ and $B_\mu$. 
The other contributions to $J_1$ can be bounded similarly. Also $J_3$ can be bounded using the estimates in \eqref{eq:decay_d}; we skip the details. As for the term $J_2$, the contributions proportional to $\sigma_q^{(s)}$ can be bounded using the fourth bound in \eqref{eq:decay_d}. The second contribution proportional to $\gamma^{(s)}_q$, on the other hand, can be estimated after normal ordering by  
\begin{equation*}
    \begin{split}
\sum_{p,q \in \L^*_+}&p^2|\eta_q| \| a_pd^{*(s)}_q\xi_s\|  \| a_pb_{-q}(\cN_++1)^k\xi_s\|+ \sum_{p \in \L^*_+}p^2|\eta_p| \| d^{*(s)}_p\xi_s\|  \| a_p(\cN_++1)^k\xi_s\|\\&\leq \frac{C}{N}\langle\xi_s, \cK(\cN_++1)^{2k+1}\xi_s\rangle+\frac{C}{N}\langle\xi_s, \cK(\cN_++1)^{3}\xi_s\rangle+C\langle\xi_s, (\cN_++1)^{4}\xi_s\rangle \leq C
    \end{split}
\end{equation*}
where we applied again the fourth bound in \eqref{eq:decay_d}. Finally, let us consider the first contribution to $J_2$. Using the expansion \eqref{eq:d_detailed_expansion},
 we find 
 \begin{equation}
    \begin{split}
        \Big| \sum_{q \in \L^*_+}&\eta_q \gamma^{(s)}_q \langle \xi_s, b_{-q}d^{(s)}_q \cK(\cN_++1)^k\xi_s\rangle \Big| \\ &\leq \frac{C}{N}\int_0^1d \tau \sum_{p,q \in \L^*_+}p^2|\eta_q|\, \Big| \Big\langle \xi_s,b_{-q}e^{-(1-\tau) s B_\mu}\bigg[\gamma_q^{(\tau)} \Big( \mu_q \mathcal{N}_+ b^*_{-q}+\sum_{r\in\Lambda_+^*} \mu_r b^*_r a^*_{-r} a_q \Big)\\
            &\qquad\qquad\qquad+\sigma_q^{(\tau)} \Big( \mu_q \mathcal{N}_+ b_q+\sum_{r\in\Lambda_+^*} \mu_r a^*_{-q} a_{-r} b_r \Big)\bigg]e^{(1-\tau)sB_\mu}a^*_pa_p(\cN_++1)^k\xi_s \Big\rangle\Big|\,. 
    \end{split}
    \label{eq:gamma_term}
\end{equation}
With \eqref{eq:boundap}, \eqref{eq:control-eBmu}, (\ref{eq:a_priori_quadratic}) and, again, (\ref{eq:boundKN_+'}), the first term is bounded by  
\begin{equation*}
    \begin{split}
\frac{C}{N}&\int_0^1d \tau \sum_{p,q \in \L^*_+}p^2|\eta_q||\mu_q|\| a_pe^{-(1-\tau) sB_{\mu}}b_{-q}\cN_{+}e^{(1-\tau) sB_{\mu}}b^*_{-q}\xi_s\| \|a_p(\cN_++1)^k\xi_s\|\\ &\leq \frac{C}{N}\sum_{p,q \in \L^*_+}p^2|\eta_q||\mu_q| \|a_p(\cN_++1)^k\xi_s\|\,\Big(|\mu_p| \|(\cN_++1)^{5/2}\xi_s\|+ \|(\cN_++1)^{2} a_p \xi_s\|\\ &\qquad \qquad \qquad \qquad \qquad \qquad \qquad \qquad + \delta_{p,-q}\|(\cN_++1)^{3/2} \xi_s\|\Big)\\ &\leq \frac{C}{N}\langle\xi_s, \cK(\cN_++1)^{2k}\xi_s\rangle+\frac{C}{N}\langle\xi_s, \cK(\cN_++1)^{4}\xi_s\rangle+C\langle\xi_s, (\cN_++1)^{5}\xi_s\rangle \leq C\,.
    \end{split}
\end{equation*}
The contributions on the r.h.s. of (\ref{eq:gamma_term}) that are proportional to $\s_q^{(\tau)}$ can be handled similarly, using also the estimate 
\[ \Big\| (\cN_++1)^n\sum_{r\in \L^*_+}\mu_r b^*_r a^*_{-r}\xi \Big\| \leq C\|(\cN_++1)^{n+1}\xi\| \]
As for the second term proportional to $\gamma^{(\tau)}_q$ on the r.h.s. of (\ref{eq:gamma_term}),  we write $a^*_{-r} a_q= b^*_{-r} b_q + N^{-1}a^*_{-r} (\cN_+-1)a_q$ and we observe that the contribution associated with the error $N^{-1} a_{-r}^* (\cN_+ + 1) a_q$ is negligible (it can be bounded through Cauchy-Schwarz). The contribution associated with $b_{-r}^* b_q$, on the other hand, can be estimated using  \eqref{eq:def_d} by  
\begin{equation*}
\begin{split}
\frac{C}{N}&\int_0^1d\tau \sum_{p,q,r \in \L^*_+}p^2|\eta_q| |\mu_r|\, \Big|\langle \xi_s,b_{-q}e^{-(1-\tau)s B_\mu}  b^*_r b^*_{-r}e^{(1-\tau )sB_\mu}\\&\qquad \qquad \qquad \qquad (\gamma_q^{((1-\tau)s)}b_q+\sigma_q^{((1-\tau)s)}b^*_{-q}+d^{((1-\tau)s)}_q) a^*_pa_p(\cN_++1)^k\xi_s\rangle \Big| \,.
\end{split}
\end{equation*}
The term proportional to $\gamma_q^{((1-\tau)s)}$ is bounded by 
\begin{equation*}
\begin{split}
\frac{C}{N}&\int_0^1d\tau \sum_{p,q,r \in \L^*_+}p^2|\eta_q| |\mu_r|\,\| a_pe^{-(1-\tau) sB_\mu }b_rb_{-r}e^{(1-\tau) sB_\mu} b^*_{-q}\xi_s\| \|b_qa_p(\cN_++1)^k\xi_s\|\\ &\qquad \qquad+\sum_{p,r \in \L^*_+}p^2 |\eta_p| |\mu_r| \| b_rb_{-r}e^{(1-\tau)sB_\mu} b^*_{-p}\xi_s\| \|a_p(\cN_++1)^k\xi_s\|\\
&\leq \frac{C}{N}\langle\xi_s, \cK(\cN_++1)^{2k+1}\xi_s\rangle+\frac{C}{N}\langle\xi_s, \cK(\cN_++1)^3\xi_s\rangle+\langle\xi_s, (\cN_++1)^4\xi_s\rangle\leq C
\end{split}
\end{equation*}
where we used (\ref{eq:boundap}), (\ref{eq:control-eBmu}), (\ref{eq:a_priori_quadratic}) and  (\ref{eq:boundKN_+'}). The term proportional to $\sigma_q^{((1-\tau)s)}$ can be handled analogously. As for the contribution proportional to $d^{((1-\tau)s)}_q$,  we can apply the fourth bound in \eqref{eq:decay_d} to conclude that 
\begin{equation*}
 \begin{split}
\frac{C}{N^2}&\int_0^1d\tau \sum_{p,q,r \in \L^*_+} p^2 |\eta_q| |\mu_r| \|a_p(\cN_++1)^{k}\| \\ &\hspace{1cm} \times \Big[ \|(\cN_++1)^{3/2}a_p e^{-(1-\tau)s B_\mu} b_r b_{-r} e^{(1-\tau) s B_\mu} b^*_{-q}\xi_s\|\\& \hspace{2cm} +  |\mu_p| \|(\cN_++1)^{2} e^{-(1-\tau)s B_\mu} b_r b_{-r} e^{(1-\tau) s B_\mu} b^*_{-q}\xi_s\| \\ &\hspace{2cm} + \delta_{p,-q}\|(\cN_++1)e^{-(1-\tau)s B_\mu} b_r b_{-r} e^{(1-\tau) s B_\mu} b^*_{-q}\xi_s\| \Big] \\ &\leq \frac{C}{N}\langle\xi_s, \cK(\cN_++1)^{2k}\xi_s\rangle+\frac{C}{N}\langle\xi_s, \cK(\cN_++1)^6\xi_s\rangle+\langle\xi_s, (\cN_++1)^7\xi_s\rangle\leq C
    \end{split}
\end{equation*}
using again (\ref{eq:boundap}) (twice, in the first term in the parenthesis, to pass $a_p$ through the unitary operators $e^{\pm (1-\tau) s B_\mu}$), (\ref{eq:control-eBmu}) (in the second and third terms, to pass powers of $(\cN_+ + 1)$ through the unitary operators), (\ref{eq:a_priori_quadratic}) and  (\ref{eq:boundKN_+'}). Combining all bounds, we arrive at 
\[ \Big| \frac{d}{ds} \langle \xi_s,\cK(\cN_++1)^k \xi_s\rangle \Big| \leq C \langle \xi_s, \cK (\cN_+ + 1)^k \xi_s \rangle + C \]
and the claim follows from Gronwall's Lemma.   
\end{proof}

\end{document}